\documentclass[conference]{IEEEtran}
\IEEEoverridecommandlockouts
\usepackage{cite}
\usepackage{amsmath,amssymb,amsfonts}
\usepackage{graphicx}
\usepackage{textcomp}
\usepackage{xcolor}

\usepackage{amsthm}
\usepackage{mathtools}
\mathtoolsset{showonlyrefs=true}
\usepackage{stmaryrd}
\usepackage{mathpartir}
\usepackage{tikz}
\usetikzlibrary{tikzmark}
\usepackage{subcaption}
\usepackage{algpseudocodex}
\algrenewcommand\algorithmicfunction{}

\usepackage{ifthen}
\newif\iflong
\longtrue

\iflong
\usepackage[bibliography=common]{apxproof}
\else
\usepackage[bibliography=common,appendix=strip]{apxproof}
\fi

\usepackage[verbose]{hyperref}

\def\BibTeX{{\rm B\kern-.05em{\sc i\kern-.025em b}\kern-.08em
    T\kern-.1667em\lower.7ex\hbox{E}\kern-.125emX}}

\theoremstyle{definition}
\newtheorem{definition}{Definition}[section]
\newtheorem{example}[definition]{Example}
\newtheorem{remark}[definition]{Remark}

\theoremstyle{plain}
\newtheorem{lemma}[definition]{Lemma}
\newtheorem{theorem}[definition]{Theorem}

\newtheorem{proposition}[definition]{Proposition}
\newtheorem{corollary}[definition]{Corollary}

\usepackage[warnconjecture]{macros.aux}

\newcommand{\llbrace}{\mathopen{\{\mkern-6mu\{}}
\newcommand{\rrbrace}{\mathclose{\}\mkern-6mu\}}}

\newcommand{\Conditioning}{\mathop{\mathbf{C}}}
\newcommand{\Own}{\mathop{\mathbf{own}}}

\newcommand{\Persistent}{\mathop{\square}}
\newcommand{\Uniform}[1]{\mathop{\mathbf{unif}} #1}
\newcommand{\AlmostSurelyEqual}{\stackrel{\mathrm{as}}{=}}
\newcommand{\sepimp}{\mathrel{-\mkern-6mu*}}
\newcommand{\bigast}{\mathop{\scalebox{2}{\raisebox{-0.2ex}{$\ast$}}}}
\newcommand{\LilacEntails}{\vdash}
\newcommand{\LilacIff}{\dashv\vdash}
\newcommand{\HoareTriple}[3]{\{ #1 \}\ #2\ \{ #3 \}}
\newcommand{\HoareTripleOwn}[3]{\llbrace #1 \rrbrace\ #2\ \llbrace #3 \rrbrace}
\newcommand{\ifexpr}[3]{\mathtt{if}\ #1\ \mathtt{then}\ #2\ \mathtt{else}\ #3}
\newcommand{\forloop}[6]{\mathtt{for}\ (#4 \gets #5;\ #1 \gets #2 \dots #3)\ \{ #6 \}}
\newcommand{\ret}[1]{\mathtt{ret}\ #1}
\newcommand{\interpret}[1]{\llbracket #1 \rrbracket}
\newcommand{\RV}[1]{\mathop{\mathbf{RV}} #1}
\newcommand{\RangeOneTo}[1]{[#1]}
\newcommand{\Expectation}[3]{\mathbb{E}_{#1 \sim #2}\big[#3\big]}
\newcommand{\IversonBracket}[1]{[#1]}
\newcommand{\Giry}{\mathcal{G}}
\newcommand{\KRM}{\mathcal{M}_{\mathrm{disint}}}

\newcommand{\ExposingTranslation}[1]{#1^{\sharp}}
\renewcommand{\vec}[1]{\mathbf{#1}}

\newcommand{\emptyctx}{{\cdot}}

\newcommand{\Prob}[1]{\mathrm{Pr}[#1]}

\renewcommand{\appendix}{\appendices}

\begin{document}

\title{Formal Verification of Probing Security via Conditional Independence\\
\thanks{This work is supported by
JSPS Grant-in-Aid for Scientific Research(C) JP22K11912,
JST K Program Grant Number JPMJKP24U2,
MEXT Quantum Leap Flagship Program (MEXT Q-LEAP) JPMXS0120319794,
and JSPS KAKENHI Grant Number JP25H00446, JP25K21183.}
}

\author{\IEEEauthorblockN{Satoshi Kura}
\IEEEauthorblockA{\textit{Faculty of Education and Integrated Arts and Sciences} \\
\textit{Waseda University}\\
Tokyo, Japan \\
ORCID: 0000-0002-3954-8255}
\and
\IEEEauthorblockN{Katsuyuki Takashima}
\IEEEauthorblockA{\textit{Faculty of Education and Integrated Arts and Sciences} \\
\textit{Waseda University}\\
Tokyo, Japan \\
ORCID: 0000-0001-5216-2229}
}

\maketitle

\begin{abstract}
    Side-channel attacks are a major threat to the security of cryptosystems.
    Masking is a widely used countermeasure against such attacks, but proving the security of masked algorithms is error-prone without formal verification.
    In this work, we propose a novel approach to formal verification of noninterference properties of masked algorithms based on probabilistic separation logic.
    By establishing a connection between noninterference and conditional independence, we show how noninterference can be verified using Lilac, a separation logic for conditional independence.
    We also provide several proof rules that facilitate the verification of probing security and demonstrate their application to example algorithms.
\end{abstract}

\begin{IEEEkeywords}
Side-channel attack, Masking, Noninterference, Conditional independence, Formal verification, Probabilistic separation logic
\end{IEEEkeywords}

\section{Introduction}

\subsection{Formal Verification of Probing Security}

Information leakage through side-channel attacks is one of the major threats to the security of cryptographic schemes and digital signatures.
Among the widely used countermeasures against side-channel attacks is \emph{masking} \cite{IshaiCRYPTO2003}.
Masking protects secrets by splitting a secret value into several shares via secret sharing and performing computations on those shares, so that the leakage of any subset of intermediate values does not reveal the underlying secret.
Applications of masking include the digital signature scheme Raccoon \cite{DelPino2023}, which was submitted to the NIST Post-Quantum Cryptography Standardization process, as well as Masked ML-DSA \cite{CoronTCHES2023,CoronTCHES2024,HermelinkCRYPTO2025}, a masked variant of ML-DSA \cite{NISTFIPS204}, among many others.

The notion of \emph{noninterference} is commonly used to formalize the security of masked algorithms in the ISW probing model~\cite{IshaiCRYPTO2003}.
However, proving noninterference by hand is error-prone, and several flaws have been identified in manual security proofs for masked algorithms~\cite{PrestPKC2023,CoronFSE2014}.
To avoid such errors, \emph{formal verification} of probing security has been studied in recent years~\cite{BartheCCS2016,BartheEUROCRYPT2015,BartheESORICS2019,ZhangCAV2018,GaoTOSEM2021,GaoTACAS2019,GaoTOSEM2024,EldibTOSEM2014,CoronACNS2018,BloemEUROCRYPT2018,KnichelASIACRYPT2020,MeunierIIEEETransSoftwareEng2023}.

\subsection{Formal Verification via Separation Logic}
\emph{Probabilistic separation logic}~\cite{BarthePOPL2020} is a recently developed formal verification method for reasoning about probabilistic programs, including cryptographic algorithms.
This is a variant of \emph{separation logic}~\cite{ReynoldsLICS2002}, which is a Hoare-style logic that can reason about separation of resources using the \emph{separating conjunction} $\phi * \psi$.
Together with the \emph{frame rule}, the separating conjunction enables \emph{local reasoning}, which is particularly useful for modular verification of large programs.

Separation logic was originally introduced to reason about mutable data structures in imperative programs \cite{ReynoldsLICS2002}, but it has since been applied in a variety of contexts beyond memory separation.
Probabilistic separation logic interprets the separating conjunction as \emph{probabilistic independence} between random variables and has been applied to verify input independence properties of cryptographic algorithms, such as the one-time pad and private information retrieval.
Several extensions of probabilistic separation logic have been proposed~\cite{BaoLICS2021,HoPOPL2026,BaoPOPL2025,ZilbersteinPOPL2026,DalLagoCSF2024,BaoPOPL2022}, including Lilac~\cite{LiPLDI2023}, which supports reasoning about \emph{conditional independence}.

Masking can be viewed as a generalization of the one-time pad, which strongly suggests that probabilistic separation logic is a natural framework for verifying probing security.
However, to the best of our knowledge, none of these logics has been used to verify probing security of masked algorithms.

\begin{figure*}
	\centering
	\begin{minipage}{0.8\textwidth}
		\centering
		\begin{tikzpicture}
			\node[align=center, draw] at (0, 0) (cryptoalg) {cryptographic \\ algorithm};
			\node at (-4, 0.3) (secret) {secret input $A$};
			\node at (-4, -0.3) (nonsecret) {public input $B$};
			\node at (4, 0) (output) {output $C$ observed by adversary};
			\draw[->] (secret) -- (cryptoalg);
			\draw[->] (nonsecret) -- (cryptoalg);
			\draw[->] (cryptoalg) -- (output);
		\end{tikzpicture}
		\subcaption{A cryptographic algorithm with input and output}
		\label{subfig:crypto-algorithm}
	\end{minipage}

	\vspace{1em}
	\begin{minipage}[b]{0.45\textwidth}
		\centering
		\begin{tikzpicture}
			\node[align=center, draw] at (0, 0) (sim) {simulator};
			\node at (-3, 0) (nonsecret) {public input $B$};
			\node at (3, 0) (output) {observed output $C$};
			\draw[->] (nonsecret) -- (cryptoalg);
			\draw[->] (cryptoalg) -- (output);
		\end{tikzpicture}
		\subcaption{Simulator-based formulation}
		\label{subfig:simulator-based-view}
	\end{minipage}
	\begin{minipage}[b]{0.45\textwidth}
		\[ A \perp C \mid B \]
		\subcaption{Conditional-independence-based formulation}
		\label{subfig:conditional-independence-view}
	\end{minipage}
	\caption{Two formulations of security: simulator-based formulation and conditional-independence-based formulation.}
	\label{fig:two-views-of-security}
\end{figure*}

\subsection{Connection Between Simulator-Based Security Notions and Conditional Independence}
In this work, we establish a connection between simulator-based definitions of probing security and conditional independence, enabling the use of probabilistic separation logic for formal verification of probing security.

In cryptography, many security notions, including non-interference, are defined via simulators.
To understand the connection between such simulator-based definitions and conditional independence, consider a cryptographic algorithm (Fig.~\ref{subfig:crypto-algorithm}) and an adversary who attempts to learn some secret information from the input by observing information leaked during the execution of the algorithm.
In this situation, three kinds of information are involved: (i) secret information $A$ in the input, (ii) public (non-secret) information $B$ in the input, which is not directly observed by the adversary, and (iii) information $C$ observed by the adversary.
To formalize the requirement that the adversary cannot learn $A$ from $C$, simulator-based security definitions postulate the existence of a simulator that can compute $C$ solely from $B$, without access to $A$ (Fig.~\ref{subfig:simulator-based-view}).
If such a simulator exists, then the adversary cannot distinguish the real execution from the simulated one.
Consequently, although the adversary may obtain some information about $B$, it is impossible to learn $A$.

We rephrase this simulator-based definition in terms of \emph{conditional independence} (Fig.~\ref{subfig:conditional-independence-view}).
Suppose that $A$, $B$, and $C$ are random variables.
Given $A$ and $B$, the probability distribution of $C$ generated by the cryptographic algorithm (Fig.~\ref{subfig:crypto-algorithm}) corresponds to the conditional probability distribution $\Prob{C | A,B}$.
On the other hand, given only $B$, the probability distribution of $C$ generated by the simulator (Fig.~\ref{subfig:simulator-based-view}) corresponds to $\Prob{C | B}$.
The simulator-based definition requires that these two probability distributions are equal: $\Prob{C | A,B} = \Prob{C | B}$.
This is exactly the definition of conditional independence of $A$ and $C$ given $B$, denoted by $A \perp C \mid B$.
This connection leads to the idea of verifying simulator-based security notions by proving conditional independence properties using probabilistic separation logic that can reason about conditional independence.

\subsection{Contributions}

In this paper, we formalize the connection between simulator-based definitions of probing security and conditional independence within the semantic framework of Lilac~\cite{LiPLDI2023}.
Although the underlying idea of this connection is conceptually simple, its formalization in Lilac is technically non-trivial. This difficulty stems from the fact that Lilac models probabilistic independence using \emph{independent combination}, rather than the standard construction based on product probability spaces.
The relationship between these two notions of independence has been studied in~\cite{LiLICS2024}; however, that work does not address the semantics of the conditioning modality defined via disintegration.
Consequently, establishing the desired connection in Lilac requires a careful measure-theoretic argument.

Although technically demanding, this formalization pays off by making it possible to reason about probing security directly within Lilac.
Our main theorem (Theorem~\ref{thm:noninterference-conditional-independence}) states that the simulator-based definition of noninterference is equivalent to the conditional independence property expressed as a Hoare triple in Lilac.
By the main theorem, to verify probing security of a masked algorithm, it suffices to prove the corresponding conditional independence property using proof rules of Lilac.

Moreover, we provide additional proof rules for Lilac that are necessary to verify probing security of masked algorithms but are not provided in the original Lilac paper~\cite{LiPLDI2023}.
In particular, we show that semi-graphoid axioms for conditional independence hold in Lilac.
These additional rules enhance the verification power of Lilac for reasoning about probing security.
We also consider proof rules related to composability of noninterference properties and capture them in Lilac.
Composability is an important aspect and has been studied in prior works~\cite{BartheCCS2016,CassiersTIFS2020}.
In contrast to prior works, our composability rules focus on $(\mathcal{I}, \mathcal{O})$-noninterference, which is a basic building block for the definitions of $t$-noninterference and its variants.
We demonstrate the effectiveness of our approach by verifying non-interference of the \textsc{Refresh} algorithm~\cite{CoronFSE2014,BartheCCS2016}, multiplication~\cite{IshaiCRYPTO2003}, and \textsc{AddRepNoise}~\cite{DelPinoCRYPTO2024} using Lilac: the first two are fundamental building blocks for masked algorithms, and the last one is a key component of the security proof of Raccoon~\cite{DelPino2023}.

The contributions of this work are summarized as follows:
\begin{itemize}
	\item We establish a connection between existence of simulators and conditional independence, which allows us to use probabilistic separation logics for conditional independence, such as Lilac, to verify probing security.
	\item We provide additional proof rules for Lilac that are necessary to verify probing security of masked algorithms.
	In particular, we show that semi-graphoid axioms for conditional independence hold in Lilac.
	We also provide proof rules related to composability of noninterference.
	\item We demonstrate the effectiveness of our approach by verifying non-interference of the \textsc{Refresh} algorithm~\cite{CoronFSE2014,BartheCCS2016}, multiplication~\cite{IshaiCRYPTO2003}, and \textsc{AddRepNoise}~\cite{DelPinoCRYPTO2024} using Lilac.
\end{itemize}
\iflong
\else
Omitted details can be found in the full version of this paper~\cite{arxiv}.
\fi

\section{Overview}

\subsection{Simulator-Based Security Notions and Probabilistic Independence}

We illustrate the connection between simulator-based security notions and probabilistic independence, using the one-time pad encryption as an example. 

\begin{example}[One-time pad]
	Suppose that a plaintext message is given as an element $M$ in a finite abelian group $\mathbb{K}$.
	The following program implements the one-time pad encryption:
	\begin{algorithmic}
		\Function{OTP}{$M$}
			\State $K \gets \mathtt{unif}\ \mathbb{K};$
			\State $C \gets \ret{(M - K)};$
			\State $\ret{C}$
		\EndFunction
	\end{algorithmic}
	Here, $\mathtt{unif}\ \mathbb{K}$ is the uniform distribution over $\mathbb{K}$, and $C \gets \ret{(M - K)}$ means deterministic assignment of the value $M - K$ to the variable $C$.
	The programming language used in this paper will be formally introduced later in Section~\ref{sec:appl}.
\end{example}

\paragraph{Simulator-based formulation of security}
We can formulate the security of the one-time pad using simulators.
In this setting, a simulator is defined as a program $\mathbf{Sim}(\textsc{OTP})$ that simulates the distribution of $C$ computed by $\textsc{OTP}(M)$ without knowing $M$.
If such a simulator exists, then the adversary cannot learn any information about $M$ because the adversary cannot distinguish the real execution from the simulated one, which does not depend on $M$.
Concretely, a valid simulator for the one-time pad can be given as a program that uniformly samples a random element from $\mathbb{K}$ as the ciphertext: $\mathbf{Sim}(\textsc{OTP}) \coloneqq \mathtt{unif}\ \mathbb{K}$.

\paragraph{Independence-based formulation}
Shannon's \emph{perfect secrecy} is a well-known security notion for encryption schemes \cite{Shannon1949} and defined as the independence between the plaintext $M$ and the ciphertext $C$.
This captures the intuition that observing the ciphertext $C$ does not provide any information about the plaintext $M$.
In fact, the independence of $M$ and $C$ is equivalent to the existence of a simulator described above.

To express and verify such independence properties of randomized programs, we can use probabilistic separation logic \cite{BarthePOPL2020,LiPLDI2023}.
In probabilistic separation logic, sources of randomness are treated as resources.
We have a basic assertion $\Own E$, which asserts ownership of a random variable $E$, and the separating conjunction $\phi * \psi$ expresses that the random variables mentioned in $\phi$ and those in $\psi$ are independent.
Therefore, we can express the independence between two random variables $X$ and $Y$ as $\Own X * \Own Y$.

Using probabilistic separation logic, the probabilistic independence between the plaintext $M$ and the ciphertext $C$ can be written as a Hoare triple
\[ \HoareTriple{\Own M}{\textsc{OTP}(M)}{C. \Own M * \Own C} \]
This states that, starting from an initial random variable satisfying the precondition $\Own M$, after executing the program $\textsc{OTP}(M)$, the resulting random variable satisfies the postcondition $\Own M * \Own C$ where $C$ is a random variable bound to the output of the program.

To prove that the above Hoare triple holds, we can use the proof rules of probabilistic separation logic, which have been already demonstrated in~\cite[Section~5.1.2]{BarthePOPL2020} for the one-time pad.
In this case, we do not need \emph{conditional} independence since the simulator does not take any input.
However, as we will see below, conditional independence becomes necessary when we consider simulators that take some input, such as in the case of non-interference verification.

\subsection{Noninterference of Masked Programs and Conditional Independence}

\emph{Masking} is a widely used countermeasure against side-channel attacks \cite{IshaiCRYPTO2003}, where a secret value is split into several shares via secret sharing and computations are performed on those shares.
A bit more formally, let $X$ be a secret value represented as an element in a finite abelian group $\mathbb{K}$.
The secret value $X$ is split into $t + 1$ \emph{shares}
\[ \mathbb{K} \ni X \quad\mapsto\quad \mathbf{X} = (X^{(0)}, \dots, X^{(t)}) \in \mathbb{K}^{t + 1} \]
such that the secret value $X$ can be recovered as the sum of all shares $X = X^{(0)} + \dots + X^{(t)}$, but any proper subset of the shares does not reveal any information about $X$.
The value $t$ is called the \emph{masking order}.
An algorithm is said to be \emph{masked} if secret input values are represented using such shares, and the algorithm operates on such shares.

\begin{example}\label{ex:masked-add}
	Consider adding two values $X$ and $Y$ in a finite abelian group $\mathbb{K}$.
	A masked implementation of the addition with two shares ($t = 1$) is given as follows.
	\begin{algorithmic}
		\Function{MaskedAdd}{$X^{(0)}, X^{(1)}, Y^{(0)}, Y^{(1)}$}
			\State $Z^{(0)} \gets \ret{(X^{(0)} + Y^{(0)})};$
			\State $Z^{(1)} \gets \ret{(X^{(1)} + Y^{(1)})};$
			\State $\ret{(Z^{(0)}, Z^{(1)})}$
		\EndFunction
	\end{algorithmic}
	It is straightforward to see that the output shares $Z^{(0)}$ and $Z^{(1)}$ represent the sum $Z = X + Y$ of two input values.
\end{example}

Masked algorithms are expected to be secure against side-channel attacks, since the leakage of at most $t$ shares does not reveal any information about the underlying secret value.
This security notion is formalized by \emph{$t$-noninterference}~\cite{BartheEUROCRYPT2015}, which we informally describe below.

\paragraph{Simulator-based formulation of $t$-noninterference}
In Example~\ref{ex:masked-add}, the secret values $X$ and $Y$ remain secure even if an adversary can only guess at most $t$ input shares for each of $X$ and $Y$.
This intuition is formalized as a security definition for masked algorithms, by considering a simulator that takes $t$ input shares as its input and outputs the values that are probed by an adversary.
In the $t$-probing model, an adversary is allowed to choose arbitrary probe locations and observe up to $t$ intermediate values during the execution of the program.
A masked algorithm is said to be \emph{$t$-noninterfering} if, for any choice of at most $t$ probes, there exists a simulator that, given at most $t$ input shares, can simulate the values observed at those probes.
The existence of such a simulator ensures that the adversary cannot learn any information about the secret values from the probed values, since the adversary cannot guess input shares that are not given to the simulator.

\paragraph{Conditional-independence-based formulation}
Similarly to the case of the one-time pad, we can rephrase the definition of $t$-noninterference in terms of \emph{conditional} independence.
Specifically, we consider the conditional independence between the input shares not in the simulator's input and the probed values, given the input shares provided to the simulator.
It turns out that this conditional independence is equivalent to the simulator-based definition of $t$-noninterference, which we will formally state in Theorem~\ref{thm:noninterference-conditional-independence}.

To reason about such conditional independence properties of masked programs, we can use an extension of probabilistic separation logic for conditional independence, namely, Lilac \cite{LiPLDI2023}.
In Lilac, the conditional independence between random variables $X$ and $Y$ given $Z$ can be expressed as a formula $\Conditioning_{z \gets Z} (\Own X * \Own Y)$ where the \emph{conditioning modality} $\Conditioning_{z \gets Z}$ represents conditioning on the random variable $Z$.
Using Lilac, the conditional independence required for $t$-noninterference of a masked program $\textsc{MaskedAdd}$ can be expressed as follows: for each choice of probe locations\footnote{Without loss of generality, we may assume that the set of probes $\mathcal{P}$ does not contain input shares: $\mathcal{P} \subseteq \{ Z^{(0)}, Z^{(1)} \}$.} $\mathcal{P} \subseteq \{ X^{(0)}, X^{(1)}, Y^{(0)}, Y^{(1)}, Z^{(0)}, Z^{(1)} \}$ such that $|\mathcal{P}| \le t = 1$, there exists a choice of at most one input share $\mathcal{X} \subseteq \mathbf{X}$ and $\mathcal{Y} \subseteq \mathbf{Y}$ from each of $\mathbf{X} = (X^{(0)}, X^{(1)})$ and $\mathbf{Y} = (Y^{(0)}, Y^{(1)})$ such that $|\mathcal{X}|, |\mathcal{Y}| \le |\mathcal{P}|$ and the following Hoare triple holds.
\begin{align}
	&\{\ \Own (\mathbf{X}, \mathbf{Y})\ \} \\
	&\textsc{MaskedAdd}(\mathbf{X}, \mathbf{Y}) \\
	&\{\ \mathbf{Z}. \Conditioning_{(x, y) \gets (\mathcal{X}, \mathcal{Y})} (\Own (\mathbf{X} \setminus \mathcal{X}, \mathbf{Y} \setminus \mathcal{Y}) * \Own \mathcal{P})\ \}
\end{align}
Here, $\mathbf{X} \setminus \mathcal{X}$ denotes the tuple of variables that are in $\mathbf{X}$ but not in $\mathcal{X}$.
This Hoare triple states that the information observed by the adversary (or the output of the simulator) $\mathcal{P}$ is independent of the secret input $\mathbf{X} \setminus \mathcal{X}$ and $\mathbf{Y} \setminus \mathcal{Y}$, given the public input $\mathcal{X}$ and $\mathcal{Y}$ provided to the simulator.

Once we have established the connection between simulator-based definitions of $t$-noninterference and conditional independence, it remains to show that the above Hoare triple holds for each choice of probe locations $\mathcal{P}$.
This can be done by applying the proof rules of Lilac.
For example, consider the case when the adversary probes the output share $\mathcal{P} = \{ Z^{(0)} \}$.
Then, we can choose $\mathcal{X} = \{ X^{(0)} \}$ and $\mathcal{Y} = \{ Y^{(0)} \}$ as the input shares to be given to the simulator.
Using Lilac's proof rules, we can derive the following Hoare triple:
\begin{align}
	&\{\ \Own (\mathbf{X}, \mathbf{Y})\ \} \\
	&\textsc{MaskedAdd}(\mathbf{X}, \mathbf{Y}) \\
	&\{\ \mathbf{Z}. \Conditioning_{(x, y) \gets (X^{(0)}, Y^{(0)})} (\Own (X^{(1)}, Y^{(1)}) * \Own Z^{(0)})\ \} \label{eq:masked-add-conditional-independence}
\end{align}
A bit more concretely, this is proved as follows.
We first use the rule for Hoare triples in Fig.~\ref{fig:lilac-hoare-rules} to derive the following Hoare triple.
\begin{align}
	&\{\ \Own (\mathbf{X}, \mathbf{Y})\ \} \\
	&\textsc{MaskedAdd}(\mathbf{X}, \mathbf{Y}) \\
	&\{\ \mathbf{Z}. (\Own (\mathbf{X}, \mathbf{Y})) * Z^{(0)} \AlmostSurelyEqual X^{(0)} + Y^{(0)} \ \}
	\label{eq:masked-add-intermediate}
\end{align}
Here, $\AlmostSurelyEqual$ stands for the almost-sure equality between two random variables.
By the consequence rule (\hyperlink{rule:H-Conseq}{\textsc{H-Conseq}}), it remains to show that the postcondition of~\eqref{eq:masked-add-intermediate} implies that of~\eqref{eq:masked-add-conditional-independence}.
This is done by using the proof rules for Lilac formulas.
Several proof rules are provided in~\cite{LiPLDI2023}; however, they are not sufficient to derive the desired entailment.
Therefore, we introduce several new rules for Lilac formulas in Section~\ref{sec:additional-rules}, including the following rule.
\[ \phi * (\Own (X, Y)) \LilacEntails \mathop{\mathbf{C}}_{x \gets X}(\phi * \Own Y) \]
This is a generalization of the \emph{weak union} property of conditional independence, which states $W \perp (X, Y) \mid Z$ implies $W \perp Y \mid (X, Z)$ for random variables $W$, $X$, $Y$, and $Z$.
We will prove the soundness of this rule in Proposition~\ref{prop:semi-graphoid-axioms}.
Using this rule, we can have the following entailment:
\begin{align}
	&\Own (\mathbf{X}, \mathbf{Y}) * Z^{(0)} \AlmostSurelyEqual X^{(0)} + Y^{(0)} \\
	&\LilacEntails \!\!\Conditioning_{(x, y) \gets (X^{(0)}, Y^{(0)})} \!\!(\Own (X^{(1)}, Y^{(1)}) * Z^{(0)} \AlmostSurelyEqual X^{(0)} + Y^{(0)})
\end{align}
Under the conditioning modality, $(X^{(0)}, Y^{(0)})$ are equal to deterministic values $(x, y)$.
Thus, we can conclude that $Z^{(0)} \AlmostSurelyEqual x + y$ is conditionally independent of $(X^{(1)}, Y^{(1)})$, as desired.

As we have seen in the above example, we can use Lilac to prove $t$-noninterference of masked programs, given (1) the connection between simulator-based definitions and conditional independence and (2) additional proof rules for Lilac formulas, such as the weak union rule.
In the subsequent sections, we will formally present such results.
We will also show more examples of masked programs and their security proofs using Lilac in Section~\ref{sec:examples}.

\section{Preliminaries}

\subsection{A Probabilistic Programming Language: APPL}\label{sec:appl}

We define a probabilistic programming language called APPL following \cite{LiPLDI2023}.

\subsubsection{Syntax}

\emph{Expressions} are defined as follows.
\begin{align}
	E \quad&\coloneqq\quad x \mid X \mid \mathtt{T} \mid \mathtt{F} \mid k \mid i \mid \ifexpr{E_0}{E_1}{E_2} \\
	&\qquad \mid (E_1, \dots, E_n) \mid \mathtt{proj}_{i}\ E \mid E_1 \mathbin{\mathtt{op}} E_2
\end{align}
Here, $x$ is a \emph{deterministic variable}, $X$ is a \emph{random variable}, $k \in \mathbb{K}$ is a constant in a finite abelian group $\mathbb{K}$, and $i$ is a natural number.
The constant $\mathtt{T}$ (resp.\ $\mathtt{F}$) represents the boolean value true (resp.\ false).
The symbol $\mathtt{op}$ ranges over binary operations, including group operations such as $+$ and $-$; boolean operations such as $\land$ and $\lor$; and comparison operations such as $=$, and $\neq$.

Expressions are typed in a standard way.
\emph{Types} are defined as follows.
\[ A, B \quad\coloneqq\quad \mathtt{bool} \mid \mathbb{K} \mid \mathtt{index} \mid A_1 \times \dots \times A_n \]
We have base types $\mathtt{bool}$, $\mathbb{K}$, and $\mathtt{index}$ for boolean values, elements of the group $\mathbb{K}$, and indices for for loops, respectively.
Types also include product types $A_1 \times \dots \times A_n$.
The empty product type ($n = 0$) is written as $\mathtt{unit}$. 

\emph{Typing judgements} for expressions are given in the form of $\Gamma; \Delta \vdash E : A$ where $\Gamma = x : A_1, \dots, x_m : A_m$ is a typing context for deterministic variables and $\Delta = X : A_1, \dots, X_n : A_n$ is a typing context for random variables.
Typing rules are standard~\cite{LiPLDI2023} and omitted here.

\emph{APPL programs} are defined as follows.
\begin{align}
	L, M, N &\coloneqq\quad \ret{E} \mid X \gets M; N \mid \mathtt{unif}\ \mathbb{K} \\
	&\quad \mid \forloop{i}{n_{\mathrm{start}}}{n_{\mathrm{end}}}{X}{E}{M}
\end{align}
In the for-loop $\forloop{i}{n_{\mathrm{start}}}{n_{\mathrm{end}}}{X}{E}{M}$, $i$ is the loop index variable ranging from $n_{\mathrm{start}}$ to $n_{\mathrm{end}}$.
The variable $X$ is initialized with $E$ at the beginning of the loop, and in each iteration, $X$ is updated with the result of $M$.
The final value of $X$ after the last iteration is returned as the result of the for-loop.
One might expect the for-loop to be written as $X \gets E; \mathtt{for}\ i \gets n_{\mathrm{start}} \cdots n_{\mathrm{end}}\ \{ M \}$.
However, we adopt the above syntax because $X$ is bound by
the for-loop construct.

In sequential composition $X \gets M; N$ and for-loops, we often bind multiple variables at once.
Formally, when the result of $M$ is a tuple, we define $(X, Y) \gets M; N$ as syntactic sugar for $Z \gets M; N[\mathtt{proj}_{1}\ Z / X, \mathtt{proj}_{2}\ Z / Y]$ where $Z$ is a fresh variable, and similarly for for-loops.
In many examples in this paper, we also use indexed variables such as $X_i$ and $X_{i, j}$ and consider programs such as $X_E \gets M; N$ where $E$ is an expression of type $\mathtt{index}$.

We do not include conditional branching as a primitive construct in APPL.
However, they can be defined as syntactic sugar using conditional branching for expressions.
\begin{align}
	&\ifexpr{E}{M}{N} \\
	&\coloneqq\quad X \gets M;\ Y \gets N;\ \ret{(\ifexpr{E}{X}{Y})}
\end{align}
Since APPL does not cause computational effects other than randomness, this definition is semantically equivalent to the standard conditional branching construct.

\begin{figure*}
	\begin{mathpar}
		\inferrule{
			\emptyset; \Delta \vdash E : A
		}{
			\Delta \vdash \ret{E} : \mathtt{G} A
		}
		\and
		\inferrule{
			\Delta \vdash M : \mathtt{G} A \\
			\Delta, X : A \vdash N : \mathtt{G} B
		}{
			\Delta \vdash X \gets M; N : \mathtt{G} B
		}
		\and
		\inferrule{ }{
			\Delta \vdash \mathtt{unif}\ \mathbb{K} : \mathtt{G} \mathbb{K}
		}
		\and
		\inferrule{
			\emptyset; \Delta \vdash E : A \\
			\Delta, i : \mathtt{index}, X : A \vdash M : \mathtt{G} A
		}{
			\Delta \vdash \forloop{i}{n_s}{n_e}{X}{E}{M} : \mathtt{G} A
		}
	\end{mathpar}
	\caption{Typing rule for APPL programs.}
	\label{fig:appl-typing-rule}
\end{figure*}

Typing judgments for programs are given in the form of $\Delta \vdash M : \mathtt{G} A$ where $\Delta$ is a typing context for random variables.
Here, $\mathtt{G} A$ represents a type of distributions over values of type $A$.
Typing rules are shown in Fig.~\ref{fig:appl-typing-rule}.

\subsubsection{Semantics}

Each type $A$ is interpreted as a measurable space $\interpret{A}$ in the standard way.
Note that all base types are interpreted as countably generated measurable spaces, which implies that any type is also interpreted as a countably generated measurable space.
The interpretation of a context $\Gamma = x : A_1, \dots, x_m : A_m$ is defined as the product measurable space $\interpret{\Gamma} = \interpret{A_1} \times \dots \times \interpret{A_m}$.

Expressions $\Gamma; \Delta \vdash E : A$ are interpreted as functions
\[ \interpret{E} : \interpret{\Gamma} \to \mathbf{Meas}(\interpret{\Delta}, \interpret{A}) \]
where $\mathbf{Meas}(X, Y)$ is the set of measurable functions from $X$ to $Y$.
Programs $\Delta \vdash M : \mathtt{G} A$ are interpreted as Markov kernels, i.e., measurable functions that maps each element in $\interpret{\Delta}$ to a probability measure over $\interpret{A}$:
\[ \interpret{M} \in \mathbf{Meas}(\interpret{\Delta}, \Giry \interpret{A}) \]
Here, $\Giry$ is the \emph{Giry monad}, which maps a measurable space to the measurable space of probability measures over it.
Concrete definitions of the semantics are standard \cite{LiPLDI2023} and omitted.

\subsection{$t$-Probing Model and Noninterference}
\label{sec:probing-model}

In the $t$-probing model, an adversary is allowed to probe up to $t$ internal or output values during the execution of a masked program where $t$ is the order of masking.
The adversary aims to guess secret information from the observed values, which is achieved if the adversary can guess more than $t$ input shares.
The notion of \emph{$t$-noninterference} formalizes the security requirement in this setting.
Suppose that $\mathbf{X}_1, \dots, \mathbf{X}_n$ are shared inputs where each $\mathbf{X}_i = (X_i^{(0)}, \dots, X_i^{(t)})$ is a tuple of $t + 1$ shares.
A masked program $M$ is said to be \emph{$t$-noninterfering} ($t$-NI) if for any set of probes $\mathcal{O}$ with $|\mathcal{O}| \le t$, there exists a subset $\mathcal{I}_i \subseteq \mathbf{X}_i$ for each $i$ such that $|\mathcal{I}_i| \le |\mathcal{O}|$ and the values of probes in $\mathcal{O}$ can be simulated using only the input shares in $\mathcal{I}_1, \dots, \mathcal{I}_n$.

Below, we rephrase the definitions of $t$-noninterference and related notions in terms of APPL programs.
Suppose that we have an APPL program $\Delta \vdash M : \mathtt{G} B$ where $\Delta = X_1 : A_1, \dots, X_m : A_m$ and $B = B_1 \times \dots \times B_n$.
We assume that $M$ is in the static single assignment (SSA) form so that each internal value computed in $M$ is assigned to a unique variable name.
Then, probes can be specified by a set of variables.
We call variables $X_1, \dots, X_m$ in the context $\Delta$ the \emph{input variables} of $M$.
The \emph{output variables} of $M$ are variables bound to the output value of $M$ and have type $B$.
The output variables do not a priori occur in the program, but we often introduce them to refer to the output of $M$.
The \emph{internal variables} of $M$ are variables occurring in $M$ other than input variables.

Later in Section~\ref{sec:lilac}, we will introduce Hoare triples to reason about APPL programs.
A Hoare triple is given in the form $\HoareTriple{\phi}{M}{\mathbf{Y}. \psi}$ where $\phi$ is an assertion over input variables of $M$, $\mathbf{Y}$ is a tuple of variables bound to the output of $M$, and $\psi$ is an assertion over both input variables and output variables $\mathbf{Y}$.
Hoare triples cannot refer to internal variables directly, but we can easily transform a given APPL program to expose internal variables as output variables.

For example, the APPL program $X : \mathbb{K} \vdash Y \gets \ret{X + X}; \ret{Y + X} : \mathbb{K}$ has one input variable $X$ and one internal variable $Y$, and outputs $Y + X$.
Let $Z$ be an output variable bound to the output.
Hoare triples $\HoareTriple{\phi}{Y \gets \ret{X + X}; \ret{Y + X}}{Z. \psi}$ for this program can refer to $X$ and $Z$, but not to $Y$ directly.
In this case, we can transform the program to $X : \mathbb{K} \vdash Y \gets \ret{X + X}; Z \gets \ret{Y + X}; \ret{(Y, Z)} : \mathbb{K} \times \mathbb{K}$ so that both $Y$ and $Z$ are accessible as output variables.

In general, given a program $M$, we can construct a program $\ExposingTranslation{M}$ that exposes all internal variables as output.
\[ \Delta \vdash M : \mathtt{G} B \quad\mapsto\quad \Delta \vdash \ExposingTranslation{M} : \mathtt{G} (B_{\mathrm{int}} \times B) \]
Here, $B_{\mathrm{int}}$ is the type of the tuple of all internal variables of $M$.
Of course, the transformed program $\ExposingTranslation{M}$ is defined so that $M$ is semantically equal to $(\mathbf{Y}, \mathbf{Z}) \gets \ExposingTranslation{M}; \ret{\mathbf{Z}}$ where $\mathbf{Y}$ and $\mathbf{Z}$ are tuples of internal and output variables, respectively.

In what follows, we assume that APPL programs are appropriately transformed so that internal variables are accessible as parts of output variables.
Then, probes can be specified by output variables only.
Now, we define the notion of noninterference for APPL programs.
Throughout the paper, we write a tuple of variables $\mathbf{X} = (X_1, \dots, X_m)$ in bold font and often identify the tuple with the set $\{X_1, \dots, X_m\}$ for notational convenience.

\begin{definition}
	Let $\Delta \vdash M : \mathtt{G} B$ be an APPL program where $\Delta = X_1 : A_1, \dots, X_m : A_m$ and $B = B_1 \times \dots \times B_n$.
	We write $\vec{I} = \{X_1, \dots, X_m\}$ for the set of input variables and $\vec{O} = \{Y_1, \dots, Y_n\}$ for the set of output variables of $M$.
	For $\mathcal{I} \subseteq \vec{I}$ and $\mathcal{O} \subseteq \vec{O}$, we say that $M$ is \emph{$(\mathcal{I}, \mathcal{O})$-noninterfering} (\emph{$(\mathcal{I}, \mathcal{O})$-NI}) if there exists an APPL program $\Delta_{\mathcal{I}} \vdash \mathbf{Sim}(M) : \mathtt{G} B_{\mathcal{O}}$ such that the following equation holds:
	\begin{equation}
		\mathbf{Sim}(M) \quad=\quad \vec{O} \gets M;\ \ret{\mathcal{O}}
	\end{equation}
	Here, $\Delta_{\mathcal{I}}$ is the restriction of $\Delta$ to variables in $\mathcal{I}$ and $B_{\mathcal{O}} = \prod_{Y_i \in \mathcal{O}} B_i$ is the restriction of $B$ to components in $\mathcal{O}$.
	We call such a program $\mathbf{Sim}(M)$ a \emph{simulator} for $M$ with input $\mathcal{I}$ and output $\mathcal{O}$.
\end{definition}

The notion of $t$-noninterference \cite{BartheEUROCRYPT2015} for masked algorithms is defined using $(\mathcal{I}, \mathcal{O})$-NI.

\begin{definition}[$t$-NI]\label{def:t-noninterference}
	Suppose that the input variables of the APPL program $M$ consist of shared values $\mathbf{X}_1, \dots, \mathbf{X}_n$ where each $\mathbf{X}_i = (X_i^{(0)}, \dots, X_i^{(t)})$ is a tuple of $t + 1$ shares, and suppose that the output variables are $\mathbf{O} = \{Y_1, \dots, Y_m\}$.
	We say that $M$ is \emph{$t$-noninterfering} ($t$-NI) if for any set of probes $\mathcal{O} \subseteq \mathbf{O}$ with $|\mathcal{O}| \le t$, there exist subsets $\mathcal{X}_1 \subseteq \mathbf{X}_1, \dots, \mathcal{X}_n \subseteq \mathbf{X}_n$ such that $|\mathcal{X}_i| \le |\mathcal{O}|$ for each $i$, and $M$ is $(\mathcal{X}_1 \cup \dots \cup \mathcal{X}_n, \mathcal{O})$-noninterfering.
\end{definition}

There are several variants of $t$-noninterference proposed in the literature.
One of such variants is \emph{strong noninterference} (SNI) \cite{BartheCCS2016}, which exhibits a better composability property.
This variant is also defined based on $(\mathcal{I}, \mathcal{O})$-NI.

\begin{definition}[$t$-SNI]\label{def:t-strong-noninterference}
	Suppose the same setting as in Definition~\ref{def:t-noninterference}.
	Let $\mathbf{O}_{\mathrm{int}} \subseteq \mathbf{O}$ be the set of exposed internal variables, which are not used as input to subsequent programs.
	We say that $M$ satisfies \emph{$t$-strong noninterference} ($t$-SNI) if for any set of probes $\mathcal{O} \subseteq \mathbf{O}$ with $|\mathcal{O}| \le t$, there exist subsets $\mathcal{X}_1 \subseteq \mathbf{X}_1, \dots, \mathcal{X}_n \subseteq \mathbf{X}_n$ such that for each $i$, $|\mathcal{X}_i| \le |\mathcal{O} \cap \mathbf{O}_{\mathrm{int}}|$ and $M$ is $(\mathcal{X}_1 \cup \dots \cup \mathcal{X}_n, \mathcal{O})$-noninterfering.
\end{definition}

Another variant is \emph{noninterference with unshared input} (NIU) \cite{EsginEUROCRYPT2024}, which generalizes ordinary noninterference by allowing unshared inputs in addition to shared inputs.

\begin{definition}[$t$-NIU]\label{def:t-noninterference-with-unshared-input}
	Suppose that the input variables of the APPL program $M$ consist of shared inputs $\mathbf{X}_1, \dots, \mathbf{X}_n$ and unshared inputs $\vec{Y} = (Y_1, \dots, Y_m)$ where each $\mathbf{X}_i = (X_i^{(0)}, \dots, X_i^{(t)})$ is a tuple of $t + 1$ shares.
	Let $\mathbf{O}$ be the set of output variables of $M$.
	We say that $M$ is \emph{$t$-noninterfering with unshared input} ($t$-NIU) if for any set of probes $\mathcal{O} \subseteq \mathbf{O}$ with $|\mathcal{O}| \le t$, there exist subsets $\mathcal{X}_1 \subseteq \mathbf{X}_1, \dots, \mathcal{X}_n \subseteq \mathbf{X}_n$ and $\mathcal{Y} \subseteq \mathbf{Y}$ such that $|\mathcal{X}_i| \le |\mathcal{O}|$ for each $i$, $|\mathcal{Y}| \le |\mathcal{O}|$, and $M$ is $(\mathcal{X}_1 \cup \dots \cup \mathcal{X}_n \cup \mathcal{Y}, \mathcal{O})$-noninterfering.
	A strong version of NIU (SNIU) can be defined similarly.
\end{definition}

There are other variants of $t$-NI in the literature, such as \emph{noninterference with public outputs} (NIo)~\cite{BartheEUROCRYPT2018} and \emph{probe isolating noninterference} (PINI)~\cite{CassiersTIFS2020}.
Although we do not give their definitions here, they can also be defined via $(\mathcal{I}, \mathcal{O})$-NI.
Since $(\mathcal{I}, \mathcal{O})$-NI is the basis of various notions of noninterference, we mainly focus on verifying $(\mathcal{I}, \mathcal{O})$-NI for APPL programs in the rest of this paper.

\section{Lilac: A Separation Logic for Conditional Independence}
\label{sec:lilac}
We briefly review the syntax and semantics of Lilac \cite{LiPLDI2023}, a probabilistic separation logic that can reason about conditional independence.

\paragraph*{Syntax}
\emph{Formulas} of Lilac are defined as follows.
\begin{align}
	\phi, \psi &\coloneqq\quad \top \mid \bot \mid \phi \land \psi \mid \phi \lor \psi \mid \forall x. \phi \mid \exists x. \phi \\
	&\quad \mid \forall_{\mathrm{rv}} X. \phi \mid \exists_{\mathrm{rv}} X. \phi \mid \phi * \psi \mid \phi \sepimp \psi \mid \Own E \mid \Persistent \phi \\
	&\quad \mid \Conditioning_{x : A \gets E} \phi \mid E \sim \mu \mid E_1 \AlmostSurelyEqual E_2 \mid \mathrm{wp}(M, X : A. \phi)
\end{align}
Here, $\mu \coloneqq \Uniform{\mathbb{K}} \mid \cdots $ is a distribution, including the uniform distribution $\Uniform{\mathbb{K}}$ over a finite group $\mathbb{K}$.
Formulas defined above assert propositions about random variables and the ownership of sources of randomness.
Below we briefly explain the intended meaning of each construct.
The formula $\Own E$ asserts the probabilistic ownership of the random variable $E$.
The formula $E \sim \mu$ asserts that the random variable $E$ follows the distribution $\mu$ and that its probabilistic ownership is held.
The formula $E_1 \AlmostSurelyEqual E_2$ asserts that the random variables $E_1$ and $E_2$ are almost surely equal.
The separating conjunction $\phi * \psi$ asserts that $\phi$ and $\psi$ own disjoint, probabilistically independent sources of randomness and that $\phi$ holds for one part and $\psi$ holds for the other part.
The \emph{conditioning modality} $\Conditioning_{x \gets E} \phi$, which is one of the main features of Lilac, asserts that $\phi$ holds when conditioning on the value of $E$ being $x$.
The weakest precondition $\mathrm{wp}(M, X : A. \phi)$ for program $M$ and postcondition $\phi$ asserts that after executing $M$ and binding the result to the random variable $X$, the formula $\phi$ holds.
The persistence modality~\cite{JungJFunctProg2018} is denoted by $\Persistent \phi$, which asserts that $\phi$ holds without owning any source of randomness.
There are two types of quantifiers: the standard quantifiers $\forall x. \phi$ and $\exists x. \phi$ for deterministic variables, and the random variable quantifiers $\forall_{\mathrm{rv}} X. \phi$ and $\exists_{\mathrm{rv}} X. \phi$ for random variables.

For example, $\Own X * \Own Y$ means that the random variables $X$ and $Y$ are independent.
Conditional independence of $X$ and $Y$ given $Z$ is expressed as $\Conditioning_{z \gets Z} (\Own X * \Own Y)$.

Typing judgements for Lilac formulas are denoted by $\Gamma; \Delta \vdash \phi$ where $\Gamma$ is a typing context for deterministic variables and $\Delta$ is a typing context for random variables.
Typing rules are standard \cite{LiPLDI2023} and omitted here.

We write $\phi \LilacEntails \psi$ if $\phi$ entails $\psi$ in Lilac.
We also write $\LilacEntails \psi$ when $\phi$ is $\top$.
If both $\phi \LilacEntails \psi$ and $\psi \LilacEntails \phi$ hold, then we write $\phi \LilacIff \psi$.
The semantics of the entailment relation is given later.
Inference rules for the entailment relation are shown in Fig.~\ref{fig:lilac-rules}.
Most of the rules are taken from \cite{LiPLDI2023}, but we also include some rules that are not explicitly in \cite{LiPLDI2023} but are easily proved to be valid.
Later in Section~\ref{sec:additional-rules}, we also provide several non-trivial additional rules that are not in \cite{LiPLDI2023}.

To reason about APPL programs, Hoare triples are defined in Lilac as follows.
\[ \HoareTriple{\phi}{M}{X. \psi} \quad\coloneqq\quad \Persistent (\phi \sepimp \mathrm{wp}(M, X : A. \psi)) \]
This is semantically equivalent to
$\phi \LilacEntails \mathrm{wp}(M, X : A. \psi)$.
Inference rules for Hoare triples are shown in Fig.~\ref{fig:lilac-hoare-rules}.

\begin{figure*}
	\begin{mathpar}
		\inferrule[\hypertarget{rule:C-Entail}{C-Entail}]{
			\phi \LilacEntails \psi
		}{
			\Conditioning_{x \gets E} \phi \LilacEntails \Conditioning_{x \gets E} \psi
		}
		\and
		\inferrule[\hypertarget{rule:C-Indep}{C-Indep}]{}{
			\Own E * \phi \LilacEntails \Conditioning_{x \gets E} \phi
		}
		\and
		\inferrule[\hypertarget{rule:C-Subst}{C-Subst}]{}{
			\Own X \LilacEntails \Conditioning_{x \gets X} (X \AlmostSurelyEqual x)
		}
		\and
		\inferrule[\hypertarget{rule:C-Own}{C-Own}]{}{
			\Own E \LilacEntails \Conditioning_{x \gets E} \Own E
		}
		\and
		\inferrule[\hypertarget{rule:Persistent-Elim}{$\Persistent$-Elim}]{}{
			\Persistent \phi \LilacEntails \phi
		}
		\and
		\inferrule[\hypertarget{rule:C-And-Distributive}{C-And-Distributive}]{}{
			\Conditioning_{x \gets X} (\phi \land \psi) \LilacIff \Conditioning_{x \gets X} \phi \land \Conditioning_{x \gets X} \psi
		}
		\and
		\inferrule[\hypertarget{rule:C-Unif}{C-Unif}]{}{
			\Own X \land \Conditioning_{x \gets X} (Y \sim \Uniform{\mathbb{K}}) \LilacEntails \Own X * (Y \sim \Uniform{\mathbb{K}})
		}
		\and
		\inferrule[\hypertarget{rule:Persistent-Conj}{$\Persistent$-Conj}]{}{
			\phi \land \Persistent \psi \LilacEntails \phi * \Persistent \psi
		}
		\and
		\inferrule[\hypertarget{rule:Unif-Bijection}{Unif-Bijection}]{
			\Gamma ; X : \mathbb{K} \vdash E : \mathbb{K} \\
			\text{$E$ is bijective in $X$}
		}{
			X \sim \Uniform{\mathbb{K}} \LilacEntails E \sim \Uniform{\mathbb{K}}
		}
		\and
		\inferrule[\hypertarget{rule:Transfer-Own}{Transfer-Own}]{}{
			\Own E_1 \land (E_1 \AlmostSurelyEqual E_2) \LilacEntails \Own E_2
		}
	\end{mathpar}
	\caption{Selected inference rules for Lilac formulas~\cite{LiPLDI2023}.
	See
	\iflong
	Fig.~\ref{fig:lilac-rules-full} in the appendix
	\else
	\cite[Fig.~12]{arxiv}
	\fi
	for a more comprehensive list of rules.
	We omit obvious structural rules like commutativity and associativity of the separating conjunction $*$.
	Note that Lilac is an \emph{affine} separation logic, and thus $\phi * \psi \LilacEntails \phi$ holds.}
	\label{fig:lilac-rules}
\end{figure*}

\begin{figure*}
	\begin{mathpar}
		\inferrule[H-Unif]{ }{
			\HoareTriple{\top}{\mathtt{unif}\ \mathbb{K}}{X. X \sim \Uniform{\mathbb{K}}}
		}
		\and
		\inferrule[H-Ret]{ }{
			\HoareTriple{Q[E/X]}{\ret{E}}{X. Q}
		}
		\and
		\inferrule[H-Let]{
			\HoareTriple{P}{M}{X. Q} \\
			\forall_{\mathrm{rv}} X. \HoareTriple{Q}{N}{Y. R}
		}{
			\HoareTriple{P}{X \gets M; N}{Y. R}
		}
		\and
		\inferrule[H-For]{
			\forall i : \mathtt{index}. \forall_{\mathrm{rv}} X : A. \HoareTriple{I(i, X)}{M}{X'. I(i + 1, X')}
		}{
			\HoareTriple{I(n_s, E)}{\forloop{i}{n_s}{n_e}{X}{E}{M}}{X. I(n_e + 1, X)}
		}
		\and
		\inferrule[\hypertarget{rule:H-Conseq}{H-Conseq}]{
			P \LilacEntails P' \\
			\HoareTriple{P'}{M}{X. Q'} \\
			Q' \LilacEntails Q
		}{
			\HoareTriple{P}{M}{X. Q}
		}
		\and
		\inferrule[\hypertarget{rule:H-Frame}{H-Frame}]{
			\HoareTriple{P}{M}{X. Q} \\
			X \notin R
		}{
			\HoareTriple{P * R}{M}{X. Q * R}
		}
	\end{mathpar}
	\caption{Inference rules for Hoare triples in Lilac \cite{LiPLDI2023}}
	\label{fig:lilac-hoare-rules}
\end{figure*}

\paragraph*{Semantics}
We briefly explain the semantics of Lilac formulas \cite{LiPLDI2023}.
The semantics of separating conjunction in Lilac is based on the notion of \emph{Kripke resource monoids} (KRMs) \cite{GalmicheMSCS2005}.
A KRM is a partially ordered partial commutative monoid $(M, 1, {\cdot}, {\sqsubseteq})$ such that the order relation $\sqsubseteq$ is compatible with the monoid product.
To capture probabilistic independence of random variables, Lilac uses a specific KRM defined on the set of probability spaces on a fixed sample space $\Omega$.
\begin{definition}
	Let $(\Omega, \Sigma_{\Omega})$ be the Hilbert cube $[0, 1]^{\mathbb{N}}$ with the standard Borel $\sigma$-algebra.
	We define a KRM $\mathcal{M}_{\mathrm{ind}} = (M, 1, {\cdot}, {\sqsubseteq})$ as follows.
	\begin{itemize}
		\item The underlying set $\Omega$ is the set of probability spaces $\mathcal{P} = (\Omega, \mathcal{F}, \mu)$ over the sample space $\Omega$ where $\mathcal{F} \subseteq \Sigma_{\Omega}$ is a $\sigma$-algebra and $\mu : \mathcal{F} \to [0, 1]$ is a probability measure.
		We often omit $\Omega$ and simply write $(\mathcal{F}, \mu)$ for a probability space in $\mathcal{M}$.
		\item The monoid product $\cdot$ is defined by the \emph{independent combination} of probability spaces: given a pair of probability spaces $(\mathcal{F}, \mu)$ and $(\mathcal{G}, \nu)$ in $\KRM$, their \emph{independent combination} $(\mathcal{F}, \mu) \cdot (\mathcal{G}, \nu)$ is a (necessarily unique) probability space $(\mathcal{H}, \lambda)$ such that $\mathcal{H}$ is the smallest $\sigma$-algebra containing both $\mathcal{F}$ and $\mathcal{G}$, and $\lambda$ witnesses the independence of $\mathcal{F}$ and $\mathcal{G}$, i.e., for any $F \in \mathcal{F}$ and $G \in \mathcal{G}$, $\lambda(F \cap G) = \mu(F) \cdot \nu(G)$.
		The unit element is the trivial probability space $(\{\emptyset, \Omega\}, \mu)$.
		\item The order relation $(\mathcal{F}, \mu) \sqsubseteq (\mathcal{G}, \nu)$ is defined by $\mathcal{F} \subseteq \mathcal{G}$ and $\mu = \nu|_{\mathcal{F}}$ where $\nu|_{\mathcal{F}}$ is the restriction of $\nu$ to $\mathcal{F}$.
	\end{itemize}
\end{definition}
Conditioning modalities in Lilac are interpreted by disintegrations \cite{ChangStatisticaNeerlandica1997}.
To ensure the existence of disintegrations, Lilac formulas are interpreted using a submonoid $\KRM \subseteq \mathcal{M}_{\mathrm{ind}}$ that imposes technical conditions on probability spaces (see \cite[Section~3]{LiPLDI2023} for details).

The semantics of Lilac formulas $\Gamma; \Delta \vdash \phi$ is defined by the relation
$\gamma, D, \mathcal{P} \models \phi$
where $\gamma \in \interpret{\Gamma}$ is a valuation for deterministic variables in $\Gamma$, $D : \Omega \to \interpret{\Delta}$ is a measurable function representing a random variable environment for $\Delta$, and $\mathcal{P} \in \mathcal{M}_{\mathrm{disint}}$ is an element of the KRM of which the formula has ownership.
Below, we write $\RV{A} \coloneqq \{ D : \Omega \to A \}$ for the set of random variables where $A$ is a measurable space.
We show the semantics of important constructs; the semantics of other constructs can be found in~\cite{LiPLDI2023}.\footnote{The semantics of almost-sure equality presented in~\cite{LiPLDI2023} contains a flaw and was later corrected in the arXiv version. In this paper, we adopt the corrected semantics. In fact, our results do not depend heavily on the semantics of almost-sure equality itself, since we use it primarily through syntactic inference rules, whose soundness has been proved under the corrected semantics.}
\begin{itemize}
	\item $\gamma, D, (\mathcal{F}, \mu) \models \Own E$ if $\interpret{E}(\gamma) \circ D$ is $\mathcal{F}$-measurable where $\circ$ is function composition.
	\item $\gamma, D, (\mathcal{F}, \mu) \models E \sim \nu$ if $\interpret{E}(\gamma) \circ D$ is $\mathcal{F}$-measurable and the pushforward measure $(\interpret{E}(\gamma) \circ D)_{*} \mu$ is $\nu$.
	\item $\gamma, D, \mathcal{P} \models \phi * \psi$ if $\gamma, D, \mathcal{P}_1 \models \phi$ and $\gamma, D, \mathcal{P}_2 \models \psi$ for some $\mathcal{P}_1, \mathcal{P}_2 \in \mathcal{M}_{\mathrm{disint}}$ such that $\mathcal{P}_1 \cdot \mathcal{P}_2 \sqsubseteq \mathcal{P}$.
	\item $\gamma, D, (\mathcal{F}, \mu) \models \Conditioning_{x : A \gets E} \phi$ if for any $(\Sigma_{\Omega}, \mu') \sqsupseteq (\mathcal{F}, \mu)$ and any $\mu'$-disintegration $\{\nu_x\}_{x \in \interpret{A}}$ with respect to $\interpret{E}(\gamma) \circ D$, it holds that $\gamma, D, (\mathcal{F}, \nu_x|_{\mathcal{F}}) \models \phi$ for almost every $x \in \interpret{A}$.
	Here, the disintegration $\{\nu_x\}_{x \in \interpret{A}}$ intuitively represents the conditional distribution given the value of $\interpret{E}(\gamma) \circ D \in \RV{\interpret{A}}$.
	\item $\gamma, D, \mathcal{P} \models \mathrm{wp}(M, X : A. \phi)$ if for all $\mathcal{P}_f$ and $\mu$ such that $\mathcal{P}_f \cdot \mathcal{P} \sqsubseteq (\Sigma_{\Omega}, \mu)$ and for all $D \in \RV{\interpret{\Delta_{\mathrm{e}}}}$, there exist $X \in \RV{\interpret{A}}$, $\mathcal{P}'$, and $\mu'$ such that $\mathcal{P}_f \cdot \mathcal{P}' \sqsubseteq (\Sigma_{\Omega}, \mu')$, $\gamma, (D, X), \mathcal{P}' \models \phi$, and the following two distributions are equal:
\end{itemize}
\[ \left( \begin{aligned}
		&\omega \gets \mu; \\
		&v \gets \interpret{M}(D(\omega)); \\
		&\ret{(D_{\mathrm{e}}(\omega), D(\omega), v)}
	\end{aligned} \right) = \left( \begin{aligned}
		&\omega \gets \mu'; \\
		&\ret{(D_{\mathrm{e}}(\omega), D(\omega), X(\omega))}
	\end{aligned} \right) \]

Finally, we define the entailment relation $\phi \LilacEntails \psi$ as follows.
\[ \forall \gamma. \forall D. \forall \mathcal{P}.\quad \gamma, D, \mathcal{P} \models \phi \quad\implies\quad \gamma, D, \mathcal{P} \models \psi \]

\begin{toappendix}
\section{Basic Facts about Probability Theory}

Let $f : A \to B$ be a measurable function between measurable spaces $(A, \Sigma_A)$ and $(B, \Sigma_B)$, and let $\mu$ be a probability measure on $(A, \Sigma_A)$.
The \emph{pushforward measure} $f_{*} \mu$ is the probability measure on $(B, \Sigma_B)$ defined by
\[ (f_{*} \mu)(E) = \mu(f^{-1}(E)) \]
for any measurable set $E \in \Sigma_B$.

Let $P$ be a predicate on a set $X$.
The \emph{Iverson bracket} $\IversonBracket{P}$ is defined by
\[ \IversonBracket{P} = \begin{cases}
	1 & \text{$P$ is true} \\
	0 & \text{otherwise.}
\end{cases} \]

\begin{definition}[$\pi$-system]
	Let $P \subseteq 2^X$ be a nonempty collection of subsets of $X$.
	Then, $P$ is a \emph{$\pi$-system} if for all $A, B \in P$, we have $A \cap B \in P$.
\end{definition}

\begin{definition}[$\lambda$-system]
	Let $L \subseteq 2^X$ be a nonempty collection of subsets of $X$.
	Then, $L$ is a \emph{$\lambda$-system} if:
	\begin{itemize}
		\item $X \in L$.
		\item If $A, B \in L$, then $B \setminus A \in L$.
		\item If $A_1 \subseteq A_2 \subseteq \dots$ is an increasing sequence in $L$, then $\bigcup_{i = 1}^{\infty} A_i \in L$.
	\end{itemize}
\end{definition}

\begin{theorem}[Dynkin's $\pi$-$\lambda$ theorem]
	Let $\mathcal{P}$ be a $\pi$-system and $\mathcal{L}$ be a $\lambda$-system such that $\mathcal{P} \subseteq \mathcal{L}$.
	Then, the $\sigma$-algebra generated by $\mathcal{P}$ is contained in $\mathcal{L}$.
	\[ \mathcal{P} \subseteq \mathcal{L} \implies \sigma(\mathcal{P}) \subseteq \mathcal{L} \]
\end{theorem}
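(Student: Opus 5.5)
The statement to prove is Dynkin's $\pi$-$\lambda$ theorem: if $\mathcal{P} \subseteq \mathcal{L}$ with $\mathcal{P}$ a $\pi$-system and $\mathcal{L}$ a $\lambda$-system, then $\sigma(\mathcal{P}) \subseteq \mathcal{L}$.

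I will follow the standard route. Let $\ell(\mathcal{P})$ be the smallest $\lambda$-system containing $\mathcal{P}$, namely the intersection of all $\lambda$-systems on $X$ containing $\mathcal{P}$. This family is nonempty, since $2^X$ is one, and each of the three closure conditions is preserved under arbitrary intersections, so $\ell(\mathcal{P})$ is a $\lambda$-system. By minimality, $\ell(\mathcal{P}) \subseteq \mathcal{L}$. It therefore suffices to show that $\ell(\mathcal{P})$ is a $\sigma$-algebra: it then contains $\mathcal{P}$, so $\sigma(\mathcal{P}) \subseteq \ell(\mathcal{P}) \subseteq \mathcal{L}$.

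\textbf{Step 1: a $\lambda$-system that is also a $\pi$-system is a $\sigma$-algebra.}
\begin{itemize}
\item Complements: $A^c = X \setminus A$, and $X$ belongs to the system.
\item Finite unions: $A \cup B = (A^c \cap B^c)^c$.
\item Countable unions: given $A_1, A_2, \dots$, put $B_n = A_1 \cup \dots \cup A_n$. This is an increasing sequence in the system, and $\bigcup_n A_n = \bigcup_n B_n$, so the third axiom applies.
\end{itemize}

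\textbf{Step 2: $\ell(\mathcal{P})$ is closed under intersection.} This is the main obstacle, and I will handle it with the usual two-stage "good sets" argument. For $A \subseteq X$ define
\[ \mathcal{G}_A = \{ B \subseteq X : A \cap B \in \ell(\mathcal{P}) \}. \]

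First I show that $\mathcal{G}_A$ is a $\lambda$-system whenever $A \in \ell(\mathcal{P})$.
\begin{itemize}
\item $X \in \mathcal{G}_A$, because $A \cap X = A \in \ell(\mathcal{P})$.
\item Differences: let $B, C \in \mathcal{G}_A$ with $B \subseteq C$. Then $A \cap (C \setminus B) = (A \cap C) \setminus (A \cap B)$, a difference of nested members of $\ell(\mathcal{P})$, so it lies in $\ell(\mathcal{P})$.
\item Increasing unions: $A \cap \bigcup_n B_n = \bigcup_n (A \cap B_n)$, and the sets $A \cap B_n$ again form an increasing sequence in $\ell(\mathcal{P})$.
\end{itemize}

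The difference axiom as stated in the paper omits the hypothesis $A \subseteq B$. The standard definition includes it; the stated version, combined with $X \in L$, would even give closure under intersection ($B \setminus (X \setminus A) = A \cap B$). I will read the axiom in the standard nested form, which is weaker, so the proof covers both readings, and I check differences only for nested pairs, as above.

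Stage one: for $A \in \mathcal{P}$, the $\pi$-property gives $\mathcal{P} \subseteq \mathcal{G}_A$. Since $\mathcal{G}_A$ is a $\lambda$-system, minimality gives $\ell(\mathcal{P}) \subseteq \mathcal{G}_A$. Thus $A \cap B \in \ell(\mathcal{P})$ for every $A \in \mathcal{P}$ and every $B \in \ell(\mathcal{P})$.

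Stage two: for $B \in \ell(\mathcal{P})$, stage one (using symmetry of intersection) gives $\mathcal{P} \subseteq \mathcal{G}_B$. Since $\mathcal{G}_B$ is a $\lambda$-system, minimality gives $\ell(\mathcal{P}) \subseteq \mathcal{G}_B$. Hence $\ell(\mathcal{P})$ is a $\pi$-system.

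Combining Steps 1 and 2, $\ell(\mathcal{P})$ is a $\sigma$-algebra, which completes the proof.
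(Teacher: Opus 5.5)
Your proof is correct. It is the standard Dynkin argument: take the minimal $\lambda$-system $\ell(\mathcal{P})$, use the two-stage good-sets argument to show it is a $\pi$-system, and conclude that it is a $\sigma$-algebra. The paper gives no proof of its own and only records the theorem as a background fact in the appendix, so there is no alternative route to compare.

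Proving it for the nested difference axiom was also the right choice. Under the paper's literal axiom, $\mathcal{L}$ is itself a $\pi$-system, hence a $\sigma$-algebra by your Step~1, and the result is trivial. However, the families the paper later certifies as $\lambda$-systems are closed only under nested differences. Examples are the agreement set of two measures in Lemma~\ref{lem:aee-markov-kernel} and the independence set in the proof of \textsc{Contraction}. So your version is the one the paper actually relies on.
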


We list a few criteria for two functions/Markov kernels to be almost everywhere equal.
\begin{lemma}\label{lem:aee-function-le}
	Let $\mu$ be a probability measure on $(X, \Sigma_X)$ and $f, g : (X, \Sigma_X) \to [0, \infty]$ be measurable functions such that $f(x) \le g(x)$ for any $x \in X$.
	Then, $f$ and $g$ are $\mu$-almost everywhere equal if and only if 
	$\Expectation{x}{\mu}{f(x)} = \Expectation{x}{\mu}{g(x)}$.
	\qed
\end{lemma}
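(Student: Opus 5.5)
The plan is to reduce to the elementary fact that a nonnegative measurable function with vanishing integral is zero almost everywhere. I will prove the two directions separately, starting with the easy one.

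\emph{The ``only if'' direction.} Suppose $f = g$ holds $\mu$-almost everywhere. Then the integrands $f$ and $g$ agree off a $\mu$-null set. Integrals over null sets vanish, so $\Expectation{x}{\mu}{f(x)} = \Expectation{x}{\mu}{g(x)}$, as elements of $[0,\infty]$. This needs no finiteness assumption.

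\emph{The ``if'' direction.} Assume the two expectations are equal. The natural move is to consider the nonnegative function $h = g - f$ and use linearity: $\Expectation{x}{\mu}{h(x)} = \Expectation{x}{\mu}{g(x)} - \Expectation{x}{\mu}{f(x)} = 0$. Markov's inequality then gives $\mu(\{h > 1/n\}) = 0$ for every $n$. Taking a countable union shows $\mu(\{h > 0\}) = 0$, that is, $f = g$ almost everywhere.

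\emph{The main obstacle.} The subtraction above is only legitimate when the common value is finite, and the statement as written allows $[0,\infty]$-valued functions. In general the ``if'' direction fails if the common expectation is $\infty$. For example, take $\mu$ uniform on $[0,1]$, $f(x) = 1/x$ and $g = f + 1$: both expectations are $\infty$, yet $f \ne g$ everywhere. I would therefore add the hypothesis that the common expectation is finite, which I expect to be the implicit intent; the lemma is used for probabilities and conditional measures, which are bounded by $1$. Under that hypothesis, $f$ is integrable and hence finite almost everywhere, so $h = g - f$ is well defined almost everywhere. The argument then goes through.

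In summary, I would state the finiteness assumption explicitly (or restrict to bounded $f, g$), handle the ``only if'' direction as above, and for the ``if'' direction apply Markov's inequality to $g - f$ with the countable-union argument.
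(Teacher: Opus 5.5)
Your proof is correct, and you are right that the ``if'' direction is false as stated. Your example works: with $f(0)=g(0)=\infty$, the two functions differ on $(0,1]$, a set of full measure, while both expectations are $\infty$. So the hypothesis $\Expectation{x}{\mu}{f(x)}<\infty$ is genuinely needed.

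The paper gives no proof to compare against, since the lemma is stated as a standard fact. Every place it is used involves functions with values in $[0,1]$: the ``equivalently'' reformulation in Definition~\ref{def:disintegration}, and the ``expectation equals $1$'' steps in Lemmas~\ref{lem:disintegration-pair-1} and~\ref{lem:disintegration-pair-2}. Your added hypothesis therefore costs nothing there. Your Markov-plus-countable-union argument on $g-f$ is the standard proof.

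One slightly cleaner way to avoid subtracting possibly infinite integrals:
\begin{itemize}
\item Let $h(x) \coloneqq g(x)-f(x)$ where $f(x)<\infty$, and $h(x) \coloneqq 0$ where $f(x)=\infty$.
\item Then $g = f + h$ holds everywhere; on $\{f=\infty\}$ both sides are $\infty$ because $f\le g$.
\item Additivity of the integral for $[0,\infty]$-valued functions needs no finiteness, so $\Expectation{x}{\mu}{g(x)} = \Expectation{x}{\mu}{f(x)} + \Expectation{x}{\mu}{h(x)}$.
\item Cancelling the finite quantity $\Expectation{x}{\mu}{f(x)}$ gives $\Expectation{x}{\mu}{h(x)}=0$, hence $h=0$ almost everywhere and $f=g$ almost everywhere.
\end{itemize}
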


\begin{lemma}\label{lem:aee-function}
	Let $f, g : (X, \Sigma_X) \to [0, \infty]$ be measurable functions and $\mu$ be a probability measure on $(X, \Sigma_X)$.
	Then, $f$ and $g$ are $\mu$-almost everywhere equal if and only if for any measurable set $E \in \Sigma_X$, we have the following.
	\[ \Expectation{x}{\mu}{\IversonBracket{x \in E} \cdot f(x)} = \Expectation{x}{\mu}{\IversonBracket{x \in E} \cdot g(x)} \]
\end{lemma}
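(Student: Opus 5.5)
The forward direction is immediate. The converse cannot simply integrate $f - g$ over $\{f > g\}$, because $f$ and $g$ take values in $[0, \infty]$ and the integrals involved may be infinite. Instead I would slice the set $\{f > g\}$ along rational thresholds, so that every integral that matters is finite.

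\emph{Forward direction.} If $f = g$ holds $\mu$-almost everywhere, then $\IversonBracket{x \in E} \cdot f(x) = \IversonBracket{x \in E} \cdot g(x)$ for $\mu$-almost every $x$. Hence the two expectations coincide for every measurable $E$, possibly both being $\infty$.

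\emph{Converse, the set $\{f > g\}$.} Assume the equality of expectations for every $E \in \Sigma_X$. For each rational $q > 0$, set $E_q \coloneqq \{x \mid g(x) < q < f(x)\}$, which is measurable since $f$ and $g$ are. Because $g$ and $f$ are separated by $q$ on $E_q$, we have
\[ \Expectation{x}{\mu}{\IversonBracket{x \in E_q} \cdot g(x)} \le q\,\mu(E_q) \le \Expectation{x}{\mu}{\IversonBracket{x \in E_q} \cdot f(x)}, \]
and the left-hand side is finite. Suppose $\mu(E_q) > 0$. Then $q - g$ is strictly positive on $E_q$, so $\Expectation{x}{\mu}{\IversonBracket{x \in E_q}(q - g(x))} > 0$. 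Since $\Expectation{x}{\mu}{\IversonBracket{x \in E_q} g(x)}$ is finite, this yields a strict inequality $\Expectation{x}{\mu}{\IversonBracket{x \in E_q} g(x)} < q\,\mu(E_q)$. That contradicts the assumed equality with the $f$-integral. Hence $\mu(E_q) = 0$ for every rational $q$.

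Next, $\{f > g\} = \bigcup_{q \in \mathbb{Q}_{>0}} E_q$. Indeed, if $f(x) > g(x)$ then $g(x) < \infty$, and some rational lies strictly between them; this includes the case $f(x) = \infty$. Since this is a countable union of null sets, $\mu(\{f > g\}) = 0$.

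\emph{Converse, conclusion.} The hypothesis is symmetric in $f$ and $g$, so the same argument with the roles exchanged gives $\mu(\{g > f\}) = 0$. Therefore $f = g$ holds $\mu$-almost everywhere.

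The only delicate point is the handling of infinite values. Choosing the sets $E_q$ is what keeps the $g$-side integral finite and makes the strict inequality valid, so no separate integrability assumption is needed.
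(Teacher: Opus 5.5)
Your proof is correct, and it is more careful than the paper's. The paper works directly with $E_{<} = \{x \mid f(x) < g(x)\}$ and asserts that $\mu(E_{<}) > 0$ forces $\int_{E_{<}} f\,d\mu < \int_{E_{<}} g\,d\mu$, and symmetrically for $E_{>}$.

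That strict inequality only holds when $\int_{E_{<}} f\,d\mu$ is finite. Writing $g = f + h$ on $E_{<}$ with $h > 0$ gives $\int_{E_{<}} g\,d\mu = \int_{E_{<}} f\,d\mu + \int_{E_{<}} h\,d\mu$. The second term is strictly positive, but the sum is strictly larger than the first term only if that term is not $\infty$. For $[0,\infty]$-valued $f$ and $g$ both integrals can be infinite, so the paper's one-line argument has a gap at the generality in which the lemma is stated.

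Your slicing $E_q = \{g < q < f\}$ is exactly what closes the gap. The $g$-integral over $E_q$ is at most $q\,\mu(E_q) \le q$, so the subtraction is legitimate. The countable union over $q \in \mathbb{Q}_{>0}$ then recovers all of $\{f > g\}$, including points where $f = \infty$. An equivalent repair of the paper's argument is to replace $E_{<}$ by $E_{<} \cap \{f \le n\}$, use $E_{<} \subseteq \{f < \infty\}$, and take the union over $n$.

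The paper's shorter route buys brevity, and it is adequate for how the lemma is actually used. Inside the proof of the Markov-kernel lemma it is applied only to functions $x \mapsto \kappa(x)(B)$ for Markov kernels $\kappa$, which take values in $[0,1]$, so every integral is finite. Your route proves the lemma as stated, for arbitrary measurable $[0,\infty]$-valued $f$ and $g$.
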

\begin{proof}
	The only-if part is obvious.
	For the if part, let $E_{<} = \{ x \in X : f(x) < g(x) \}$.
	If $E_{<}$ is non-negligible, then we have
	\[ 
	\Expectation{x}{\mu}{\IversonBracket{x \in E_{<}} \cdot f(x)} < \Expectation{x}{\mu}{\IversonBracket{x \in E_{<}} \cdot g(x)}, \]
	which contradicts the assumption.
	Similarly, we can define $E_{>} = \{ x \in X : f(x) > g(x) \}$ and obtain a contradiction if $E_{>}$ is non-negligible.
	Therefore, $\{ x \in X : f(x) \neq g(x) \} = E_{<} \cup E_{>}$ is negligible.
\end{proof}

\begin{lemma}\label{lem:aee-markov-kernel}
	Assume that $(Y, \Sigma_Y)$ is a countably generated $\sigma$-algebra.
	Let $f, g : (X, \Sigma_X) \to \Giry (Y, \Sigma_Y)$ be Markov kernels and $\mu$ be a probability measure on $(X, \Sigma_X)$.
	Then, $f$ and $g$ are $\mu$-almost everywhere equal if and only if for any measurable set $E \in \Sigma_X$ and $F \in \Sigma_Y$, we have
	\[ \Expectation{x}{\mu}{\IversonBracket{x \in E} \cdot f(x)(F)} = \Expectation{x}{\mu}{\IversonBracket{x \in E} \cdot g(x)(F)}. \]
	If $\mathcal{B}$ is a basis of $\Sigma_Y$ (i.e., $\sigma(\mathcal{B}) = \Sigma_Y$), then we can restrict $F$ to be in $\mathcal{B}$.
\end{lemma}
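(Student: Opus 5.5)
The plan is to reduce the statement to Lemma~\ref{lem:aee-function}, applied once for each measurable set $F \in \Sigma_Y$, and then to glue the resulting almost-everywhere statements together using the countable generator of $\Sigma_Y$.

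The only-if direction is immediate. If $f = g$ $\mu$-almost everywhere, then $f(x)(F) = g(x)(F)$ for $\mu$-almost every $x$ and every $F$, so the two integrals agree.

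For the if direction, fix $F \in \Sigma_Y$. The maps $x \mapsto f(x)(F)$ and $x \mapsto g(x)(F)$ are measurable functions into $[0,1] \subseteq [0,\infty]$, because $f$ and $g$ are Markov kernels. The hypothesis says exactly that their integrals against $\IversonBracket{x \in E}$ agree for every $E \in \Sigma_X$. Lemma~\ref{lem:aee-function} therefore gives a $\mu$-null set $N_F$ outside of which $f(x)(F) = g(x)(F)$.

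We cannot simply take the union of the $N_F$ over all $F \in \Sigma_Y$, since that family is uncountable. This is the main obstacle, and it is where the countable generation of $\Sigma_Y$ enters. I will choose a countable generating family and close it under finite intersections. The result is a countable $\pi$-system $\mathcal{C}$, say $\mathcal{C} = \{F_1, F_2, \dots\}$ with $\sigma(\mathcal{C}) = \Sigma_Y$. Set $N = \bigcup_k N_{F_k}$, which is $\mu$-null as a countable union of null sets. For $x \notin N$, the measures $f(x)$ and $g(x)$ agree on $\mathcal{C}$. The collection $\{F \in \Sigma_Y : f(x)(F) = g(x)(F)\}$ is a $\lambda$-system, because both measures are probability measures: it contains $Y$, is closed under proper differences, and is closed under increasing unions by continuity from below. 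Dynkin's $\pi$-$\lambda$ theorem then gives $f(x) = g(x)$ on all of $\sigma(\mathcal{C}) = \Sigma_Y$. Hence $f = g$ outside $N$, that is, $\mu$-almost everywhere.

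For the final claim, suppose $\mathcal{B}$ is a basis with $\sigma(\mathcal{B}) = \Sigma_Y$ and the hypothesis is assumed only for $F \in \mathcal{B}$. I will run the same argument with $\mathcal{C}$ taken to be the $\pi$-system generated by $\mathcal{B}$. One point needs extra care here: the hypothesis for $F \in \mathcal{B}$ must be transferred to finite intersections of elements of $\mathcal{B}$, or $\mathcal{B}$ must already be a $\pi$-system. I would take $\mathcal{B}$ to be a countable $\pi$-system, or argue at the level of integrals. Concretely, for fixed $E$, the family of $F$ satisfying the integral identity is itself a $\lambda$-system, by linearity and monotone convergence. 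So if $\mathcal{B}$ is a $\pi$-system, Dynkin's theorem extends the hypothesis to all of $\Sigma_Y$, and the previous paragraph applies.
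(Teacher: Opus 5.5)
Your proof is correct and is essentially the paper's argument: apply Lemma~\ref{lem:aee-function} to each member of a countable generating $\pi$-system, take the countable union of the resulting null sets, conclude pointwise in $x$ by Dynkin's $\pi$-$\lambda$ theorem, and handle the final claim through the $\lambda$-system of sets $F$ satisfying the integral identity. Your insistence that $\mathcal{B}$ be a $\pi$-system is necessary, not just cautious: the paper's argument for the last claim silently needs it, and without it the claim is false. On $\{a,b,c,d\}$, the uniform measure and the measure putting mass $1/2$ on each of $a$ and $d$ agree on the generators $\{a,b\}$ and $\{a,c\}$ of the power set, so constant kernels with these values satisfy the restricted hypothesis but differ everywhere.
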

\begin{proof}
	The only-if part is obvious.
	For the if part, let $\{ B_m \}_{m \in \mathbb{N}}$ be a countable basis of $\Sigma_Y$.
	Without loss of generality, we assume that $\{ B_m \}_{m \in \mathbb{N}}$ is a $\pi$-system.
	By Lemma~\ref{lem:aee-function}, for any $m$, $f({-})(B_m)$ and $g({-})(B_m)$ are $\mu$-almost everywhere equal, that is, $S_m \coloneqq \{ x \in X \mid f(x)(B_m) = g(x)(B_m) \}$ satisfies $\mu(S_m) = 1$.
	Now, we show that for any $x \in \bigcap_{m \in \mathbb{N}} S_m$ and any $F \in \Sigma_Y$, we have $f(x)(F) = g(x)(F)$.
	For any $x \in \bigcap_{m \in \mathbb{N}} S_m$, it is easy to see that $\{ F \in \Sigma_Y \mid f(x)(F) = g(x)(F) \}$ is a $\lambda$-system containing $\{ B_m \}_{m \in \mathbb{N}}$.
	By the Dynkin's $\pi$-$\lambda$ theorem, for any $F \in \Sigma_Y$, we have $f(x)(F) = g(x)(F)$.
	The last claim follows because the following set is a $\lambda$-system.
	\[ \{ F \in \Sigma_Y \mid \Expectation{x}{\mu}{\IversonBracket{x \in E} \cdot f(x)(F)} = \Expectation{x}{\mu}{\IversonBracket{x \in E} \cdot g(x)(F)} \} \qedhere \]
\end{proof}

\begin{definition}[disintegration]\label{def:disintegration}
	Let $(A, \Sigma_A, \mu)$ and $(B, \Sigma_B, \lambda)$ be probability spaces, and $f : A \to B$ be a measurable function.
	We say that $\{ \nu_y \}_{y \in B}$ is a \emph{disintegration} of $\mu$ if $y \mapsto \nu_y$ is a Markov kernel from $B$ to $A$ that satisfies the following conditions~\cite{ChangStatisticaNeerlandica1997}:
	\begin{itemize}
		\item For $\lambda$-almost all $y \in B$, we have $\nu_y(\{ x \mid f(x) = y \}) = 1$.
		Equivalently, $\Expectation{y}{\lambda}{\nu_y(\{ x \mid f(x) = y \})} = 1$.
		\item The probability measure defined by $E \mapsto \Expectation{y}{\lambda}{\nu_y(E)}$ coincides with $\mu$.
		Equivalently, $\Expectation{y}{\lambda}{\Expectation{x}{\nu_y}{g(x)}} = \Expectation{x}{\mu}{g(x)}$ for any measurable function $g : A \to [0, \infty]$.
	\end{itemize}
\end{definition}
We often consider the case where $\lambda = f_{*} \mu$, in which case we say that $\{ \nu_y \}_{y \in B}$ is a \emph{$\mu$-disintegration} with respect to $f$.

By the first property of disintegration, we have the following lemma.
\begin{lemma}\label{lem:disintegration-concentration}
	Consider the setting in Definition~\ref{def:disintegration}.
	For any measurable function $g : A \times B \to [0, \infty]$, for $\lambda$-almost all $y \in B$, we have
	\[ \Expectation{x}{\nu_y}{g(x, f(x))} \quad=\quad \Expectation{x}{\nu_y}{g(x, y)}. \]
\end{lemma}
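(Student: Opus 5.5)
The plan is to compare the integrand pointwise on a set of full $\lambda$-measure, using the concentration property of the disintegration. Let $S = \{ (x, y) \in A \times B \mid f(x) = y \}$. The first property in Definition~\ref{def:disintegration} says that $N \coloneqq \{ y \in B \mid \nu_y(\{ x \mid f(x) = y \}) \neq 1 \}$ is $\lambda$-negligible. (This set is measurable, since $y \mapsto \nu_y(\{x \mid f(x) = y\})$ is measurable whenever $S$ is measurable in $A \times B$; I address this below.) I will show the claimed equality for every $y \notin N$.

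Fix $y \notin N$ and write $A_y = \{ x \mid f(x) = y \}$, so that $\nu_y(A_y) = 1$. For $x \in A_y$ we have $g(x, f(x)) = g(x, y)$ identically. Hence the two functions $x \mapsto g(x, f(x))$ and $x \mapsto g(x, y)$ agree on a set of full $\nu_y$-measure. Both functions are measurable: the first is $g$ composed with the measurable map $x \mapsto (x, f(x))$, and the second is a section of the jointly measurable $g$. Since the functions take values in $[0,\infty]$, their integrals against $\nu_y$ coincide. Concretely, $\Expectation{x}{\nu_y}{g(x,f(x))} = \Expectation{x}{\nu_y}{\IversonBracket{x \in A_y}\, g(x,f(x))} = \Expectation{x}{\nu_y}{\IversonBracket{x \in A_y}\, g(x,y)} = \Expectation{x}{\nu_y}{g(x,y)}$, where the outer equalities use $\nu_y(A \setminus A_y) = 0$. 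This gives the statement for $\lambda$-almost all $y$.

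The only delicate point is the measurability of $A_y$, which amounts to the measurability of the diagonal in $B$. If I simply take the first condition of Definition~\ref{def:disintegration} as given, then $A_y$ is implicitly assumed measurable and nothing more is needed. Alternatively, in the setting of the paper $B$ is countably generated, and the argument goes through as follows. Take a countable generating family $\{B_m\}$ that separates points, and write $\{ (b, b') \mid b = b' \} = \bigcap_m \big( (B_m \times B_m) \cup (B_m^c \times B_m^c) \big)$. This shows the diagonal is measurable, so $S$ is the preimage of the diagonal under the measurable map $f \times \mathrm{id}$. It then follows that each $A_y$ is measurable and that $y \mapsto \nu_y(A_y)$ is measurable, by the standard fact that integrating a measurable set's sections against a kernel yields a measurable function. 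I expect this measurability bookkeeping to be the only real obstacle; the core argument is the one-line observation that the integrands coincide on $A_y$.
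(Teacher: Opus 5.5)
Your proposal is correct and is essentially the paper's proof. The paper's argument is the single observation that $g(x, f(x)) = g(x, y)$ for $\nu_y$-almost all $x$ whenever $\nu_y(\{x \mid f(x) = y\}) = 1$, which is exactly your core step.

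One small caveat concerns your optional measurability aside. Countable generation of $\Sigma_B$ does not by itself give a point-separating generating family; you additionally need $\Sigma_B$ to separate points. That does hold for the paper's types. Your main argument does not rely on this aside, since the definition already presupposes that the fibres are measurable.
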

\begin{proof}
	For $\nu_y$-almost all $x \in A$, we have $g(x, f(x)) = g(x, y)$.
\end{proof}

Disintegrations are unique up to almost everywhere equality.
\begin{proposition}
	Consider the setting in Definition~\ref{def:disintegration}.
	If $\{ \nu_y \}_{y \in B}$ and $\{ \nu'_y \}_{y \in B}$ are disintegrations, then $\nu_y$ and $\nu'_y$ are equal for $\lambda$-almost all $y \in B$.
\end{proposition}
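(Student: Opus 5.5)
To prove uniqueness, I will reduce the claim to Lemma~\ref{lem:aee-markov-kernel}, viewing $y \mapsto \nu_y$ and $y \mapsto \nu'_y$ as Markov kernels from $B$ to $A$ and $\lambda$ as the measure on $B$. That lemma requires $(A, \Sigma_A)$ to be countably generated. This holds in every situation the paper uses, since the sample space is the Hilbert cube with its Borel $\sigma$-algebra and all type interpretations are countably generated, so I will state it as a standing assumption. It then suffices to show that for every $E \in \Sigma_B$ and $F \in \Sigma_A$,
\[ \Expectation{y}{\lambda}{\IversonBracket{y \in E} \cdot \nu_y(F)} = \Expectation{y}{\lambda}{\IversonBracket{y \in E} \cdot \nu'_y(F)}. \]

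The key step is to show that both sides equal $\mu(F \cap f^{-1}(E))$. Apply Lemma~\ref{lem:disintegration-concentration} with $g(x, y) \coloneqq \IversonBracket{y \in E} \cdot \IversonBracket{x \in F}$. For $\lambda$-almost all $y$, this gives
\[ \IversonBracket{y \in E}\, \nu_y(F) = \Expectation{x}{\nu_y}{g(x, y)} = \Expectation{x}{\nu_y}{g(x, f(x))}, \]
where the first equality holds pointwise. Integrating against $\lambda$ and using the second defining property of a disintegration with the measurable function $x \mapsto \IversonBracket{f(x) \in E}\IversonBracket{x \in F}$, the left-hand side becomes $\Expectation{x}{\mu}{\IversonBracket{f(x) \in E}\IversonBracket{x \in F}} = \mu(F \cap f^{-1}(E))$. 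The identical computation for $\nu'$ yields the same value, so Lemma~\ref{lem:aee-markov-kernel} gives $\nu_y = \nu'_y$ for $\lambda$-almost all $y$.

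The main obstacle is not the computation but the hypothesis that $\Sigma_A$ is countably generated. Without it, the separate null sets (one for each $F$) cannot be merged into a single one. I will address this by invoking the countable-generation assumption and the $\pi$-$\lambda$ argument already packaged inside Lemma~\ref{lem:aee-markov-kernel}. A further minor point is measurability of $g$, which holds because $g$ is a product of indicators of measurable sets.
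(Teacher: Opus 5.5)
Your proposal is correct and follows the same route as the paper. Both show that $\mathbb{E}_{y\sim\lambda}[\,[y\in E]\,\nu_y(F)\,] = \mu(F \cap f^{-1}(E))$ for both kernels, using the two defining properties of a disintegration, and then conclude with Lemma~\ref{lem:aee-markov-kernel}. Your explicit assumption that $\Sigma_A$ is countably generated simply states a hypothesis that the paper's appeal to that lemma relies on without saying so.
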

\begin{proof}
	By the properties of disintegration, for any measurable set $E \in \Sigma_B$ and $F \in \Sigma_A$, we have the following equality.
	\begin{align}
		&\Expectation{y}{\lambda}{\IversonBracket{y \in E} \cdot \nu_y(F)} \\
		&= \Expectation{x}{\mu}{\IversonBracket{f(x) \in E} \cdot \IversonBracket{x \in F}} \\
		&= \Expectation{y}{\lambda}{\IversonBracket{y \in E} \cdot \nu'_y(F)}
	\end{align}
	By Lemma~\ref{lem:aee-markov-kernel}, $\nu_y$ and $\nu'_y$ are equal for $\lambda$-almost all $y \in B$.
\end{proof}

\end{toappendix}

\section{Noninterference and Conditional Independence}
\label{sec:main-result}

In this section, we present our main result that characterizes noninterference as conditional independence expressed by Lilac Hoare triples.
Let $\Delta \vdash M : \mathtt{G} B$ be an APPL program where $\Delta = X_1 : A_1, \dots, X_m : A_m$ and $B = B_1 \times \dots \times B_n$.
Let $\mathcal{I}$ and $\mathcal{O}$ be subsets of input and output variables, respectively; i.e., $\mathcal{I} \subseteq \{ X_1, \dots, X_m \}$ and $\mathcal{O} \subseteq \{ Y_1, \dots, Y_n \}$ where $Y_1, \dots, Y_n$ are output variables of $M$.
We show that $(\mathcal{I}, \mathcal{O})$-NI is equivalent to the following Hoare triple.
\[ \HoareTriple{\Own \Delta}{M}{\mathbf{Y}. \Conditioning_{x \gets \mathcal{I}} (\Own (\Delta \setminus \mathcal{I}) * \Own \mathcal{O})} \]
Here, we identify $\Delta$ with the tuple of all input variables $(X_1, \dots, X_m)$.
Without loss of generality, we assume that input and output variables are $\{ X_1, X_2 \}$ and $\{ Y_1, Y_2 \}$, respectively, and consider $(\{ X_1 \}, \{ Y_1 \})$-NI.

\begin{theorem}\label{thm:noninterference-conditional-independence}
	Let $X_1 : A_1, X_2 : A_2 \vdash M : \mathtt{G} (B_1 \times B_2)$ be an APPL program.
	We have
	\begin{equation}
		\HoareTriple{\Own (X_1, X_2)}{M}{(Y_1, Y_2). \Conditioning_{x \gets X_1} (\Own X_2 * \Own Y_1)}
		\label{eq:hoare-noninterference}
	\end{equation}
	if and only if there exists a simulator $X_1 : A_1 \vdash \mathbf{Sim}(M) : \mathtt{G} B_1$ such that the following equation holds.
	\begin{equation}
		X_1 : A_1, X_2 : A_2 \vdash \mathbf{Sim}(M) = (Y_1, Y_2) \gets M; \ret{Y_1} : \mathtt{G} B_1
		\label{eq:simulator}
	\end{equation}
\end{theorem}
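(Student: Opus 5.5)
We prove the two directions separately. In both, we reduce the Hoare triple to a statement about joint distributions. Write $k = \interpret{M} : \interpret{A_1}\times\interpret{A_2} \to \Giry(\interpret{B_1}\times\interpret{B_2})$ and $k_1 = (\pi_1)_* \circ k$ for its first marginal kernel.

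\emph{Step 1: unfold the triple.} By the semantics of $\mathrm{wp}$, the triple holds iff the following is true. For every $\mathcal{P}$ owning $(X_1,X_2)$, every frame $\mathcal{P}_f$, and every $\mu$ with $\mathcal{P}_f\cdot\mathcal{P}\sqsubseteq(\Sigma_\Omega,\mu)$, there exist $(Y_1,Y_2)$, $\mathcal{P}'$ and $\mu'$ with $\mathcal{P}_f\cdot\mathcal{P}'\sqsubseteq(\Sigma_\Omega,\mu')$ such that the joint law of $(D_{\mathrm e},X_1,X_2,Y_1,Y_2)$ under $\mu'$ is the law of $(D_{\mathrm e},X_1,X_2)$ under $\mu$ composed with $k$, and such that $\mathcal{P}'\models\Conditioning_{x\gets X_1}(\Own X_2 * \Own Y_1)$.

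The key auxiliary lemma characterises the conditioning modality. If $\mathcal{P}'\models \Own(X_1,X_2,Y_1)$, then $\mathcal{P}'\models\Conditioning_{x\gets X_1}(\Own X_2*\Own Y_1)$ holds iff, for $\mu'$-a.e.\ $x$, the disintegration $\nu_x$ w.r.t.\ $X_1$ makes $X_2$ and $Y_1$ independent. Equivalently, for all measurable $E,F,G$,
\[\mathbb{E}_{\mu'}\bigl[\IversonBracket{X_1\in E}\IversonBracket{X_2\in F}\IversonBracket{Y_1\in G}\bigr]=\mathbb{E}_{x\sim (X_1)_*\mu'}\bigl[\IversonBracket{x\in E}\,\nu_x(X_2\in F)\,\nu_x(Y_1\in G)\bigr].\]
The direction from independent combination to pointwise factorisation uses that in $\KRM$ the combination $\mathcal{F}_1\cdot\mathcal{F}_2$ restricted under $\nu_x|_{\mathcal F}$ witnesses $\nu_x(F\cap G)=\nu_x(F)\nu_x(G)$. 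The converse direction requires \emph{constructing} sub-$\sigma$-algebras $\sigma(X_2)$ and $\sigma(Y_1)$ whose independent combination lies below $(\mathcal{F},\nu_x|_{\mathcal F})$ and belongs to $\KRM$. This step involves the measure-theoretic work with Lemma~\ref{lem:aee-markov-kernel} and the uniqueness of disintegrations. It must also handle the quantification over all extensions $\mu'$ and all disintegrations in the semantics of $\Conditioning$, which is harmless since disintegrations are a.e.\ unique and the relevant events are $\mathcal{F}$-measurable.

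\emph{Step 2 ($\Leftarrow$).} Given $\mathbf{Sim}(M)$ with $s=\interpret{\mathbf{Sim}(M)}$, equation~\eqref{eq:simulator} says $k_1(x_1,x_2)=s(x_1)$ for all inputs. Take $(Y_1,Y_2)$ and $\mu'$ from soundness of Lilac's $\mathrm{wp}$ for $M$, i.e.\ from \textsc{H-Let}/triples being valid with precondition $\Own(X_1,X_2)$. Then, conditional on $X_1=x$, the law of $(X_2,Y_1)$ is $\nu_x(X_2\in\cdot)\otimes s(x)$. This is because $Y_1$ is drawn from $k_1(X_1,X_2)=s(X_1)$, which depends on $X_1$ only. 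A $\pi$-$\lambda$ computation then gives the displayed factorisation, and the lemma yields the postcondition.

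\emph{Step 3 ($\Rightarrow$).} Instantiate the triple with a concrete $\Omega$-model in which $(X_1,X_2)$ has full-support-like law, e.g.\ a product of a fixed law for $X_1$ and a law for $X_2$. The lemma then gives $\nu_x(Y_1\in G\mid X_2)=\nu_x(Y_1\in G)$, so $k_1(x_1,x_2)(G)=s(x_1)(G)$ for a.e.\ $(x_1,x_2)$, where $s(x)(G):=\nu_x(Y_1\in G)$ is measurable in $x$. The main obstacle is upgrading ``a.e.'' to the \emph{pointwise} equality required by \eqref{eq:simulator} and producing an APPL program rather than an arbitrary kernel. I would handle both together by defining the simulator syntactically as $\mathbf{Sim}(M):=(Y_1,Y_2)\gets M[c/X_2];\ret{Y_1}$ for a fixed constant $c$. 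Varying the input distribution over point masses $\delta_{(x_1,x_2)}$ shows $k_1(x_1,x_2)=k_1(x_1,c)$ for every pair: with Dirac inputs, conditioning on $X_1$ is trivial. Then the triple applied to the input law $\delta_{x_1}\otimes\frac12(\delta_{x_2}+\delta_c)$, via the factorisation, forces $k_1(x_1,x_2)=k_1(x_1,c)$ exactly. This avoids the a.e.\ issue, using only finitely supported inputs, which exist in $\KRM$ since the Hilbert cube supports them.
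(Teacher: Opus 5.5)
Your proposal is correct. The $(\Rightarrow)$ direction is the paper's argument, and the $(\Leftarrow)$ direction reaches the same conclusion by a more direct route.

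\textbf{The $(\Rightarrow)$ direction.} You define the simulator as $M$ with a constant plugged in for $X_2$, as the paper does. Pointwise equality comes from the same input the paper uses: $X_1$ almost surely equal to $x_1$ and $X_2$ uniform on $\{x_2, c\}$. With this input the conditioning on $X_1$ collapses to plain independence of $X_2$ and $Y_1$ under $\mu'$.
\begin{itemize}
\item Since all APPL types are discrete, you can read off $k_1(x_1,x_2)=k_1(x_1,c)$ from the singleton events $\{x_2\}$ and $\{c\}$. This is more elementary than the paper's lemmas on pair disintegrations and program disintegrations.
\item Your intermediate sentence is wrong as written. It claims joint point masses $\delta_{(x_1,x_2)}$ already give $k_1(x_1,x_2)=k_1(x_1,c)$, but with $X_2$ constant the independence is vacuous and gives no information. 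The two-point law you use next is what does the work.
\item Your Step 1 characterisation is stated under the hypothesis $\Own(X_1,X_2,Y_1)$, which the triple does not provide here. The direction you actually need, from the modality to factorisation, does not use it: take the trivial disintegration $\nu_{x_1}=\mu'$.
\end{itemize}

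\textbf{The $(\Leftarrow)$ direction.} Here your route differs from the paper's.
\begin{itemize}
\item The paper first proves a general lemma: an extra random variable $Z$, even one correlated with the inputs, is conditionally independent of a program's output given its inputs. It applies this to $\mathbf{Sim}(M)$ with $Z=X_2$. It then transports the conditional independence to $M$'s own witness through equality of joint laws, using its lemma that this formula depends only on the joint law. The detour is needed because $\mathrm{wp}$ of a let-binding does not decompose in general.
\item You instead take $M$'s witness directly and compute the factorisation, using that the $Y_1$-marginal kernel equals $\interpret{\mathbf{Sim}(M)}(x_1)$ pointwise.
\item Your route avoids the transfer step and the subtlety about $\mathrm{wp}$ of a let. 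The paper's route isolates a reusable principle that is stronger than the frame rule.
\end{itemize}

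One ingredient you only allude to: the witness must own $(X_1,X_2,Y_1)$. Without this, the joint law of $(X_1,X_2,Y_1)$ could vary across extensions of $\mathcal{P}'$, and your claim that the quantification over extensions and disintegrations is harmless would fail. This is exactly the paper's lemma $\{\Own\Delta\}\,M\,\{X.\,\Own\Delta\land\Own X\}$, proved by induction on $M$. You should invoke it explicitly rather than appealing vaguely to triples being valid.
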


We give a proof sketch of Theorem~\ref{thm:noninterference-conditional-independence}.
We write $\Prob{Y | X}$ for the conditional probability distribution of $Y$ given $X$.
\begin{proofsketch}
	($\implies$):
	By the Hoare triple for $M$, the input $(X_1, X_2)$ and the output $(Y_1, Y_2)$ satisfy $X_2 \perp Y_1 \mid X_1$, which implies $\Prob{Y_1 | X_1, X_2} = \Prob{Y_1 | X_1}$.
	Since $\Prob{Y_1 | X_1, X_2}$ is the output probability distribution of $(Y_1, Y_2) \gets M; \ret{Y_1}$ for given inputs $X_1$ and $X_2$, the equation $\Prob{Y_1 | X_1, X_2} = \Prob{Y_1 | X_1}$ implies that $(Y_1, Y_2) \gets M; \ret{Y_1}$ does not actually depend on $X_2$.
	Therefore, we define $\mathbf{Sim}(M)$ as follows where $v : A_2$ is an arbitrary value.
	\begin{equation}
		\mathbf{Sim}(M) \quad\coloneqq\quad X_2 \gets \ret{v}; (Y_1, Y_2) \gets M; \ret{Y_1} \label{eq:simulator-define}
	\end{equation}

	($\impliedby$):
	If a simulator $\mathbf{Sim}(M)$ satisfying~\eqref{eq:simulator} exists, then we have $X_2 \perp Y_1 \mid X_1$ because $\mathbf{Sim}(M)$ does not depend on $X_2$.
	By~\eqref{eq:simulator}, the output distribution of $(Y_1, Y_2) \gets M; \ret{Y_1}$ also satisfies the same conditional independence.
	Therefore, we have the desired Hoare triple for $M$.
\end{proofsketch}
The idea of the proof itself is simple, but its formal proof requires several technical lemmas on disintegrations and the semantics of Lilac.
Details are given in
\iflong
Appendix~\ref{sec:proof-main-result}.
\else
the full version~\cite[Appendix~B]{arxiv}.
\fi

\begin{toappendix}
\section{Proof of Theorem~\ref{thm:noninterference-conditional-independence}}
\label{sec:proof-main-result}

To prove Theorem~\ref{thm:noninterference-conditional-independence}, we prepare several lemmas.

The following lemma relates the conditional independence formula $\Conditioning_{x \gets X} (\Own Y * \Own Z)$ and the conditional independence in terms of product probability measures.
\begin{lemma}\label{lem:conditional-independence-formula}
	Let $\emptyctx; X : A, Y : B, Z : C \vdash \Conditioning_{x \gets X} (\Own Y * \Own Z)$ be a Lilac formula.
	We have $\gamma, D, (\mathcal{F}, \mu) \vDash \Conditioning_{x \gets X} (\Own Y * \Own Z)$ if and only if for any $(\Sigma_{\Omega}, \mu') \sqsupseteq (\mathcal{F}, \mu)$ and $\mu'$-disintegration $\{ \nu_x \}_x$ with respect to $X$, for almost all $x$, we have $Y^{-1}(\Sigma_B) \cup Z^{-1}(\Sigma_C) \subseteq \mathcal{F}$ and $Y_{*} \nu_x \otimes Z_{*} \nu_x = (Y, Z)_{*} \nu_x$.
	Here, $\Sigma_B$ and $\Sigma_C$ is the $\sigma$-algebra on $\interpret{B}$ and $\interpret{C}$, respectively; and $\otimes$ is the product probability measure.
\end{lemma}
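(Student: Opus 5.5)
The plan is to unfold the semantics of the conditioning modality and reduce the lemma to a pointwise statement about a single probability space. By the clause for $\Conditioning_{x \gets E}$, the judgement $\gamma, D, (\mathcal{F}, \mu) \vDash \Conditioning_{x \gets X} (\Own Y * \Own Z)$ holds exactly when the following is true. For every $(\Sigma_{\Omega}, \mu') \sqsupseteq (\mathcal{F}, \mu)$ and every $\mu'$-disintegration $\{\nu_x\}_x$ with respect to $X$, we have $\gamma, D, (\mathcal{F}, \nu_x|_{\mathcal{F}}) \vDash \Own Y * \Own Z$ for almost all $x$.

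So it suffices to prove the following pointwise claim. For any $(\mathcal{F}, \rho) \in \KRM$ (applied to $\rho = \nu_x|_{\mathcal{F}}$) we want
\[ (\mathcal{F}, \rho) \vDash \Own Y * \Own Z \iff Y^{-1}(\Sigma_B) \cup Z^{-1}(\Sigma_C) \subseteq \mathcal{F} \text{ and } Y_{*}\rho \otimes Z_{*}\rho = (Y, Z)_{*}\rho. \]
Once $Y$ and $Z$ are known to be $\mathcal{F}$-measurable, the pushforwards under $\rho$ and under $\nu_x$ coincide, since $\rho$ is the restriction of $\nu_x$ to $\mathcal{F}$. The claim is applied for each fixed $x$, so the ``for almost all $x$'' quantifier passes through unchanged.

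For ($\Rightarrow$), the semantics of $*$ gives $\mathcal{P}_1 = (\mathcal{F}_1, \mu_1)$ and $\mathcal{P}_2 = (\mathcal{F}_2, \mu_2)$ in $\KRM$ with $\mathcal{P}_1 \cdot \mathcal{P}_2 \sqsubseteq (\mathcal{F}, \rho)$. Moreover $Y$ is $\mathcal{F}_1$-measurable and $Z$ is $\mathcal{F}_2$-measurable. Since $\mathcal{F}_1, \mathcal{F}_2 \subseteq \sigma(\mathcal{F}_1 \cup \mathcal{F}_2) \subseteq \mathcal{F}$, measurability with respect to $\mathcal{F}$ follows. The order relation gives $\mu_i = \rho|_{\mathcal{F}_i}$. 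Independence in the combination then gives, for measurable rectangles $E \times G$,
\[ \rho(Y^{-1}(E) \cap Z^{-1}(G)) = \rho(Y^{-1}(E))\,\rho(Z^{-1}(G)). \]
Rectangles form a $\pi$-system generating $\Sigma_B \otimes \Sigma_C$. The set on which the two probability measures $(Y, Z)_{*}\rho$ and $Y_{*}\rho \otimes Z_{*}\rho$ agree is a $\lambda$-system. Dynkin's $\pi$-$\lambda$ theorem therefore yields equality of the two measures.

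For ($\Leftarrow$), take $\mathcal{F}_1 = Y^{-1}(\Sigma_B)$ and $\mathcal{F}_2 = Z^{-1}(\Sigma_C)$, both contained in $\mathcal{F}$ by hypothesis, with measures $\rho|_{\mathcal{F}_1}$ and $\rho|_{\mathcal{F}_2}$. Evaluating the product-measure identity on rectangles shows that $\rho$ restricted to $\sigma(\mathcal{F}_1 \cup \mathcal{F}_2)$ witnesses independence of $\mathcal{F}_1$ and $\mathcal{F}_2$. Hence $(\mathcal{F}_1, \rho|) \cdot (\mathcal{F}_2, \rho|) = (\sigma(\mathcal{F}_1 \cup \mathcal{F}_2), \rho|) \sqsubseteq (\mathcal{F}, \rho)$. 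Trivially $Y$ is $\mathcal{F}_1$-measurable and $Z$ is $\mathcal{F}_2$-measurable, so $\Own Y * \Own Z$ holds.

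The main obstacle I expect is checking that these witnesses, and $(\mathcal{F}, \nu_x|_{\mathcal{F}})$ itself, actually lie in the submonoid $\KRM$ rather than merely in $\mathcal{M}_{\mathrm{ind}}$. For the witnesses I would use that $\mathcal{F}_1$ and $\mathcal{F}_2$ are generated by random variables into countably generated spaces, since every type denotes a countably generated space. I would then appeal to the same fact from Lilac's development that is used to prove soundness of its ownership rules (e.g.\ \hyperlink{rule:C-Own}{C-Own}): pullback $\sigma$-algebras of such random variables, and their independent combinations, satisfy the disintegrability conditions of \cite[Section~3]{LiPLDI2023}. For $(\mathcal{F}, \nu_x|_{\mathcal{F}})$, membership is already presupposed by the semantic clause of the conditioning modality, so nothing extra is needed there.
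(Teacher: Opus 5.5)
Your proposal is correct and follows the paper's proof. Both unfold the conditioning modality, reduce to a pointwise equivalence for $\Own Y * \Own Z$ at $(\mathcal{F}, \nu_x|_{\mathcal{F}})$, use the pullback $\sigma$-algebras $Y^{-1}(\Sigma_B)$ and $Z^{-1}(\Sigma_C)$ as canonical witnesses (which lie in $\KRM$ because they are countably generated), and identify independence on rectangles with the product-measure identity. Your explicit Dynkin argument, and your step transferring pushforwards from $\nu_x|_{\mathcal{F}}$ back to $\nu_x$, fill in details the paper leaves implicit.
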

\begin{proof}
	By definition of $\gamma, D, (\mathcal{F}, \mu) \vDash \Conditioning_{x \gets X} (\Own Y * \Own Z)$, it suffices to show that for any $(\Sigma_{\Omega}, \mu') \sqsupseteq (\mathcal{F}, \mu)$, any $\mu'$-disintegration $\{ \nu_x \}_x$ with respect to $X$, and any $x$, we have
	\[ \gamma, D, (\mathcal{F}, \nu_x|_{\mathcal{F}}) \vDash \Own Y * \Own Z \quad\iff\quad Y^{-1}(\Sigma_B) \cup Z^{-1}(\Sigma_C) \subseteq \mathcal{F} \text{ and } Y_{*} \nu_x \otimes Z_{*} \nu_x = (Y, Z)_{*} \nu_x \]
	This is proved as follows.
	\begin{align}
		&\gamma, D, (\mathcal{F}, \nu_x|_{\mathcal{F}}) \vDash \Own Y * \Own Z \\
		&\iff \exists \mathcal{P}_1, \mathcal{P}_2 \text{ such that } \mathcal{P}_1 \cdot \mathcal{P}_2 \sqsubseteq (\mathcal{F}, \nu_x|_{\mathcal{F}}); \gamma, D, \mathcal{P}_1 \vDash \Own Y; \text{ and } \gamma, D, \mathcal{P}_2 \vDash \Own Z \\
		&\iff (Y^{-1}(\Sigma_B), \nu_x|_{Y^{-1}(\Sigma_B)}) \cdot (Z^{-1}(\Sigma_C), \nu_x|_{Z^{-1}(\Sigma_C)}) \sqsubseteq (\mathcal{F}, \nu_x|_{\mathcal{F}}) \label{eq:lem:conditional-independence-formula-1} \\
		&\iff Y^{-1}(\Sigma_B) \cup Z^{-1}(\Sigma_C) \subseteq \mathcal{F} \text{ and } Y_{*} \nu_x \otimes Z_{*} \nu_x = (Y, Z)_{*} \nu_x \label{eq:lem:conditional-independence-formula-2}
	\end{align}
	Here,~\eqref{eq:lem:conditional-independence-formula-1} and \eqref{eq:lem:conditional-independence-formula-2} follow from the following facts:
	\begin{itemize}
		\item \eqref{eq:lem:conditional-independence-formula-1}: Since $\Sigma_B$ and $\Sigma_C$ are countably generated, so are $Y^{-1}(\Sigma_B)$ and $Z^{-1}(\Sigma_C)$.
		\item \eqref{eq:lem:conditional-independence-formula-2}: By definition of product measurable spaces, we have $Y_{*} \nu_x \otimes Z_{*} \nu_x = (Y, Z)_{*} \nu_x$ if and only if for any measurable sets $E \in \Sigma_B$ and $F \in \Sigma_C$, we have $(Y_{*} \nu_x \otimes Z_{*} \nu_x)(E \times F) = (Y, Z)_{*} \nu_x (E \times F)$.
		Note here that we have the following equations.
		\begin{align}
			(Y_{*} \nu_x \otimes Z_{*} \nu_x)(E \times F) &= (Y_{*} \nu_x)(E) \cdot (Z_{*} \nu_x)(F) \\
			&= \nu_x(Y^{-1}(E)) \cdot \nu_x(Z^{-1}(F))\\
			(Y, Z)_{*} \nu_x (E \times F) &= \nu_x(Y^{-1}(E) \cap Z^{-1}(F))
			\qedhere
		\end{align}
	\end{itemize}
\end{proof}

In the semantics of conditioning modalities in Lilac, one needs to take a probability distribution over the common sample space $\Omega$.
This requires assigning an arbitrary probability distribution to the region outside the domain of interest, but the following lemma ensures that this choice does not essentially affect the semantics.
\begin{lemma}\label{lem:disintegration-equivalent}
	Let $\mu, \mu'$ be probability measures on $\Sigma_{\Omega}$.
	Let $D = (D_X, D_Y)$ and $D' = (D'_X, D'_Y)$ be random variables $D, D' \in \RV{\interpret{\Delta}}$ where $\Delta = X : A, Y : B$.
	If $D_{*} \mu = D'_{*} \mu'$, then for any $\mu$-disintegration $\{ \nu_x \}_{x \in A}$ with respect to $D_X$ and any $\mu'$-disintegration $\{ \nu'_x \}_{x \in A}$ with respect to $D'_X$, for $D_{*} \mu$-almost all $x$ and for any $E \in \Sigma_B$, we have $\nu_x(D_Y^{-1}(E)) = \nu'_x(D'^{-1}_Y(E))$.
\end{lemma}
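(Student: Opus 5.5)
The plan is to reduce the claim to the uniqueness of disintegrations (the proposition following Definition~\ref{def:disintegration}), using the characterisation in Lemma~\ref{lem:aee-markov-kernel}. Set $\lambda \coloneqq (D_X)_{*}\mu = (D'_X)_{*}\mu'$; this equality holds because it is the first marginal of $D_{*}\mu = D'_{*}\mu'$. The idea is to push both families down to kernels on $\interpret{A} \times \interpret{B}$. Concretely, define $\kappa_x \coloneqq D_{*}\nu_x$ and $\kappa'_x \coloneqq D'_{*}\nu'_x$. These are Markov kernels from $\interpret{A}$ to $\interpret{A}\times\interpret{B}$, and they are measurable because pushforward along a fixed measurable map preserves kernel measurability.

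First, I will check that $\{\kappa_x\}$ is a $D_{*}\mu$-disintegration with respect to the projection $\pi_1 : \interpret{A}\times\interpret{B} \to \interpret{A}$. There are two conditions to verify.

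\emph{Concentration.} We have
\[ \kappa_x(\{(a,b) \mid a = x\}) = \nu_x(\{\omega \mid D_X(\omega) = x\}) = 1 \]
for $\lambda$-almost all $x$. This uses the first disintegration property for $\{\nu_x\}$. The set $\{a = x\}$ is measurable because all types are countably generated standard spaces, so the diagonal is measurable.

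\emph{Marginal.} For measurable $g \ge 0$ on $\interpret{A}\times\interpret{B}$, we have
\[ \mathbb{E}_{x\sim\lambda}\big[\mathbb{E}_{p\sim\kappa_x}[g(p)]\big] = \mathbb{E}_{x\sim\lambda}\big[\mathbb{E}_{\omega\sim\nu_x}[g(D(\omega))]\big] = \mathbb{E}_{\omega\sim\mu}\big[g(D(\omega))\big], \]
and the last expression is the integral of $g$ against $D_{*}\mu$. This uses the second disintegration property.

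The same argument shows that $\{\kappa'_x\}$ is a $D'_{*}\mu'$-disintegration with respect to $\pi_1$. Since $D_{*}\mu = D'_{*}\mu'$, the two families are disintegrations of the same measure along the same map with the same base measure $\lambda$.

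Applying the uniqueness proposition then gives $\kappa_x = \kappa'_x$ for $\lambda$-almost all $x$. Evaluating on $\interpret{A}\times E$ yields
\[ \nu_x(D_Y^{-1}(E)) = \kappa_x(\interpret{A}\times E) = \kappa'_x(\interpret{A}\times E) = \nu'_x(D'^{-1}_Y(E)) \]
for every $E\in\Sigma_B$ simultaneously. The almost-sure set does not depend on $E$, because the uniqueness proposition gives equality of the kernels as measures.

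The main obstacle is the hypothesis check for that uniqueness result, which internally relies on Lemma~\ref{lem:aee-markov-kernel}. That lemma requires the target space $\interpret{A}\times\interpret{B}$ to be countably generated, which holds by the remark on the semantics of types. I also need measurability of the diagonal, so that the concentration condition makes sense. If that proves delicate, I would instead verify the almost-everywhere equality directly through Lemma~\ref{lem:aee-markov-kernel}. That route only needs the identities
\[ \mathbb{E}_{x\sim\lambda}\big[[x\in F]\,\nu_x(D_Y^{-1}(E))\big] = \mu\big(D_X^{-1}(F)\cap D_Y^{-1}(E)\big) = (D_{*}\mu)(F\times E), \]
together with the symmetric identity for the primed data. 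These follow from the marginal property combined with Lemma~\ref{lem:disintegration-concentration}, which replaces $[x\in F]$ by $[D_X(\omega)\in F]$ inside the $\nu_x$-integral. Equality of the right-hand sides, combined with Lemma~\ref{lem:aee-markov-kernel} applied to the kernels $x\mapsto (D_Y)_{*}\nu_x$ and $x \mapsto (D'_Y)_{*}\nu'_x$ into the countably generated space $\interpret{B}$, then gives the claim without ever using the diagonal.
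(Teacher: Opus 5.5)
Your proof is correct. Your fallback argument is exactly the paper's proof: the paper applies Lemma~\ref{lem:aee-markov-kernel} directly to the $\interpret{B}$-valued kernels $x \mapsto (D_Y)_{*}\nu_x$ and $x \mapsto (D'_Y)_{*}\nu'_x$. It computes $\mathbb{E}_{x}[\IversonBracket{x\in F}\cdot\nu_x(D_Y^{-1}(E))] = (D_{*}\mu)(F\times E)$ from the marginal property together with Lemma~\ref{lem:disintegration-concentration}, and does the same for the primed data.

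Your main route is a genuinely different packaging. You push both disintegrations forward along $D$ and $D'$ to kernels into $\interpret{A}\times\interpret{B}$. You then observe that each becomes a disintegration of the single measure $D_{*}\mu = D'_{*}\mu'$ along $\pi_1$ over the common base $\lambda$, and let the uniqueness proposition for disintegrations finish the job.

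What this buys:
\begin{itemize}
\item It is modular: disintegrations are stable under pushforward along maps over the base, so the claim reduces to uniqueness.
\item It gives the slightly more informative conclusion $D_{*}\nu_x = D'_{*}\nu'_x$ for $\lambda$-almost all $x$.
\end{itemize}
What it costs: you must re-verify both disintegration axioms for the pushed-forward families, including the concentration condition on the fibres $\{x\}\times\interpret{B}$. The paper's direct computation needs only the rectangle identity.

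Underneath, both routes perform the same rectangle computation, because the uniqueness proposition is itself proved via Lemma~\ref{lem:aee-markov-kernel}. You were right to notice that this proposition silently requires its total space (here $\interpret{A}\times\interpret{B}$) to be countably generated.

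One small imprecision: countable generation alone does not make singletons or the diagonal measurable, since points must also be separated. What actually justifies the concentration step is that APPL types denote countable spaces (booleans, the finite group $\mathbb{K}$, indices, and their products), which are standard Borel.
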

\begin{proof}
	By Lemma~\ref{lem:aee-markov-kernel}, it suffices to show that for any measurable set $E \in \Sigma_B$ and $F \in \Sigma_A$, we have
	\[ \Expectation{x}{(D_X)_{*} \mu}{\IversonBracket{x \in F} \cdot \nu_x(D_Y^{-1}(E))} = \Expectation{x}{(D'_X)_{*} \mu'}{\IversonBracket{x \in F} \cdot \nu'_x({D'}_Y^{-1}(E))}. \]
	By $D_{*} \mu = D'_{*} \mu'$, it suffices to prove
	\[ \Expectation{x}{(D_X)_{*} \mu}{\IversonBracket{x \in F} \cdot \nu_x(D_Y^{-1}(E))} = \mathbb{P}_{(x, y) \sim D_{*} \mu}\IversonBracket{x \in F \land y \in E}. \]
	This is proved as follows.
	\begin{align*}
		&\mathbb{P}_{(x, y) \sim D_{*} \mu}[x \in F \land y \in E] \\
		&= \mathbb{P}_{\omega \sim \mu}[D_X(\omega) \in F \land D_Y(\omega) \in E] \\
		&= \Expectation{x}{(D_X)_{*} \mu}{\Expectation{\omega}{\nu_x}{\IversonBracket{D_X(\omega) \in F} \cdot \IversonBracket{D_Y(\omega) \in E}}} \\
		&= \Expectation{x}{(D_X)_{*} \mu}{\IversonBracket{x \in F} \cdot \nu_x(D_Y^{-1}(E))}
		\qedhere
	\end{align*}
\end{proof}

By Lemma~\ref{lem:disintegration-equivalent}, we have the following lemma.
\begin{lemma}\label{lem:conditional-independence-equivalent}
	Let $\Delta = X : A, Y : B, Z : C$ and $D = (D_X, D_Y, D_Z), D' = (D'_X, D'_Y, D'_Z)$ be random variables such that $(D_X, D_Y, D_Z), (D'_X, D'_Y, D'_Z) \in \RV{\interpret{\Delta}}$.
	Let $\mu, \mu'$ be probability measures on $\Sigma_{\Omega}$ such that $D_{*} \mu = D'_{*} \mu'$.
	Let $\mathcal{F}, \mathcal{F}' \subseteq \Sigma_{\Omega}$ be $\sigma$-algebras such that $(D_Y, D_Z)$ is $\mathcal{F}$-measurable and $(D'_Y, D'_Z)$ is $\mathcal{F}'$-measurable.
	Then, we have
	\begin{align}
		&\gamma, D, (\mathcal{F}, \mu|_{\mathcal{F}}) \vDash \Conditioning_{x \gets X} (\Own Y * \Own Z) \\
		&\iff\quad \gamma, D', (\mathcal{F}', \mu'|_{\mathcal{F}'}) \vDash \Conditioning_{x \gets X} (\Own Y * \Own Z)
	\end{align}
\end{lemma}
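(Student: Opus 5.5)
By symmetry it suffices to prove one direction: assume the left-hand side and derive the right-hand side. The plan is to reduce both sides to a statement about the joint law of $(X,Y,Z)$, which by hypothesis is the same on both sides.

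First, apply Lemma~\ref{lem:conditional-independence-formula}. The left-hand side then says: for every extension $(\Sigma_\Omega,\mu'')\sqsupseteq(\mathcal{F},\mu|_{\mathcal{F}})$ and every $\mu''$-disintegration $\{\nu_x\}$ with respect to $D_X$, for almost all $x$,
\[ (D_Y)_*\nu_x\otimes(D_Z)_*\nu_x=(D_Y,D_Z)_*\nu_x . \]
The measurability side condition $D_Y^{-1}(\Sigma_B)\cup D_Z^{-1}(\Sigma_C)\subseteq\mathcal{F}$ holds by hypothesis, and likewise for $\mathcal{F}'$. So on both sides only the factorization condition is in play.

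Next, transfer this factorization to the other side. Fix an arbitrary extension $(\Sigma_\Omega,\mu''')\sqsupseteq(\mathcal{F}',\mu'|_{\mathcal{F}'})$ and a $\mu'''$-disintegration $\{\nu'_x\}$ with respect to $D'_X$.

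\textbf{Main obstacle.} The extensions $\mu''$ and $\mu'''$ need not agree with $\mu$ and $\mu'$ on the whole of $\Sigma_\Omega$. They agree only on $\mathcal{F}$ and $\mathcal{F}'$ respectively. The condition $D_*\mu''=D_*\mu$ would also require $D_X$ to be $\mathcal{F}$-measurable, which is not assumed. I would handle this as follows.
\begin{itemize}
  \item Instantiate the left-hand side with the trivial extension $\mu''=\mu$, which is legitimate since $(\Sigma_\Omega,\mu)\sqsupseteq(\mathcal{F},\mu|_{\mathcal{F}})$. Take any $\mu$-disintegration $\{\nu_x\}$ with respect to $D_X$; one exists on the Hilbert cube, a standard Borel space. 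This yields the factorization for $\{\nu_x\}$.
  \item If $D_X$ is not $\mathcal{F}'$-measurable, then $D'_*\mu'''$ may differ from $D'_*\mu'=D_*\mu$, and Lemma~\ref{lem:disintegration-equivalent} is not directly available. In that case I would check whether the intended statement implicitly includes $D'_X$ among the owned variables, or whether the semantics quantifies so that $\mu'''$ agrees with $\mu'$ on $\sigma(D')$. I expect the paper's use to have $(D_X,D_Y,D_Z)$ all measurable in the relevant $\sigma$-algebras, since the precondition owns the inputs. I would add that assumption if needed, so that $D'_*\mu'''=D'_*\mu'=D_*\mu$.
\end{itemize}

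With $D'_*\mu'''=D_*\mu$ in hand, apply Lemma~\ref{lem:disintegration-equivalent} with the composite variable $(D_Y,D_Z)$ in the role of the second component. This gives, for almost all $x$ and every measurable $E\subseteq B\times C$,
\[ \nu_x\bigl((D_Y,D_Z)^{-1}(E)\bigr)=\nu'_x\bigl((D'_Y,D'_Z)^{-1}(E)\bigr). \]
Hence $(D_Y,D_Z)_*\nu_x=(D'_Y,D'_Z)_*\nu'_x$, and by taking marginals the pushforwards of $D_Y$ and $D'_Y$ agree, as do those of $D_Z$ and $D'_Z$.

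The exceptional null sets are null with respect to the common law $(D_X)_*\mu=(D'_X)_*\mu'''$. Intersecting them gives, for almost all $x$,
\[ (D'_Y)_*\nu'_x\otimes(D'_Z)_*\nu'_x=(D'_Y,D'_Z)_*\nu'_x . \]
Lemma~\ref{lem:conditional-independence-formula} then gives the right-hand side.
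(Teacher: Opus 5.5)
Your core argument is the paper's. You fix a $\mu$-disintegration $\{\nu_x\}$ with respect to $D_X$. You then apply Lemma~\ref{lem:disintegration-equivalent}, with the pair $(Y,Z)$ as second component, to identify $(D_Y,D_Z)_*\nu_x$ with $(D'_Y,D'_Z)_*\nu'_x$ for almost all $x$, and carry the product factorization across. Where you go beyond the paper is the quantification over extensions, and your worry there is justified.

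The paper's proof ``unfolds'' the right-hand side as if the only extension of $(\mathcal{F}',\mu'|_{\mathcal{F}'})$ were $\mu'$ itself. That step is sound only when $D'_X$ is $\mathcal{F}'$-measurable. Without that hypothesis the lemma as stated actually fails. Here is a counterexample.
\begin{itemize}
\item Let $\mu=\mu'$ be the image of Lebesgue measure $\lambda$ under $u\mapsto(u,u,u,0,0,\dots)$.
\item Let $D=D'$ consist of the boolean variables $D_X=[\omega_1>1/2]$, $D_Y=[\omega_2>1/2]$, $D_Z=[\omega_3>1/2]$.
\item Take $\mathcal{F}=\Sigma_\Omega$ and $\mathcal{F}'=\sigma(D_Y,D_Z)$.
\end{itemize}
The left-hand side holds: $\mu$ is the only extension of $(\Sigma_\Omega,\mu)$, and given $X$, both $Y$ and $Z$ are deterministic. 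Now take the image of $\lambda\otimes\lambda$ under $(u,v)\mapsto(u,v,v,0,\dots)$. It agrees with $\mu'$ on $\mathcal{F}'$ but makes $X$ independent of $(Y,Z)$. Under each of its conditionals, $Y=Z$ is a uniform bit, whose joint law is not a product, so the right-hand side fails.

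So the hypothesis you propose to add is necessary, not merely convenient. Because you instantiate the source side with the trivial extension $\mu''=\mu$, each direction needs measurability of the conditioning variable only on the target side. That condition holds wherever the paper invokes the lemma. In Lemma~\ref{lem:extra-variable-conditional-independence} and in the ($\impliedby$) direction of Theorem~\ref{thm:noninterference-conditional-independence}, the target element owns all input variables, including the conditioning one. With this hypothesis made explicit, your proof is complete and more careful than the paper's.
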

\begin{proof}
	By symmetry, it suffices to prove that if $\gamma, D, \mathcal{P} \vDash \Conditioning_{x \gets X} (\Own Y * \Own Z)$, then $\gamma, D', \mathcal{P}' \vDash \Conditioning_{x \gets X} (\Own Y * \Own Z)$.
	By unfolding the definition, we have $\gamma, D', \mathcal{P}' \vDash \Conditioning_{x \gets X} (\Own Y * \Own Z)$ if and only if for any $\mu'$-disintegration $\{ \nu'_x \}_x$ with respect to $D'_X$, we have for $D'_{*} \mu'$-almost all $x$, and for any $E \in \Sigma_B$ and $F \in \Sigma_C$, we have ${D'}_Y^{-1}(E) \cap {D'}_Z^{-1}(F) \in \mathcal{F}'$ and $\nu'_x({D'}_Y^{-1}(E) \cap {D'}_Z^{-1}(F)) = \nu'_x({D'}_Y^{-1}(E)) \cdot \nu'_x({D'}_Z^{-1}(F))$.
	Fix a $\mu$-disintegration $\{ \nu_x \}_x$ with respect to $D_X$.
	By $\gamma, D, \mathcal{P} \vDash \Conditioning_{x \gets X} (\Own Y * \Own Z)$ and Lemma~\ref{lem:disintegration-equivalent}, we have the following equation for almost all $x$.
	\begin{align*}
		\nu'_x({D'}_Y^{-1}(E) \cap {D'}_Z^{-1}(F)) &= \nu'_x((D'_Y, D'_Z)^{-1}(E \times F)) \\
		&= \nu_x((D_Y, D_Z)^{-1}(E \times F)) \\
		&= \nu_x(D_Y^{-1}(E) \cap D_Z^{-1}(F)) \\
		&= \nu_x(D_Y^{-1}(E)) \cdot \nu_x(D_Z^{-1}(F)) \\
		&= \nu'_x(D'^{-1}_Y(E)) \cdot \nu'_x(D'^{-1}_Z(F))
	\end{align*}
	Therefore, we have $\gamma, D', \mathcal{P}' \vDash \Conditioning_{x \gets X} (\Own Y * \Own Z)$.
\end{proof}

\begin{lemma}\label{lem:own-variables}
	For any APPL program $\Delta \vdash M : \mathtt{G} A$, we have
	\[ \HoareTriple{\Own \Delta}{M}{X. \Own \Delta \land \Own X} \]
	where $\Own \Delta$ is shorthand for $\Own (X_1, \dots, X_n)$ when $\Delta = X_1 : A_1, \dots, X_n : A_n$.
\end{lemma}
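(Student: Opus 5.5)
The plan is to prove Lemma~\ref{lem:own-variables} by structural induction on the program $M$, so that the Hoare triple holds for $\ret{E}$, $\mathtt{unif}\ \mathbb{K}$, sequential composition and for-loops. A more uniform alternative is to unfold the semantics of $\mathrm{wp}$ directly, and I would pursue that instead. Suppose $\gamma, D, (\mathcal{F}, \mu_0) \models \Own \Delta$, and suppose we are given a frame $\mathcal{P}_f$, a measure $\mu$ with $\mathcal{P}_f \cdot (\mathcal{F}, \mu_0) \sqsubseteq (\Sigma_\Omega, \mu)$, and an extra environment $D_{\mathrm{e}}$. We must produce $X$, $\mathcal{P}'$ and $\mu'$ that realize the output distribution of $\interpret{M}$ and satisfy $\Own \Delta \land \Own X$.

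The key step is constructing $X$. We need a measurable $X : \Omega \to \interpret{A}$ and a new $\mu'$ such that $(D_{\mathrm{e}}, D, X)_* \mu'$ equals the joint law obtained by sampling $\omega \sim \mu$ and then $v \sim \interpret{M}(D(\omega))$. I would not build this by hand. Instead I would reuse the Lilac soundness result that $\mathrm{wp}$ of any program is satisfiable from the trivial precondition: every APPL program satisfies $\HoareTriple{\top}{M}{X. \top}$, which is derivable from the rules in Fig.~\ref{fig:lilac-hoare-rules}. Applying the H-Frame rule with frame $\Own \Delta$ gives $\HoareTriple{\Own \Delta}{M}{X. \top * \Own \Delta}$. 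So we obtain $X$, $\mathcal{P}'$ and $\mu'$ such that $\mathcal{P}_f \cdot \mathcal{P}' \sqsubseteq (\Sigma_\Omega, \mu')$, $\mathcal{P}' \models \Own \Delta$, and the output-distribution equation holds.

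It remains to upgrade $\mathcal{P}' = (\mathcal{F}', \mu'|_{\mathcal{F}'})$ so that it also owns $X$ while keeping compatibility with the frame. I would set $\mathcal{F}'' = \sigma(\mathcal{F}' \cup X^{-1}(\Sigma_A))$ and check that $\mathcal{P}_f \cdot (\mathcal{F}'', \mu'|_{\mathcal{F}''}) \sqsubseteq (\Sigma_\Omega, \mu')$, that is, that $\mathcal{F}''$ is independent of the frame's $\sigma$-algebra $\mathcal{G}$ under $\mu'$.

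This independence is the main obstacle. The frame is independent of $\mathcal{F}'$, but $X$ is a fresh sample drawn from a kernel applied to $D$, and $D$ is $\mathcal{F}'$-measurable. Conditionally on $\mathcal{F}'$, the law of $X$ depends only on $D$, so for $G \in \mathcal{G}$, $F \in \mathcal{F}'$ and $E \in \Sigma_A$ I would compute $\mu'(G \cap F \cap X^{-1}E)$ by disintegrating along $D$. This uses the equality of joint distributions, which includes $D_{\mathrm{e}}$; to cover frame events, I would choose $D_{\mathrm{e}}$ to be a random variable generating $\mathcal{G}$, which is possible since the sets in $\KRM$ are countably generated. The computation then factorizes as $\mu'(G)\,\mu'(F \cap X^{-1}E)$, and a $\pi$-$\lambda$ argument (Dynkin) extends the identity from these rectangles to all of $\mathcal{F}''$. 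I would also need to check that $(\mathcal{F}'', \mu'|_{\mathcal{F}''})$ lies in $\KRM$.

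If this direct route becomes too heavy, the fallback is the structural induction. The cases $\ret{E}$ (where $X$ is a function of $\Delta$) and $\mathtt{unif}\ \mathbb{K}$ (combining $\Own \Delta * X \sim \Uniform{\mathbb{K}}$ with $*$ implying $\land$ of the owns) are direct. Sequential composition needs a strengthened invariant $\Own(\Delta, X)$ for the intermediate variable, which the $\Own$-of-tuple semantics provides. For-loops would use H-For with invariant $\Own \Delta \land \Own X$ at each index.
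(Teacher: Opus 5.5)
Your primary, semantic route has a step that fails as stated. From $\HoareTriple{\Own \Delta}{M}{X.\ \top * \Own \Delta}$ you obtain witnesses $X$, $\mathcal{P}' = (\mathcal{F}', \mu'|_{\mathcal{F}'})$ and $\mu'$. You then enlarge to $\mathcal{F}'' = \sigma(\mathcal{F}' \cup X^{-1}(\Sigma_A))$ and claim that $\mu'(G_0 \cap F \cap X^{-1}E) = \mu'(G_0)\,\mu'(F \cap X^{-1}E)$ for frame events $G_0$.

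This factorization does not follow. The witnesses come from an existential you do not control. The distribution equation fixes only the joint law of $(D_{\mathrm{e}}, D, X)$, so it yields $X \perp \mathcal{G} \mid D$. Your factorization needs the stronger $X \perp \mathcal{G} \mid \mathcal{F}'$, and $\mathcal{F}'$ may contain randomness beyond $\sigma(D)$ that the witness is free to correlate with $X$.

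Here is a concrete counterexample.
\begin{itemize}
\item Take $D$ constant, $M = \mathtt{unif}\ \mathbb{Z}_2$, and a fair frame bit $G$ with $\mathcal{G} = \sigma(G)$ and $D_{\mathrm{e}} = G$.
\item Take $\mathcal{F}' = \sigma(W)$ for a fair bit $W$ independent of $G$, and set $X = G \oplus W$.
\item Every guaranteed property holds: $\mathcal{F}' \perp \mathcal{G}$, $\mathcal{P}' \models \Own \Delta$, and $(G, X)$ is uniform on $\{0,1\}^2$.
\item Yet $\sigma(W, X) = \sigma(W, G)$ is not independent of $\mathcal{G}$.
\end{itemize}

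The semantic route can be repaired. Discard $\mathcal{F}'$ and use $\sigma(D, X)$, which already suffices for $\Own \Delta \land \Own X$.
\begin{itemize}
\item The original precondition gives $\sigma(D) \perp \mathcal{G}$ under $\mu$.
\item With $D_{\mathrm{e}}$ generating $\mathcal{G}$, the distribution equation transfers this to $\mu'$. It also preserves the frame's measure and says that $X$ given $(D_{\mathrm{e}}, D)$ has law $\interpret{M}(D)$.
\item Hence $\sigma(D, X) \perp \mathcal{G}$ under $\mu'$, by Dynkin's theorem.
\item You still have to check membership in $\KRM$ and that such a $D_{\mathrm{e}}$ is admissible in the definition of $\mathrm{wp}$.
\end{itemize}
With this repair the \textsc{H-Frame} step is unnecessary. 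Note also that $\HoareTriple{\top}{M}{X.\ \top}$ is itself derived by structural induction, so the semantic route does not really avoid induction.

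Your fallback is exactly the paper's proof, and it is the argument to write up:
\begin{itemize}
\item $\ret{E}$: \textsc{H-Ret} with \textsc{FV-Own} and \textsc{H-Conseq}.
\item $\mathtt{unif}\ \mathbb{K}$: \textsc{H-Unif}, then \textsc{H-Frame} with $\Own \Delta$, then \textsc{H-Conseq}.
\item Sequential composition: \textsc{H-Let}, applying the induction hypothesis for $N$ in context $\Delta, X$ and then dropping $\Own X$.
\item For-loops: \textsc{H-For} with invariant $\Own \Delta \land \Own X$, using \textsc{FV-Own} to get the initial $\Own E$ (a step you did not mention).
\end{itemize}
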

\begin{proof}
	By induction on the structure of $M$.
	\begin{itemize}
		\item $\ret{E}$: Since $\emptyset; \Delta \vdash E : A$, we have $\Own \Delta \vdash \Own E$ by \hyperlink{rule:FV-Own}{\textsc{FV-Own}}.
		\begin{mathpar}
			\inferrule*[Left=H-Conseq]{
				\inferrule*[Left=H-Ret]{ }{
					\HoareTriple{\Own \Delta \land \Own E}{\ret{E}}{X. \Own \Delta \land \Own X}
				}
			}{
				\HoareTriple{\Own \Delta}{\ret{E}}{X. \Own \Delta \land \Own X}
			}
		\end{mathpar}
		\item $X \gets M; N$:
		\begin{mathpar}
			\inferrule*[Left=H-Let]{
				\HoareTriple{\Own \Delta}{M}{X. \Own \Delta \land \Own X} \\
				\inferrule*[Left=H-Conseq]{
					\HoareTriple{\Own \Delta \land \Own X}{N}{Y. \Own \Delta \land \Own X \land \Own Y}
				}{
					\HoareTriple{\Own \Delta \land \Own X}{N}{Y. \Own \Delta \land \Own Y}
				}
			}{
				\HoareTriple{\Own \Delta}{X \gets M; N}{Y. \Own \Delta \land \Own Y}
			}
		\end{mathpar}
		\item $\mathtt{unif}\ \mathbb{K}$:
		\begin{mathpar}
			\inferrule*[Left=H-Conseq]{
				\inferrule*[Left=H-Frame]{
					\inferrule*[Left=H-Unif]{ }{
						\HoareTriple{\top}{\mathtt{unif}\ \mathbb{K}}{X. X \sim \Uniform{\mathbb{K}}}
					}
				}{
					\HoareTriple{\Own \Delta}{\mathtt{unif}\ \mathbb{K}}{X. \Own \Delta * X \sim \Uniform{\mathbb{K}}}
				}
			}{
				\HoareTriple{\Own \Delta}{\mathtt{unif}\ \mathbb{K}}{X. \Own \Delta \land \Own X}
			}
		\end{mathpar}
		\item $\forloop{i}{n_s}{n_e}{X}{E}{M}$: Let $I(i, X) \coloneqq \Own \Delta \land \Own X$.
		\begin{mathpar}
			\inferrule*[Left=H-Conseq]{
			\inferrule*[Left=H-For]{
				\inferrule*[Left=H-Conseq]{
					\HoareTriple{\Own \Delta \land \Own X}{M}{X'. \Own \Delta \land \Own X \land \Own X'}
				}{
					\HoareTriple{\Own \Delta \land \Own X}{M}{X'. \Own \Delta \land \Own X'}
				}
			}{
				\HoareTriple{\Own \Delta \land \Own E}{\forloop{i}{n_s}{n_e}{X}{E}{M}}{X. \Own \Delta \land \Own X}
			}
			}{
				\HoareTriple{\Own \Delta}{\forloop{i}{n_s}{n_e}{X}{E}{M}}{X. \Own \Delta \land \Own X}
			}
		\end{mathpar}
		Here, we use \hyperlink{rule:FV-Own}{\textsc{FV-Own}} to show $\Own \Delta \vdash \Own E$.
		\qedhere
	\end{itemize}
\end{proof}

To prove the ($\implies$) direction of Theorem~\ref{thm:noninterference-conditional-independence}, we prepare several lemmas.

In classical probability theory, $X$ and $Y$ are independent if and only if $\Prob{Y \mid X} = \Prob{Y}$.
The following lemma states a similar characterization for the Lilac formula $\Own X * \Own Y$ in terms of disintegrations.
\begin{lemma}\label{lem:independence-disintegration}
	Let $X : A, Y : B \vdash \Own X * \Own Y$ be a Lilac formula.
	We have $\gamma, D, (\mathcal{F}, \mu) \vDash \Own X * \Own Y$ if and only if $X$ and $Y$ are $\mathcal{F}$-measurable, and for any $(\Sigma_{\Omega}, \mu') \sqsupseteq (\mathcal{F}, \mu)$ and $\mu'$-disintegration $\{ \nu_x \}_{x \in \interpret{A}}$ with respect to $X$, we have $Y_{*} \nu_x = Y_{*} \mu$ for almost all $x \in \interpret{A}$.
	\qed
\end{lemma}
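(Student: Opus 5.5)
The plan is to reduce both directions to the classical characterization of independence of two random variables through their joint law. The tools are the semantic clause for $*$ together with the argument already used in Lemma~\ref{lem:conditional-independence-formula}, and, for the disintegration side, Lemma~\ref{lem:aee-markov-kernel}.

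The first step uses the same chain of equivalences as in the proof of Lemma~\ref{lem:conditional-independence-formula}, but without the conditioning. Since $\Sigma_A$ and $\Sigma_B$ are countably generated, the smallest witnesses for $\Own X$ and $\Own Y$ are $(X^{-1}(\Sigma_A), \mu|_{X^{-1}(\Sigma_A)})$ and $(Y^{-1}(\Sigma_B), \mu|_{Y^{-1}(\Sigma_B)})$; these lie in $\KRM$ by the countable-generation conditions. We therefore get that $\gamma, D, (\mathcal{F}, \mu) \vDash \Own X * \Own Y$ holds iff $X$ and $Y$ are $\mathcal{F}$-measurable and $(X,Y)_{*}\mu = X_{*}\mu \otimes Y_{*}\mu$. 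The forward part of this uses that $\mathcal{P}_1 \cdot \mathcal{P}_2 \sqsubseteq (\mathcal{F},\mu)$ forces $\mu(X^{-1}(E)\cap Y^{-1}(F)) = \mu(X^{-1}(E))\mu(Y^{-1}(F))$. For the converse we take the two witnesses above; their independent combination is exactly $\sigma(X,Y)$ with the restriction of $\mu$, and this is $\sqsubseteq (\mathcal{F},\mu)$.

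The second step shows that, for any extension $(\Sigma_\Omega, \mu') \sqsupseteq (\mathcal{F},\mu)$ and any $\mu'$-disintegration $\{\nu_x\}$ along $X$, the product condition is equivalent to $Y_{*}\nu_x = Y_{*}\mu$ for $X_{*}\mu$-almost all $x$. Note that $X$, $Y$ and $(X,Y)$ are $\mathcal{F}$-measurable, so their laws under $\mu'$ and $\mu$ coincide. For $E \in \Sigma_A$ and $F \in \Sigma_B$, the disintegration property and Lemma~\ref{lem:disintegration-concentration} give
\[ \mu'(X^{-1}(E)\cap Y^{-1}(F)) = \Expectation{x}{X_{*}\mu}{\IversonBracket{x\in E}\cdot \nu_x(Y^{-1}(F))}, \]
while the product measure gives $\Expectation{x}{X_{*}\mu}{\IversonBracket{x\in E}\cdot \mu(Y^{-1}(F))}$. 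By Lemma~\ref{lem:aee-markov-kernel}, applied to the kernels $x \mapsto Y_{*}\nu_x$ and the constant kernel $x \mapsto Y_{*}\mu$ (where $\interpret{B}$ is countably generated), equality of these two expressions for all $E,F$ is equivalent to the two kernels agreeing almost everywhere. Since equality on all rectangles determines the product measure, this gives the equivalence.

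Assembling the two steps is straightforward. For ($\Rightarrow$), the first step yields measurability and the product law, and the second step then applies to every extension and every disintegration. For ($\Leftarrow$), some extension $\mu'$ and some disintegration always exist: $\Omega$ is standard Borel, which is the reason $\KRM$ is used. Picking one, the second step gives the product law, and the first step concludes. I expect the main obstacle to be the first step, specifically checking that the minimal witnesses lie in $\KRM$ and justifying the reduction to them. One must argue that any $\mathcal{P}_1 \vDash \Own X$ restricts, via the order, to $X^{-1}(\Sigma_A)$, and that independence passes to sub-$\sigma$-algebras. I would handle this exactly as in the justification of \eqref{eq:lem:conditional-independence-formula-1}, using downward closure of $\sqsubseteq$ under restriction to countably generated sub-$\sigma$-algebras.
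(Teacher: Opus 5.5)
Your proposal is correct and follows the paper's proof. The $(\Leftarrow)$ direction is the same disintegration computation, reducing to $\mu(X^{-1}(E)\cap Y^{-1}(F)) = \mu(X^{-1}(E))\,\mu(Y^{-1}(F))$, and for $(\Rightarrow)$ you prove directly, via Lemma~\ref{lem:aee-markov-kernel} applied to $x \mapsto Y_{*}\nu_x$ and the constant kernel $Y_{*}\mu$, what the paper imports from the proof of \textsc{C-Indep} in Lilac, and you also make explicit the existence of an extension and disintegration that the paper leaves implicit.
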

\begin{proof}[Proof of Lemma~\ref{lem:independence-disintegration}]
	The only if direction follows from the proof of \hyperlink{rule:C-Indep}{\textsc{C-Indep}} in \cite[Lemma~B.27]{LiPLDI2023}.
	To prove the if direction, it suffices to show
	\[ \mu(X^{-1}(E) \cap Y^{-1}(F)) \quad=\quad \mu(X^{-1}(E)) \cdot \mu(Y^{-1}(F)) \]
	for any measurable sets $E \in \Sigma_{\interpret{A}}$ and $F \in \Sigma_{\interpret{B}}$.
	This is proved as follows.
	\begin{align}
		&\mu(X^{-1}(E) \cap Y^{-1}(F)) \\
		&= \Expectation{x}{X_{*} \mu}{\Expectation{\omega}{\nu_x}{\IversonBracket{X(\omega) \in E} \cdot \IversonBracket{Y(\omega) \in F}}} \\
		&= \Expectation{x}{X_{*} \mu}{\IversonBracket{x \in E} \cdot \Expectation{\omega}{\nu_x}{\IversonBracket{Y(\omega) \in F}}} \\
		&= \Expectation{x}{X_{*} \mu}{\IversonBracket{x \in E} \cdot Y_{*} \nu_x (F)} \\
		&= \Expectation{x}{X_{*} \mu}{\IversonBracket{x \in E} \cdot Y_{*} \mu (F)} \\
		&= Y_{*} \mu (F) \cdot \Expectation{x}{X_{*} \mu}{\IversonBracket{x \in E}} \\
		&= \mu(Y^{-1}(F)) \cdot \mu(X^{-1}(E))
		\qedhere
	\end{align}
\end{proof}

Let $M$ be a program with input variables $X$ and output variables $Y$.
The following lemma states that any disintegration that represents the conditional distribution $\Prob{Y \mid X}$ is almost everywhere equal to the distribution of the output of $M$ given the input $X$.
\begin{lemma}\label{lem:program-disintegration}
	Let $\Delta \vdash M : \mathtt{G} A$ be an APPL program, $D \in \RV{\interpret{\Delta}}$ be a random variable, and $(\Sigma_{\Omega}, \mu) \in \KRM$.
	If a random variable $Y \in \RV{\interpret{A}}$ and $(\Sigma_{\Omega}, \mu_M) \in \KRM$ satisfy the following equation:
	\[ \left( \begin{aligned}
		&\omega \gets \mu; \\
		&y \gets \interpret{M}(D(\omega)); \\
		&\ret{(D(\omega), y)}
	\end{aligned} \right) \quad=\quad \left( \begin{aligned}
		&\omega \gets \mu_{M}; \\
		&\ret{(D(\omega), Y(\omega))}
	\end{aligned} \right) \]
	then, for any $\mu_M$-disintegration $\{ \nu_x \}_{x \in \interpret{\Delta}}$ with respect to $D$ and for almost all $x \in \interpret{\Delta}$, we have
	\[ Y_{*} \nu_x \quad=\quad \interpret{M}(x). \tag*{\qed} \]
\end{lemma}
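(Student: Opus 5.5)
The plan is to verify the defining property of a $\mu_M$-disintegration for the candidate kernel $x \mapsto \interpret{M}(x)$ pulled back along $Y$, and then conclude by uniqueness of kernels up to almost-everywhere equality (Lemma~\ref{lem:aee-markov-kernel}). Write $\lambda \coloneqq D_{*}\mu_M$. By the assumed equation, projecting to the first component gives $\lambda = D_{*}\mu$. Since $\interpret{Y}$ ranges over a countably generated space, Lemma~\ref{lem:aee-markov-kernel} applies to the two kernels $x \mapsto Y_{*}\nu_x$ and $x \mapsto \interpret{M}(x)$ from $\interpret{\Delta}$ to $\Giry\interpret{A}$. (Measurability of $x \mapsto Y_{*}\nu_x(F) = \nu_x(Y^{-1}(F))$ follows from $\nu$ being a Markov kernel.) So it suffices to show, for every $E \in \Sigma_{\interpret{\Delta}}$ and $F \in \Sigma_{\interpret{A}}$,
\[ \Expectation{x}{\lambda}{\IversonBracket{x \in E} \cdot \nu_x(Y^{-1}(F))} = \Expectation{x}{\lambda}{\IversonBracket{x \in E} \cdot \interpret{M}(x)(F)}. \]

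For the left-hand side, I will use Lemma~\ref{lem:disintegration-concentration} with $f = D$ to replace $\IversonBracket{x \in E}$ inside $\nu_x$ by $\IversonBracket{D(\omega) \in E}$. Then I apply the second disintegration property of Definition~\ref{def:disintegration}. This yields
\[ \Expectation{x}{\lambda}{\Expectation{\omega}{\nu_x}{\IversonBracket{D(\omega)\in E}\IversonBracket{Y(\omega)\in F}}} = \mu_M(D^{-1}(E)\cap Y^{-1}(F)). \]
This is exactly the probability of the event $E \times F$ under the right-hand distribution of the hypothesis.

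For the right-hand side, I evaluate the left-hand distribution of the hypothesis on $E\times F$, which gives $\Expectation{\omega}{\mu}{\IversonBracket{D(\omega)\in E}\cdot\interpret{M}(D(\omega))(F)}$. The change of variables along $D$ turns this into $\Expectation{x}{D_{*}\mu}{\IversonBracket{x\in E}\interpret{M}(x)(F)}$, and since $D_{*}\mu = \lambda$ this is the right-hand side. The hypothesis equates the two distributions, so the two sides agree, and Lemma~\ref{lem:aee-markov-kernel} gives $Y_{*}\nu_x = \interpret{M}(x)$ for $\lambda$-almost all $x$.

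The only delicate step is the first one: checking that Lemma~\ref{lem:aee-markov-kernel} applies, namely measurability of the kernels and countable generation of $\interpret{A}$. The paper notes that every type is interpreted as a countably generated space. Also, the equality of rectangle probabilities $E\times F$ suffices here because the lemma itself only requires sets of the form $E$ and $F$ separately, so no further $\pi$-$\lambda$ argument is needed.
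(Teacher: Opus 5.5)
Your proposal is correct and follows the paper's proof essentially step for step. You reduce via Lemma~\ref{lem:aee-markov-kernel} to equality of the integrals against $\IversonBracket{x \in E}$, rewrite the left side as $\mu_M(D^{-1}(E) \cap Y^{-1}(F))$ using Lemma~\ref{lem:disintegration-concentration} and the disintegration property, and identify this with the right side via the hypothesis and $D_{*}\mu_M = D_{*}\mu$. The only quibble is your opening sentence: $\interpret{M}(x)$ is a measure on $\interpret{A}$, so you are not verifying a disintegration property but directly comparing two kernels into $\interpret{A}$, which is exactly what the body of your argument does.
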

\begin{proof}[Proof of Lemma~\ref{lem:program-disintegration}]
	We apply Lemma~\ref{lem:aee-markov-kernel} to $Y_{*} \nu_x$ and $\interpret{M}(x)$.
	Note first that we have $D_{*} \mu_M = D_{*} \mu$.
	We show $\Expectation{x}{D_{*} \mu_M}{\IversonBracket{x \in E} \cdot \nu_x(Y^{-1}(F))} = \Expectation{x}{D_{*} \mu}{\IversonBracket{x \in E} \cdot \interpret{M}(x)(F)}$ for any measurable sets $E \in \Sigma_A$ and $F \in \Sigma_B$.
	This is proved as follows:
	\begin{align*}
		&\Expectation{x}{D_{*} \mu_M}{\IversonBracket{x \in E} \cdot \nu_x(Y^{-1}(F))} \\
		&= \Expectation{x}{D_{*} \mu_M}{\IversonBracket{x \in E} \cdot \Expectation{\omega}{\nu_x}{\IversonBracket{Y(\omega) \in F}}} \\
		&= \Expectation{x}{D_{*} \mu_M}{\Expectation{\omega}{\nu_x}{\IversonBracket{x \in E} \cdot \IversonBracket{Y(\omega) \in F}}} \\
		&= \Expectation{x}{D_{*} \mu_M}{\Expectation{\omega}{\nu_x}{\IversonBracket{D(\omega) \in E} \cdot \IversonBracket{Y(\omega) \in F}}} \\
		&= \Expectation{\omega}{\mu_M}{\IversonBracket{D(\omega) \in E} \cdot \IversonBracket{Y(\omega) \in F}} \\
		&= \Expectation{(x, y)}{(D, Y)_{*} \mu_M}{\IversonBracket{x \in E} \cdot \IversonBracket{y \in F}} \\
		&= \Expectation{x}{D_{*} \mu}{\Expectation{y}{\interpret{M}(x)}{\IversonBracket{x \in E} \cdot \IversonBracket{y \in F}}} \\
		&= \Expectation{x}{D_{*} \mu}{\IversonBracket{x \in E} \cdot \Expectation{y}{\interpret{M}(x)}{\IversonBracket{y \in F}}} \\
		&= \Expectation{x}{D_{*} \mu}{\IversonBracket{x \in E} \cdot \interpret{M}(x)(F)}
		\qedhere
	\end{align*}
\end{proof}

Given a pair of random variables $(X, Y)$, we can consider two types of disintegrations: one is \emph{joint} disintegrations with respect to the pair $(X, Y)$ and the other is \emph{iterated} disintegrations with respect to $X$ and then $Y$.
We consider the relationship between these two types of disintegrations.
\begin{lemma}\label{lem:disintegration-pushforward-of-pair}
	Let $(\Omega, \Sigma_{\Omega}, \mu)$ be a probability space, and $X : \Omega \to A$ and $Y : \Omega \to B$ be random variables.
	Suppose that $\{ \nu_x \}_{x \in A}$ is a $\mu$-disintegration with respect to $X$.
	For any $g : A \times B \to \mathbb{R}$, we have the following equality.
	\[ \Expectation{x}{X_{*} \mu}{\Expectation{y}{Y_{*} \nu_x}{g(x, y)}} \quad=\quad \Expectation{(x, y)}{(X, Y)_{*} \mu}{g(x, y)} \]
\end{lemma}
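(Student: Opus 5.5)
The plan is a monotone-class argument: prove the identity first for indicator functions of measurable rectangles, extend it to all bounded measurable $g$, and then pass to general $g$. Rectangles are the natural starting point because disintegrations interact cleanly with events of the form $\{X \in E\}$.

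Take $g(x, y) = \IversonBracket{x \in E} \cdot \IversonBracket{y \in F}$ with $E \in \Sigma_A$ and $F \in \Sigma_B$. The left-hand side is $\Expectation{x}{X_{*} \mu}{\IversonBracket{x \in E} \cdot \nu_x(Y^{-1}(F))}$. For $X_{*}\mu$-almost all $x$, Lemma~\ref{lem:disintegration-concentration} lets us rewrite this as
\[ \Expectation{x}{X_{*} \mu}{\Expectation{\omega}{\nu_x}{\IversonBracket{X(\omega) \in E} \cdot \IversonBracket{Y(\omega) \in F}}}. \]
The second property of disintegrations (Definition~\ref{def:disintegration}, with $\lambda = X_{*}\mu$) collapses the iterated expectation to $\mu(X^{-1}(E) \cap Y^{-1}(F))$. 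This equals $(X, Y)_{*}\mu(E \times F)$, which is the right-hand side. The computation is essentially the one already carried out in the proofs of Lemma~\ref{lem:disintegration-equivalent} and Lemma~\ref{lem:program-disintegration}.

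Next I would extend the identity to indicators of arbitrary sets $S \in \Sigma_A \otimes \Sigma_B$. Let $\mathcal{L}$ be the collection of all $S$ for which the identity holds with $g = \IversonBracket{(x, y) \in S}$. Then $\mathcal{L}$ is a $\lambda$-system:
\begin{itemize}
	\item it contains $A \times B$;
	\item it is closed under proper differences, by linearity, since all quantities are finite;
	\item it is closed under increasing unions, by monotone convergence applied to both the inner and the outer expectation.
\end{itemize}
Measurable rectangles form a $\pi$-system generating $\Sigma_A \otimes \Sigma_B$, so Dynkin's $\pi$-$\lambda$ theorem gives $\mathcal{L} = \Sigma_A \otimes \Sigma_B$. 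Linearity then extends the identity to simple functions, and monotone convergence extends it to all nonnegative measurable $g$. For real-valued $g$, I would split $g = g^{+} - g^{-}$, which requires $g$ to be integrable (or bounded), with Fubini-type finiteness ensuring the subtraction is legitimate.

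Along the way, measurability of the inner map $x \mapsto \Expectation{y}{Y_{*}\nu_x}{g(x, y)}$ must be checked so that the outer expectation makes sense. For rectangles this is immediate because $x \mapsto \nu_x(Y^{-1}(F))$ is measurable ($\nu$ is a Markov kernel). Measurability is preserved under the limit operations above, so it can be established simultaneously inside the same $\lambda$-system argument.

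The main obstacle is the final step from nonnegative to signed $g$. The statement says $g : A \times B \to \mathbb{R}$ with no integrability hypothesis, so I would interpret it as requiring $g$ bounded, or nonnegative, or integrable. In the integrable case, applying the nonnegative result to $|g|$ shows the inner integrals are finite for almost every $x$, which justifies subtracting the identities for $g^{+}$ and $g^{-}$.
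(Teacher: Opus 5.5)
Your proof is correct, but it takes a longer route than the paper's. The paper skips the monotone-class extension and runs your rectangle computation directly for an arbitrary $g$. The chain is:
\begin{itemize}
\item $\Expectation{y}{Y_{*}\nu_x}{g(x,y)} = \Expectation{\omega}{\nu_x}{g(x, Y(\omega))}$ by change of variables;
\item Lemma~\ref{lem:disintegration-concentration}, applied to $(\omega, x) \mapsto g(x, Y(\omega))$, replaces $x$ by $X(\omega)$;
\item the second clause of Definition~\ref{def:disintegration}, applied to $g \circ (X, Y)$, collapses the iterated expectation;
\item change of variables for $(X,Y)_{*}\mu$ gives the right-hand side.
\end{itemize}

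You use these same two tools, but only on indicators of rectangles. Both are already stated for arbitrary measurable $[0,\infty]$-valued functions, so your $\pi$-$\lambda$ and simple-function steps are redundant: the chain above handles every nonnegative measurable $g$ at once. The monotone-class work is in effect hidden inside the function form of the disintegration property and the kernel property of $\nu$.

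Your version does buy explicitness. You verify measurability of $x \mapsto \Expectation{y}{Y_{*}\nu_x}{g(x,y)}$, which the paper takes for granted. You also correctly note that the statement's unrestricted $g : A \times B \to \mathbb{R}$ needs measurability plus nonnegativity or integrability. The paper's proof, like yours, really establishes the nonnegative measurable case, and that is all Lemmas~\ref{lem:disintegration-pair-1} and~\ref{lem:disintegration-pair-2} use.
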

\begin{proof}
	\begin{align*}
		&\Expectation{x}{X_{*} \mu}{\Expectation{y}{Y_{*} \nu_x}{g(x, y)}} \\
		&= \Expectation{x}{X_{*} \mu}{\Expectation{\omega}{\nu_x}{g(x, Y(\omega))}} \\
		&= \Expectation{x}{X_{*} \mu}{\Expectation{\omega}{\nu_x}{g(X(\omega), Y(\omega))}} &\text{by Lemma~\ref{lem:disintegration-concentration}} \\
		&= \Expectation{\omega}{\mu}{g(X(\omega), Y(\omega))} &\text{by Definition~\ref{def:disintegration}} \\
		&= \Expectation{(x, y)}{(X, Y)_{*} \mu}{g(x, y)}
	\end{align*}
\end{proof}

\begin{lemma}\label{lem:disintegration-pair-1}
	Let $(\Omega, \Sigma_{\Omega}, \mu)$ be a probability space, and $X : \Omega \to A$ and $Y : \Omega \to B$ be random variables.
	Suppose that $\{ \nu_x \}_{x \in A}$ is a $\mu$-disintegration with respect to $X$, and $\{ (\nu_x)_y \}_{y \in B}$ is a $\nu_x$-disintegration with respect to $Y$.
	If $(x, y) \mapsto (\nu_x)_y$ is a Markov kernel, i.e., $(x, y) \mapsto (\nu_x)_y(E)$ is measurable for any measurable set $E \in \Sigma_{\Omega}$, then $\{ (\nu_x)_y \}_{(x, y) \in A \times B}$ is a $\mu$-disintegration with respect to $(X, Y)$.
\end{lemma}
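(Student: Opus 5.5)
We check the two defining conditions of a disintegration (Definition~\ref{def:disintegration}) for the kernel $(x, y) \mapsto (\nu_x)_y$, with base measure $\lambda = (X, Y)_{*} \mu$ on $A \times B$. By hypothesis this map is a Markov kernel from $A \times B$ to $\Omega$. So only the concentration property and the reconstruction property need to be verified. The basic tool throughout will be Lemma~\ref{lem:disintegration-pushforward-of-pair}. It rewrites integrals against $(X, Y)_{*} \mu$ as iterated integrals $\mathbb{E}_{x \sim X_{*} \mu}\mathbb{E}_{y \sim Y_{*} \nu_x}[\cdot]$. This matches the iterated structure of $(\nu_x)_y$, because $\{ (\nu_x)_y \}_y$ is a $\nu_x$-disintegration with respect to $Y$, so its base measure is $Y_{*} \nu_x$.

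\emph{Reconstruction.} Fix a measurable $g : \Omega \to [0, \infty]$. The function $(x, y) \mapsto \mathbb{E}_{\omega \sim (\nu_x)_y}[g(\omega)]$ is jointly measurable by the kernel hypothesis. Lemma~\ref{lem:disintegration-pushforward-of-pair} applies to it; that lemma is stated for real-valued $g$, but the same proof works for nonnegative measurable functions via monotone convergence. This gives
\[ \mathbb{E}_{(x, y) \sim (X, Y)_{*} \mu}\big[\mathbb{E}_{\omega \sim (\nu_x)_y}[g(\omega)]\big] = \mathbb{E}_{x \sim X_{*}\mu}\Big[\mathbb{E}_{y \sim Y_{*}\nu_x}\big[\mathbb{E}_{\omega \sim (\nu_x)_y}[g(\omega)]\big]\Big]. \]
For each fixed $x$, the inner double integral equals $\mathbb{E}_{\omega \sim \nu_x}[g(\omega)]$ by the reconstruction property of the $\nu_x$-disintegration. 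Integrating over $x$ then gives $\mathbb{E}_{\mu}[g]$ by the reconstruction property of $\{\nu_x\}_x$.

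\emph{Concentration.} Take $g$ to be the function $(x, y) \mapsto (\nu_x)_y(\{ \omega \mid X(\omega) = x, Y(\omega) = y \})$. Since $A$ and $B$ are countably generated and separate points, the diagonals are measurable, so this set is measurable and, by the kernel hypothesis, $g$ is measurable. It suffices to show that $\mathbb{E}_{(X, Y)_{*}\mu}[g] = 1$. Applying Lemma~\ref{lem:disintegration-pushforward-of-pair} again reduces this to showing that, for $X_{*}\mu$-almost every $x$, we have $(\nu_x)_y(\{X = x\} \cap \{Y = y\}) = 1$ for $Y_{*}\nu_x$-almost every $y$.

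We get this by intersecting two full-measure events. First, the concentration property of $\{(\nu_x)_y\}_y$ gives $(\nu_x)_y(\{Y = y\}) = 1$ for $Y_{*}\nu_x$-almost every $y$, for every $x$. Second, for almost every $x$ we have $\nu_x(\{X = x\}) = 1$. By the reconstruction property of the inner disintegration, this yields $\mathbb{E}_{y \sim Y_{*}\nu_x}[(\nu_x)_y(\{X = x\})] = 1$. Hence $(\nu_x)_y(\{X = x\}) = 1$ for almost every $y$, by Lemma~\ref{lem:aee-function-le} applied with upper bound $1$. Intersecting the two sets, each of measure $1$ under $(\nu_x)_y$, gives the claim.

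The main obstacle is measure-theoretic bookkeeping rather than any single computation. Specifically, we must ensure that every integrand fed to Lemma~\ref{lem:disintegration-pushforward-of-pair} is jointly measurable in $(x, y)$, and that the null sets combine correctly: an $x$-null set, followed by a $y$-null set depending on $x$, is a $(X, Y)_{*}\mu$-null set. The hypothesis that $(x, y) \mapsto (\nu_x)_y$ is a Markov kernel is exactly what supplies the needed measurability. Given that, the null-set combination is handled by Fubini through the displayed identity above, since all integrands are nonnegative.
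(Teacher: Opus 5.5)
Your proposal is correct and follows the paper's proof essentially step for step. For reconstruction, you rewrite the $(X,Y)_{*}\mu$-integral as an iterated integral via Lemma~\ref{lem:disintegration-pushforward-of-pair} and chain the two reconstruction properties. For concentration, you make the same reduction and then use the inner concentration and inner reconstruction properties together with $\nu_x(\{X = x\}) = 1$. The differences are cosmetic. The paper replaces $Y(\omega)$ by $y$ inside the integral via Lemma~\ref{lem:disintegration-concentration}, while you intersect two full-measure events. Your remarks on the measurability of the diagonal sets are also more explicit than the paper, which does not address this.
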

\begin{proof}
	\begin{itemize}
		\item For $(X, Y)_{*} \mu$-almost all $(x, y)$, it follows that
		\[ \Expectation{\omega}{(\nu_x)_y}{\IversonBracket{(X(\omega), Y(\omega)) = (x, y)}} = 1 \]
		holds because
		\begin{align*}
			&\Expectation{(x, y)}{(X, Y)_{*} \mu}{\Expectation{\omega}{(\nu_x)_y}{\IversonBracket{(X(\omega), Y(\omega)) = (x, y)}}} \\
			&= \Expectation{x}{X_{*} \mu}{\Expectation{y}{Y_{*} \nu_x}{\Expectation{\omega}{(\nu_x)_y}{\IversonBracket{(X(\omega), Y(\omega)) = (x, y)}}}} & \text{by Lemma~\ref{lem:disintegration-pushforward-of-pair}} \\
			&= \Expectation{x}{X_{*} \mu}{\Expectation{y}{Y_{*} \nu_x}{\Expectation{\omega}{(\nu_x)_y}{\IversonBracket{(X(\omega), y) = (x, y)}}}} & \text{by Lemma~\ref{lem:disintegration-concentration}} \\
			&= \Expectation{x}{X_{*} \mu}{\Expectation{y}{Y_{*} \nu_x}{\Expectation{\omega}{(\nu_x)_y}{\IversonBracket{X(\omega) = x}}}} \\
			&= \Expectation{x}{X_{*} \mu}{\Expectation{\omega}{\nu_x}{\IversonBracket{X(\omega) = x}}} \\
			&= 1
		\end{align*}
		\item We show
		\[ \Expectation{\omega}{\mu}{f(\omega)} = \Expectation{(x, y)}{(X, Y)_{*} \mu}{\Expectation{\omega}{(\nu_x)_y}{f(\omega)}} \]
		for any measurable function $f : \Omega \to [0, \infty]$.
		\begin{align}
			&\Expectation{\omega}{\mu}{f(\omega)} \\
			&= \Expectation{x}{X_{*} \mu}{\Expectation{\omega}{\nu_x}{f(\omega)}} \\
			&= \Expectation{x}{X_{*} \mu}{\Expectation{y}{Y_{*} \nu_x}{\Expectation{\omega}{(\nu_x)_y}{f(\omega)}}} \\
			&= \Expectation{(x, y)}{(X, Y)_{*} \mu}{\Expectation{\omega}{(\nu_x)_y}{f(\omega)}} &\qedhere
		\end{align}
	\end{itemize}
\end{proof}

The following lemma states that a joint disintegration can be seen as an iterated disintegration.
\begin{lemma}\label{lem:disintegration-pair-2}
	Let $(\Omega, \Sigma_{\Omega}, \mu)$ be a probability space with countably generated $\sigma$-algebra, and $X : \Omega \to A$ and $Y : \Omega \to B$ be random variables.
	Suppose that $\{ \nu_{x, y} \}_{(x, y) \in A \times B}$ is a $\mu$-disintegration with respect to $(X, Y)$, and $\{ \nu_x \}_{x \in A}$ is a $\mu$-disintegration with respect to $X$.
	Then $\{ \nu_{x, y} \}_{y \in B}$ is a $\nu_x$-disintegration with respect to $Y$ for $X_{*} \mu$-almost all $x$.
	\qed
\end{lemma}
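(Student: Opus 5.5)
The plan is to check the three defining conditions of Definition~\ref{def:disintegration} for the family $\{ \nu_{x, y} \}_{y \in B}$, with reference measure $\lambda = Y_{*} \nu_x$, and to show that each holds outside an $X_{*} \mu$-null set of $x$. There are three conditions: (i) $y \mapsto \nu_{x, y}$ is a Markov kernel; (ii) concentration on the fibre $\{ Y = y \}$; (iii) the mixture identity $\Expectation{y}{Y_{*} \nu_x}{\nu_{x, y}(E)} = \nu_x(E)$ for all $E \in \Sigma_{\Omega}$. A countable union of null sets is null, so it is enough to obtain each condition almost everywhere separately.

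Condition (i) holds for every $x$. For fixed $E$, the map $(x, y) \mapsto \nu_{x, y}(E)$ is jointly measurable, so each section $y \mapsto \nu_{x, y}(E)$ is measurable.

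For (ii), I would compute $\Expectation{x}{X_{*} \mu}{\Expectation{y}{Y_{*} \nu_x}{\nu_{x, y}(\{ \omega \mid Y(\omega) = y \})}}$. By Lemma~\ref{lem:disintegration-pushforward-of-pair}, this equals $\Expectation{(x, y)}{(X, Y)_{*} \mu}{\nu_{x, y}(\{ Y = y \})}$. That quantity is at least $\Expectation{(x, y)}{(X, Y)_{*} \mu}{\nu_{x, y}(\{ (X, Y) = (x, y) \})}$, which is $1$ by the concentration property of the joint disintegration. The integrand is bounded by $1$, so Lemma~\ref{lem:aee-function-le} gives inner value $1$ for $X_{*} \mu$-almost all $x$. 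The inner value is the $Y_{*}\nu_x$-average of a $[0,1]$-valued function, so being $1$ means that function is $1$ for $Y_{*} \nu_x$-almost all $y$, as required.

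One technical point arises here: the diagonal set $\{ (\omega, y) \mid Y(\omega) = y \}$ must be measurable. This holds because the spaces $\interpret{B}$ considered are countably generated and separate points, so the diagonal of $B$ is measurable. I would remark on this rather than prove it in detail.

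For (iii), I would apply Lemma~\ref{lem:aee-markov-kernel} to two Markov kernels from $A$ to $\Omega$. Its hypothesis is satisfied because $\Sigma_{\Omega}$ is countably generated. The kernels are $\kappa_1(x)(E) \coloneqq \Expectation{y}{Y_{*} \nu_x}{\nu_{x, y}(E)}$ and $\kappa_2(x) \coloneqq \nu_x$. I expect establishing that $\kappa_1$ is a genuine kernel, measurable in $x$, to be the main fiddly step; I would argue it via the standard fact that integrating a jointly measurable kernel against a kernel is measurable. It then suffices to show, for $F \in \Sigma_A$ and $E \in \Sigma_{\Omega}$, the following chain:
\begin{align}
	&\Expectation{x}{X_{*} \mu}{\IversonBracket{x \in F} \cdot \kappa_1(x)(E)} \\
	&= \Expectation{(x, y)}{(X, Y)_{*} \mu}{\IversonBracket{x \in F} \cdot \nu_{x, y}(E)} \\
	&= \Expectation{\omega}{\mu}{\IversonBracket{X(\omega) \in F} \cdot \IversonBracket{\omega \in E}} \\
	&= \Expectation{x}{X_{*} \mu}{\IversonBracket{x \in F} \cdot \nu_x(E)}.
\end{align}
The first step is Lemma~\ref{lem:disintegration-pushforward-of-pair}. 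The second uses that $\{ \nu_{x, y} \}$ is a joint disintegration, together with Lemma~\ref{lem:disintegration-concentration} to replace $x$ by $X(\omega)$. The third uses the same two facts for $\{ \nu_x \}$ in reverse. Lemma~\ref{lem:aee-markov-kernel} then yields $\kappa_1(x) = \nu_x$ for almost all $x$, which is exactly (iii).

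Intersecting the full-measure sets obtained in (ii) and (iii) concludes the proof.
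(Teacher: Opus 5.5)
Your proposal is correct and follows essentially the same route as the paper. You get the concentration condition by collapsing the iterated expectation to the joint one with Lemma~\ref{lem:disintegration-pushforward-of-pair} and then using the joint disintegration's concentration; you get the mixture identity from Lemma~\ref{lem:aee-markov-kernel} through the same chain of equalities. Your extra remarks (measurability of the sections $y \mapsto \nu_{x,y}(E)$, of the diagonal, and of $x \mapsto \kappa_1(x)(E)$, plus using Lemma~\ref{lem:aee-function-le} to pass from the outer expectation to almost every $x$) make explicit details the paper leaves implicit.
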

\begin{proof}[Proof of Lemma~\ref{lem:disintegration-pair-2}]
	\begin{itemize}
		\item We show $\Expectation{y}{Y_{*} \nu_x}{\Expectation{\omega}{\nu_{x, y}}{\IversonBracket{Y(\omega) = y}}} = 1$ for almost all $x$.
		We use Lemma~\ref{lem:disintegration-pushforward-of-pair}.
		\begin{align}
			&\Expectation{x}{X_{*} \mu}{\Expectation{y}{Y_{*} \nu_x}{\Expectation{\omega}{\nu_{x, y}}{\IversonBracket{Y(\omega) = y}}}} \\
			&= \Expectation{(x, y)}{(X, Y)_{*} \mu}{\Expectation{\omega}{\nu_{x, y}}{\IversonBracket{Y(\omega) = y}}} \\
			&= 1
		\end{align}
		\item Using Lemma~\ref{lem:aee-markov-kernel}, we show that $F \mapsto \Expectation{y}{Y_{*} \nu_x}{\nu_{x, y}(F)}$ and $\nu_x$ are equal for almost all $x$.
		\begin{align}
			&\Expectation{x}{X_{*} \mu}{\IversonBracket{x \in E} \cdot \Expectation{y}{Y_{*} \nu_x}{\nu_{x, y}(F)}} \\
			&\Expectation{x}{X_{*} \mu}{\Expectation{y}{Y_{*} \nu_x}{\IversonBracket{x \in E} \cdot \nu_{x, y}(F)}} \\
			&= \Expectation{(x, y)}{(X, Y)_{*} \mu}{\IversonBracket{x \in E} \cdot \nu_{x, y}(F)} \\
			&= \Expectation{(x, y)}{(X, Y)_{*} \mu}{\Expectation{\omega}{\nu_{x, y}}{\IversonBracket{X(\omega) \in E} \cdot \IversonBracket{\omega \in F}}} \\
			&= \Expectation{\omega}{\mu}{\IversonBracket{X(\omega) \in E} \cdot \IversonBracket{\omega \in F}} \\
			&= \Expectation{x}{X_{*} \mu}{\Expectation{\omega}{\nu_x}{\IversonBracket{X(\omega) \in E} \cdot \IversonBracket{\omega \in F}}} \\
			&= \Expectation{x}{X_{*} \mu}{\IversonBracket{x \in E} \cdot \nu_x(F)}
		\end{align}
	\end{itemize}
\end{proof}

\begin{proof}[Proof of Theorem~\ref{thm:noninterference-conditional-independence} ($\implies$)]
	To show that the simulator~\eqref{eq:simulator-define} satisfies~\eqref{eq:simulator}, it suffices to show that
	\[ X_1 : A_1, X_2 : A_2 \vdash (Y_1, Y_2) \gets M; \ret{Y_1} : \mathtt{G} B_1 \]
	does not actually depend on $X_2$.
	We prove by contradiction.
	Let $M' \coloneqq (Y_1, Y_2) \gets M; \ret{Y_1}$.
	Suppose that there exist $a_1 \in \interpret{A_1}$ and $a_2, a'_2 \in \interpret{A_2}$ such that $\interpret{M'}(a_1, a_2) \neq \interpret{M'}(a_1, a'_2)$.
	By assumption and the rules in Fig.~\ref{fig:lilac-hoare-rules}, we have the following Hoare triple:
	\begin{align}
		\HoareTriple{\Own X_1 \land \Own X_2}{M'}{Y_1. \Conditioning_{x \gets X_1} (\Own X_2 * \Own Y_1)}
	\end{align}
	Suppose that the input distribution is chosen so that $(X_1, X_2)$ is uniformly distributed over $\{ (a_1, a_2), (a_1, a'_2) \}$.
	By the above Hoare triple, the output $Y_1$ satisfies $\Conditioning_{x \gets X_1} (\Own X_2 * \Own Y_1)$.
	Formally, let $\mu$ be a probability measure on $\Sigma_\Omega$ and $D \in \RV{\interpret{X_1 : A_1, X_2 : A_2}}$ be such that $D_{*} \mu$ is the uniform distribution over $\{ (a_1, a_2), (a_1, a'_2) \}$.
	Then, there exists $Y_1 \in \RV{\interpret{B_1}}$, $\mathcal{P}'$, and $\mu'$ with $\mathcal{P}' \sqsubseteq (\Sigma_{\Omega}, \mu')$ such that
	\[ \left( \begin{aligned}
		&\omega \gets \mu; \\
		&(y_1, y_2) \gets \interpret{M}(D(\omega)); \\
		&\ret{(D(\omega), y_1)}
	\end{aligned} \right) \quad=\quad \left( \begin{aligned}
		&\omega \gets \mu'; \\
		&\ret{(D(\omega), Y_1(\omega))}
	\end{aligned} \right) \]
	and $\gamma, (D, Y_1), \mathcal{P}' \vDash \Conditioning_{x \gets X_1} (\Own X_2 * \Own Y_1)$.
	Let $\{ \nu_{x_1} \}_{x_1 \in \interpret{A_1}}$ be a $\mu'$-disintegration with respect to $X_1$.
	Since $(X_1)_{*} \mu' = (X_1)_{*} \mu = \delta_{a_1}$, we have $\mu' = \nu_{a_1}$ and thus $\gamma, (D, Y_1), \mathcal{P}' \vDash \Own X_2 * \Own Y_1$ by the semantics of conditioning modalities.
	\[ \mu'(E) = \Expectation{x}{(X_1)_{*} \mu'}{\Expectation{\omega}{\nu_x}{\IversonBracket{\omega \in E}}} = \nu_{a_1}(E) \]
	Let $\{ \nu_{x_1, x_2} \}_{x_1 \in \interpret{A_1}, x_2 \in \interpret{A_2}}$ be a $\mu'$-disintegration with respect to $(X_1, X_2)$.
	By Lemma~\ref{lem:disintegration-pair-2}, $\{ \nu_{x_1, x_2} \}_{x_2 \in \interpret{A_2}}$ is a $\nu_{x_1}$-disintegration with respect to $X_2$ for almost all $x_1$, i.e., $\{ \nu_{a_1, x_2} \}_{x_2 \in \interpret{A_2}}$ is a $\mu'$-disintegration with respect to $X_2$.
	By applying Lemma~\ref{lem:independence-disintegration} to $\gamma, (D, Y_1), \mathcal{P}' \vDash \Own X_2 * \Own Y_1$, we have $(Y_1)_{*} \nu_{a_1, a_2} = (Y_1)_{*} \mu' = (Y_1)_{*} \nu_{a_1, a'_2}$.
	By Lemma~\ref{lem:program-disintegration}, we have $(Y_1)_{*} \nu_{x_1, x_2} = \interpret{M}(x_1, x_2)$ for almost all $(x_1, x_2)$, i.e., for any $(x_1, x_2) \in \{ (a_1, a_2), (a_1, a'_2) \}$.
	Therefore, we have $\interpret{M'}(a_1, a_2) = \interpret{M'}(a_1, a'_2)$, which is a contradiction.
\end{proof}

For the ($\impliedby$) direction of Theorem~\ref{thm:noninterference-conditional-independence}, the key lemma is the following.
\begin{lemma}\label{lem:extra-variable-conditional-independence}
	Let $\Delta = X_1 : A_1, \dots, X_n : A_n$ be a context, $\Delta \vdash M : \mathtt{G} B$ be an APPL program, and $Z \in \RV{\interpret{C}}$ be a fresh random variable.
	Then, we have
	\[ \HoareTriple{\Own \Delta \land \Own Z}{M}{ Y. \Conditioning_{\tilde{x} \gets \Delta} (\Own Y * \Own Z)} \]
	where $\tilde{x} = (x_1, \dots, x_n)$ is the tuple of deterministic variables, and we write $\Delta$ to denote both the context and the tuple of random variables $(X_1, \dots, X_n)$.
	\qed
\end{lemma}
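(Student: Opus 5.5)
I will prove this semantically, by unfolding the weakest precondition and then showing the required conditional independence with the disintegration lemmas above. Fix $\gamma$, a random-variable environment $D=(D_\Delta,D_Z)$, and $\mathcal{P}=(\mathcal{F},\mu_0)$ satisfying $\Own\Delta\land\Own Z$, so that both $D_\Delta$ and $D_Z$ are $\mathcal{F}$-measurable. Take any frame $\mathcal{P}_f$ and $\mu$ with $\mathcal{P}_f\cdot\mathcal{P}\sqsubseteq(\Sigma_\Omega,\mu)$, and any extra environment $D_{\mathrm e}$.

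First I will construct the witness $(Y,\mathcal{P}',\mu')$. I will use the same construction Lilac uses to prove soundness of the wp rules. Since $\Omega$ is the Hilbert cube, I can take a coordinate of $\Omega$ on which the $\sigma$-algebras of $\mathcal{P}_f\cdot\mathcal{P}$ (or a suitable countably generated reindexing of them) do not depend. I then define $\mu'$ to be the joint law of $\mu$ together with fresh randomness on that coordinate, and set $Y(\omega)=g(D_\Delta(\omega),u(\omega))$. Here $g$ is a measurable sampler realizing $\interpret{M}$, which exists because all spaces are standard Borel. Next I set $\mathcal{P}'=(\mathcal{F}',\mu'|_{\mathcal{F}'})$ with $\mathcal{F}'=\sigma(\mathcal{F}\cup\sigma(u))$. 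Then $\mathcal{P}_f\cdot\mathcal{P}'\sqsubseteq(\Sigma_\Omega,\mu')$, and the distributional equation in the definition of wp holds by construction. Rather than redo this construction, I expect to obtain it from the soundness proof of Lemma~\ref{lem:own-variables}, or from the adequacy of wp in \cite{LiPLDI2023}. The point is that this gives some witness $Y$ together with the equation relating $(D_{\mathrm e},D,Y)_*\mu'$ to $\interpret{M}$. By Lemma~\ref{lem:own-variables}, $Y$ and $(D_\Delta,D_Z)$ are also $\mathcal{F}'$-measurable.

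The core step is to verify $\gamma,(D,Y),\mathcal{P}'\models\Conditioning_{\tilde x\gets\Delta}(\Own Y*\Own Z)$ using Lemma~\ref{lem:conditional-independence-formula}. Let $\mu''\sqsupseteq\mathcal{P}'$ and let $\{\nu_x\}$ be a $\mu''$-disintegration with respect to $D_\Delta$. I need $Y_*\nu_x\otimes (D_Z)_*\nu_x=(Y,D_Z)_*\nu_x$ for almost all $x$. The key input is that the program equation holds jointly with the untouched variable $D_Z$: on rectangles,
\[ \mu''\big(D_\Delta^{-1}(E)\cap D_Z^{-1}(G)\cap Y^{-1}(F)\big)=\mathbb{E}_{\omega\sim\mu}\big[[D_\Delta(\omega)\in E]\,[D_Z(\omega)\in G]\,\interpret{M}(D_\Delta(\omega))(F)\big]. \]
This holds because $D_Z$ may be included among the extra environment variables $D_{\mathrm e}$, and because $\mu''$ agrees with $\mu'$ on $\mathcal{F}'$.

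Now I disintegrate the right-hand side along $D_\Delta$ and use Lemma~\ref{lem:disintegration-concentration} to replace $D_\Delta(\omega)$ by $x$ inside $\nu_x$. This gives
\[ \mathbb{E}_{x}\big[[x\in E]\,\nu_x(D_Z^{-1}(G))\,\interpret{M}(x)(F)\big]. \]
The left-hand side is $\mathbb{E}_x\big[[x\in E]\,\nu_x(D_Z^{-1}(G)\cap Y^{-1}(F))\big]$. Applying Lemma~\ref{lem:aee-function} over countably many generators $F,G$, I get $\nu_x(Z\in G,Y\in F)=\nu_x(Z\in G)\,\interpret{M}(x)(F)$ almost everywhere. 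Taking $G$ to be the whole space shows $Y_*\nu_x=\interpret{M}(x)$; this is essentially Lemma~\ref{lem:program-disintegration}. Together these give the product form on a countable $\pi$-system of rectangles, and Dynkin's theorem extends it to the full product. I also need the passage from $\mu$ to $\mu''$ to leave the law of $(D_\Delta,D_Z)$ unchanged, since both measures agree on $\mathcal{F}$; Lemma~\ref{lem:disintegration-equivalent} covers this.

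The main obstacle is the first step: producing the witness with $D_Z$ carried along in the joint distributional equation. The wp definition quantifies over arbitrary frames and extra environments, and I must check that $D_Z$, which is owned by $\mathcal{P}$ rather than the frame, is preserved jointly. I expect to handle this by proving the triple by induction on $M$ in the style of Lemma~\ref{lem:own-variables}, with the stronger invariant that the joint law of $(D_\Delta,D_Z,Y)$ factors through the kernel $\interpret{M}$. If that induction becomes unwieldy, I will instead rely directly on the semantic characterization of wp from \cite{LiPLDI2023}, whose equation already includes $D$ and so already covers $D_Z$ when $Z\in\Delta_{\mathrm e}$ or $Z$ is in the context.
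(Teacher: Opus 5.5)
Your proposal is correct and follows essentially the same route as the paper. Both proofs take the witness $(Y,\mathcal{P}',\mu')$ from Lemma~\ref{lem:own-variables}, whose wp equation already carries $Z$ as part of the environment, so the joint law of $(D_\Delta, D_Z, Y)$ is determined by $\interpret{M}$. Both then check the product-form condition of Lemma~\ref{lem:conditional-independence-formula} by disintegrating along $\Delta$ and using $Y_*\nu_x = \interpret{M}(x)$ almost everywhere.

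The differences are cosmetic. The paper first reduces to disintegrations of $\mu_M$ itself via Lemma~\ref{lem:conditional-independence-equivalent}, then checks the integrated identity against $[x\in E]$ via Lemma~\ref{lem:aee-markov-kernel}. You instead handle an arbitrary extension $\mu''$ directly and extend pointwise identities with Dynkin's theorem. Your explicit Hilbert-cube construction of the witness is unnecessary.
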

\begin{proof}[Proof of Lemma~\ref{lem:extra-variable-conditional-independence}]
	Let $\mu$ be a probability measure on $(\Omega, \Sigma_{\Omega})$, and $D \in \RV{\interpret{\Delta}}$ be a random variable.
	Suppose $\gamma, (D, Z), (\mathcal{F}, \mu|_\mathcal{F}) \vDash \Own \Delta \land \Own Z$.
	By Lemma~\ref{lem:own-variables}, for any $\mathcal{P}_{\mathrm{frame}}$ such that $\mathcal{P}_{\mathrm{frame}} \cdot (\mathcal{F}, \mu|_\mathcal{F}) \sqsubseteq (\Sigma_{\Omega}, \mu)$ and any $D_{\mathrm{ext}}$, there exists $\mu_M$, $Y$, and $\mathcal{F}_M$ such that $\mathcal{P}_{\mathrm{frame}} \cdot (\mathcal{F}_M, \mu_M|_{\mathcal{F}_M}) \sqsubseteq (\Sigma_{\Omega}, \mu_M)$,
	\[ \left( \begin{aligned}
		&\omega \gets \mu; \\
		&y \gets \interpret{M}(D(\omega)); \\
		&\ret{(D_{\mathrm{ext}}(\omega), D(\omega), Z(\omega), y)}
	\end{aligned} \right) \quad=\quad \left( \begin{aligned}
		&\omega \gets \mu_M; \\
		&\ret{(D_{\mathrm{ext}}(\omega), D(\omega), Z(\omega), Y(\omega))}
	\end{aligned} \right), \]
	and $\gamma, (D, Z, Y), (\mathcal{F}_M, \mu|_{\mathcal{F}_M}) \vDash \Own \Delta \land \Own Z \land \Own Y$.
	Now, we show $\gamma, (D, Z, Y), (\mathcal{F}_M, \mu|_{\mathcal{F}_M}) \vDash \Conditioning_{\tilde{x} \gets \Delta} (\Own Y * \Own Z)$.
	By Lemma~\ref{lem:conditional-independence-formula}, it suffices to show that for any $\mu_M$-disintegration $\{ \nu_x \}_{x \in \interpret{\Delta}}$ with respect to $D$, for almost all $x$, we have $Y_{*} \nu_x \otimes Z_{*} \nu_x = (Y, Z)_{*} \nu_x$.
	Strictly speaking, we need to consider $(\Sigma_{\Omega}, \mu'_M) \sqsupseteq (\mathcal{F}_M, \mu|_{\mathcal{F}_M})$ and $\mu'_M$-disintegration $\{ \nu_x \}_x$ with respect to $D$, but by Lemma~\ref{lem:conditional-independence-equivalent}, it suffices to consider $\mu_M$-disintegration.
	By Lemma~\ref{lem:aee-markov-kernel}, it suffices to show the following equation for any $E, F, G$.
	\begin{align}
		&\Expectation{x}{D_{*} \mu_M}{\IversonBracket{x \in E} \cdot \nu_x(Y^{-1}(F)) \cdot \nu_x(Z^{-1}(G))} \\
		&= \Expectation{x}{D_{*} \mu_M}{\IversonBracket{x \in E} \cdot \nu_x(Y^{-1}(F) \cap Z^{-1}(G))}
	\end{align}
	This is proved as follows:
	\begin{align*}
		&\Expectation{x}{D_{*} \mu_M}{\IversonBracket{x \in E} \cdot \nu_x(Y^{-1}(F)) \cdot \nu_x(Z^{-1}(G))} \\
		&= \Expectation{x}{D_{*} \mu_M}{\IversonBracket{x \in E} \cdot \nu_x(Y^{-1}(F)) \cdot \Expectation{\omega}{\nu_x}{\IversonBracket{Z(\omega) \in G}}} \\
		&= \Expectation{x}{D_{*} \mu_M}{\Expectation{\omega}{\nu_x}{\IversonBracket{x \in E} \cdot \nu_x(Y^{-1}(F)) \cdot \IversonBracket{Z(\omega) \in G}}} \\
		&= \Expectation{x}{D_{*} \mu_M}{\Expectation{\omega}{\nu_x}{\IversonBracket{D(\omega) \in E} \cdot \nu_{D(\omega)}(Y^{-1}(F)) \cdot \IversonBracket{Z(\omega) \in G}}} \\
		&= \Expectation{\omega}{\mu_M}{\IversonBracket{D(\omega) \in E} \cdot \nu_{D(\omega)}(Y^{-1}(F)) \cdot \IversonBracket{Z(\omega) \in G}} \\
		&= \Expectation{(x, z)}{(D, Z)_{*} \mu_M}{\IversonBracket{x \in E} \cdot \nu_x(Y^{-1}(F)) \cdot \IversonBracket{z \in G}} \\
		&= \Expectation{(x, z)}{(D, Z)_{*} \mu}{\IversonBracket{x \in E} \cdot \nu_x(Y^{-1}(F)) \cdot \IversonBracket{z \in G}} \\
		&= \Expectation{(x, z)}{(D, Z)_{*} \mu}{\IversonBracket{x \in E} \cdot \interpret{M}(x)(F) \cdot \IversonBracket{z \in G}} \\
		&= \Expectation{(x, z)}{(D, Z)_{*} \mu}{\IversonBracket{x \in E} \cdot \Expectation{y}{\interpret{M}(x)}{\IversonBracket{y \in F}} \cdot \IversonBracket{z \in G}} \\
		&= \Expectation{(x, z)}{(D, Z)_{*} \mu}{\Expectation{y}{\interpret{M}(x)}{\IversonBracket{x \in E} \cdot \IversonBracket{y \in F} \cdot \IversonBracket{z \in G}}} \\
		&= \Expectation{(x, z, y)}{(D, Z, Y)_{*} \mu_M}{\IversonBracket{x \in E} \cdot \IversonBracket{y \in F} \cdot \IversonBracket{z \in G}} \\
		&= \Expectation{\omega}{\mu_M}{\IversonBracket{D(\omega) \in E} \cdot \IversonBracket{Y(\omega) \in F} \cdot \IversonBracket{Z(\omega) \in G}} \\
		&= \Expectation{x}{D_{*} \mu_M}{\Expectation{\omega}{\nu_x}{\IversonBracket{D(\omega) \in E} \cdot \IversonBracket{\omega \in Y^{-1}(F) \cap Z^{-1}(G)}}} \\
		&= \Expectation{x}{D_{*} \mu_M}{\Expectation{\omega}{\nu_x}{\IversonBracket{x \in E} \cdot \IversonBracket{\omega \in Y^{-1}(F) \cap Z^{-1}(G)}}} \\
		&= \Expectation{x}{D_{*} \mu_M}{\IversonBracket{x \in E} \cdot \Expectation{\omega}{\nu_x}{\IversonBracket{\omega \in Y^{-1}(F) \cap Z^{-1}(G)}}} \\
		&= \Expectation{x}{D_{*} \mu_M}{\IversonBracket{x \in E} \cdot \nu_x(Y^{-1}(F) \cap Z^{-1}(G))}
		\qedhere
	\end{align*}
\end{proof}

In Lemma~\ref{lem:extra-variable-conditional-independence}, $Z$ need not be independent of the input variables $X_1, \dots, X_n$, which makes the lemma stronger than the frame rule \hyperlink{rule:H-Frame}{\textsc{H-Frame}}.

By Lemma~\ref{lem:extra-variable-conditional-independence}, we have the following Hoare triple for the simulator~\eqref{eq:simulator-define}:
\[ \HoareTriple{\Own X_1 \land \Own X_2}{\mathbf{Sim}(M)}{Y_1. \Conditioning_{x \gets X_1} (\Own X_2 * \Own Y_1)} \]
By~\eqref{eq:simulator}, this Hoare triple is also satisfied by $(Y_1, Y_2) \gets M; \ret{Y_1}$.
Thus, we have the desired Hoare triple~\eqref{eq:hoare-noninterference}.

Precisely speaking, we cannot directly conclude~\eqref{eq:hoare-noninterference} because we have $\mathrm{wp}(M, X. \mathrm{wp}(N, Y. \phi)) \LilacEntails \mathrm{wp}(X \gets M; N, Y.\phi)$ but in general no guarantee for the converse.

\begin{proof}[Proof of Theorem~\ref{thm:noninterference-conditional-independence} ($\impliedby$)]
	Suppose that we have $\gamma, D, (\mathcal{F}_{\mathrm{pre}}, \mu_{\mathrm{pre}}) \vDash \Own X_1 \land \Own X_2$ where $\gamma \in \interpret{\cdot}$ and $D \in \RV{\interpret{X_1 : A_1, X_2 : A_2}}$ is a random variable.
	It is straightforward to show that 
	\[ \HoareTriple{\Own (X_1, X_2)}{M}{(Y_1, Y_2). \Own (X_1, X_2, Y_1, Y_2)}\]
	holds (Lemma~\ref{lem:own-variables}).
	By unfolding the definition of the Hoare triple, for any $\mathcal{P}_{\mathrm{frame}}$, $D_{\mathrm{ext}} \in \RV{\interpret{\Delta_{\mathrm{ext}}}}$, and $\mu$ such that $\mathcal{P}_{\mathrm{frame}} \cdot (\mathcal{F}_{\mathrm{pre}}, \mu_{\mathrm{pre}}) \sqsubseteq (\Sigma_{\Omega}, \mu)$, there exist $(D_{Y_1}, D_{Y_2}) \in \RV{\interpret{Y_1 : B_1, Y_2 : B_2}}$, $\mu_{M}$, and $\mathcal{P}_{M}$ such that $\mathcal{P}_{\mathrm{frame}} \cdot \mathcal{P}_{M} \sqsubseteq (\Sigma_{\Omega}, \mu_{M})$,
	\begin{align*}
		&\left( \begin{aligned}
		&\omega \gets \mu; \\
		&(y_1, y_2) \gets \interpret{M}(D(\omega)); \\
		&\ret{(D_{\mathrm{ext}}(\omega), D(\omega), y_1, y_2)}
	\end{aligned} \right) \\
	&=\quad \left( \begin{aligned}
		&\omega \gets \mu_{M}; \\
		&\ret{(D_{\mathrm{ext}}(\omega), D(\omega), D_{Y_1}(\omega), D_{Y_2}(\omega))}
	\end{aligned} \right),
	\end{align*}
	and $\gamma, (D, D_{Y_1}, D_{Y_2}), \mathcal{P}_{M} \vDash \Own (X_1, X_2, Y_1, Y_2)$ hold.
	Below, we show that the same $(D_{Y_1}, D_{Y_2})$, $\mu_{M}$, and $\mathcal{P}_{M}$ witness the desired Hoare triple~\eqref{eq:hoare-noninterference}.
	For this purpose, it suffices to show $\gamma, (D, D_{Y_1}, D_{Y_2}), \mathcal{P}_{M} \vDash \Conditioning_{x \gets X_1} (\Own X_2 * \Own Y_1)$.
	By the substitution property \cite[Lemma~B.10]{LiPLDI2023}, this is equivalent to $\gamma, (D, D_{Y_1}), \mathcal{P}_{M} \vDash \Conditioning_{x \gets X_1} (\Own X_2 * \Own Y_1)$.
	By Lemma~\ref{lem:extra-variable-conditional-independence}, we have
	\[ \{ \Own X_1 \land \Own X_2 \}\ \mathbf{Sim}(M)\ \{ Y_1. \Conditioning_{x \gets X_1} (\Own X_2 * \Own Y_1) \} \]
	and thus, there exist $D'_{Y_1} \in \RV{\interpret{Y_1 : B_1}}$, $\mu_{\mathbf{Sim}(M)}$, and $\mathcal{P}_{\mathbf{Sim}(M)}$ such that $\mathcal{P}_{\mathrm{frame}} \cdot \mathcal{P}_{\mathbf{Sim}(M)} \sqsubseteq (\Sigma_{\Omega}, \mu_{\mathbf{Sim}(M)})$,
	\begin{align}
		&\left( \begin{aligned}
			&\omega \gets \mu; \\
			&y_1 \gets \interpret{\mathbf{Sim}(M)}(D_{X_1}(\omega)); \\
			&\ret{(D_{\mathrm{ext}}(\omega), D(\omega), y_1)}
		\end{aligned} \right) \\
		&= \left( \begin{aligned}
			&\omega \gets \mu_{\mathbf{Sim}(M)}; \\
			&\ret{(D_{\mathrm{ext}}(\omega), D(\omega), D'_{Y_1}(\omega))}
		\end{aligned} \right),
	\end{align}
	and $\gamma, (D, D_{Y_1}), \mathcal{P}_{\mathbf{Sim}(M)} \vDash \Conditioning_{x \gets X_1} (\Own X_2 * \Own Y_1)$.
	Here, $\mu_M$ and $\mu_{\mathbf{Sim}(M)}$ are not necessarily the same, but by~\eqref{eq:simulator}, we have the following equation:
	\begin{align}
		&\left( \begin{aligned}
			&\omega \gets \mu_{\mathbf{Sim}(M)}; \\
			&\ret{(D(\omega), D'_{Y_1}(\omega))}
		\end{aligned} \right)\ =\ \left( \begin{aligned}
			&\omega \gets \mu_{M}; \\
			&\ret{(D(\omega), D_{Y_1}(\omega))}
		\end{aligned} \right)
		\label{eq:sim-distribution}
	\end{align}

	The remaining task is to show that if $\gamma, (D, D'_{Y_1}), \mathcal{P}_{\mathbf{Sim}(M)} \vDash \Conditioning_{x \gets X_1} (\Own X_2 * \Own Y_1)$, then $\gamma, (D, D_{Y_1}), \mathcal{P}_{M} \vDash \Conditioning_{x \gets X_1} (\Own X_2 * \Own Y_1)$.
	This follows because $(D, D'_{Y_1})_{*} \mu_{\mathbf{Sim}(M)}$ and $(D, D_{Y_1})_{*} \mu_{M}$ give the same distribution by~\eqref{eq:sim-distribution} (see Lemma~\ref{lem:conditional-independence-equivalent} for the formal argument).
\end{proof}

\end{toappendix}

\section{Additional Proof Rules for Lilac}
\label{sec:additional-rules}

The original Lilac paper \cite{LiPLDI2023} provides several proof rules, which are summarized in Fig.~\ref{fig:lilac-rules}.
However, these rules are not sufficient for our purpose.
In this section, we present additional proof rules that we use in Section~\ref{sec:examples}.
In particular, we show that semi-graphoid axioms for conditional independence hold in Lilac.
We also provide composability rules for $(\mathcal{I}, \mathcal{O})$-NI properties as proof rules for Lilac.

\begin{toappendix}
\begin{figure*}
	\begin{mathpar}
		\inferrule[\hypertarget{rule:Dist-Own}{Dist-Own}]{}{
			E \sim \mu \LilacEntails \Own E
		}
		\and
		\inferrule[Transfer-Own]{}{
			\Own E_1 \land (E_1 \AlmostSurelyEqual E_2) \LilacEntails \Own E_2
		}
		\and
		\inferrule[\hypertarget{rule:Transfer-Dist}{Transfer-Dist}]{}{
			E_1 \sim \mu \land (E_1 \AlmostSurelyEqual E_2) \LilacEntails E_2 \sim \mu
		}
		\and
		\inferrule[\hypertarget{rule:Eq-Congruence}{Eq-Congruence}]{}{
			\Own (F[E_1], F[E_2]) \land (E_1 \AlmostSurelyEqual E_2) \LilacEntails F[E_1] \AlmostSurelyEqual F[E_2]
		}
		\and
		\inferrule[\hypertarget{rule:Eq-Conj}{Eq-Conj}]{}{
			\phi \land (E_1 \AlmostSurelyEqual E_2) \LilacIff \phi * (E_1 \AlmostSurelyEqual E_2)
		}
		\and
		\inferrule[C-Entail]{
			\phi \LilacEntails \psi
		}{
			\Conditioning_{x \gets E} \phi \LilacEntails \Conditioning_{x \gets E} \psi
		}
		\and
		\inferrule[C-Indep]{}{
			\Own E * \phi \LilacEntails \Conditioning_{x \gets E} \phi
		}
		\and
		\inferrule[C-Subst]{}{
			\Own X \LilacEntails \Conditioning_{x \gets X} (X \AlmostSurelyEqual x)
		}
		\and
		\inferrule[C-Own]{}{
			\Own E \LilacEntails \Conditioning_{x \gets E} \Own E
		}
		\and
		\inferrule[Unif-Bijection]{
			\Gamma ; X : \mathbb{K} \vdash E : \mathbb{K} \\
			\text{$E$ is bijective in $X$}
		}{
			X \sim \Uniform{\mathbb{K}} \LilacEntails E \sim \Uniform{\mathbb{K}}
		}
		\and
		\inferrule[Necessitation]{
			\LilacEntails \phi
		}{
			\LilacEntails \Conditioning_{x \gets X} \phi
		}
		\and
		\inferrule[C-And-Distributive]{}{
			\Conditioning_{x \gets X} (\phi \land \psi) \LilacIff \Conditioning_{x \gets X} \phi \land \Conditioning_{x \gets X} \psi
		}
		\and
		\inferrule[\hypertarget{rule:C-Or-SemiDistributive}{C-Or-SemiDistributive}]{}{
			\Conditioning_{x \gets X} \phi \lor \Conditioning_{x \gets X} \psi \LilacEntails \Conditioning_{x \gets X} (\phi \lor \psi)
		}
		\and
		\inferrule[\hypertarget{rule:FV-Own}{FV-Own}]{
			\Gamma; X_1 : A_1, \dots, X_n : A_n \vdash E : B
		}{
			\Own (X_1, \dots, X_n) \LilacEntails \Own E
		}
		\and
		\inferrule[C-Unif]{}{
			\Own X \land \Conditioning_{x \gets X} (Y \sim \Uniform{\mathbb{K}}) \LilacEntails \Own X * (Y \sim \Uniform{\mathbb{K}})
		}
		\and
		\inferrule{}{
			X \sim \mu \land (\Own X * \phi) \LilacEntails (X \sim \mu) * \phi
		}
		\and
		\inferrule[\hypertarget{rule:Persistent-Mono}{$\Persistent$-Mono}]{
			\phi \LilacEntails \psi
		}{
			\Persistent \phi \LilacEntails \Persistent \psi
		}
		\and
		\inferrule[$\Persistent$-Idem]{}{
			\Persistent \phi \LilacEntails \Persistent \Persistent \phi
		}
		\and
		\inferrule[$\Persistent$-Elim]{}{
			\Persistent \phi \LilacEntails \phi
		}
		\and
		\inferrule[\hypertarget{rule:Persistent-True}{$\Persistent$-True}]{}{
			\LilacEntails \Persistent \top
		}
		\and
		\inferrule[$\Persistent$-And]{}{
			\Persistent \phi \land \Persistent \psi \LilacEntails \Persistent (\phi \land \psi)
		}
		\and
		\inferrule[$\Persistent$-Conj]{}{
			\phi \land \Persistent \psi \LilacEntails \phi * \Persistent \psi
		}
		\and
		\inferrule{}{
			\Own (X_1, \dots, X_n) \LilacIff \bigwedge_{i = 1}^{n} \Own X_i
		}
	\end{mathpar}
	\caption{List of inference rules for Lilac formulas~\cite{LiPLDI2023}.
	We omit obvious structural rules like commutativity and associativity of the separating conjunction $*$.
	Note that Lilac is an \emph{affine} separation logic, and thus $\phi * \psi \LilacEntails \phi$ holds.}
	\label{fig:lilac-rules-full}
\end{figure*}
\end{toappendix}

\subsection{Semi-Graphoid Axioms for Conditional Independence}

In the classical probability theory, conditional independence is known to satisfy the \emph{graphoid axioms} \cite{PearlECAI1986}.
Here, we show that a subset of the graphoid axioms, called \emph{semi-graphoid axioms}, hold in Lilac.

Using the conventional notation $X \perp Y \mid Z$ meaning conditional independence of random variables $X$ and $Y$ given $Z$, the semi-graphoid axioms are stated as follows.
\begin{itemize}
	\item Symmetry: $X \perp Y \mid Z$ implies $Y \perp X \mid Z$.
	\item Decomposition: $W \perp (X, Y) \mid Z$ implies $W \perp X \mid Z$.
	\item Weak union: $W \perp (X, Y) \mid Z$ implies $W \perp X \mid (Y, Z)$.
	\item Contraction: $W \perp X \mid Z$ and $W \perp Y \mid (X, Z)$ together imply $W \perp (X, Y) \mid Z$.
\end{itemize}
In Lilac, these axioms can be written as follows.
\begin{align}
	&\Conditioning_{z \gets Z}(\Own X * \Own Y) \LilacEntails \Conditioning_{z \gets Z}(\Own Y * \Own X) \\
	&\Conditioning_{z \gets Z} (\Own W * \Own (X, Y)) \LilacEntails \Conditioning_{z \gets Z} (\Own W * \Own X) \\
	&\Conditioning_{z \gets Z} (\Own W * \Own (X, Y)) \LilacEntails \Conditioning_{z \gets Z} \Conditioning_{y \gets Y} (\Own W * \Own X) \\
	&\Conditioning_{z \gets Z} (\Own W * \Own X) \land \Conditioning_{z \gets Z} \Conditioning_{x \gets X} (\Own W * \Own Y) \\
	&\quad\LilacEntails \Conditioning_{z \gets Z} (\Own W * \Own (X, Y))
\end{align}
We show these entailments hold in Lilac's semantics.
By \hyperlink{rule:C-Entail}{\textsc{C-Entail}} and \hyperlink{rule:C-And-Distributive}{\textsc{C-And-Distributive}}, it suffices to show these entailments without the outermost conditioning modalities $\Conditioning_{z \gets Z}$.
The symmetry axiom is immediate from the commutativity of the separating conjunction.
The decomposition axiom is also straightforward because $\Own (X, Y) \LilacEntails \Own X$ holds.
The weak union and contraction axioms are non-trivial, but they also hold as stated in the following proposition.
The proofs are given in
\iflong
Appendix~\ref{sec:proof-semi-graphoid-axioms}.
\else
the full version~\cite[Appendix~C-A]{arxiv}.
\fi

\begin{proposition}\label{prop:semi-graphoid-axioms}
	The following entailments hold.
	\begin{mathpar}
		\inferrule[\hypertarget{rule:Gen-Weak-Union}{Gen-Weak-Union}]{}{
			\phi * \Own (X, Y) \LilacEntails \mathop{\mathbf{C}}_{x : A \gets X}(\phi * \Own (X, Y))
		}
		\and
		\inferrule[\hypertarget{rule:Contraction}{Contraction}]{}{
			{\begin{aligned}
				&(\Own W * \Own X) \land \Conditioning_{x \gets X}(\Own W * \Own Y) \\
				&\quad\LilacEntails \Own W * \Own (X, Y)
				\qedhere
			\end{aligned}}
		}
	\end{mathpar}
	\qed
\end{proposition}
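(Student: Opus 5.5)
For \textsc{Gen-Weak-Union}, I would work directly from the semantics. Suppose $\gamma, D, (\mathcal{F}, \mu) \vDash \phi * \Own (X, Y)$. Then there are $\mathcal{P}_1 = (\mathcal{F}_1, \mu_1)$ and $\mathcal{P}_2 = (\mathcal{F}_2, \mu_2)$ with $\mathcal{P}_1 \cdot \mathcal{P}_2 \sqsubseteq (\mathcal{F}, \mu)$, $\gamma, D, \mathcal{P}_1 \vDash \phi$, and $(X, Y)$ being $\mathcal{F}_2$-measurable. Now fix $(\Sigma_\Omega, \mu') \sqsupseteq (\mathcal{F}, \mu)$ and a $\mu'$-disintegration $\{\nu_x\}_x$ with respect to $X$. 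For almost every $x$ I must exhibit a splitting of $(\mathcal{F}, \nu_x|_{\mathcal{F}})$. The natural choice is $(\mathcal{F}_1, \nu_x|_{\mathcal{F}_1})$ for $\phi$ and $(\mathcal{F}_2, \nu_x|_{\mathcal{F}_2})$ for $\Own (X, Y)$.

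There are two claims to check.
\begin{enumerate}
\item For a.e.\ $x$, $\nu_x|_{\mathcal{F}_1} = \mu_1$. This is the content of \textsc{C-Indep}: $X$ is $\mathcal{F}_2$-measurable and $\mathcal{F}_2$ is independent of $\mathcal{F}_1$. Following the argument of Lemma~\ref{lem:independence-disintegration}, I would show $\mathbb{E}_x[\IversonBracket{x\in E}\,\nu_x(F)] = \mu'(X^{-1}(E)\cap F) = \mu'(X^{-1}(E))\,\mu_1(F)$ for $F \in \mathcal{F}_1$. Since $\mathcal{F}_1$ is countably generated (this is one of the $\KRM$ conditions), Lemma~\ref{lem:aee-markov-kernel} gives the a.e.\ equality. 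Then $\phi$ holds on $(\mathcal{F}_1, \nu_x|_{\mathcal{F}_1})$ because it held on $\mathcal{P}_1$.
\item For a.e.\ $x$, $\nu_x$ makes $\mathcal{F}_1$ and $\mathcal{F}_2$ independent, i.e.\ $\nu_x(F \cap G) = \mu_1(F)\,\nu_x(G)$ for $F \in \mathcal{F}_1$ and $G \in \mathcal{F}_2$. I would integrate against $\IversonBracket{x \in E}$ and compute
\[ \mu'(X^{-1}(E)\cap G\cap F) = \mu'(X^{-1}(E)\cap G)\,\mu_1(F). \]
This uses independence of $\mathcal{F}_1$ and $\mathcal{F}_2$ together with $X^{-1}(E) \in \mathcal{F}_2$. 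The same Lemma~\ref{lem:aee-markov-kernel} argument applies, now using countable $\pi$-system generators of both $\sigma$-algebras, and a $\pi$-$\lambda$ step extends the equality to the joint $\sigma$-algebra.
\end{enumerate}

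Two side conditions remain. First, the restricted spaces must lie in $\KRM$; I would either cite Lilac's closure lemmas or note that $\nu_x|_{\mathcal{F}_i}$ inherits the required conditions. Second, $\nu_x|_{\mathcal{F}_1}\cdot\nu_x|_{\mathcal{F}_2} \sqsubseteq (\mathcal{F}, \nu_x|_{\mathcal{F}})$ follows from the independence in the second claim. Ownership of $(X, Y)$ under $\nu_x$ is pure measurability, so it is immediate.

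For \textsc{Contraction}, I would use the product-measure characterisations of Lemma~\ref{lem:conditional-independence-formula} and Lemma~\ref{lem:independence-disintegration}. Measurability of $W$, $X$, $Y$ is inherited from the hypotheses. For sets $E, F, G$, disintegrate $\mu'$ along $X$:
\[ \mu'(W^{-1}E \cap X^{-1}F \cap Y^{-1}G) = \mathbb{E}_{x}\big[\IversonBracket{x\in F}\,\nu_x(W^{-1}E\cap Y^{-1}G)\big]. \]
The conditional hypothesis factors the integrand a.e.\ as $\nu_x(W^{-1}E)\,\nu_x(Y^{-1}G)$. By Lemma~\ref{lem:independence-disintegration} applied to $\Own W * \Own X$, $\nu_x(W^{-1}E) = \mu(W^{-1}E)$ a.e. Pulling this constant out gives $\mu(W^{-1}E)\,\mu'(X^{-1}F\cap Y^{-1}G)$. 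Rectangles form a generating $\pi$-system, so $(W, (X, Y))_*\mu = W_*\mu \otimes (X,Y)_*\mu$, which is $\Own W * \Own(X, Y)$ by the argument in Lemma~\ref{lem:conditional-independence-formula}.

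The main obstacle is the first claim in \textsc{Gen-Weak-Union}. The disintegration is taken on the extension $\mu'$ over all of $\Sigma_\Omega$, and the argument must be robust to that arbitrary choice of extension while keeping every restricted space in $\KRM$. I expect to lean on Lilac's existing proof of \textsc{C-Indep}, its Lemma B.27, essentially verbatim, extending it to track the second component as well.
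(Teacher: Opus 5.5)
Your proposal is correct and follows essentially the same route as the paper. For \textsc{Gen-Weak-Union} you reuse the \textsc{C-Indep} argument to get $\nu_x|_{\mathcal{F}_1} = \mu_1$, then verify independence of $\mathcal{F}_1$ and $\mathcal{F}_2$ under $\nu_x$ by integrating against $\IversonBracket{x \in E}$ with $X^{-1}(E) \in \mathcal{F}_2$. For \textsc{Contraction} you use the same factorisation $\mu(X^{-1}F \cap Y^{-1}G \cap W^{-1}E) = \mu(X^{-1}F \cap Y^{-1}G)\,\mu(W^{-1}E)$ followed by a $\pi$-$\lambda$ (equivalently, product-measure) extension. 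Your explicit use of countable generators to get a single null set for all $F, G$, and your remark on $\KRM$ membership of the restricted spaces, make precise two points the paper's proof passes over quickly.
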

\begin{toappendix}
\subsection{Proof of Proposition~\ref{prop:semi-graphoid-axioms}}
\label{sec:proof-semi-graphoid-axioms}
\end{toappendix}
\begin{appendixproof}[Proof of Proposition~\ref{prop:semi-graphoid-axioms} (\hyperlink{rule:Gen-Weak-Union}{\textsc{Gen-Weak-Union}})]
	Suppose $\gamma, D, \mathcal{P} \vDash \phi * \Own (X, Y)$.
	By definition, there exists $\mathcal{P}_1 \cdot \mathcal{P}_2 \sqsubseteq \mathcal{P}$ such that $\gamma, D, \mathcal{P}_1 \vDash \phi$ and $\gamma, D, \mathcal{P}_2 \vDash \Own (X, Y)$.
	To show $\gamma, D, \mathcal{P} \vDash \mathop{\mathbf{C}}_{x : A \gets X}(\phi * \Own Y)$, suppose $\mathcal{P} \sqsubseteq (\Sigma_{\Omega}, \mu)$ and let $\{ \nu_x \}_{x \in X}$ be a $\mu$-disintegration with respect to $X$.
	By the proof of \hyperlink{rule:C-Indep}{\textsc{C-Indep}}, we know that $\nu_x|_{\mathcal{P}_1} = \mathcal{P}_1$ for almost all $x$.
	We also have $\gamma, D, \nu_x|_{\mathcal{P}_2} \vDash \Own (X, Y)$ because $\Own (X, Y)$ does not depend on the probability distribution.
	Thus, it suffices to show that $\nu_x|_{\mathcal{P}_1} \cdot \nu_x|_{\mathcal{P}_2} \sqsubseteq \nu_x$ holds for almost all $x$.
	Since we already know that $\mathcal{P}_1 \cdot \mathcal{P}_2 \sqsubseteq \mathcal{P}$, we show that for almost all $x$, for any $F \in \mathcal{F}_{\mathcal{P}_1}$ and $G \in \mathcal{F}_{\mathcal{P}_2}$, $\nu_x(F \cap G) = \nu_x(F) \nu_x(G)$.
	By Lemma~\ref{lem:aee-markov-kernel}, this is equivalent to
	\begin{align}
		&\Expectation{x}{(X(\gamma) \circ D)_{*} \mu}{\IversonBracket{x \in E} \cdot \nu_x(F \cap G)} \\
		&= \Expectation{x}{(X(\gamma) \circ D)_{*} \mu}{\IversonBracket{x \in E} \cdot \nu_x(F) \cdot \nu_x(G)}
	\end{align}
	which is proved as follows.
	\begin{align*}
		&\Expectation{x}{(X(\gamma) \circ D)_{*} \mu}{\IversonBracket{x \in E} \cdot \nu_x(F) \cdot \nu_x(G)} \\
		&= \Expectation{x}{(X(\gamma) \circ D)_{*} \mu}{\IversonBracket{x \in E} \cdot \mu(F) \cdot \nu_x(G)} \\
		&= \mu(F) \Expectation{x}{(X(\gamma) \circ D)_{*} \mu}{\IversonBracket{x \in E} \cdot \Expectation{\omega}{\nu_x}{\IversonBracket{\omega \in G}}} \\
		&= \mu(F) \Expectation{x}{(X(\gamma) \circ D)_{*} \mu}{\Expectation{\omega}{\nu_x}{\IversonBracket{\omega \in G \cap (X(\gamma) \circ D)^{-1}(E)}}} \\
		&= \mu(F) \Expectation{\omega}{\mu}{\IversonBracket{\omega \in G \cap (X(\gamma) \circ D)^{-1}(E)}} \\
		&= \mu(F) \mu(G \cap (X(\gamma) \circ D)^{-1}(E)) \\
		&= \mu(F \cap G \cap (X(\gamma) \circ D)^{-1}(E)) \qquad \text{by $\gamma, D, \mathcal{P}_2 \vDash \Own X$ and $\mathcal{P}_1 \cdot \mathcal{P}_2 \sqsubseteq (\Sigma_{\Omega}, \mu)$} \\
		&= \Expectation{x}{(X(\gamma) \circ D)_{*} \mu}{\Expectation{\omega}{\nu_x}{\IversonBracket{\omega \in F \cap G \cap (X(\gamma) \circ D)^{-1}(E)}}} \\
		&= \Expectation{x}{(X(\gamma) \circ D)_{*} \mu}{\Expectation{\omega}{\nu_x}{\IversonBracket{\omega \in F \cap G} \cdot \IversonBracket{x \in E}}} \\
		&= \Expectation{x}{(X(\gamma) \circ D)_{*} \mu}{\IversonBracket{x \in E} \cdot \nu_x(F \cap G)}
		\qedhere
	\end{align*}
\end{appendixproof}

\begin{appendixproof}[Proof of Proposition~\ref{prop:semi-graphoid-axioms} (\hyperlink{rule:Contraction}{\textsc{Contraction}})]
	Suppose $\gamma, D, \mathcal{P} \vDash (\Own W * \Own X) \land \Conditioning_{x \gets X}(\Own W * \Own Y)$.
	Let $\mathcal{F}_X$ and $\mathcal{F}_W$ be the sub-$\sigma$-algebras generated by $X$ and $W$, respectively.
	Let $\mu$ be a probability measure on $(\Omega, \Sigma_{\Omega})$ such that $\mathcal{P} \sqsubseteq (\Sigma_{\Omega}, \mu)$.
	By $\gamma, D, \mathcal{P} \vDash \Own W * \Own X$, we have $(\mu|_{\mathcal{F}_X}, \mathcal{F}_X) \cdot (\mu|_{\mathcal{F}_W}, \mathcal{F}_W) \sqsubseteq (\mu, \Sigma_{\Omega})$.
	By $\gamma, D, \mathcal{P} \vDash \Conditioning_{x \gets X}(\Own W * \Own Y)$, for any $\mu$-disintegration $\{ \nu_x \}_{x \in X}$ with respect to $X$, there exists $\mathcal{F}_Y \subseteq \mathcal{F}$ such that $Y$ is $\mathcal{F}_Y$-measurable and $(\nu_x|_{\mathcal{F}_W}, \mathcal{F}_W) \cdot (\nu_x|_{\mathcal{F}_Y}, \mathcal{F}_Y) \sqsubseteq (\nu_x, \Sigma_{\Omega})$ for almost all $x$.
	By the proof of \hyperlink{rule:C-Indep}{\textsc{C-Indep}}, we know that $\nu_x|_{\mathcal{F}_W} = \mu|_{\mathcal{F}_W}$ for almost all $x$.
	Now, we show that for any $E_1 \in \mathcal{F}_X$, $E_2 \in \mathcal{F}_Y$, and $E_3 \in \mathcal{F}_W$, we have
	\[ \mu(E_1 \cap E_2 \cap E_3) = \mu(E_1 \cap E_2) \cdot \mu(E_3). \]
	Since $\mathcal{F}_X = \{ X^{-1}(E) \mid E \in \Sigma_X \}$, there exists $E_X \in \Sigma_X$ such that $E_1 = X^{-1}(E_X)$.
	\begin{align*}
		&\mu(E_1 \cap E_2 \cap E_3) \\
		&= \Expectation{x}{X_{*} \mu}{\Expectation{\omega}{\nu_x}{\IversonBracket{\omega \in X^{-1}(E_X)}} \cdot \IversonBracket{\omega \in E_2 \cap E_3}} \\
		&= \Expectation{x}{X_{*} \mu}{\IversonBracket{x \in E_X} \cdot \Expectation{\omega}{\nu_x}{\IversonBracket{\omega \in E_2 \cap E_3}}} \\
		&= \Expectation{x}{X_{*} \mu}{\IversonBracket{x \in E_X} \cdot \nu_x(E_2) \cdot \nu_x(E_3)} \\
		&= \Expectation{x}{X_{*} \mu}{[x \in E_X] \cdot \nu_x(E_2) \cdot \mu(E_3)} \\
		&= \Expectation{x}{X_{*} \mu}{[x \in E_X] \cdot \nu_x(E_2)} \cdot \mu(E_3) \\
		&= \Expectation{x}{X_{*} \mu}{\Expectation{\omega}{\nu_x}{[\omega \in X^{-1}(E_X)] \cdot [\omega \in E_2]}} \cdot \mu(E_3) \\
		&= \mu(E_1 \cap E_2) \cdot \mu(E_3)
	\end{align*}
	Consider the following set.
	\[ \{ E \in \sigma(\mathcal{F}_X \cup \mathcal{F}_Y) \mid \forall E_3 \in \mathcal{F}_W, \mu(E \cap E_3) = \mu(E) \cdot \mu(E_3) \} \]
	It is easy to see that this is a $\lambda$-system containing the $\pi$-system $\{ E_1 \cap E_2 \mid E_1 \in \mathcal{F}_X, E_2 \in \mathcal{F}_Y \}$.
	By Dynkin's $\pi$-$\lambda$ theorem, the $\lambda$-system contains $\sigma(\{ E_1 \cap E_2 \mid E_1 \in \mathcal{F}_X, E_2 \in \mathcal{F}_Y \}) = \sigma(\mathcal{F}_X \cup \mathcal{F}_Y)$.
	Hence, we have $(\mu|_{\mathcal{F}_X}, \mathcal{F}_X) \cdot (\mu|_{\sigma(\mathcal{F}_X \cup \mathcal{F}_Y)}, \sigma(\mathcal{F}_X \cup \mathcal{F}_Y)) \sqsubseteq (\mu, \Sigma_{\Omega})$.
	Note that $\sigma(\mathcal{F}_X \cup \mathcal{F}_Y)$ is countably generated because both $\mathcal{F}_X$ and $\mathcal{F}_Y$ are countably generated.
\end{appendixproof}

Note that we slightly generalize the statements of the weak union.
It subsumes \hyperlink{rule:C-Indep}{\textsc{C-Indep}} as a special case when $Y$ is a trivial random variable over a singleton set and also subsumes \hyperlink{rule:C-Own}{\textsc{C-Own}}.
When verifying noninterference properties of the examples in Section~\ref{sec:examples}, we often use \hyperlink{rule:Gen-Weak-Union}{\textsc{Gen-Weak-Union}} to introduce conditioning modalities.

\begin{toappendix}
\subsection{Generalized \textsc{Unif-Bijection}}

\begin{lemma}\label{lem:generalised-bijective-substitution-uniform}
	\hyperlink{rule:Unif-Bijection}{\textsc{Unif-Bijection}} can be generalized to the following.
	\begin{mathpar}
		\inferrule{
			\Gamma; \Delta, Y : \mathbb{K} \vdash E : \mathbb{K} \\
			\text{$E$ is bijective in $Y$}
		}{
			\Own \Delta * F \sim \Uniform{\mathbb{K}} \LilacEntails \Own \Delta * E[F/Y] \sim \Uniform{\mathbb{K}}
		}
	\end{mathpar}
\end{lemma}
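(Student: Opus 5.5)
The plan is to reduce the statement to the original \hyperlink{rule:Unif-Bijection}{\textsc{Unif-Bijection}} by conditioning on $\Delta$, and then to recombine. We use the semantics directly. Suppose $\gamma, D, \mathcal{P} \vDash \Own \Delta * F \sim \Uniform{\mathbb{K}}$, witnessed by $\mathcal{P}_1 \cdot \mathcal{P}_2 \sqsubseteq \mathcal{P}$ with $\gamma, D, \mathcal{P}_1 \vDash \Own \Delta$ and $\gamma, D, \mathcal{P}_2 \vDash F \sim \Uniform{\mathbb{K}}$. Write $D_\Delta$ for the $\Delta$-component and $G \coloneqq \interpret{E[F/Y]}(\gamma) \circ D$. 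Then $G(\omega) = e(D_\Delta(\omega), F(\omega))$, where $e(\delta, -) : \mathbb{K} \to \mathbb{K}$ is a bijection for every $\delta$ by hypothesis.

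We keep the same split and show $\gamma, D, \mathcal{P}_1 \cdot \mathcal{P}_2' \vDash \ldots$, or more simply we take the independent combination of $\sigma(D_\Delta)$ and $\sigma(G)$ and check three things. First, we check measurability. The variable $G$ is measurable with respect to $\sigma(D_\Delta) \vee \sigma(F)$, which lies in $\mathcal{P}_1 \cdot \mathcal{P}_2$. Second, we check that $G$ is uniform. For each $k$,
\[ \mu(G = k) = \sum_{\delta} \mu(D_\Delta \in d\delta)\, \mu(F = e(\delta,-)^{-1}(k)) = \sum_\delta \mu(D_\Delta \in d\delta) \cdot \tfrac{1}{|\mathbb{K}|}, \]
using independence of $D_\Delta$ and $F$ under $\mu$. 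Third, we check independence. For measurable $S$,
\[ \mu(D_\Delta \in S \land G = k) = \mu(D_\Delta \in S)/|\mathbb{K}|. \]
Since $\mathbb{K}$ is finite, this is a finite sum and integral computation, and Dynkin's theorem is not needed beyond product-set generators. These facts give $(\sigma(D_\Delta), \mu) \cdot (\sigma(G), \mu) \sqsubseteq (\sigma(D_\Delta) \vee \sigma(F), \mu) \sqsubseteq \mathcal{P}$. Both $\sigma$-algebras are countably generated, so they lie in $\KRM$.

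Alternatively, a more syntactic route would use \hyperlink{rule:C-Indep}{\textsc{C-Indep}} to get $\Conditioning_{\delta \gets \Delta}(F \sim \Uniform{\mathbb{K}})$ together with \hyperlink{rule:C-Subst}{\textsc{C-Subst}}. Under the modality, \hyperlink{rule:Unif-Bijection}{\textsc{Unif-Bijection}} with $E[\delta/\Delta]$ would then give $E[F/Y] \sim \Uniform{\mathbb{K}}$ after transferring along the almost-sure equality. \hyperlink{rule:C-Unif}{\textsc{C-Unif}} would then yield $\Own \Delta * E[F/Y] \sim \Uniform{\mathbb{K}}$. The main obstacle on this route is the rewriting step under the modality. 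It needs $\Own(F, \Delta)$ inside the conditioning, which we would obtain via \hyperlink{rule:C-Own}{\textsc{C-Own}} and \hyperlink{rule:Eq-Congruence}{\textsc{Eq-Congruence}}, so that $E[F/Y] \AlmostSurelyEqual E[\delta/\Delta][F/Y]$. We would try this syntactic route first and fall back on the semantic computation above if the congruence step fails.
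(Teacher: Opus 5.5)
Your proposal is correct. Your primary argument takes a genuinely different route from the paper's, while your fallback syntactic route is essentially the paper's proof.

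The paper proves the lemma as a derived rule inside Lilac, in four moves:
\begin{itemize}
\item \textsc{Gen-Weak-Union} moves $\Own X * F \sim \Uniform{\mathbb{K}}$ under $\Conditioning_{x \gets X}$.
\item \textsc{Unif-Bijection} is applied to $E[x/X, F/Y]$, with $x$ now a deterministic parameter.
\item \textsc{C-Subst}, \textsc{Eq-Congruence}, \textsc{Eq-Conj} and \textsc{Transfer-Dist} rewrite this to $E[F/Y] \sim \Uniform{\mathbb{K}}$ under the modality.
\item \textsc{C-Unif} removes the modality.
\end{itemize}
Your fallback follows the same chain, with \textsc{C-Indep} and \textsc{C-Own} playing the role of \textsc{Gen-Weak-Union}.

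Your main argument instead works directly in the model. You re-split the resource as $(\sigma(D_\Delta), \mu) \cdot (\sigma(G), \mu)$ and check uniformity and independence of $G = e(D_\Delta, F)$ by hand. This is sound for three reasons: $D_\Delta$ and $F$ are independent under $\mu$; each $e(\delta, -)$ is a bijection; and $\sigma(G)$ is generated by the finite partition $\{G = k\}_{k \in \mathbb{K}}$, so no Dynkin argument is needed. The sums are cleanest if you partition by the value $f$ of $F$. For fixed $k$, the sets $\{\delta \mid e(\delta, f) = k\}$ with $f \in \mathbb{K}$ partition $\interpret{\Delta}$, so you never sum over $\interpret{\Delta}$ itself.

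\textbf{What each approach buys.}
\begin{itemize}
\item Your route is self-contained. It needs no disintegrations, no conditioning modality, and no appeal to the newly proved \textsc{Gen-Weak-Union*}, and it handles tuples $\Delta$ uniformly rather than ``for simplicity $\Delta = X : A$''.
\item The paper's route buys admissibility. The lemma follows from already-sound rules with no new semantic argument, and it exhibits the condition--compute--uncondition pattern reused in the examples.
\end{itemize}

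Your remark that the rewriting step needs ownership of $F$ together with $\Delta$ under the modality is apt. The paper's chain replaces $F \sim \Uniform{\mathbb{K}}$ by $E[x/X, F/Y] \sim \Uniform{\mathbb{K}}$ before invoking \textsc{Eq-Congruence}. That rule's premise then needs $\Own E[F/Y]$, which must be recovered from $\Own X$ and $\Own E[x/X, F/Y]$ through the bijectivity of $E$ in $Y$. Retaining $\Own F$, for example via \textsc{Dist-Own}, makes that premise explicit.
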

\begin{proofsketch}
	Suppose that we have $\Own \Delta * F \sim \Uniform{\mathbb{K}}$.
	By \hyperlink{rule:C-Indep}{\textsc{C-Indep}}, we have $\Conditioning_{x \gets \Delta} (F \sim \Uniform{\mathbb{K}})$.
	By \hyperlink{rule:C-Subst}{\textsc{C-Subst}}, we have $\Conditioning_{x \gets \Delta} (E[x/\Delta] \AlmostSurelyEqual E \land F \sim \Uniform{\mathbb{K}})$.
	By applying \hyperlink{rule:Unif-Bijection}{\textsc{Unif-Bijection}} to $E[x/\Delta]$, we have $\Conditioning_{x \gets \Delta} (E[F/Y] \sim \Uniform{\mathbb{K}})$.
	Finally, by \hyperlink{rule:C-Unif}{\textsc{C-Unif}}, we have $\Own \Delta * E[F/Y] \sim \Uniform{\mathbb{K}}$.
\end{proofsketch}
\begin{appendixproof}[Proof of Lemma~\ref{lem:generalised-bijective-substitution-uniform}]
	For simplicity, we show the case where $\Delta = X : A$, but the proof below can be easily generalized to the case of multiple variables.
	\begin{align}
		&\Own X * F \sim \Uniform{\mathbb{K}} \\
		&\LilacEntails \Conditioning_{x \gets X} (\Own X * F \sim \Uniform{\mathbb{K}}) & \text{by \hyperlink{rule:Gen-Weak-Union}{\textsc{Gen-Weak-Union}}} \\
		&\LilacEntails \Conditioning_{x \gets X} (\Own X * E[x/X,F/Y] \sim \Uniform{\mathbb{K}}) \\
		&\LilacEntails \Conditioning_{x \gets X} ((\Own X * E[x/X,F/Y]  \sim \Uniform{\mathbb{K}}) \land x \AlmostSurelyEqual X) & \text{by \hyperlink{rule:C-Subst}{\textsc{C-Subst}}} \\
		&\LilacEntails \Conditioning_{x \gets X} (\Own X * (E[x/X,F/Y]  \sim \Uniform{\mathbb{K}} \land E[x/X,F/Y] \AlmostSurelyEqual E[F/Y])) & \text{by \hyperlink{rule:Eq-Congruence}{\textsc{Eq-Congruence}} and \hyperlink{rule:Eq-Conj}{\textsc{Eq-Conj}}} \\
		&\LilacEntails \Conditioning_{x \gets X} (\Own X * E[F/Y] \sim \Uniform{\mathbb{K}}) & \text{by \hyperlink{rule:Transfer-Dist}{\textsc{Transfer-Dist}}}
	\end{align}
	Therefore, we have
	\begin{align}
		&\Own X * F \sim \Uniform{\mathbb{K}} \\
		&\LilacEntails \Own X \land \Conditioning_{x \gets X} (\Own X * E[F/Y] \sim \Uniform{\mathbb{K}}) \\
		&\LilacEntails \Own X * E[F/Y] \sim \Uniform{\mathbb{K}} & \text{by \hyperlink{rule:C-Unif}{\textsc{C-Unif}}}
	\end{align}
\end{appendixproof}
\end{toappendix}

\subsection{Hoare Triples with Ownership Assertions}

When we write Hoare triples for APPL programs $\Delta \vdash M : \mathtt{G} A$, we often want to assume $\Own \Delta$ in the precondition.
Writing $\Own \Delta$ explicitly in every Hoare triple is tedious, so we introduce a syntactic sugar for this purpose.
For each APPL program $\Delta \vdash M : \mathtt{G} A$, we define $\HoareTripleOwn{P}{M}{X. Q}$ as syntactic sugar for
\[ \HoareTriple{P \land \Own \Delta}{M}{X. Q \land \Own \Delta \land \Own X}. \]

We can use this syntactic sugar in the same way as the original Hoare triples.
\begin{proposition}
	All rules in Fig.~\ref{fig:lilac-hoare-rules} is sound when we replace all occurrences of $\HoareTriple{P}{M}{X. Q}$ with $\HoareTripleOwn{P}{M}{X. Q}$.
	\qed
\end{proposition}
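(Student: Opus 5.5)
We must show every rule in Fig.~\ref{fig:lilac-hoare-rules} stays sound when each triple $\HoareTriple{P}{M}{X. Q}$ is read as $\HoareTriple{P \land \Own \Delta}{M}{X. Q \land \Own \Delta \land \Own X}$. The plan is to derive each sugared rule from the original rules together with Lemma~\ref{lem:own-variables}, which gives $\HoareTriple{\Own \Delta}{M}{X. \Own \Delta \land \Own X}$. We will also need a conjunction rule for postconditions,
\[ \HoareTriple{P}{M}{X. Q_1} \text{ and } \HoareTriple{P}{M}{X. Q_2} \text{ imply } \HoareTriple{P}{M}{X. Q_1 \land Q_2}. \]
This rule is not among the listed rules, and it is the main obstacle. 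The $\mathrm{wp}$ semantics existentially chooses the output random variable $X$, the measure $\mu'$ and the resource $\mathcal{P}'$, so the two triples could pick different witnesses. Conjunction is sound only if one witness serves both, so we cannot just take a general conjunction rule.

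To get around this, we would prove the ownership part semantically alongside each rule, rather than relying on a general conjunction rule. The key observation is that, for any $\mathrm{wp}$-witness $(X, \mathcal{P}', \mu')$, we may enlarge $\mathcal{P}'$ to $\mathcal{P}'' = (\mathcal{F}' \vee \sigma(D, X), \mu'|)$ and then check two facts:
\begin{itemize}
\item $\Own \Delta \land \Own X$ holds in $\mathcal{P}''$;
\item $\mathcal{P}_f \cdot \mathcal{P}'' \sqsubseteq (\Sigma_\Omega, \mu')$ still holds, and $Q$ is preserved, since formulas are upward closed in the KRM order.
\end{itemize}
The second fact needs $\sigma(D, X)$ to be independent of the frame $\mathcal{P}_f$. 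This is exactly what the proof of Lemma~\ref{lem:own-variables} establishes through the distributional equation: $D$ already lies in the precondition's resource, and $X$ is generated from $D$ plus fresh randomness. Concretely, we would prove a strengthened lemma: if $\HoareTriple{P \land \Own\Delta}{M}{X. Q}$ holds, then so does $\HoareTriple{P \land \Own\Delta}{M}{X. Q \land \Own\Delta \land \Own X}$. Its proof is by induction on $M$, mirroring Lemma~\ref{lem:own-variables}.

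With this strengthened lemma, each rule follows routinely:
\begin{itemize}
\item \textsc{H-Unif} and \textsc{H-Ret}: instantiate the original rule with precondition $\Own\Delta$, weaken using \hyperlink{rule:FV-Own}{\textsc{FV-Own}}, then apply the lemma.
\item \textsc{H-Let}: the sugared premise for $N$ lives in context $\Delta, X$. The postcondition of the first sugared premise supplies $\Own\Delta \land \Own X$, which is exactly $\Own(\Delta, X)$ by the rule $\Own(X_1,\dots,X_n) \LilacIff \bigwedge_i \Own X_i$. Chaining with the original \textsc{H-Let} and dropping $\Own X$ from the final postcondition via \hyperlink{rule:H-Conseq}{\textsc{H-Conseq}} gives the sugared rule.
\item \textsc{H-For}: take the invariant $I(i, X) \land \Own\Delta \land \Own X$ and argue the same way.
\item \textsc{H-Conseq}: conjoin with $\Own\Delta$ on both sides, which is monotone.
\item \textsc{H-Frame}: the main point is that $(P \land \Own\Delta) * R \LilacEntails (P * R) \land \Own\Delta$, since Lilac is affine. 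This handles the precondition. For the postcondition, $(Q \land \Own\Delta \land \Own X) * R \LilacEntails (Q * R) \land \Own\Delta \land \Own X$ by the same affinity. However, the sugared frame rule has precondition $(P * R) \land \Own\Delta$, not $(P \land \Own\Delta) * R$. Here we would appeal once more to the strengthened lemma: apply the original frame rule to $\HoareTriple{P \land \Own\Delta}{M}{X. Q}$, then re-add the ownership conjuncts semantically.
\end{itemize}
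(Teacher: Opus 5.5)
Your \textsc{H-Frame} case fails, and no argument can repair it: with only the side condition $X \notin R$, the sugared frame rule is unsound. Applying the original frame rule to $\HoareTriple{P \land \Own \Delta}{M}{X. Q}$ gives the precondition $(P \land \Own \Delta) * R$. This is strictly stronger than the required $(P * R) \land \Own \Delta$, because in the latter $R$ may own part of the randomness of $\Delta$. Adding ownership conjuncts to the postcondition cannot compensate for a precondition that is too strong.

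Concretely, let $\Delta = Z : \mathbb{K}$ with $|\mathbb{K}| \ge 2$, and take $M = \ret{Z}$, $P = \top$, $Q = \Own X$ and $R = Z \sim \Uniform{\mathbb{K}}$. The premise $\HoareTripleOwn{\top}{\ret{Z}}{X. \Own X}$ follows from \textsc{H-Ret}. The conclusion's precondition $(\top * Z \sim \Uniform{\mathbb{K}}) \land \Own Z$ holds at $(\sigma(Z), \mu|_{\sigma(Z)})$ whenever $Z$ is uniform under $\mu$. Yet every $\mathrm{wp}$-witness has $X = Z$ $\mu'$-almost surely, with $Z_{*}\mu' = Z_{*}\mu$ uniform. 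Meanwhile $\Own X * Z \sim \Uniform{\mathbb{K}}$ forces $\mu'(X = k, Z = k) = \mu'(X = k)\,\mu'(Z = k)$, i.e.\ $1/|\mathbb{K}| = 1/|\mathbb{K}|^2$, a contradiction.

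What you can actually derive is the asymmetric rule concluding $\HoareTriple{(P \land \Own \Delta) * R}{M}{X. (Q * R) \land \Own \Delta \land \Own X}$ from $\HoareTripleOwn{P}{M}{X. Q}$, using the original \textsc{H-Frame} plus affinity. Alternatively, the sugared rule holds when restricted to frames satisfying $\phi \land R \LilacEntails \phi * R$, such as almost-sure equalities (\textsc{Eq-Conj}) or persistent formulas. The proposition as stated needs a restriction of this kind for \textsc{H-Frame}.

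For the remaining rules, your detour through a conjunction rule and a ``strengthened lemma'' is unnecessary, because every sugared premise already carries $\Own \Delta \land \Own X$ in its postcondition:
\begin{itemize}
\item \textsc{H-Ret}: instantiate the original rule directly with postcondition $Q \land \Own \Delta \land \Own X$, discharging the extra $\Own E$ in the precondition by \textsc{FV-Own}.
\item \textsc{H-Unif}: frame $\Own \Delta$ onto the original \textsc{H-Unif}, then use \textsc{Dist-Own} and affinity.
\item \textsc{H-Let}, \textsc{H-For} and \textsc{H-Conseq}: these go through by \textsc{H-Conseq} exactly as you describe.
\end{itemize}

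This matters, since the sketched proof of the strengthened lemma does not hold up. Independence of the frame from $\mathcal{F}'$ and from $\sigma(D, X)$ separately does not give independence from $\mathcal{F}' \vee \sigma(D, X)$. For example, take $\mathcal{F}' = \sigma(X + U)$ with $U$ uniform in the frame and $X$ uniform and independent of it. Moreover, Lemma~\ref{lem:own-variables} is proved syntactically from the Hoare rules, so an induction on $M$ mirroring it says nothing about an arbitrary semantically valid triple $\HoareTriple{P \land \Own \Delta}{M}{X. Q}$.
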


\begin{proposition}
	Theorem~\ref{thm:noninterference-conditional-independence} holds when we replace $\HoareTriple{P}{M}{X. Q}$ with $\HoareTripleOwn{P}{M}{X. Q}$.
	\qed
\end{proposition}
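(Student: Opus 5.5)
The plan is to reduce both directions to Theorem~\ref{thm:noninterference-conditional-independence} and its appendix proof. The only work is to handle the extra conjuncts $\Own \Delta$ in the precondition and $\Own \Delta \land \Own \mathbf{Y}$ in the postcondition introduced by the sugar. Unfolding, the sugared version of~\eqref{eq:hoare-noninterference} reads
\[ \HoareTriple{\Own (X_1, X_2) \land \Own (X_1, X_2)}{M}{(Y_1, Y_2). \Conditioning_{x \gets X_1} (\Own X_2 * \Own Y_1) \land \Own (X_1, X_2) \land \Own (Y_1, Y_2)}. \]
Its precondition is trivially equivalent to $\Own (X_1, X_2)$.

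For ($\implies$), I would weaken the postcondition by dropping the last two conjuncts. This is valid since $\phi \land \psi \LilacEntails \phi$, and the step is an application of \hyperlink{rule:H-Conseq}{\textsc{H-Conseq}}. It yields exactly~\eqref{eq:hoare-noninterference}, and the ($\implies$) direction of Theorem~\ref{thm:noninterference-conditional-independence} then gives the simulator.

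For ($\impliedby$), one cannot simply combine~\eqref{eq:hoare-noninterference} with Lemma~\ref{lem:own-variables}. The reason is that $\mathrm{wp}(M, \mathbf{Y}. \phi) \land \mathrm{wp}(M, \mathbf{Y}. \psi)$ does not in general entail $\mathrm{wp}(M, \mathbf{Y}. \phi \land \psi)$, since the two existential witnesses $(D_{Y_1}, D_{Y_2}), \mu_M, \mathcal{P}_M$ may differ. I expect this to be the main obstacle. I would resolve it by re-inspecting the appendix proof of the ($\impliedby$) direction. That proof \emph{starts} from the witnesses supplied by Lemma~\ref{lem:own-variables}, which already satisfy $\gamma, (D, D_{Y_1}, D_{Y_2}), \mathcal{P}_M \vDash \Own (X_1, X_2, Y_1, Y_2)$. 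It then shows that these \emph{same} witnesses also satisfy $\Conditioning_{x \gets X_1}(\Own X_2 * \Own Y_1)$, via Lemma~\ref{lem:extra-variable-conditional-independence}, the simulator equation~\eqref{eq:sim-distribution}, and Lemma~\ref{lem:conditional-independence-equivalent}. Hence a single witness satisfies the conjunction. From $\Own (X_1, X_2, Y_1, Y_2)$ we obtain $\Own (X_1, X_2) \land \Own (Y_1, Y_2)$ by the tuple-ownership rule $\Own (X_1, \dots, X_n) \LilacIff \bigwedge_i \Own X_i$ together with \hyperlink{rule:FV-Own}{\textsc{FV-Own}}. This establishes the sugared triple.

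Finally, I would remark that the general case with arbitrary $\mathcal{I}, \mathcal{O}$ reduces to this two-variable case by tupling, exactly as for Theorem~\ref{thm:noninterference-conditional-independence}, so no further argument is needed.
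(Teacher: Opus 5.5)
Your proposal is correct and is essentially the argument the paper leaves implicit behind its \qed. The ($\implies$) direction is just \textsc{H-Conseq} dropping the extra ownership conjuncts. The ($\impliedby$) direction works because the appendix proof of Theorem~\ref{thm:noninterference-conditional-independence} explicitly reuses the witnesses $(D_{Y_1}, D_{Y_2}), \mu_M, \mathcal{P}_M$ from Lemma~\ref{lem:own-variables}, which already satisfy $\Own (X_1, X_2, Y_1, Y_2)$, and shows that these same witnesses satisfy the conditioning formula, so a single witness validates the whole sugared postcondition.
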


In particular, $(\mathcal{I}, \mathcal{O})$-NI property of an APPL program $M$ can be written as follows by Theorem~\ref{thm:noninterference-conditional-independence}.
\[ \HoareTripleOwn{\top}{M}{\mathbf{Y}. \Conditioning_{x \gets \mathcal{I}} (\Own (\Delta \setminus \mathcal{I}) * \Own \mathcal{O})} \]

\subsection{Monotonicity}
\label{sec:monotonicity}

By the simulator-based definition of $(\mathcal{I}, \mathcal{O})$-NI, it is straightforward to see that if $M$ is $(\mathcal{I}, \mathcal{O})$-NI, $\mathcal{I} \subseteq \mathcal{I}'$, and $\mathcal{O} \supseteq \mathcal{O}'$, then $M$ is also $(\mathcal{I}', \mathcal{O}')$-NI.
This gives rise to the following proof rule.
\begin{mathpar}
	\inferrule{
		\mathcal{I} \subseteq \mathcal{I}' \subseteq \Delta \\
		\mathbf{Y} \supseteq \mathcal{O} \supseteq \mathcal{O}' \\\\
		\HoareTripleOwn{\top}{M}{\mathbf{Y}. \Conditioning_{x \gets \mathcal{I}} (\Own (\Delta \setminus \mathcal{I}) * \Own \mathcal{O})}
	}{
		\HoareTripleOwn{\top}{M}{\mathbf{Y}. \Conditioning_{x \gets \mathcal{I}'} (\Own (\Delta \setminus \mathcal{I}') * \Own \mathcal{O}')}
	}
\end{mathpar}
This is a consequence of Theorem~\ref{thm:noninterference-conditional-independence}, but we can also prove it from semi-graphoid axioms.

\begin{lemma}
	Suppose $\mathcal{I} \subseteq \mathcal{I}' \subseteq \Delta$ and $\mathcal{X} \supseteq \mathcal{O} \supseteq \mathcal{O}'$.
	Then, we have the following.
	\[ (\Delta \setminus \mathcal{I}) \perp \mathcal{O} \mid \mathcal{I} \quad\implies\quad (\Delta \setminus \mathcal{I}') \perp \mathcal{O}' \mid \mathcal{I}' \]
\end{lemma}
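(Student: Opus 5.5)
The monotonicity claim splits into two independent moves, shrinking the output and enlarging the conditioning set, and each is a single semi-graphoid step. Throughout, write $R = \Delta \setminus \mathcal{I}'$ and $S = \mathcal{I}' \setminus \mathcal{I}$. Then $\Delta \setminus \mathcal{I} = (R, S)$ as a tuple of variables, up to reordering, and $\mathcal{I}' = (\mathcal{I}, S)$.

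\textbf{Step 1 (shrink the output).} From the hypothesis $(R, S) \perp \mathcal{O} \mid \mathcal{I}$, symmetry gives $\mathcal{O} \perp (R, S) \mid \mathcal{I}$. Since $\mathcal{O}' \subseteq \mathcal{O}$, decomposition (with symmetry again) yields $(R, S) \perp \mathcal{O}' \mid \mathcal{I}$. In Lilac this is the entailment $\Own \mathcal{O} \LilacEntails \Own \mathcal{O}'$, pushed under the separating conjunction and under $\Conditioning_{x \gets \mathcal{I}}$ by \hyperlink{rule:C-Entail}{\textsc{C-Entail}}.

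\textbf{Step 2 (enlarge the conditioning set).} Apply weak union to $\mathcal{O}' \perp (R, S) \mid \mathcal{I}$ to get $\mathcal{O}' \perp R \mid (S, \mathcal{I})$, that is, $R \perp \mathcal{O}' \mid \mathcal{I}'$ after symmetry. This is exactly the conclusion. In Lilac one works under $\Conditioning_{x \gets \mathcal{I}}$ (\hyperlink{rule:C-Entail}{\textsc{C-Entail}}) and uses \hyperlink{rule:Gen-Weak-Union}{\textsc{Gen-Weak-Union}} with $\phi = \Own \mathcal{O}'$ and the pair $(S, R)$:
\[ \Own \mathcal{O}' * \Own (S, R) \LilacEntails \Conditioning_{s \gets S} (\Own \mathcal{O}' * \Own (S, R)) \LilacEntails \Conditioning_{s \gets S} (\Own \mathcal{O}' * \Own R). \]
This produces $\Conditioning_{x \gets \mathcal{I}} \Conditioning_{s \gets S} (\Own R * \Own \mathcal{O}')$.

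\textbf{Main obstacle.} The conclusion is stated with a single conditioning on the tuple $\mathcal{I}' = (\mathcal{I}, S)$, whereas Step 2 yields iterated conditioning. The iterated-to-joint conversion must be justified. The plan is to prove that $\Conditioning_{x \gets X}\Conditioning_{y \gets Y}\psi$ entails $\Conditioning_{(x, y) \gets (X, Y)}\psi$, at least for $\psi$ of the form $\Own A * \Own B$. The argument goes through Lemma~\ref{lem:disintegration-pair-2}: any joint disintegration $\{\nu_{x, y}\}$ is, for almost every $x$, a $\nu_x$-disintegration with respect to $Y$. Consequently, the almost-every-$y$ property guaranteed by the inner modality for every $\nu_x$-disintegration transfers to $\nu_{x, y}$ for almost every $(x, y)$, using Lemma~\ref{lem:disintegration-pushforward-of-pair} to integrate the null sets. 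Lemma~\ref{lem:conditional-independence-formula} reduces membership in $\Own A * \Own B$ to a product-measure identity, which makes this transfer concrete.

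There are two technical points to handle here. First, the Lilac semantics quantifies over extensions $(\Sigma_\Omega, \mu') \sqsupseteq \mathcal{P}$; Lemma~\ref{lem:conditional-independence-equivalent} shows this quantification is harmless. Second, the exceptional null set must be shown to be measurable in $(x, y)$; countable generation together with Lemma~\ref{lem:aee-markov-kernel} settles this.

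Alternatively, if the lemma is read purely as a statement about classical conditional independence, the two semi-graphoid steps above already suffice, and this obstacle disappears.
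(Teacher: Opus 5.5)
Your proposal is correct and matches the paper's argument: decomposition shrinks $\mathcal{O}$ to $\mathcal{O}'$, and weak union then moves $\mathcal{I}' \setminus \mathcal{I}$ into the conditioning set. The paper does not treat the iterated-versus-joint conditioning issue you flag, since its proof stays at the level of classical $\perp$ as in your closing remark; your sketch via Lemma~\ref{lem:disintegration-pair-2}, plus a countable-generation measurability argument, is a sound way to close that gap if you want the Lilac-level rule.
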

\begin{proof}
	Without loss of generality, we show the case where both $\Delta = \{ X_1, X_2, X_3 \}$ and $\mathbf{Y} = (Y_1, Y_2, Y_3)$ consist of three variables, and we have $\mathcal{I} = \{ X_1 \}$, $\mathcal{I}' = \{ X_1, X_2 \}$, $\mathcal{O} = \{ Y_1, Y_2 \}$, and $\mathcal{O}' = \{ Y_1 \}$.
	Then, we have the following.
	\begin{align}
		&X_2, X_3 \perp Y_1, Y_2 \mid X_1 \\
		&\implies X_2, X_3 \perp Y_1 \mid X_1 &\text{by decomposition} \\
		&\implies X_3 \perp Y_1 \mid X_1, X_2 &\text{by weak union} &
		\qedhere
	\end{align}
\end{proof}

\subsection{Composability for Sequential Composition}
Suppose we have two APPL programs $\Delta \vdash M : \mathtt{G} A$ and $\Delta, \mathbf{X} : A \vdash N : \mathtt{G} B$ whose output variables are tuples $\mathbf{X}$ and $\mathbf{Y}$, respectively.
Consider the sequential composition $\Delta \vdash \mathbf{X} \gets M; N : \mathtt{G} B$.
\begin{center}
	\begin{tikzpicture}
		\node[draw] (M) at (0, 0) {$M$};
		\node[draw, minimum height=0.8cm] (N) at (1.5, 0.2) {$N$};
		\draw[->] ($ (N.west) + (-2.5, 0.2) $) node[left] {$\Delta$} -- ($ (N.west) + (0, 0.2) $);
		\draw[->] (N.east) -- ($ (N.east) + (0.5, 0) $) node[right] {$\mathbf{Y}$};
		\draw[->] (M) -- node[midway, below] {$\mathbf{X}$} (N.west |- M.east);
		\draw[->] ($ (N.west) + (-2, 0.2) $) |- (M);
	\end{tikzpicture}
\end{center}
If $M$ is $(\mathcal{I}_M, \mathcal{O}_M)$-NI and $N$ is $(\mathcal{I}_N, \mathcal{O}_N)$-NI with $\mathcal{O}_M \supseteq \mathcal{I}_N \cap \mathbf{X}$, then the sequential composition $\mathbf{X} \gets M; N$ is $(\mathcal{I}_M \cup (\mathcal{I}_N \cap \Delta), \mathcal{O}_N)$-NI.
We formalize this composability property as a proof rule in Lilac as follows.
By monotonicity of $(\mathcal{I}, \mathcal{O})$-NI (Section~\ref{sec:monotonicity}), it suffices to consider the case where $\mathcal{O}_M = \mathcal{I}_N \cap \mathbf{X}$ and $\mathcal{I}_M = \mathcal{I}_N \cap \Delta$:
\begin{mathpar}
	\inferrule{
		\text{$M$ is $(\mathcal{I}, \mathcal{O}_M)$-NI}\\
		\text{$N$ is $(\mathcal{I} \cup \mathcal{O}_M, \mathcal{O}_N)$-NI}
	}{
		\text{$\mathbf{X} \gets M; N$ is $(\mathcal{I}, \mathcal{O}_N)$-NI}
	}
\end{mathpar}
This gives rise to the following proof rule.
\begin{mathpar}
	\inferrule{
		\HoareTripleOwn{\top}{M}{\mathbf{X}. \Conditioning_{w \gets \mathcal{I}} (\Own (\Delta \backslash \mathcal{I}) * \Own \mathcal{O}_M)} \\
		\HoareTripleOwn{\top}{N}{\mathbf{Y}. \mkern-20mu \Conditioning_{(w, x) \gets (\mathcal{I}, \mathcal{O}_M)} \mkern-20mu (\Own (\Delta \backslash \mathcal{I}, \mathbf{X} \backslash \mathcal{O}_M) * \Own \mathcal{O}_N)}
	}{
		\HoareTripleOwn{\top}{\mathbf{X} \gets M; N}{\mathbf{Y}. \Conditioning_{x \gets \mathcal{I}} (\Own (\Delta \backslash \mathcal{I}) * \Own \mathcal{O}_N)}
	}
\end{mathpar}

Similarly to monotonicity, this is a consequence of Theorem~\ref{thm:noninterference-conditional-independence}, but we can also prove it from semi-graphoid axioms.

\begin{lemma}\label{lem:sequential-composability-from-semi-graphoid}
	We have the following.
	\begin{align}
		&(\Delta \backslash \mathcal{I} \perp \mathcal{O}_M \mid \mathcal{I}) \land (\Delta \backslash \mathcal{I}, \mathbf{X} \backslash \mathcal{O}_M \perp \mathcal{O}_N \mid \mathcal{I}, \mathcal{O}_M) \\
		&\implies \Delta \backslash \mathcal{I} \perp \mathcal{O}_N \mid \mathcal{I}
		\tag*{\qed}
	\end{align}
\end{lemma}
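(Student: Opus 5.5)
The plan is to derive the conclusion purely from the semi-graphoid axioms (symmetry, decomposition, weak union, contraction), in the same style as the monotonicity lemma. Write $S = \Delta \backslash \mathcal{I}$ for the secret inputs and $R = \mathbf{X} \backslash \mathcal{O}_M$ for the unprobed outputs of $M$. The two hypotheses are
\[ (H_1)\ S \perp \mathcal{O}_M \mid \mathcal{I} \qquad\text{and}\qquad (H_2)\ (S, R) \perp \mathcal{O}_N \mid (\mathcal{I}, \mathcal{O}_M). \]
The target is $S \perp \mathcal{O}_N \mid \mathcal{I}$.

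First, apply decomposition to $(H_2)$ to discard $R$, giving $S \perp \mathcal{O}_N \mid (\mathcal{I}, \mathcal{O}_M)$. By symmetry this becomes $\mathcal{O}_N \perp S \mid (\mathcal{I}, \mathcal{O}_M)$. Similarly, $(H_1)$ becomes $\mathcal{O}_M \perp S \mid \mathcal{I}$ by symmetry.

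Next comes the key step. Apply contraction with $W := S$, $X := \mathcal{O}_M$, $Y := \mathcal{O}_N$ and conditioning set $\mathcal{I}$. Its premises are exactly $S \perp \mathcal{O}_M \mid \mathcal{I}$, which is $(H_1)$, and $S \perp \mathcal{O}_N \mid (\mathcal{I}, \mathcal{O}_M)$, which we just derived. Contraction then yields $S \perp (\mathcal{O}_M, \mathcal{O}_N) \mid \mathcal{I}$. A final application of decomposition drops $\mathcal{O}_M$ and gives $S \perp \mathcal{O}_N \mid \mathcal{I}$, as required.

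To read this in Lilac, conditioning on $(\mathcal{I}, \mathcal{O}_M)$ should be read as the nested modality $\Conditioning_{w \gets \mathcal{I}} \Conditioning_{x \gets \mathcal{O}_M}$, matching the form of the contraction axiom stated above. The main obstacle is this identification: the hypothesis of the proof rule conditions jointly on the pair $(\mathcal{I}, \mathcal{O}_M)$, while \textsc{Contraction} uses iterated conditioning. I would close the gap with Lemma~\ref{lem:disintegration-pair-2}, which says that a joint disintegration is, almost surely, an iterated one. This gives $\Conditioning_{(w,x) \gets (\mathcal{I}, \mathcal{O}_M)} \phi \LilacEntails \Conditioning_{w \gets \mathcal{I}} \Conditioning_{x \gets \mathcal{O}_M} \phi$ for the ownership formulas involved, whose satisfaction depends only on the disintegrated measures, as in Lemma~\ref{lem:conditional-independence-formula}.

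With that in hand, every step is an instance of the rules already established: \textsc{C-Entail}, \textsc{C-And-Distributive}, \textsc{Contraction}, and $\Own (X, Y) \LilacEntails \Own X$. Each step applies under the outer modality $\Conditioning_{w \gets \mathcal{I}}$, so the derivation goes through inside it.
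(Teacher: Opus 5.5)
Your derivation is correct and matches the paper's proof: decomposition drops $\mathbf{X}\backslash\mathcal{O}_M$ from the second hypothesis, contraction then gives $\Delta\backslash\mathcal{I} \perp (\mathcal{O}_M, \mathcal{O}_N) \mid \mathcal{I}$, and a final decomposition finishes it (your symmetry steps are never used, since contraction is applied in the original orientation). Your remark on reading joint conditioning on $(\mathcal{I}, \mathcal{O}_M)$ as the iterated modality via Lemma~\ref{lem:disintegration-pair-2} goes beyond the paper, which states and proves the lemma only for abstract $\perp$ statements and leaves that identification implicit.
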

\begin{toappendix}
\subsection{Proof of Lemma~\ref{lem:sequential-composability-from-semi-graphoid}}
\end{toappendix}
\begin{appendixproof}[Proof of Lemma~\ref{lem:sequential-composability-from-semi-graphoid}]
	Without loss of generality, we show the case where $\Delta = \{ W_1, W_2 \}$, $\mathbf{X} = \{ X_1, X_2 \}$, and $\mathbf{Y} = \{ Y_1, Y_2 \}$.
	Also, let $\mathcal{I} = \{ W_1 \}$, $\mathcal{O}_M = \{ X_1 \}$, and $\mathcal{O}_N = \{ Y_1 \}$.
	Then, we have the following.
	\begin{align}
		&W_2 \perp X_1 \mid W_1\ \land\ W_2, X_2 \perp Y_1 \mid W_1, X_1 \\
		&\implies W_2 \perp X_1 \mid W_1\ \land\ W_2 \perp Y_1 \mid W_1, X_1 \\
		&\implies W_2 \perp X_1, Y_1 \mid W_1 \\
		&\implies W_2 \perp Y_1 \mid W_1
		\qedhere
	\end{align}
\end{appendixproof}

\subsection{Composability for For-Loops}

Consider a for-loop $\forloop{i}{n_s}{n_e}{\mathbf{Y}}{\mathbf{E}}{M}$ where $\mathbf{Y} = (Y_1, \dots, Y_n)$ is a tuple of random variables, and $\Delta, i : \mathtt{index}, \mathbf{Y} : B \vdash M : B$.
\begin{center}
	\begin{tikzpicture}
		\coordinate (delta) at (-2, 0.5);
		\node[left] at (delta) {$\Delta$};
		\coordinate (E) at (-1, 0);
		\node[draw] (M1) at (0, 0) {$M$};
		\node[below=0.3cm,font=\footnotesize] at (M1) {$i = n_s$};
		\node[draw] (M2) at (1.5, 0) {$M$};
		\node[draw] (M_last) at (3.5, 0) {$M$};
		\node[below=0.3cm,font=\footnotesize] at (M_last) {$i = n_e$};
		\draw[->] (delta) -| (M_last);
		\draw[->] ($(E |- delta) + (-0.5, 0)$) |- (M1);
		\node[fill=white] at (E) {$\mathbf{E}$};
		\draw[->] (M1) -- node[midway, below] {$\mathbf{Y}$} (M2);
		\draw[->] (M1 |- delta) -- (M1);
		\draw[->] (M2 |- delta) -- (M2);
		\draw[->] (M2) -- node[midway, fill=white] {$\cdots$} (M_last);
		\draw[->] (M_last) -- ($ (M_last.east) + (0.5, 0) $) node[right] {$\mathbf{Y}$};
	\end{tikzpicture}
\end{center}
Let $I_i \subseteq \{ 1, \dots, n \}$ be a subset of indices for each $i = n_s, \dots, n_e + 1$, and let $\mathcal{I}$ be a subset of $\Delta$.
If (i) for each $i = n_s, \dots, n_e$, $M$ is $(\mathcal{I} \cup \{ Y_j \}_{j \in I_i}, \{ Y'_j \}_{j \in I_{i + 1}})$-NI where $\mathbf{Y}' = (Y'_1, \dots, Y'_n)$ is the output of $M$, and (ii) for each $j \in I_{n_s}$, free variables of $E_j$ are in $\mathcal{I}$, then the for-loop is $(\mathcal{I}, \{ Y_j \}_{j \in I_{n_e + 1}})$-NI where $\mathbf{Y} = (Y_1, \dots, Y_n)$ is used as the output of the for-loop.
This gives the following proof rule.
\begin{mathpar}
	\inferrule{
		\Own \mathbf{Y}|_{n_s} \LilacEntails \Own \mathbf{E}|_{n_s} \\
		\HoareTripleOwn{\top}{M}{\mathbf{Y}'. \mkern-24mu \Conditioning_{(x, y) \gets ( \mathcal{I}, \mathbf{Y}|_i )} \mkern-24mu (\Own ( \Delta \backslash \mathcal{I}, \mathbf{Y} \backslash \mathbf{Y}|_i ) * \Own \mathbf{Y}'|_{i + 1})} \\
	}{
		{\begin{aligned}
			\llbrace \top \rrbrace\ 
			&\forloop{i}{n_s}{n_e}{\mathbf{Y}}{E}{M} \\
			&\llbrace \mathbf{Y}. \Conditioning_{x \gets \mathcal{I}} (\Own (\Delta \backslash \mathcal{I}) * \Own \mathbf{Y}|_{n_e + 1}) \rrbrace
		\end{aligned}}
	}
\end{mathpar}
Here, we write $\mathbf{Y}|_i$ for the tuple $(Y_j)_{j \in I_i}$ and similarly for $\mathbf{Y}'|_i$ and $\mathbf{E}|_i$.
We can prove this either from Theorem~\ref{thm:noninterference-conditional-independence} or by semi-graphoid axioms.

\subsection{Weakening}

When composing programs, a component program often uses only a subset of input variables.
The situation can be illustrated as follows.
\begin{center}
	\begin{tikzpicture}
		\node[draw] (M) at (0, 0) {$M$};
		\draw[->] ($ (M.west) + (-1, 0) $) node[left] {$\mathbf{X}$} -- ($ (M.west) + (0, 0) $);
		\draw[->] ($ (M.east) + (0, 0) $) -- ($ (M.east) + (1, 0) $) node[right] {$\mathbf{Y}$};
		\draw[->] ($ (M.west) + (-1, 0) + (0, -0.5) $) node[left] {$\mathbf{Z}$} -- ($ (M.east) + (1, 0) + (0, -0.5) $) node[right] {$\mathbf{Z}$};
	\end{tikzpicture}
\end{center}
In this case, we want to reason about the component program $M$ with the subset $\mathbf{X}$ of input variables and then extend the result to the whole set of input variables $\mathbf{X}, \mathbf{Z}$.
Specifically, suppose that $M$ is $(\mathcal{I}, \mathcal{O})$-NI where $\mathcal{I} \subseteq \mathbf{X}$ and $\mathcal{O} \subseteq \mathbf{Y}$.
Then, we want to conclude that $M$ is also $(\mathcal{I} \cup \mathcal{Z}, \mathcal{O} \cup \mathcal{Z})$-NI for any $\mathcal{Z} \subseteq \mathbf{Z}$.
This gives rise to the following proof rule.
\begin{mathpar}
	\inferrule{
		\HoareTripleOwn{\top}{M}{\mathbf{Y}. \Conditioning_{x \gets \mathcal{I}} (\Own (\mathbf{X} \setminus \mathcal{I}) * \Own \mathcal{O})}
	}{
		\forall_{\mathrm{rv}} \mathbf{Z}. \HoareTripleOwn{\top}{M}{\mathbf{Y}. \mkern-20mu \Conditioning_{(x, z) \gets (\mathcal{I}, \mathcal{Z})} \mkern-20mu (\Own (\mathbf{X} \backslash \mathcal{I}, \mathbf{Z} \backslash \mathcal{Z}) * \Own (\mathcal{O}, \mathcal{Z}))}
	}
\end{mathpar}
Note that $\mathbf{Z}$ need not be independent of $\mathbf{X}$.
Hence, this is a different local reasoning principle from the frame rule in probabilistic separation logic.

\section{Applications to Probing Security}
\label{sec:examples}

Raccoon~\cite{DelPino2023} is a masking-friendly lattice-based signature scheme.
Raccoon employs a sum-of-uniforms distribution as its noise source to address the difficulty of generating discrete Gaussian noise in a side-channel-resistant manner.
An algorithm called \textsc{AddRepNoise} is used in Raccoon to add uniform noise to masked values.

In this section, we demonstrate how Lilac can be used to prove the probing security of \textsc{AddRepNoiseER}, a variant of \textsc{AddRepNoise} that is used in the security proof of Raccoon~\cite{DelPinoCRYPTO2024}.
The algorithm \textsc{AddRepNoiseER} (Fig.~\ref{fig:addrepnoiseer}) adds uniform noise $\rho_{i, i_{\mathrm{rep}}, j}$ to each share $V_i^j$ using \textsc{MiniAddRepNoise} (Fig.~\ref{fig:miniaddrepnoise}), and then refreshes the shares using the \textsc{Refresh} algorithm (Fig.~\ref{fig:refreshm}).
This procedure is repeated $r$ times for each of the $l$ shared variables.
The uniform noise $\rho_{i, i_{\mathrm{rep}}, j}$ is sampled externally and provided to \textsc{AddRepNoiseER} as an unshared input (ER stands for Explicit Randomness).
Below, we prove that \textsc{AddRepNoiseER} satisfies the $t$-SNIU property (Definition~\ref{def:t-noninterference-with-unshared-input}), which plays a central role in the security proof of Raccoon~\cite{DelPinoCRYPTO2024}.
Following~\cite[Lemma~6.1]{DelPinoCRYPTO2024}, the proof is divided into three parts.

After proving the security of \textsc{AddRepNoiseER}, we present another example in which we prove the $t$-SNI property of a masked multiplication algorithm~\cite{IshaiCRYPTO2003}.

\begin{figure}[t]
	\begin{algorithmic}
		\Function{AddRepNoiseER}{$\{ \mathbf{V}_i \}_{i \in \RangeOneTo{l}}, \rho$}
			\State $\{ \mathbf{X}_{i, 0} \}_{i \in \RangeOneTo{l}} \gets \{ \mathbf{V}_i \}_{i \in \RangeOneTo{l}}$
			\For{$i = 1, \dots, l$}
				\For{$i_{\mathrm{rep}} = 1, \dots, r$}
					\State $\mathbf{Y}_{i, i_{\mathrm{rep}}} \gets \Call{MiniAddRepNoise}{\mathbf{X}_{i, i_{\mathrm{rep}} - 1}, \rho_{i, i_{\mathrm{rep}}}}$
					\State $\mathbf{X}_{i, i_{\mathrm{rep}}} \gets \Call{Refresh}{\mathbf{Y}_{i, i_{\mathrm{rep}}}}$
				\EndFor
			\EndFor
			\State \Return{$\{ \mathbf{X}_{i, r} \}_{i \in \RangeOneTo{l}}$}
		\EndFunction
	\end{algorithmic}
	\caption{AddRepNoiseER~\cite{DelPinoCRYPTO2024} in conventional pseudocode. Here, $\RangeOneTo{l} = \{ 1, 2, \dots, l \}$, and $\rho = \{\rho_{i, i_{\mathrm{rep}}}\}_{i \in \RangeOneTo{l}, i_{\mathrm{rep}} \in \RangeOneTo{r}}$ where $\rho_{i, i_{\mathrm{rep}}} = \{\rho_{i, i_{\mathrm{rep}}, j}\}_{j \in \{ 0, \dots, t \}}$ and $\rho_{i, i_{\mathrm{rep}}, j} : \mathbb{K}$.}
	\label{fig:addrepnoiseer}
\end{figure}

\subsection{$t$-NIU of \textsc{MiniAddRepNoise}}

\begin{figure}[t]
	\begin{algorithmic}
		\Function{MiniAddRepNoise}{$\mathbf{V}, \rho$}
			\For{$j = 0, \dots, t$}
				\State $X^j \gets V^j + \rho_j$
			\EndFor
			\State \Return{$\mathbf{X}$}
		\EndFunction
	\end{algorithmic}
	\caption{MiniAddRepNoise in conventional pseudocode.}
	\label{fig:miniaddrepnoise}
\end{figure}

\begin{figure}
	\begin{align*}
		&\textcolor{gray}{\{\ \Own (\mathbf{V}, \rho)\ \}} \\
		&\textcolor{gray}{\{ I(0; \mathbf{0})\ \}} \\
		&\mathtt{for}\ ( \mathbf{X} \gets \mathbf{0};\ j \gets 0, \dots, t )\ \{\\
		&\qquad \textcolor{gray}{\{\ I(j; \mathbf{X})\ \}} \\
		&\qquad \tikzmarknode{Frame3Begin}{\textcolor{gray}{\{\ \top\ \}}} \\
		&\qquad \ret{\mathbf{X}[V^j + \rho_j / X^j]}\\
		&\qquad \tikzmarknode{Frame3End}{\textcolor{gray}{\{\ \mathbf{X}'.\ \mathbf{X}' \AlmostSurelyEqual \mathbf{X}[V^j + \rho_j / X^j]\ \}}} \\
		&\qquad \textcolor{gray}{\{\ \mathbf{X}'.\ \mathbf{X}' \AlmostSurelyEqual \mathbf{X}[V^j + \rho_j / X^j] * I(j; \mathbf{X})\ \}} \\
		&\qquad \textcolor{gray}{\{\ \mathbf{X}'.\ I(j + 1; \mathbf{X}')\ \}} \\
		&\} \\
		& \textcolor{gray}{\{\ \mathbf{X}.\ I(t + 1; \mathbf{X})\ \}} \\
		& \textcolor{gray}{\{\ \mathbf{X}.\ \Conditioning_{(v, r) \gets (\mathbf{V} \cap \mathcal{I}, \rho \cap \mathcal{I})} (\Own (\mathbf{V} \setminus \mathcal{I}, \rho \setminus \mathcal{I}) * \Own \mathcal{O})\ \}}
	\end{align*}
	\[ I(j; \mathbf{X}) \quad\coloneqq\quad \Own (\mathbf{V}, \rho, \mathbf{X}) * \bigast_{k < j} X^k \AlmostSurelyEqual V^k + \rho_k \]
	\begin{tikzpicture}[overlay,remember picture]
		\draw[gray, rounded corners] ($(Frame3Begin.north west)+(-0.1,0.1)$) rectangle ($(Frame3End.south east)+(1.1,-0.1)$);
		\node[gray, anchor=north east] at ($(Frame3Begin.north west -| Frame3End.south east) + (1.1, 0.1)$) {\textsc{H-Frame}};
	\end{tikzpicture}
	\caption{Algorithm \textsc{MiniAddRepNoise} in APPL. Gray lines show the proof outline for Theorem~\ref{thm:miniaddrepnoise-ni}. We apply the \textsc{H-For} rule to the loop with the loop invariant $I(j; \mathbf{X})$.}
	\label{fig:miniaddrepnoise-appl}
\end{figure}

We first prove the noninterference property of \textsc{MiniAddRepNoise}~\cite{DelPinoCRYPTO2024}, which is a subroutine used in \textsc{AddRepNoiseER}.
The program in conventional pseudocode is shown in Fig.~\ref{fig:miniaddrepnoise} and its APPL version with proof outline is shown in Fig.~\ref{fig:miniaddrepnoise-appl}.
This algorithm takes a shared input $\mathbf{V} = (V^0, \dots, V^t)$ and an unshared input $\rho = (\rho_0, \dots, \rho_t)$.
The output is $\mathbf{X} = (X^0, \dots, X^t)$, and there are no internal variables.

\begin{theorem}\label{thm:miniaddrepnoise-ni}
	Let $\mathcal{O} \subseteq \mathbf{X}$ be a set of probed variables such that $|\mathcal{O}| \le t$.
	We define $\mathcal{I} = \mathcal{I}^{\mathrm{shared}}_{\mathrm{MARN}}(\mathcal{O}) \cup \mathcal{I}^{\mathrm{unshared}}_{\mathrm{MARN}}(\mathcal{O})$ by
	$\mathcal{I}^{\mathrm{shared}}_{\mathrm{MARN}}(\mathcal{O}) \coloneqq \{ V^j \mid X^j \in \mathcal{O} \}$ and $\mathcal{I}^{\mathrm{unshared}}_{\mathrm{MARN}}(\mathcal{O}) \coloneqq \{ \rho_j \mid X^j \in \mathcal{O} \}$ (here, ``MARN'' stands for MiniAddRepNoise).
	Then, the following Hoare triple is derivable in Lilac, meaning that \textsc{MiniAddRepNoise} is $(\mathcal{I}, \mathcal{O})$-NI.
	\begin{align*}
		&\{ \Own (\mathbf{V}, \rho) \} \\
		&\mathtt{for}\ ( \mathbf{X} \gets \mathbf{0};\ j \gets 0, \dots, t )\ \{\ \ret{\mathbf{X}[V^j + \rho_j / X^j]}\ \} \\
		& \{ \Conditioning_{(v, r) \gets (\mathbf{V} \cap \mathcal{I}, \rho \cap \mathcal{I})} (\Own (\mathbf{V} \setminus \mathcal{I}, \rho \setminus \mathcal{I}) * \Own \mathcal{O}) \}
	\end{align*}
\end{theorem}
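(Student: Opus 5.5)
The derivation follows the proof outline of Fig.~\ref{fig:miniaddrepnoise-appl}. It has two parts: a loop-invariant part using \textsc{H-For} with the invariant $I(j;\mathbf{X})$, and a final entailment from $I(t+1;\mathbf{X})$ to the conditional-independence postcondition.

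\emph{Loop part.} The initial entailment $\Own(\mathbf{V},\rho) \LilacEntails I(0;\mathbf{0})$ holds because the iterated separating conjunction is empty and $\mathbf{0}$ is a constant, so \hyperlink{rule:FV-Own}{\textsc{FV-Own}} gives $\Own(\mathbf{V},\rho,\mathbf{0})$. For the loop body we need $\HoareTriple{I(j;\mathbf{X})}{\ret{\mathbf{X}[V^j+\rho_j/X^j]}}{\mathbf{X}'. I(j+1;\mathbf{X}')}$. First, \textsc{H-Ret} gives $\HoareTriple{\top}{\ret{E}}{\mathbf{X}'. \mathbf{X}'\AlmostSurelyEqual E}$, using that almost-sure equality of an expression with itself is valid. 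Next, \textsc{H-Frame} adds $I(j;\mathbf{X})$, which is legitimate since $\mathbf{X}'$ does not occur in it. Finally, \hyperlink{rule:H-Conseq}{\textsc{H-Conseq}} concludes, using three facts: \hyperlink{rule:Eq-Conj}{\textsc{Eq-Conj}} turns $*$ with the equalities into $\land$; \hyperlink{rule:Eq-Congruence}{\textsc{Eq-Congruence}} and \hyperlink{rule:Transfer-Own}{\textsc{Transfer-Own}} move ownership and the old equalities $X^k\AlmostSurelyEqual V^k+\rho_k$ ($k<j$) from $\mathbf{X}$ to $\mathbf{X}'$, because $X'^k\AlmostSurelyEqual X^k$ for $k\neq j$; and $X'^j\AlmostSurelyEqual V^j+\rho_j$ adds the new conjunct. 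This part is routine bookkeeping.

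\emph{Final entailment.} Write $J=\{j \mid X^j\in\mathcal{O}\}$. From $I(t+1;\mathbf{X})$ we keep only $\Own(\mathbf{V},\rho)$ together with the equalities $X^j\AlmostSurelyEqual V^j+\rho_j$ for $j\in J$. Since $\mathcal{I}$ consists exactly of the $V^j,\rho_j$ with $j\in J$, the ownership splits as $\Own(\mathbf{V}\setminus\mathcal{I},\rho\setminus\mathcal{I},\mathbf{V}\cap\mathcal{I},\rho\cap\mathcal{I})$. We apply \hyperlink{rule:Gen-Weak-Union}{\textsc{Gen-Weak-Union}} with $\phi=\top$ (or \hyperlink{rule:C-Own}{\textsc{C-Own}}), conditioning on $(\mathbf{V}\cap\mathcal{I},\rho\cap\mathcal{I})$; the equalities are carried into the modality through \hyperlink{rule:Eq-Conj}{\textsc{Eq-Conj}}. \hyperlink{rule:C-Subst}{\textsc{C-Subst}}, combined with \hyperlink{rule:C-And-Distributive}{\textsc{C-And-Distributive}}, lets us replace those variables by the deterministic values $(v,r)$ under the modality. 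This yields $\Conditioning_{(v,r)\gets(\mathbf{V}\cap\mathcal{I},\rho\cap\mathcal{I})}\big(\Own(\mathbf{V}\setminus\mathcal{I},\rho\setminus\mathcal{I}) \land \bigwedge_{j\in J} X^j\AlmostSurelyEqual v_j+r_j\big)$. Each $v_j+r_j$ is a constant, so under the modality $X^j$ is almost surely deterministic, and \hyperlink{rule:Eq-Conj}{\textsc{Eq-Conj}} upgrades $\land$ to $*$. It remains to turn these equalities into $\Own\mathcal{O}$: owning a constant requires only the trivial $\sigma$-algebra, so \hyperlink{rule:Transfer-Own}{\textsc{Transfer-Own}} applied to $\Own(v_j+r_j)$ (which holds by \hyperlink{rule:FV-Own}{\textsc{FV-Own}} with empty context) gives $\Own X^j$ on the right-hand side of $*$, and we finish with \hyperlink{rule:C-Entail}{\textsc{C-Entail}}.

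The hardest step is the last one: moving the almost-sure equalities to the right of $*$ while ownership of $\mathbf{V}\setminus\mathcal{I},\rho\setminus\mathcal{I}$ stays on the left. Everything depends on \textsc{Eq-Conj} and \textsc{Transfer-Own} under the conditioning modality with a constant right-hand side. If that route is awkward, the fallback is to condition on all of $\mathcal{I}$ via \textsc{Gen-Weak-Union} with $\phi$ being the equality assertions, which are persistent, and then use the $\Persistent$-rules.
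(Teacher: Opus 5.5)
Your proposal is correct and is essentially the paper's proof. It uses the same invariant $I(j;\mathbf{X})$, handled with \textsc{H-Ret}, \textsc{H-Frame} and \textsc{H-Conseq} for the loop body. For the final entailment it uses the same \textsc{Gen-Weak-Union}, \textsc{C-Subst}, \textsc{C-And-Distributive}, \textsc{Eq-Congruence} and \textsc{Transfer-Own} chain.

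Two points in the final step need correcting or are worth noting.

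First, with $\phi=\top$ in \textsc{Gen-Weak-Union}, \textsc{Eq-Conj} alone cannot bring the equalities under $\Conditioning$, because that rule never mentions the modality. There are two ways to fix this. You can apply \textsc{C-Indep} to $\Own(\mathbf{V}\cap\mathcal{I},\rho\cap\mathcal{I}) * \bigast_{j\in J} X^j\AlmostSurelyEqual V^j+\rho_j$, which follows directly from $I(t+1;\mathbf{X})$, and then merge with \textsc{C-And-Distributive}. Alternatively, as the paper does, take $\phi$ in \textsc{Gen-Weak-Union} to be the equalities themselves, which are already $*$-separated in the invariant. Your parenthetical ``(or \textsc{C-Own})'' does not work. \textsc{C-Own} only places ownership of the conditioned tuple under the modality, so you would lose $\Own(\mathbf{V}\setminus\mathcal{I},\rho\setminus\mathcal{I})$, which the postcondition needs.

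Second, the two routes for splitting off $\Own\mathcal{O}$ differ slightly. The paper derives $\LilacEntails\Persistent\Own(v^j+r_j)$ and then uses \textsc{$\Persistent$-Conj}, \textsc{$\Persistent$-Elim} and \textsc{Transfer-Own}. You instead split with \textsc{Eq-Conj} and then obtain $\Own X^j$ from $X^j\AlmostSurelyEqual v^j+r_j$ on the right factor, using \textsc{FV-Own} with empty context followed by \textsc{Transfer-Own*}. This is the same argument as the paper's auxiliary lemma $X\AlmostSurelyEqual E\LilacEntails\Own X$ for deterministic $E$. Both routes are valid; yours avoids the detour through $\Persistent$.
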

\begin{proofsketch}
	The outline of the proof is shown in Fig.~\ref{fig:miniaddrepnoise-appl}.
	It is straightforward to see that the following postcondition holds after executing the program by considering the loop invariant $I(j; X)$ defined in the figure.
	\[ (\Own \mathbf{V} \land \Own \rho) * \mathbf{X} \AlmostSurelyEqual \mathbf{V} + \rho \]
	We can show that this implies the desired postcondition.
	By \hyperlink{rule:Gen-Weak-Union}{\textsc{Gen-Weak-Union}}, we have the following entailment.
	\begin{align}
		&\Own (\mathbf{V}, \rho) * \mathbf{X} \AlmostSurelyEqual \mathbf{V} + \rho \\
		&\LilacEntails \Conditioning_{(v, r) \gets (\mathbf{V} \cap \mathcal{I}, \rho \cap \mathcal{I})} (\Own (\mathbf{V}, \rho) * \mathbf{X} \AlmostSurelyEqual \mathbf{V} + \rho)
	\end{align}
	By \hyperlink{rule:C-Subst}{\textsc{C-Subst}}, we also have the following entailment.
	\[ \Own (\mathbf{V}, \rho) \LilacEntails \Conditioning_{(v, r) \gets (\mathbf{V} \cap \mathcal{I}, \rho \cap \mathcal{I})} \bigwedge_{X^j \in \mathcal{O}} (v^j \AlmostSurelyEqual V^j \land r_j \AlmostSurelyEqual \rho_j) \]
	Then, it follows that the following entailment holds.
	\begin{align}
		&\Own (\mathbf{V}, \rho) * \mathbf{X} \AlmostSurelyEqual \mathbf{V} + \rho \\
		&\LilacEntails \!\!\!\!\Conditioning_{(v, r) \gets (\mathbf{V} \cap \mathcal{I}, \rho \cap \mathcal{I})} (\Own (\mathbf{V} \setminus \mathcal{I}, \rho \setminus \mathcal{I}) \land \!\!\!\! \bigwedge_{X^j \in \mathcal{O}} X^j \AlmostSurelyEqual v^j + r_j)
	\end{align}
	Since $X^j \AlmostSurelyEqual v^j + r_j$ is deterministic, we have $\Persistent \Own X^j$ for each $X^j \in \mathcal{O}$.
	Thus, we obtain the desired postcondition.
	Details of the proof are given in
	\iflong
	Appendix~\ref{sec:proof-miniaddrepnoise-ni}.
	\else
	the full version~\cite[Appendix~D-A]{arxiv}.
	\fi
\end{proofsketch}
\begin{toappendix}
\subsection{Detailed Proof of Theorem~\ref{thm:miniaddrepnoise-ni}}
\label{sec:proof-miniaddrepnoise-ni}
\end{toappendix}
\begin{appendixproof}[Proof of Theorem~\ref{thm:miniaddrepnoise-ni}]
	The proof outline is shown in Fig.~\ref{fig:miniaddrepnoise-appl}.
	Here, we show non-trivial entailments used in the proof.
	We often use \hyperlink{rule:Eq-Conj}{\textsc{Eq-Conj}} implicitly in the following.
	\begin{itemize}
		\item $\mathbf{X}' \AlmostSurelyEqual \mathbf{X}[V^j + \rho_j / X^j] * I(j; \mathbf{X}) \LilacEntails I(j + 1; \mathbf{X}')$:
		By definition of $I(j; \mathbf{X})$, the premise is equivalent to
		\[ \mathbf{X}' \AlmostSurelyEqual \mathbf{X}[V^j + \rho_j / X^j] * \Own (\mathbf{V}, \rho, \mathbf{X}) * \bigast_{k < j} X^k \AlmostSurelyEqual V^k + \rho_k. \]
		\begin{itemize}
			\item For each $k \neq j$, we have $\Own {X'}^k$ by \hyperlink{rule:Transfer-Own}{\textsc{Transfer-Own}}.
			\[ {X'}^k \AlmostSurelyEqual X^k \land \Own X^j \LilacEntails \Own {X'}^k \]
			\item We have $\Own {X'}^j$ by \hyperlink{rule:Transfer-Own}{\textsc{Transfer-Own}} and \hyperlink{rule:FV-Own}{\textsc{FV-Own}}.
			\begin{align*}
				&\Own V^k \land \Own \rho_j \land {X'}^k \AlmostSurelyEqual V^k + \rho_j \\
				&\LilacEntails \Own (V^k + \rho_j) \land {X'}^k \AlmostSurelyEqual V^k + \rho_j \\
				&\LilacEntails \Own {X'}^k
			\end{align*}
			\item For each $k < j$, we have ${X'}^k \AlmostSurelyEqual V^k + \rho_k$ by transitivity of $\AlmostSurelyEqual$.
			\[ X^k \AlmostSurelyEqual V^k + \rho_k \land {X'}^k \AlmostSurelyEqual X^k \LilacEntails {X'}^k \AlmostSurelyEqual V^k + \rho_k \]
			\item We immediately have ${X'}^j \AlmostSurelyEqual V^j + \rho_j$ from the premise.
		\end{itemize}
		Therefore, we obtain the following entailments.
		\begin{align*}
			&\mathbf{X}' \AlmostSurelyEqual \mathbf{X}[V^j + \rho_j / X^j] * \Own (\mathbf{V}, \rho, \mathbf{X}) * \bigast_{k < j} X^k \AlmostSurelyEqual V^k + \rho_k \\
			&\LilacEntails \Own (\mathbf{V}, \rho) \land \Own \mathbf{X}' \land \bigwedge_{k < j} ({X'}^k \AlmostSurelyEqual V^k + \rho_k) \land {X'}^j \AlmostSurelyEqual V^j + \rho_j \\
			&\LilacEntails I(j + 1; \mathbf{X}')
		\end{align*}
		\item $I(t + 1; \mathbf{X}) \LilacEntails \Conditioning_{(v, r) \gets (\mathbf{V} \cap \mathcal{I}, \rho \cap \mathcal{I})} (\Own (\mathbf{V} \setminus \mathcal{I}, \rho \setminus \mathcal{I}) * \Own \mathcal{O})$:
		\begin{align}
			&\Own (\mathbf{V}, \rho) * \mathbf{X} \AlmostSurelyEqual \mathbf{V} + \rho \\
			&\LilacEntails \Conditioning_{(v, r) \gets (\mathbf{V} \cap \mathcal{I}, \rho \cap \mathcal{I})} (\Own (\mathbf{V}, \rho) * \mathbf{X} \AlmostSurelyEqual \mathbf{V} + \rho) &\text{by \hyperlink{rule:Gen-Weak-Union}{\textsc{Gen-Weak-Union}}} \\
			&\qquad\qquad \land \Conditioning_{(v, r) \gets (\mathbf{V} \cap \mathcal{I}, \rho \cap \mathcal{I})} \bigwedge_{X^j \in \mathcal{O}} (v^j \AlmostSurelyEqual V^j \land r_j \AlmostSurelyEqual \rho_j) &\text{by \hyperlink{rule:C-Subst}{\textsc{C-Subst}}} \\
			&\LilacEntails \Conditioning_{(v, r) \gets (\mathbf{V} \cap \mathcal{I}, \rho \cap \mathcal{I})} ((\Own (\mathbf{V}, \rho) * \mathbf{X} \AlmostSurelyEqual \mathbf{V} + \rho) \land \bigwedge_{X^j \in \mathcal{O}} (v^j \AlmostSurelyEqual V^j \land r_j \AlmostSurelyEqual \rho_j)) & \text{by \hyperlink{rule:C-And-Distributive}{\textsc{C-And-Distributive}}}
		\end{align}
		\begin{itemize}
			\item By \hyperlink{rule:FV-Own}{\textsc{FV-Own}} and \hyperlink{rule:Persistent-Mono}{\textsc{$\Persistent$-Mono}} and \hyperlink{rule:Persistent-True}{\textsc{$\Persistent$-True}}, we have $\LilacEntails \Persistent \Own (v^j + r_j)$ for each $X^j \in \mathcal{O}$.
			\item For each $X^j \in \mathcal{O}$, we also have the following entailment.
			\begin{align*}
				&\Own V^j \land \Own \rho_j \land v^j \AlmostSurelyEqual V^j \land r_j \AlmostSurelyEqual \rho_j \\
				&\LilacEntails \Own (v^j + r_j, V^j + \rho_j) \land v^j \AlmostSurelyEqual V^j \land r_j \AlmostSurelyEqual \rho_j &\text{by \hyperlink{rule:FV-Own}{\textsc{FV-Own}}} \\
				&\LilacEntails v^j + r_j \AlmostSurelyEqual V^j + \rho_j &\text{\hyperlink{rule:Eq-Congruence}{\textsc{Eq-Congruence}}}
			\end{align*}
		\end{itemize}
		Thus, we continue as follows.
		\begin{align*}
			&\Conditioning_{(v, r) \gets (\mathbf{V} \cap \mathcal{I}, \rho \cap \mathcal{I})} ((\Own (\mathbf{V}, \rho) * \mathbf{X} \AlmostSurelyEqual \mathbf{V} + \rho) \land \bigwedge_{X^j \in \mathcal{O}} (v^j \AlmostSurelyEqual V^j \land r_j \AlmostSurelyEqual \rho_j)) \\
			&\LilacEntails \Conditioning_{(v, r) \gets (\mathbf{V} \cap \mathcal{I}, \rho \cap \mathcal{I})} (\Own (\mathbf{V} \setminus \mathcal{I}, \rho \setminus \mathcal{I}) \land \mathbf{X} \AlmostSurelyEqual \mathbf{V} + \rho \land \bigwedge_{X^j \in \mathcal{O}} (\Persistent \Own (v^j + r_j) \land v^j + r_j \AlmostSurelyEqual V^j + \rho_j)) \\
			&\LilacEntails \Conditioning_{(v, r) \gets (\mathbf{V} \cap \mathcal{I}, \rho \cap \mathcal{I})} (\Own (\mathbf{V} \setminus \mathcal{I}, \rho \setminus \mathcal{I}) \land \bigwedge_{X^j \in \mathcal{O}} (\Persistent \Own (v^j + r_j) \land v^j + r_j \AlmostSurelyEqual X^j)) &\text{by transitivity} \\
			&\LilacEntails \Conditioning_{(v, r) \gets (\mathbf{V} \cap \mathcal{I}, \rho \cap \mathcal{I})} (\Own (\mathbf{V} \setminus \mathcal{I}, \rho \setminus \mathcal{I}) * \bigwedge_{X^j \in \mathcal{O}} (\Persistent \Own (v^j + r_j) \land v^j + r_j \AlmostSurelyEqual X^j)) &\text{by \hyperlink{rule:Persistent-Conj}{\textsc{$\Persistent$-Conj}}} \\
			&\LilacEntails \Conditioning_{(v, r) \gets (\mathbf{V} \cap \mathcal{I}, \rho \cap \mathcal{I})} (\Own (\mathbf{V} \setminus \mathcal{I}, \rho \setminus \mathcal{I}) * \bigwedge_{X^j \in \mathcal{O}} (\Own (v^j + r_j) \land v^j + r_j \AlmostSurelyEqual X^j)) &\text{by \hyperlink{rule:Persistent-Elim}{\textsc{$\Persistent$-Elim}}} \\
			&\LilacEntails \Conditioning_{(v, r) \gets (\mathbf{V} \cap \mathcal{I}, \rho \cap \mathcal{I})} (\Own (\mathbf{V} \setminus \mathcal{I}, \rho \setminus \mathcal{I}) * \Own \mathcal{O}) &\text{by \hyperlink{rule:Transfer-Own}{\textsc{Transfer-Own}}}
		\end{align*}
	\end{itemize}
\end{appendixproof}

\begin{remark}
	As explained in Section~\ref{sec:probing-model}, reasoning about noninterference properties in our approach requires exposing internal variables.
	This typically does not undermine the local reasoning principle of separation logic, because the internal variables exposed in earlier steps are not used in subsequent steps.
	Moreover, the frame rule allows us to frame out postconditions concerning those internal variables.
	This is illustrated by the use of \hyperlink{rule:H-Frame}{\textsc{H-Frame}} in Fig.~\ref{fig:miniaddrepnoise-appl}.
\end{remark}

\begin{corollary}
	\textsc{MiniAddRepNoise} is $t$-NIU.
\end{corollary}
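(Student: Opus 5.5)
The corollary follows by unfolding Definition~\ref{def:t-noninterference-with-unshared-input} and checking that the choice of inputs made in Theorem~\ref{thm:miniaddrepnoise-ni} satisfies the required cardinality bounds. \textsc{MiniAddRepNoise} has a single shared input $\mathbf{V} = (V^0, \dots, V^t)$ (so $n = 1$), unshared inputs $\rho = (\rho_0, \dots, \rho_t)$, and output variables $\mathbf{X} = (X^0, \dots, X^t)$. It has no internal variables, so any set of probes is a subset $\mathcal{O} \subseteq \mathbf{X}$.

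Fix an arbitrary $\mathcal{O} \subseteq \mathbf{X}$ with $|\mathcal{O}| \le t$. We take $\mathcal{X}_1 \coloneqq \mathcal{I}^{\mathrm{shared}}_{\mathrm{MARN}}(\mathcal{O}) = \{ V^j \mid X^j \in \mathcal{O} \}$ and $\mathcal{Y} \coloneqq \mathcal{I}^{\mathrm{unshared}}_{\mathrm{MARN}}(\mathcal{O}) = \{ \rho_j \mid X^j \in \mathcal{O} \}$. Both sets are images of $\mathcal{O}$ under the maps $X^j \mapsto V^j$ and $X^j \mapsto \rho_j$, so $|\mathcal{X}_1| \le |\mathcal{O}|$ and $|\mathcal{Y}| \le |\mathcal{O}|$ (indeed with equality).

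Theorem~\ref{thm:miniaddrepnoise-ni} derives in Lilac the Hoare triple
\[ \HoareTriple{\Own (\mathbf{V}, \rho)}{\textsc{MiniAddRepNoise}}{\mathbf{X}. \Conditioning_{(v, r) \gets (\mathbf{V} \cap \mathcal{I}, \rho \cap \mathcal{I})} (\Own (\mathbf{V} \setminus \mathcal{I}, \rho \setminus \mathcal{I}) * \Own \mathcal{O})} \]
with $\mathcal{I} = \mathcal{X}_1 \cup \mathcal{Y}$. This is exactly the triple in Theorem~\ref{thm:noninterference-conditional-independence}, instantiated with input tuple $\Delta = (\mathbf{V}, \rho)$, conditioned inputs $\mathcal{I}$, and probed outputs $\mathcal{O}$. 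That theorem was stated for two inputs and two outputs without loss of generality; the general form is obtained by grouping variables into tuples.

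The only real care needed is that Theorem~\ref{thm:noninterference-conditional-independence} is a semantic statement, whereas Theorem~\ref{thm:miniaddrepnoise-ni} gives a derivation. We therefore appeal to the soundness of the Lilac proof rules (Figs.~\ref{fig:lilac-rules} and~\ref{fig:lilac-hoare-rules}, together with Proposition~\ref{prop:semi-graphoid-axioms}) to conclude that the derived triple holds semantically. The $(\Leftarrow)$-converse direction of Theorem~\ref{thm:noninterference-conditional-independence} is not needed; we use the $(\Rightarrow)$ direction to obtain a simulator $\mathbf{Sim}$ with inputs $\mathcal{I}$ and output $\mathcal{O}$. Hence \textsc{MiniAddRepNoise} is $(\mathcal{X}_1 \cup \mathcal{Y}, \mathcal{O})$-NI.

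Since $\mathcal{O}$ was arbitrary, every admissible set of probes has suitable $\mathcal{X}_1$ and $\mathcal{Y}$, and \textsc{MiniAddRepNoise} is $t$-NIU.
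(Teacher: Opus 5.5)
Your proposal is correct and takes the same route as the paper, which simply calls the corollary immediate from Theorem~\ref{thm:miniaddrepnoise-ni}. You spell out what the paper leaves implicit: the cardinality bounds $|\mathcal{X}_1|, |\mathcal{Y}| \le |\mathcal{O}|$, and the appeal to soundness of the rules combined with the ($\implies$) direction of Theorem~\ref{thm:noninterference-conditional-independence} to obtain the simulator.
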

\begin{proof}
	Immediate from Theorem~\ref{thm:miniaddrepnoise-ni}.
\end{proof}

\subsection{$t$-SNI of \textsc{Refresh}}

\textsc{Refresh} is an algorithm shown in Fig.~\ref{fig:refreshm} and used in masked algorithms to ensure noninterference~\cite{BartheCCS2016,DelPinoCRYPTO2024}.
The input is a shared variable $\mathbf{A} = (A^0, \dots, A^t)$, and the output is $\mathbf{C}_t = (C^0_t, \dots, C^t_t)$ such that $\mathbf{C}_t$ and $\mathbf{A}$ represent the same secret value.
The internal variables are $\{ R_{i, j} \}_{0 \le i < j \le t}$ and $\{ C^i_j \}_{0 \le i \le t, 0 \le j < t}$.

\begin{figure}[t]
	\begin{algorithmic}
		\Function{Refresh}{$\mathbf{A}$}
			\For{$i = 0, \dots, t$}
				\State $C^i_0 \gets A^i$
			\EndFor
			\For{$i = 0, \dots, t$}
				\For{$j = i + 1, \dots, t$}
					\State $R_{i, j} \gets \mathtt{unif}\ \mathbb{K}$
					\State $C^i_j \gets C^i_{j - 1} + R_{i, j}$
					\State $C^j_{i + 1} \gets C^j_i - R_{i, j}$
				\EndFor
			\EndFor
			\State \Return{$\mathbf{C}_t = (C^0_t, \dots, C^t_t)$}
		\EndFunction
	\end{algorithmic}
	\caption{Algorithm \textsc{Refresh} \cite{BartheCCS2016} in conventional pseudocode. Subscripts for $C$ are used to conform to single-assignment style but may be omitted if multiple assignments to $C$ are allowed.}
	\label{fig:refreshm}
\end{figure}

\begin{figure*}[t]
	\begin{align*}
		&\textcolor{gray}{\{\ \Own \mathbf{A}\ \}} \\
		&\tikzmarknode{Frame1Begin}{\textcolor{gray}{\{\ I(0, 1; (\mathbf{A}, \mathbf{0}, \dots, \mathbf{0}), \mathbf{0})\ \}}} \\
		&\mathtt{for}\ (\ (\mathbf{C}', \mathbf{R}') \gets ((\mathbf{A}, \mathbf{0}, \dots, \mathbf{0}), \mathbf{0});\ i \gets 0, \dots, t\ )\ \{ \\
		&\qquad\mathtt{for}\ (\ (\mathbf{C}, \mathbf{R}) \gets (\mathbf{C}', \mathbf{R}');\ j \gets i + 1, \dots, t\ )\ \{ \\
		&\qquad\qquad \textcolor{gray}{\{\ I(i, j; \mathbf{C}, \mathbf{R})\ \}} \\
		&\qquad\qquad \tikzmarknode{Frame2Begin}{\textcolor{gray}{\{\ \top\ \}}} \\
		&\qquad\qquad S \gets \mathtt{unif}\ \mathbb{K}; \\
		&\qquad\qquad \textcolor{gray}{\{\ S.\ S \sim \Uniform{\mathbb{K}}\ \}} \\
		&\qquad\qquad \ret{(\mathbf{C}, \mathbf{R})[ C^i_{j - 1} + S / C^i_j, C^j_i - S / C^j_{i + 1}, S / R_{i, j}]} \\
		&\qquad\qquad \tikzmarknode{Frame2End}{\textcolor{gray}{\{\ (\mathbf{C}'', \mathbf{R}'').\ S \sim \Uniform{\mathbb{K}} * (\mathbf{C}'', \mathbf{R}'') \AlmostSurelyEqual (\mathbf{C}, \mathbf{R})[ C^i_{j - 1} + S / C^i_j, C^j_i - S / C^j_{i + 1}, S / R_{i, j}]\ \}}} \\
		&\qquad\qquad \textcolor{gray}{\{\ (\mathbf{C}'', \mathbf{R}'').\ S \sim \Uniform{\mathbb{K}} * (\mathbf{C}'', \mathbf{R}'') \AlmostSurelyEqual (\mathbf{C}, \mathbf{R})[ C^i_{j - 1} + S / C^i_j, C^j_i - S / C^j_{i + 1}, S / R_{i, j}] * I(i, j; \mathbf{C}, \mathbf{R})\ \}} \\
		&\qquad\qquad \textcolor{gray}{\{\ (\mathbf{C}'', \mathbf{R}'').\ I(i, j + 1; \mathbf{C}'', \mathbf{R}'')\ \}} \\
		&\qquad\} \\
		&\} \\
		&\tikzmarknode{Frame1End}{\textcolor{gray}{\{\ (\mathbf{C}, \mathbf{R}).\ I(t, t + 1; \mathbf{C}, \mathbf{R})\ \}}} \\
		&\textcolor{gray}{\{\ (\mathbf{C}, \mathbf{R}).\ \Own \vec{A} * \bigast_{i < j} (R_{i, j} \sim \Uniform{\mathbb{K}}) * \bigast_{i, j} (C^i_j \AlmostSurelyEqual A^i + f^i_j(R))\ \}} \\
		&\textcolor{gray}{\{\ (\mathbf{C}, \mathbf{R}).\ \Conditioning_{a \gets \mathcal{I}} (\Own (\mathbf{A} \setminus \mathcal{I}) * \Own \mathcal{O})\ \}}
	\end{align*}
	\[ I(i, j; \mathbf{C}, \mathbf{R}) \quad\coloneqq\quad \Own \mathbf{A} * \bigast_{i} C^i_0 \AlmostSurelyEqual A^i * \bigast_{\substack{(i', j') <_{\mathrm{lex}} (i, j) \\ 0 \le i' < j' \le t}} \Big(R_{i', j'} \sim \Uniform{\mathbb{K}} * C^{i'}_{j'} \AlmostSurelyEqual A^{i'} + f^{i'}_{j'}(\mathbf{R}) * C^{j'}_{i' + 1} \AlmostSurelyEqual A^{j'} + f^{j'}_{i' + 1}(\mathbf{R})\Big) \]
	\begin{tikzpicture}[overlay,remember picture]
		\draw[gray, rounded corners] ($(Frame2Begin.north west)+(-0.1,0.1)$) rectangle ($(Frame2End.south east)+(0.1,-0.1)$);
		\node[gray, anchor=north east] at ($(Frame2Begin.north west -| Frame2End.south east) + (0.1, 0.1)$) {\textsc{H-Frame}};
	\end{tikzpicture}
	\caption{Algorithm \textsc{Refresh} in APPL. Here, $\mathbf{R}$ is a shorthand for the collection of random variables $R_{i, j}$, and $\mathbf{C} = (\mathbf{C}_0, \dots, \mathbf{C}_t)$ where $\mathbf{C}_i = (C_i^0, \dots, C_i^t)$ for each $i$. Gray comments show the proof outline for Theorem~\ref{thm:refreshm-ni}. In the invariant, $<_{\mathrm{lex}}$ denotes the lexicographic order on pairs of integers $(i, j)$ such that $i < j$, and $f^i_j(\mathbf{R})$ is a function defined in~\eqref{eq:refreshm-inv-func}.}
	\label{fig:refreshm-appl}
\end{figure*}

\begin{theorem}\label{thm:refreshm-ni}
	Consider \textsc{Refresh} in Fig.~\ref{fig:refreshm}.
	Let $\mathcal{O} \subseteq \{ R_{i, j} \mid i, j \in \{ 0, \dots, t \}, i < j \} \cup \{ C^i_j \mid i, j \in \{ 0, \dots, t \} \}$ be a set of probed variables such that $|\mathcal{O}| \le t$.
	We define $\mathcal{I} = \mathcal{I}_{\mathrm{Refresh}}(\mathcal{O}) \subseteq \{ A^0, \dots, A^t \}$ as follows.
	\[ \mathcal{I}_{\mathrm{Refresh}}(\mathcal{O}) \quad\coloneqq\quad \{ A^i \mid \exists j. R_{i, j} \in \mathcal{O} \lor \exists j < t. C^i_j \in \mathcal{O} \} \]
	Then, the following Hoare triple is derivable in Lilac, meaning that \textsc{Refresh} is $(\mathcal{I}, \mathcal{O})$-NI.
	\[ \HoareTriple{\Own \mathbf{A}}{\text{\textsc{Refresh}}}{(\mathbf{C}, \mathbf{R}). \Conditioning_{a \gets \mathcal{I}} (\Own (\mathbf{A} \setminus \mathcal{I}) * \Own \mathcal{O})} \]
\end{theorem}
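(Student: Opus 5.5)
The plan is to follow the same two-phase structure as the proof of Theorem~\ref{thm:miniaddrepnoise-ni}. The first phase is purely program-logic reasoning. The second phase is an entailment between Lilac formulas, and that is where the security argument lives.

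\emph{Phase 1: the loop invariant.} Apply \textsc{H-For} twice (outer loop over $i$, inner loop over $j$) with the invariant $I(i,j;\mathbf{C},\mathbf{R})$ of Fig.~\ref{fig:refreshm-appl}. Inside the loop body:
\begin{itemize}
\item \textsc{H-Unif} gives $S \sim \Uniform{\mathbb{K}}$;
\item \textsc{H-Ret} gives the almost-sure equations for the updated tuple;
\item \textsc{H-Frame} re-attaches $I(i,j;\mathbf{C},\mathbf{R})$, because $S$ is fresh.
\end{itemize}
Re-establishing $I(i,j+1;\ldots)$ is then routine. We use \textsc{Transfer-Own}, \textsc{FV-Own}, \textsc{Eq-Conj} and transitivity of $\AlmostSurelyEqual$, exactly as for \textsc{MiniAddRepNoise}, together with the definitional identity for $f^i_j(\mathbf{R})$ (a signed sum of the $R$'s already drawn). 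At loop exit we obtain
\[ \Own \mathbf{A} * \bigast_{i<j} (R_{i,j} \sim \Uniform{\mathbb{K}}) * \bigast_{i,j} (C^i_j \AlmostSurelyEqual A^i + f^i_j(\mathbf{R})). \]

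\emph{Phase 2: from the exit formula to the conditional-independence postcondition.}

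First, condition on $\mathcal{I}$. Apply \hyperlink{rule:Gen-Weak-Union}{\textsc{Gen-Weak-Union}} to move everything under $\Conditioning_{a \gets \mathcal{I}}$. Use \hyperlink{rule:C-Subst}{\textsc{C-Subst}} and \hyperlink{rule:C-And-Distributive}{\textsc{C-And-Distributive}} to replace each $A^i \in \mathcal{I}$ by the constant $a^i$. After this step, every probe $R_{i,j}$, and every $C^i_j$ with $j<t$, is a function of $\mathbf{R}$ and the constants $a$ only.

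Second, handle the remaining probes. These are output shares $C^k_t \AlmostSurelyEqual A^k + f^k_t(\mathbf{R})$ with $A^k \notin \mathcal{I}$. For these we need a combinatorial lemma:
\begin{quote}
one can assign to each such probe a distinct random variable $R_{k,m}$ (or $R_{m,k}$) that occurs in $f^k_t$ but in no other probed expression.
\end{quote}
The counting argument is as follows. $C^k_t$ depends on the $t$ variables $R_{\{k,m\}}$ for $m \neq k$. Every other probe can "use up" at most one of these. The candidate check that a probe uses up $R_{\{k,m\}}$ only by involving the index $m$ needs care for intermediate shares $C^m_j$ with $j \ge k$, since those do contain $R_{\{k,m\}}$; but they force $A^m \in \mathcal{I}$ and are themselves probes, so they are still counted. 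Since fewer than $t$ other probes exist, a free $R_{\{k,m\}}$ remains. We pick these greedily, in a Hall-type manner, so that the chosen variables are distinct.

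Third, turn each output probe into a fresh uniform. Use the generalised \textsc{Unif-Bijection} (Lemma~\ref{lem:generalised-bijective-substitution-uniform}). Each such $C^k_t$ is bijective in its chosen $R_{\{k,m\}}$, so it is uniform and separated from $\Own(\mathbf{A}\setminus\mathcal{I})$ together with all the other probes. All remaining probes are functions of $R$'s that are $*$-separated from $\Own \mathbf{A}$. Hence \textsc{FV-Own} and \textsc{Transfer-Own} give $\Own(\mathbf{A}\setminus\mathcal{I}) * \Own \mathcal{O}$ under the modality, and \hyperlink{rule:C-Entail}{\textsc{C-Entail}} lets us conclude.

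The main obstacle is the second step: the combinatorial choice of masking randomness, and packaging it in Lilac. Lilac has no rule for separating a joint collection of functions of independent uniforms. So I would first reorganise the $*$-conjunction of the $R_{i,j}$ into two parts: the reserved variables $R_{\{k,m\}}$, and the rest, bundled with $\Own \mathbf{A}$ as the "$\Delta$" of Lemma~\ref{lem:generalised-bijective-substitution-uniform}. I would then apply that lemma once per output probe, inducting over the number of output probes.
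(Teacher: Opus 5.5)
Your route is the paper's. You use the same invariant and exit formula, then \textsc{Gen-Weak-Union}, \textsc{C-Subst} and \textsc{C-And-Distributive} to work under $\Conditioning_{a \gets \mathcal{I}}$. You reserve one $R$ per probed output share for the generalised \textsc{Unif-Bijection} lemma, and finish with \textsc{FV-Own}/\textsc{Transfer-Own}.

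Your combinatorial condition is sharper than the paper's. The paper only asks for an injective map sending each probed $C^i_t$ to some $R_{\{i,m\}}$ outside the set $\mathbf{S}$ of $R$'s that are probed or occur in a probed intermediate share. That map need not exist when $A^i \in \mathcal{I}$: for $t=3$ and $\mathcal{O} = \{C^0_2, R_{0,3}, C^0_3\}$, every variable of $f^0_3$ lies in $\mathbf{S}$. It also does not stop a reserved variable from occurring in another probed output. For $t=3$ and $\mathcal{O} = \{C^0_3, C^1_3, C^2_3\}$, the injective choice $R_{0,1}, R_{1,2}, R_{0,2}$ makes the joint map singular, since $C^0_3+C^1_3+C^2_3$ involves none of these three variables. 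Restricting to $A^k \notin \mathcal{I}$ and requiring the reserved variable to occur in no other probed expression avoids both problems, and your count is correct. Distinctness is then automatic, because the lists of $C^k_t$ and $C^m_t$ share only $R_{\{k,m\}}$, which occurs in both. The shares $C^k_t$ with $A^k \in \mathcal{I}$, which your enumeration omits, are just $a^k + f^k_t(\mathbf{R})$ and go with the intermediate probes.

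The step that does not go through as written is the final assembly. Let $N$ be the non-reserved variables. You must put $N$ together with $\mathbf{A}$ into the $\Delta$ of the lemma, since $A^k + f^k_t(\mathbf{R})$ mentions $A^k$ and, for $t \ge 2$, variables of $N$. The lemma then returns only $\Own(\mathbf{A}, N) * \bigast_k (A^k + f^k_t(\mathbf{R}) \sim \Uniform{\mathbb{K}})$, where $\mathbf{A}$ and $N$ are no longer separated. The non-output probes are expressions in $N$, so \textsc{FV-Own} and \textsc{Transfer-Own} alone cannot produce $\Own(\mathbf{A} \setminus \mathcal{I}) * \Own \mathcal{O}$ from this. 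Write $U$ for the tuple of uniformised outputs. You hold $U \perp (\mathbf{A}, N)$ and, from before the bundling, $\mathbf{A} \perp N$; you need $\mathbf{A} \perp (N, U)$, which is a contraction step.

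Concretely, conjoin the lemma's output with $\Own \mathbf{A} * \Own N$. \textsc{Gen-Weak-Union} followed by \textsc{C-Entail} turns $\Own U * \Own(N, \mathbf{A})$ into $\Conditioning_{n \gets N}(\Own \mathbf{A} * \Own U)$. \textsc{Contraction} then yields $\Own \mathbf{A} * \Own(N, U)$, and your finish goes through from there. The paper's own proof applies the lemma the same way and silently keeps $\Own(\mathbf{A} \setminus \mathcal{I})$ separate from the remaining $R$'s. So this is a missing bookkeeping step that the paper also skips, not a flaw in your strategy.
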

\begin{proofsketch}
	The outline of the proof is shown in Fig.~\ref{fig:refreshm-appl}.
	It is straightforward to show that the following postcondition holds.
	\[ \Own \vec{A} * \bigast_{i < j} (R_{i, j} \sim \Uniform{\mathbb{K}}) * \bigast_{i, j} (C^i_j \AlmostSurelyEqual A^i + f^i_j(R)) \]
	where $f^i_j(\mathbf{R})$ is a function defined as follows.
	\begin{equation}
		f^i_j(\mathbf{R}) \coloneqq \begin{cases}
			- \sum_{k < j} R_{k, i} & i \ge j \\
			- \sum_{k < i} R_{k, i} + \sum_{i < k \le j} R_{i, k} & i < j
		\end{cases}
		\label{eq:refreshm-inv-func}
	\end{equation}
	We can derive the desired postcondition by translating the proof in~\cite{BartheCCS2016} into Lilac.
	\iflong
	See Appendix~\ref{sec:proof-refreshm-ni} for details.
	\else
	See the full version~\cite[Appendix~D-B]{arxiv} for details.
	\fi
\end{proofsketch}
\begin{toappendix}
\subsection{Detailed Proof of Theorem~\ref{thm:refreshm-ni}}
\label{sec:proof-refreshm-ni}
\end{toappendix}
\begin{appendixproof}[Proof of Theorem~\ref{thm:refreshm-ni}]
	The outline of the proof is shown in Fig.~\ref{fig:refreshm-appl}.
	We show the non-trivial entailments used in the proof.
	We often use \hyperlink{rule:Eq-Conj}{\textsc{Eq-Conj}} implicitly in the following.
	\begin{itemize}
		\item $S \sim \Uniform{\mathbb{K}} * (\mathbf{C}'', \mathbf{R}'') \AlmostSurelyEqual (\mathbf{C}, \mathbf{R})[ C^i_{j - 1} + S / C^i_j, C^j_i - S / C^j_{i + 1}, S / R_{i, j}] * I(i, j; \mathbf{C}, \mathbf{R}) \LilacEntails I(i, j + 1; \mathbf{C}'', \mathbf{R}'')$: by \hyperlink{rule:Transfer-Dist}{\textsc{Transfer-Dist}} and \hyperlink{rule:Eq-Congruence}{\textsc{Eq-Congruence}}.
		Note that $f^i_j$ is defined so that the following holds for each $i, j$ such that $i < j$.
		\[ f^i_j(\mathbf{R}) = f^i_{j - 1}(\mathbf{R}) + R_{i, j}, \qquad f^j_{i + 1}(\mathbf{R}) = f^j_{i}(\mathbf{R}) - R_{i, j} \]
		\item $\Own \vec{A} * \bigast_{i < j} (R_{i, j} \sim \Uniform{\mathbb{K}}) * \bigast_{i, j} (C^i_j \AlmostSurelyEqual A^i + f^i_j(R)) \LilacEntails \Conditioning_{a \gets \mathcal{I}} (\Own (\mathbf{A} \setminus \mathcal{I}) * \Own \mathcal{O})$:
		It suffices to prove the following.
		\begin{equation}
			\Own \mathbf{A} * \bigast_{i < j} (R_{i, j} \sim \Uniform{\mathbb{K}}) \LilacEntails \Conditioning_{a \gets \mathcal{I}} (\Own (\vec{A} \setminus \mathcal{I}) * \phi_{\Own}^{\mathcal{O}})
			\label{eq:refreshm-ni-sufficient-condition}
		\end{equation}
		where
		\begin{align}
			\phi_{\Own}^{\mathcal{O}} &\coloneqq \bigwedge_{R_{i, j} \in \mathcal{O}} \Own R_{i, j} \land \bigwedge_{C^i_j \in \mathcal{O}, j < t} \Own (a^i + f^i_j(R)) \land \bigwedge_{C^i_t \in \mathcal{O}} \Own (A^i + f^i_t(R))
		\end{align}
		because we have the following entailments then.
		\begin{align*}
			&\Own \vec{A} * \bigast_{i < j} (R_{i, j} \sim \Uniform{\mathbb{K}}) * \bigast_{i, j} (C^i_j \AlmostSurelyEqual A^i + f^i_j(R)) \\
			&\LilacEntails \Conditioning_{a \gets \mathcal{I}} (\Own (\vec{A} \setminus \mathcal{I}) * \phi_{\Own}^{\mathcal{O}}) \land \Big(\Own \mathcal{I} * \bigast_{i, j} (C^i_j \AlmostSurelyEqual A^i + f^i_j(R))\Big) \land \Conditioning_{a \gets \mathcal{I}} \bigwedge_{A^i \in \mathcal{I}} (a^i \AlmostSurelyEqual A^i) &\text{by \hyperlink{rule:C-Subst}{\textsc{C-Subst}}} \\
			&\LilacEntails \Conditioning_{a \gets \mathcal{I}} (\Own (\vec{A} \setminus \mathcal{I}) * \phi_{\Own}^{\mathcal{O}}) \land \Conditioning_{a \gets \mathcal{I}} \bigast_{i, j} (C^i_j \AlmostSurelyEqual A^i + f^i_j(R)) \land \Conditioning_{a \gets \mathcal{I}} \bigwedge_{A^i \in \mathcal{I}} (a^i \AlmostSurelyEqual A^i) &\text{by \hyperlink{rule:C-Indep}{\textsc{C-Indep}}} \\
			&\LilacEntails \Conditioning_{a \gets \mathcal{I}} ((\Own (\vec{A} \setminus \mathcal{I}) * \phi_{\Own}^{\mathcal{O}}) \land \bigast_{i, j} (C^i_j \AlmostSurelyEqual A^i + f^i_j(R)) \land \bigwedge_{A^i \in \mathcal{I}} (a^i \AlmostSurelyEqual A^i)) &\text{by \hyperlink{rule:C-And-Distributive}{\textsc{C-And-Distributive}}} \\
			&\LilacEntails \Conditioning_{a \gets \mathcal{I}} (\Own (\vec{A} \setminus \mathcal{I}) * \Big(\bigwedge_{R_{i, j} \in \mathcal{O}} \Own R_{i, j} \land \bigwedge_{C^i_j \in \mathcal{O}, j < t} (\Own (a^i + f^i_j(R)) \land C^i_j \AlmostSurelyEqual a^i + f^i_j(R)) \\
			&\qquad\qquad\land \bigwedge_{C^i_t \in \mathcal{O}} (\Own (A^i + f^i_t(R)) \land C^i_j \AlmostSurelyEqual A^i + f^i_j(R)) \Big)) \\
			&\LilacEntails \Conditioning_{a \gets \mathcal{I}} (\Own (\vec{A} \setminus \mathcal{I}) * \Own \mathcal{O}) &\text{by \hyperlink{rule:Transfer-Own}{\textsc{Transfer-Own}}}
		\end{align*}
		Now, we prove~\eqref{eq:refreshm-ni-sufficient-condition}.
		Observe that free variables in $f^i_j(R)$ are the first $j$ elements of
		\begin{equation}
			R_{0, i},\ R_{1, i},\ \dots,\ R_{i - 1, i},\ R_{i, i + 1},\ R_{i, i + 2},\ \dots,\ R_{i, t}
			\label{eq:refreshm-free-variables}
		\end{equation}
		Let $\vec{S} \subseteq \vec{R}$ be the set of random variables defined as follows.
		\[ \vec{S} \coloneqq \{ R_{i, j} \mid R_{i, j} \in \mathcal{O} \lor \text{$R_{i, j}$ is a free variable in $f^i_j(\vec{R})$ for some $C^i_j \in \mathcal{O}$ such that $j < t$} \} \]
		Then, the intersection of $\vec{S}$ and \eqref{eq:refreshm-free-variables} is a subset of $\{ R_{i, j} \mid i < j \land A^j \in \mathcal{I} \} \cup \{ R_{j, i} \mid j < i \land A^j \in \mathcal{I} \}$, which has at most $|\mathcal{I}|$ elements.
		Since $|\mathcal{I}| + |\{ C^i_j \in \mathcal{O} \mid j = t \}| \le |\mathcal{O}| \le t$, there exists an injective mapping from $C^i_t \in \mathcal{O}$ to $R_{j, k} \in \vec{R} \setminus \vec{S}$ such that $j = i$ or $k = i$ (i.e., random variables in~\eqref{eq:refreshm-free-variables} excluding those in $\vec{S}$).
		We write such $R_{j, k}$ as $R[C^i_t]$.
		Then, for each $C^i_t \in \mathcal{O}$, we can see that $A^i + f^i_t(R)$ is a bijective function of $R[C^i_t]$ when other variables are fixed.
		Thus, we have the following entailments.
		\begin{align}
			&\Own \mathbf{A} * \bigast_{i < j} (R_{i, j} \sim \Uniform{\mathbb{K}}) \\
			&\LilacEntails \Conditioning_{a \gets \mathcal{I}} \Bigg( \Own \vec{A} * \bigast_{i < j} (R_{i, j} \sim \Uniform{\mathbb{K}}) \Bigg) &\text{by \hyperlink{rule:Gen-Weak-Union}{\textsc{Gen-Weak-Union}}} \\
			&\LilacEntails \Conditioning_{a \gets \mathcal{I}} \Bigg( \Own (\vec{A} \setminus \mathcal{I}) * \bigast_{C^i_t \in \mathcal{O}} (R[C^i_t] \sim \Uniform{\mathbb{K}}) * \bigast_{R_{i, j} \in \vec{S}} (R_{i, j} \sim \Uniform{\mathbb{K}}) \Bigg) \\
			&\LilacEntails \Conditioning_{a \gets \mathcal{I}} \Bigg( \Own (\vec{A} \setminus \mathcal{I}) * \bigast_{C^i_t \in \mathcal{O}} (A^i + f^i_t(R) \sim \Uniform{\mathbb{K}}) * \bigast_{R_{i, j} \in \vec{S}} (R_{i, j} \sim \Uniform{\mathbb{K}}) \Bigg) &\text{by Lemma~\ref{lem:generalised-bijective-substitution-uniform}}
		\end{align}
		Moreover, since expressions in $\bigwedge_{R_{i, j} \in \mathcal{O}} \Own R_{i, j} \land \bigwedge_{C^i_j \in \mathcal{O}, j < t} \Own (a^i + f^i_j(R))$ contain only random variables in $\vec{S}$ and deterministic variables, we have the following entailments.
		\begin{align}
			&\bigast_{R_{i, j} \in \vec{S}} (R_{i, j} \sim \Uniform{\mathbb{K}}) \\
			&\LilacEntails \Own \vec{S} &\text{by \hyperlink{rule:Dist-Own}{\textsc{Dist-Own}}} \\
			&\LilacEntails \bigwedge_{R_{i, j} \in \mathcal{O}} \Own R_{i, j} \land \bigwedge_{C^i_j \in \mathcal{O}, j < t} \Own (a^i + f^i_j(R)) &\text{by \hyperlink{rule:FV-Own}{\textsc{FV-Own}}}
			&\qedhere
		\end{align}
	\end{itemize}
\end{appendixproof}

\begin{corollary}
	\textsc{Refresh} is $t$-SNI.
\end{corollary}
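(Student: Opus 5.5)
We derive the corollary directly from Theorem~\ref{thm:refreshm-ni} and Definition~\ref{def:t-strong-noninterference}. Fix a set of probes $\mathcal{O}$ with $|\mathcal{O}| \le t$. We take the single shared input to be $\mathbf{A}$ and set $\mathcal{X}_1 \coloneqq \mathcal{I}_{\mathrm{Refresh}}(\mathcal{O})$. The variables exposed as outputs of the APPL version of \textsc{Refresh} fall into two groups. The final outputs $C^i_t$ are passed on to subsequent programs. The internal variables $R_{i,j}$ and $C^i_j$ with $j < t$ play the role of $\mathbf{O}_{\mathrm{int}}$. By Theorem~\ref{thm:refreshm-ni}, the Hoare triple holds in Lilac for this choice of $\mathcal{I}$. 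By Theorem~\ref{thm:noninterference-conditional-independence}, in its multi-variable form stated at the start of Section~\ref{sec:main-result}, this triple is equivalent to \textsc{Refresh} being $(\mathcal{I}, \mathcal{O})$-NI.

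Two conditions of Definition~\ref{def:t-strong-noninterference} remain to be checked. The first, $\mathcal{X}_1 \subseteq \mathbf{A}$, holds by construction. The second is the cardinality bound $|\mathcal{I}_{\mathrm{Refresh}}(\mathcal{O})| \le |\mathcal{O} \cap \mathbf{O}_{\mathrm{int}}|$. We prove it with a counting argument on the definition of $\mathcal{I}_{\mathrm{Refresh}}$. Every $A^i \in \mathcal{I}$ is witnessed by some probe in $\mathcal{O}$, namely either some $R_{i,j}$ or some $C^i_j$ with $j < t$. Both kinds of witness lie in $\mathbf{O}_{\mathrm{int}}$. Each such probe witnesses exactly one index $i$: for $C^i_j$ the index is the superscript, and for $R_{i,j}$ it is the first subscript, as in the definition. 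So choosing one witness for each $A^i \in \mathcal{I}$ gives an injection from $\mathcal{I}$ into $\mathcal{O} \cap \mathbf{O}_{\mathrm{int}}$, and the bound follows. Output probes $C^i_t$ contribute nothing to $\mathcal{I}$, which is exactly what makes the bound strong rather than just $|\mathcal{I}| \le |\mathcal{O}|$.

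The only real obstacle is matching conventions. We must check that the set $\mathbf{O}_{\mathrm{int}}$ in the definition of $t$-SNI coincides with the exposed internal variables of Fig.~\ref{fig:refreshm-appl}. In particular, the $C^i_0$ are copies of the inputs $A^i$ and are treated as internal. This is consistent, because probing $C^i_0$ puts $A^i$ into $\mathcal{I}$ and is counted as an internal probe. Once this identification is fixed, the corollary follows immediately from Theorem~\ref{thm:refreshm-ni}.
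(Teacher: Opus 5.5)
Your proposal is correct and takes the same route as the paper's one-line proof: apply Theorem~\ref{thm:refreshm-ni} with $\mathcal{I} = \mathcal{I}_{\mathrm{Refresh}}(\mathcal{O})$ and observe that $|\mathcal{I}| \le |\mathcal{O} \cap \mathbf{O}_{\mathrm{int}}|$. Your explicit injection spells out the counting step the paper leaves implicit: each probe $R_{i,j}$, or $C^i_j$ with $j<t$, witnesses only the index $i$, and output probes $C^i_t$ witness nothing.
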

\begin{proof}
	By Theorem~\ref{thm:refreshm-ni}, for each set of probes $\mathcal{O}$ with $|\mathcal{O}| \le t$, there exists a set of inputs $\mathcal{I}$ with $|\mathcal{I}|$ smaller than or equal to internal probes in $\mathcal{O}$ such that \textsc{Refresh} is $(\mathcal{I}, \mathcal{O})$-NI.
\end{proof}

\subsection{$t$-SNIU of \textsc{AddRepNoise} with Explicit Randomness}
\textsc{AddRepNoiseER} is shown in Fig.~\ref{fig:addrepnoiseer}.
Here, we write $\RangeOneTo{l} = \{ 1, 2, \dots, l \}$.
The shared input is $\mathbf{V}_i = (V^0_i, \dots, V^t_i)$ for $i = 1, \dots, n$, and the unshared input is $\rho = \{\rho_{i, i_{\mathrm{rep}}}\}_{i \in \RangeOneTo{l}, i_{\mathrm{rep}} \in \RangeOneTo{r}}$ where $\rho_{i, i_{\mathrm{rep}}} = \{\rho_{i, i_{\mathrm{rep}}, j}\}_{j \in \{ 0, \dots, t \}}$ and $\rho_{i, i_{\mathrm{rep}}, j} : \mathbb{K}$.
For each $i$ and $i_{\mathrm{rep}}$, $\mathbf{X}_{i, i_{\mathrm{rep}}} = (X^0_{i, i_{\mathrm{rep}}}, \dots, X^t_{i, i_{\mathrm{rep}}})$ and $\mathbf{Y}_{i, i_{\mathrm{rep}}} = (Y^0_{i, i_{\mathrm{rep}}}, \dots, Y^t_{i, i_{\mathrm{rep}}})$ are used as internal or output variables.

We can prove that \textsc{AddRepNoiseER} in Fig.~\ref{fig:addrepnoiseer} is $t$-SNIU by using the composability of NI discussed in Section~\ref{sec:additional-rules}.

\begin{theorem}
	Let $\mathcal{O}$ be a set of probed variables such that $|\mathcal{O}| \le t$.
	We write $\mathcal{O}^{\mathrm{Refresh}}_{i, i_{\mathrm{rep}}}$ and $\mathcal{O}^{\mathrm{MARN}}_{i, i_{\mathrm{rep}}}$ for the subsets of $\mathcal{O}$ consisting of probed variables in the execution of $\textsc{Refresh}$ and $\textsc{MiniAddRepNoise}$ in the $(i, i_{\mathrm{rep}})$-th iteration, respectively.
	For each $i$ and $i_{\mathrm{rep}}$, we define $\mathcal{J}_{i, i_{\mathrm{rep}}} \subseteq \mathbf{Y}_{i, i_{\mathrm{rep}}}$, $\mathcal{I}^{\mathrm{shared}}_{i, i_{\mathrm{rep}}} \subseteq \mathbf{X}_{i, i_{\mathrm{rep}} - 1}$, and $\mathcal{I}^{\mathrm{unshared}}_{i, i_{\mathrm{rep}}} \subseteq \rho_{i, i_{\mathrm{rep}}}$ as follows.
	\begin{align*}
		\mathcal{J}_{i, i_{\mathrm{rep}}} &\quad\coloneqq\quad \mathcal{I}_{\mathrm{Refresh}}(\mathcal{I}^{\mathrm{shared}}_{i, i_{\mathrm{rep}} + 1} \cup \mathcal{O}^{\mathrm{Refresh}}_{i, i_{\mathrm{rep}}}) \\
		\mathcal{I}^{\mathrm{shared}}_{i, i_{\mathrm{rep}}} &\quad\coloneqq\quad \mathcal{I}^{\mathrm{shared}}_{\mathrm{MARN}}(\mathcal{J}_{i, i_{\mathrm{rep}}} \cup \mathcal{O}^{\mathrm{MARN}}_{i, i_{\mathrm{rep}}}) \\
		\mathcal{I}^{\mathrm{unshared}}_{i, i_{\mathrm{rep}}} &\quad\coloneqq\quad \mathcal{I}^{\mathrm{shared}}_{\mathrm{MARN}}(\mathcal{J}_{i, i_{\mathrm{rep}}} \cup \mathcal{O}^{\mathrm{MARN}}_{i, i_{\mathrm{rep}}})
	\end{align*}
	Here, we define $\mathcal{I}^{\mathrm{shared}}_{i, r + 1} \coloneqq \emptyset$.
	Let $\mathcal{I}$ be defined as follows.
	\[ \mathcal{I} \quad\coloneqq\quad \{ V^k_i \mid X^k_{i, 0} \in \mathcal{I}^{\mathrm{shared}}_{i, 0} \} \cup \bigcup_{i, i_{\mathrm{rep}}} \mathcal{I}^{\mathrm{unshared}}_{i, i_{\mathrm{rep}}} \]
	Then, \textsc{AddRepNoiseER} is $(\mathcal{I}, \mathcal{O})$-NI.
\end{theorem}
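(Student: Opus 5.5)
The proof is a backward induction over the iterations, assembled from the NI facts already established for the two subroutines. It uses the composability rules of Section~\ref{sec:additional-rules}: sequential composition, the for-loop rule, weakening and monotonicity. Theorem~\ref{thm:noninterference-conditional-independence} lets us move freely between the Hoare-triple form and the simulator form.

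The probes of the $(i, i_{\mathrm{rep}})$-th iteration split into those inside \textsc{MiniAddRepNoise} and those inside \textsc{Refresh}. The probes needed of the next iteration's input $\mathbf{X}_{i, i_{\mathrm{rep}}}$ are $\mathcal{I}^{\mathrm{shared}}_{i, i_{\mathrm{rep}}+1}$. Theorem~\ref{thm:refreshm-ni}, applied with output set $\mathcal{I}^{\mathrm{shared}}_{i, i_{\mathrm{rep}}+1} \cup \mathcal{O}^{\mathrm{Refresh}}_{i, i_{\mathrm{rep}}}$, shows that \textsc{Refresh} is $(\mathcal{J}_{i,i_{\mathrm{rep}}}, \mathcal{I}^{\mathrm{shared}}_{i, i_{\mathrm{rep}}+1} \cup \mathcal{O}^{\mathrm{Refresh}}_{i, i_{\mathrm{rep}}})$-NI. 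The cardinality hypothesis $\le t$ holds because, as in the corollary for \textsc{Refresh}, $|\mathcal{I}_{\mathrm{Refresh}}(\cdot)|$ is bounded by the number of internal probes only. This is where SNI of \textsc{Refresh} is essential: the sizes of the $\mathcal{I}^{\mathrm{shared}}$ sets do not accumulate across iterations, and each stays bounded by the probes inside that iteration. Theorem~\ref{thm:miniaddrepnoise-ni}, applied with output set $\mathcal{J}_{i,i_{\mathrm{rep}}} \cup \mathcal{O}^{\mathrm{MARN}}_{i,i_{\mathrm{rep}}}$, then shows that \textsc{MiniAddRepNoise} is $(\mathcal{I}^{\mathrm{shared}}_{i,i_{\mathrm{rep}}} \cup \mathcal{I}^{\mathrm{unshared}}_{i,i_{\mathrm{rep}}}, \mathcal{J}_{i,i_{\mathrm{rep}}} \cup \mathcal{O}^{\mathrm{MARN}}_{i,i_{\mathrm{rep}}})$-NI. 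I read the definition of $\mathcal{I}^{\mathrm{unshared}}$ as $\mathcal{I}^{\mathrm{unshared}}_{\mathrm{MARN}}(\cdot)$.

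Next we compose the two calls into a single iteration body. We apply weakening to both calls so that each also carries the remaining inputs: the other shared variables $\mathbf{V}_{i'}$ and $\mathbf{X}_{i'}$, all of $\rho$, and the exposed internal variables. We then chain them with the sequential composition rule. Its premise $\mathcal{O}_M \supseteq \mathcal{I}_N \cap \mathbf{X}$ holds exactly because $\mathcal{J}_{i,i_{\mathrm{rep}}}$ is included among the probed outputs of \textsc{MiniAddRepNoise}. The result is that the body is $(\mathcal{I}^{\mathrm{shared}}_{i,i_{\mathrm{rep}}} \cup \mathcal{I}^{\mathrm{unshared}}_{i,i_{\mathrm{rep}}} \cup \mathcal{Z},\ \mathcal{I}^{\mathrm{shared}}_{i,i_{\mathrm{rep}}+1} \cup \mathcal{O}_{i,i_{\mathrm{rep}}} \cup \mathcal{Z})$-NI, where $\mathcal{Z}$ collects the pass-through information. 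For the pass-through we use $\mathcal{Z} \supseteq \bigcup \mathcal{I}^{\mathrm{unshared}}$, the probes already made, and the $\mathcal{I}^{\mathrm{shared}}$ sets of the other $i'$.

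With this body in hand, we apply the for-loop composability rule twice, first to the inner loop over $i_{\mathrm{rep}}$ and then to the outer loop over $i$. The index sets $I_{i_{\mathrm{rep}}}$ are the positions of $\mathcal{I}^{\mathrm{shared}}_{i,i_{\mathrm{rep}}}$, and the fixed $\mathcal{I}$ in the rule is $\bigcup \mathcal{I}^{\mathrm{unshared}}$. The boundary condition $\mathcal{I}^{\mathrm{shared}}_{i,r+1} = \emptyset$ closes the loop at its end. At the start, the initialisation $\mathbf{X}_{i,0} \gets \mathbf{V}_i$ gives $\Own \mathbf{Y}|_{n_s} \LilacEntails \Own \mathbf{E}|_{n_s}$, where $\mathbf{E}|_{n_s} = \{ V^k_i \mid X^k_{i,0} \in \mathcal{I}^{\mathrm{shared}}_{i,0}\}$. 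This yields the NI of the whole program with respect to $\mathcal{I}$ as defined in the statement. A final application of monotonicity then discards the auxiliary outputs and leaves exactly $\mathcal{O}$.

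The main obstacle is the bookkeeping. Probes accumulated in earlier iterations must remain owned, so the loop invariant has to carry them as part of the output set. Weakening handles this, because it keeps $\mathcal{Z}$ on both the conditioned side and the owned side. Compared with this, checking the premise of the for-loop rule for each iteration is routine. Alternatively, everything can be discharged at once by building a simulator iteration by iteration and appealing to Theorem~\ref{thm:noninterference-conditional-independence}. I would use that route as a fallback if the derivation of the loop rule becomes unwieldy.
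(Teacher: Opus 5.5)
Your proposal is correct and follows the paper's proof, which simply composes Theorems~\ref{thm:refreshm-ni} and~\ref{thm:miniaddrepnoise-ni} via the sequential-composition and for-loop composability rules; you also spell out details the paper leaves implicit, namely the weakening/monotonicity bookkeeping and the cardinality check that uses the SNI bound of \textsc{Refresh}. Two bookkeeping details need fixing. First, the loop index sets $I_{i_{\mathrm{rep}}}$ must also contain the slots of probes from earlier iterations, as your last paragraph hints; otherwise $I_{r+1}=\emptyset$ and the rule yields nothing. Second, the fixed input set of the inner-loop rule must contain the shares $V^k_i$ with $X^k_{i,0}\in\mathcal{I}^{\mathrm{shared}}_{i,1}$ (index $1$, not the statement's $0$), and not only $\bigcup\mathcal{I}^{\mathrm{unshared}}$, so that premise (ii) on the initial values holds.
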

\begin{proof}
	By using composability for sequential composition and for-loop with Theorem~\ref{thm:refreshm-ni} and~\ref{thm:miniaddrepnoise-ni}.
\end{proof}

\begin{corollary}
	\textsc{AddRepNoiseER} is $t$-SNIU.
	\qed
\end{corollary}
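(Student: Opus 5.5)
The corollary follows by instantiating the preceding theorem for every admissible probe set $\mathcal{O}$ with $|\mathcal{O}| \le t$, and then checking the size constraints in the SNIU variant of Definition~\ref{def:t-noninterference-with-unshared-input}. The theorem already yields a set $\mathcal{I}$ such that \textsc{AddRepNoiseER} is $(\mathcal{I}, \mathcal{O})$-NI. It therefore remains to show two bounds:
\begin{itemize}
	\item for each $i$, the number of shares of $\mathbf{V}_i$ in $\mathcal{I}$ is at most the number of \emph{internal} probes in $\mathcal{O}$;
	\item the number of unshared inputs $\rho_{i, i_{\mathrm{rep}}, j}$ in $\mathcal{I}$ is at most $|\mathcal{O}|$.
\end{itemize}
Equivalently, this is the counting argument of \cite[Lemma~6.1]{DelPinoCRYPTO2024}, carried out on the sets defined in the theorem.

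The plan is to establish cardinality invariants by backward induction on $i_{\mathrm{rep}}$ from $r$ down to $1$, for each fixed $i$.

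\textbf{Refresh step.} Every probe of \textsc{Refresh} is internal, except the outputs $C^k_t$ of the last repetition, which are the outputs of \textsc{AddRepNoiseER}. By the corollary that \textsc{Refresh} is $t$-SNI, in the explicit form $\mathcal{I}_{\mathrm{Refresh}}$ of Theorem~\ref{thm:refreshm-ni}, we have
\[ |\mathcal{J}_{i, i_{\mathrm{rep}}}| \le |\mathcal{O}^{\mathrm{Refresh}}_{i, i_{\mathrm{rep}}} \cap \mathbf{O}_{\mathrm{int}}| + [\text{contribution of } \mathcal{I}^{\mathrm{shared}}_{i, i_{\mathrm{rep}}+1}]. \]
Because $\mathcal{I}_{\mathrm{Refresh}}$ only counts $R$'s and non-final $C$'s, the shares $\mathcal{I}^{\mathrm{shared}}_{i, i_{\mathrm{rep}}+1} \subseteq \mathbf{X}_{i, i_{\mathrm{rep}}}$ requested by the next iteration are output shares $C^k_t$ of this \textsc{Refresh}. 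They therefore contribute nothing, and only the internal probes of this \textsc{Refresh} count. The condition $|\cdot| \le t$ needed to apply Theorem~\ref{thm:refreshm-ni} holds because the relevant set has size at most $|\mathcal{O}|$.

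\textbf{MiniAddRepNoise step.} From the definition in Theorem~\ref{thm:miniaddrepnoise-ni},
\[ |\mathcal{I}^{\mathrm{shared}}_{i, i_{\mathrm{rep}}}| \le |\mathcal{J}_{i, i_{\mathrm{rep}}}| + |\mathcal{O}^{\mathrm{MARN}}_{i, i_{\mathrm{rep}}}|, \]
and the same bound holds for $\mathcal{I}^{\mathrm{unshared}}_{i, i_{\mathrm{rep}}}$. Since \textsc{MiniAddRepNoise} outputs $\mathbf{Y}$ feed into \textsc{Refresh}, they are internal probes of the whole algorithm.

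\textbf{Conclusion.} Combining the two steps, the shares of $\mathbf{V}_i$ required equal $\mathcal{I}^{\mathrm{shared}}_{i,1}$, whose size is bounded by the internal probes in the first iteration only. In particular it is bounded by $|\mathcal{O} \cap \mathbf{O}_{\mathrm{int}}|$, since the chain is cut at each \textsc{Refresh}. Summing the unshared bounds over $(i, i_{\mathrm{rep}})$ gives at most the total number of probes located in each iteration, and these probe sets are disjoint. The total is therefore at most $|\mathcal{O}|$.

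\textbf{Main obstacle.} The delicate point is the Refresh step: verifying that output shares of \textsc{Refresh} requested by the next iteration add nothing to $\mathcal{I}_{\mathrm{Refresh}}$, which is exactly the SNI property. A second subtlety is to make sure that $\mathcal{J}_{i, i_{\mathrm{rep}}}$ does not get double-counted in the unshared sum. I would first try to bound the unshared total by charging each $\rho_{i, i_{\mathrm{rep}}, j}$ either to a \textsc{MiniAddRepNoise} probe or to an internal \textsc{Refresh} probe in the same iteration. These charges are injective because the sets for different iterations are disjoint.
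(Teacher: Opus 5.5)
Your proposal is correct and follows the route the paper intends: the corollary comes from instantiating the preceding $(\mathcal{I},\mathcal{O})$-NI theorem and checking the cardinality bounds as in \cite[Lemma~6.1]{DelPinoCRYPTO2024}. You also isolate the right key point, namely that $\mathcal{I}^{\mathrm{shared}}_{i,i_{\mathrm{rep}}+1}$ consists of output shares $C^k_t$ of the preceding \textsc{Refresh}, so these shares contribute nothing to $\mathcal{I}_{\mathrm{Refresh}}$ and each iteration's requirements are charged only to probes of that same iteration (and you correctly read the shares of $\mathbf{V}_i$ as $\mathcal{I}^{\mathrm{shared}}_{i,1}$).

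Note also that every probe you charge the unshared inputs to is internal to \textsc{AddRepNoiseER}, so your argument in fact gives $\sum_{i,i_{\mathrm{rep}}} |\mathcal{I}^{\mathrm{unshared}}_{i,i_{\mathrm{rep}}}| \le |\mathcal{O} \cap \mathbf{O}_{\mathrm{int}}|$, which also covers the stronger reading of SNIU.
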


\subsection{$t$-SNI of \textsc{SecMult}}

The multiplication algorithm~\cite{IshaiCRYPTO2003} is another basic building block for masked algorithms.
In Fig.~\ref{fig:sec-mult}, \textsc{SecMult} takes two shared inputs $\mathbf{A} = (A^0, \dots, A^t)$ and $\mathbf{B} = (B^0, \dots, B^t)$ and returns a shared output $\mathbf{C}_t = (C^0_t, \dots, C^t_t)$ such that $\mathbf{C}_t$ represents the product of $\mathbf{A}$ and $\mathbf{B}$, i.e., $\sum_{i = 0}^t C^i_t = (\sum_{i = 0}^t A^i) \times (\sum_{i = 0}^t B^i)$.
It is proved in~\cite{BartheCCS2016} that \textsc{SecMult} is $t$-SNI.
We show that by Theorem~\ref{thm:noninterference-conditional-independence}, this result can be proved in Lilac as well.

\begin{figure}[t]
	\begin{algorithmic}
		\Function{SecMult}{$\mathbf{A}, \mathbf{B}$}
			\For{$i = 0, \dots, t$}
				\For{$j = 0, \dots, t$}
					\State $P_{i, j} \gets A^i \times B^j$
				\EndFor
			\EndFor
			\State $(C^0_0, \dots, C^t_0) \gets (P_{0, 0}, \dots, P_{t, t})$
			\For{$i = 0, \dots, t$}
				\For{$j = i + 1, \dots, t$}
					\State $Q_{i, j} \gets \mathtt{unif}\ \mathbb{K}$
					\State $C^i_j \gets C^i_{j - 1} - Q_{i, j}$
					\State $R_{i, j} \gets Q_{i, j} + P_{i, j}$
					\State $S_{i, j} \gets R_{i, j} + P_{j, i}$
					\State $C^j_{i + 1} \gets C^j_i + S_{i, j}$
				\EndFor
			\EndFor
			\State \Return{$\mathbf{C}_t = (C^0_t, \dots, C^t_t)$}
		\EndFunction
	\end{algorithmic}
	\caption{Algorithm \textsc{SecMult} in conventional pseudocode. Unlike the presentation in~\cite{BartheCCS2016}, our version computes $P_{i, j} \gets A^i \times B^j$ for all $i, j$ in advance, since we need to assign a distinct variable name to each $A^i \times B^j$ for the proof.}
	\label{fig:sec-mult}
\end{figure}

\begin{figure*}
	\begin{align*}
		&\textcolor{gray}{\{\ \Own (\mathbf{A}, \mathbf{B})\ \}} \\
		&\mathbf{P} \gets \mathtt{for}\ ( \mathbf{P} \gets \mathbf{0};\ i \gets 0, \dots, t )\ \{\\
		&\qquad \mathtt{for}\ ( \mathbf{P}' \gets \mathbf{P};\ j \gets 0, \dots, t )\ \{\ \ret{\mathbf{P}'[A^i \times B^j / P'_{i, j}]} \ \} \\
		&\}; \\
		&\textcolor{gray}{\{\ \mathbf{P}. \Own (\mathbf{A}, \mathbf{B}) * \bigast_{i, j} P_{i, j} \AlmostSurelyEqual A^i \times B^j \ \}} \\
		&\textcolor{gray}{\{\ \mathbf{P}. \Own (\mathbf{A}, \mathbf{B}) * \bigast_{i, j} P_{i, j} \AlmostSurelyEqual A^i \times B^j * I(0, 1; ((P_{i, i})_{i = 0}^{t}, \mathbf{0}, \dots, \mathbf{0}), \mathbf{0}, \mathbf{0}, \mathbf{0}) \ \}} \\
		&(\mathbf{C}, \mathbf{Q}, \mathbf{R}, \mathbf{S}) \gets \mathtt{for}\ ( (\mathbf{C}, \mathbf{Q}, \mathbf{R}, \mathbf{S}) \gets (((P_{i, i})_{i = 0}^{t}, \mathbf{0}, \dots, \mathbf{0}), \mathbf{0}, \mathbf{0}, \mathbf{0});\ i \gets 0, \dots, t )\ \{\\
		&\qquad \mathtt{for}\ ( (\mathbf{C}', \mathbf{Q}', \mathbf{R}', \mathbf{S}') \gets (\mathbf{C}, \mathbf{Q}, \mathbf{R}, \mathbf{S});\ j \gets i + 1, \dots, t )\ \{\\
		&\qquad\qquad\textcolor{gray}{\{\ \Own (\mathbf{A}, \mathbf{B}) * \bigast_{i, j} P_{i, j} \AlmostSurelyEqual A^i \times B^j * I(i, j; \mathbf{C}', \mathbf{Q}', \mathbf{R}', \mathbf{S}') \ \}} \\
		&\qquad\qquad Q''_{i, j} \gets \Uniform{\mathbb{K}}; \\
		&\qquad\qquad {C''}^i_j \gets \ret{({C'}^i_{j - 1} - Q''_{i, j})};
		\ R''_{i, j} \gets \ret{(Q''_{i, j} + P_{i, j})};
		\ S''_{i, j} \gets \ret{(R''_{i, j} + P_{j, i})};
		\ {C''}^j_{i + 1} \gets \ret{({C'}^j_{i} + S''_{j, i})};\\
		&\qquad\qquad \ret{(\mathbf{C}', \mathbf{Q}', \mathbf{R}', \mathbf{S}')[{C''}^i_j / {C'}^i_j, {C''}^j_{i + 1} / {C'}^j_{i + 1}, Q''_{i, j} / Q'_{i, j}, R''_{i, j} / R'_{i, j}, S''_{j, i} / S'_{j, i}]} \\
		&\qquad\qquad\textcolor{gray}{\{\ (\mathbf{C}'', \mathbf{Q}'', \mathbf{R}'', \mathbf{S}''). \Own (\mathbf{A}, \mathbf{B}) * \bigast_{i, j} P_{i, j} \AlmostSurelyEqual A^i \times B^j * I(i, j + 1; \mathbf{C}'', \mathbf{Q}'', \mathbf{R}'', \mathbf{S}'') \ \}} \\
		&\qquad \} \\
		&\}; \\
		&\textcolor{gray}{\{\ \Own (\mathbf{A}, \mathbf{B}) * \bigast_{i, j} P_{i, j} \AlmostSurelyEqual A^i \times B^j * I(t, t + 1; \mathbf{C}, \mathbf{Q}, \mathbf{R}, \mathbf{S}) \ \}} \\
		&\ret{(\mathbf{P}, \mathbf{C}, \mathbf{Q}, \mathbf{R}, \mathbf{S})} \\
		& \textcolor{gray}{\{\ (\mathbf{P}, \mathbf{C}, \mathbf{Q}, \mathbf{R}, \mathbf{S}).\ \Own (\mathbf{A}, \mathbf{B}) * \bigast_{i, j} P_{i, j} \AlmostSurelyEqual A^i \times B^j * \bigast_{i < j} Q_{i, j} \sim \Uniform{\mathbb{K}} * \bigast_{i < j} R_{i, j} \AlmostSurelyEqual Q_{i, j} + P_{i, j}} \\
		& \qquad\qquad\qquad\qquad\qquad \textcolor{gray}{* \bigast_{i < j} S_{i, j} \AlmostSurelyEqual R_{i, j} + P_{j, i} * \bigast_{i, j} C^i_j \AlmostSurelyEqual P_{i, i} + g^i_j(\mathbf{Q}, \mathbf{S}) \ \}} \\
		& \textcolor{gray}{\{\ (\mathbf{P}, \mathbf{C}, \mathbf{Q}, \mathbf{R}, \mathbf{S}).\ \Conditioning_{(a, b) \gets (\mathcal{I}, \mathcal{J})} (\Own (\mathbf{A} \setminus \mathcal{I}, \mathbf{B} \setminus \mathcal{J}) * \Own \mathcal{O})\ \}}
	\end{align*}
	\begin{align*}
		I(i, j; \mathbf{C}, \mathbf{Q}, \mathbf{R}, \mathbf{S}) \quad\coloneqq\quad \mathbf{C}_0 \AlmostSurelyEqual (P_{i, i})_{i = 0}^{t} * \bigast_{\substack{(i', j') <_{\mathrm{lex}} (i, j) \\ 0 \le i' < j' \le t}} &\Big( Q_{i', j'} \sim \Uniform{\mathbb{K}} * R_{i', j'} \AlmostSurelyEqual Q_{i', j'} + P_{i', j'} * S_{i', j'} \AlmostSurelyEqual R_{i', j'} + P_{j', i'} \\
		&* C^{i'}_{j'} \AlmostSurelyEqual P_{i', i'} + g^{i'}_{j'}(\mathbf{Q}, \mathbf{S}) * C^{j'}_{i' + 1} \AlmostSurelyEqual P_{j', j'} + g^{j'}_{i + 1'}(\mathbf{Q}, \mathbf{S}) \Big)
	\end{align*}
	\caption{Algorithm \textsc{SecMult} in APPL.
	Here, $\mathbf{Q}$, $\mathbf{R}$, and $\mathbf{S}$ are shorthands for the collection of random variables $\{ Q_{i, j} \mid 0 \le i < j \le t \}$, $\{ R_{i, j} \mid 0 \le i < j \le t \}$, and $\{ S_{i, j} \mid 0 \le i < j \le t \}$, respectively; and $\mathbf{C} = (\mathbf{C}_0, \dots, \mathbf{C}_t)$ where $\mathbf{C}_i = (C_i^0, \dots, C_i^t)$ for each $i$.
	Gray comments show the proof outline for Theorem~\ref{thm:sec-mult}, but we omit some straightforward proof steps for brevity.
	The function $g^i_j(\mathbf{Q}, \mathbf{S})$ is defined as in~\eqref{eq:sec-mult-c}.
	The loop invariant for the second loop is denoted as $I(i, j; \mathbf{C}, \mathbf{Q}, \mathbf{R}, \mathbf{S})$.
	}
	\label{fig:sec-mult-appl}
\end{figure*}

\begin{theorem}\label{thm:sec-mult}
	Let $\mathcal{O}$ be a set of probed variables.
	\begin{align}
		I_1 &\coloneqq \{ i \mid P_{i, j} \in \mathcal{O} \} \cup \{ i \mid C^i_j \in \mathcal{O} \land j < t \} \\
		J_1 &\coloneqq \{ j \mid P_{i, j} \in \mathcal{O} \} \cup \{ i \mid C^i_j \in \mathcal{O} \land j < t \} \\
		(I_2, J_2) &\coloneqq \mathrm{Add}(I_1, J_1;\ \{ (i, j) \mid Q_{i, j} \in \mathcal{O} \lor R_{i, j} \in \mathcal{O} \}) \\
		(I_3, J_3) &\coloneqq \mathrm{Add}(I_2, J_2;\ \{ (i, j) \mid S_{i, j} \in \mathcal{O} \})
	\end{align}
	Here, for each $I, J \subseteq \{ 0, \dots, t \}$ and $K \subseteq \{ 0, \dots, t \}^2$, $\mathrm{Add}(I, J; K)$ is the pair $(I', J')$ defined as follows.
	\begin{align*}
		I' &\coloneqq I \cup \{ i \mid (i, j) \in K \land i \notin I \} \cup \{ j \mid (i, j) \in K \land i \in I \} \\
		J' &\coloneqq J \cup \{ j \mid (i, j) \in K \land j \notin J \} \cup \{ i \mid (i, j) \in K \land j \in J \}
	\end{align*}
	Let $I \coloneqq I_3$, $J \coloneqq J_3$, and $\mathcal{I} = \{ A^i \mid i \in I \}$, $\mathcal{J} = \{ B^j \mid j \in J \}$.
	Then, \textsc{SecMult} is $(\mathcal{I} \cup \mathcal{J}, \mathcal{O})$-NI.
\end{theorem}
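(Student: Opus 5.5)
We follow the same pattern as Theorem~\ref{thm:refreshm-ni}. The goal is the Hoare triple
\[ \HoareTriple{\Own (\mathbf{A}, \mathbf{B})}{\textsc{SecMult}}{\vec{O}. \Conditioning_{(a, b) \gets (\mathcal{I}, \mathcal{J})} (\Own (\mathbf{A} \setminus \mathcal{I}, \mathbf{B} \setminus \mathcal{J}) * \Own \mathcal{O})}, \]
and Theorem~\ref{thm:noninterference-conditional-independence} turns it into $(\mathcal{I} \cup \mathcal{J}, \mathcal{O})$-NI.

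\textbf{Step 1: structural postcondition.} Using the outline of Fig.~\ref{fig:sec-mult-appl}, we apply \textsc{H-For} with the invariant $I(i, j; \ldots)$. Each uniform sample $Q_{i,j}$ is framed in by \textsc{H-Frame}, as in \textsc{Refresh}. This yields
\[ \Own(\mathbf{A},\mathbf{B}) * \bigast_{i<j} Q_{i,j}\sim\Uniform{\mathbb{K}} * \Phi, \]
where $\Phi$ collects the almost-sure equalities $P_{i,j}\AlmostSurelyEqual A^iB^j$, $R_{i,j}\AlmostSurelyEqual Q_{i,j}+P_{i,j}$, $S_{i,j}\AlmostSurelyEqual R_{i,j}+P_{j,i}$ and $C^i_j\AlmostSurelyEqual P_{i,i}+g^i_j(\mathbf{Q},\mathbf{S})$. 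The inductive step is routine bookkeeping with \textsc{Transfer-Dist}, \textsc{Eq-Congruence} and \textsc{Eq-Conj}.

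\textbf{Step 2: rewrite each probe.} Every probed variable must become an expression in
\begin{itemize}
\item the conditioned values $a^i$ for $i\in I$ and $b^j$ for $j\in J$,
\item a set $\vec{T}$ of still-uniform random variables, and
\item at most one ``fresh'' uniform variable per remaining probe.
\end{itemize}
The three layers of $\mathrm{Add}$ in the statement mirror the case analysis of~\cite{BartheCCS2016}, and we follow it.
\begin{itemize}
\item $P_{i,j}$ and $C^i_j$ with $j<t$ depend on $A^i,B^j$ and on $Q$'s and $S$'s indexed by $i$; these indices are in $I_1,J_1$.
\item $Q_{i,j}$ or $R_{i,j}$ forces $i$ (or $j$) into the index sets. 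If $R_{i,j}=Q_{i,j}+A^iB^j$ is probed but $Q_{i,j}$ is otherwise unused, then $R_{i,j}$ is uniform by Lemma~\ref{lem:generalised-bijective-substitution-uniform}.
\item $S_{i,j}=Q_{i,j}+A^iB^j+A^jB^i$ is handled the same way. Either both $i,j$ are covered, or $Q_{i,j}$ masks $S_{i,j}$.
\item An output $C^i_t$ is a sum containing some $Q_{i,k}$ or $S_{k,i}$ not used elsewhere. An injective assignment of such masks, constructed as in the proof of Theorem~\ref{thm:refreshm-ni} with the counting bound $|\mathcal{O}|\le t$, makes each probed output uniform and independent.
\end{itemize}

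\textbf{Step 3: conditioning and ownership transfer.} We introduce the conditioning with \hyperlink{rule:Gen-Weak-Union}{\textsc{Gen-Weak-Union}} on $\Own(\mathbf{A},\mathbf{B})*\bigast Q\sim\Uniform{\mathbb{K}}$. We then replace $A^i,B^j$ by $a^i,b^j$ using \textsc{C-Subst}, \textsc{C-And-Distributive} and \textsc{Eq-Congruence}. Lemma~\ref{lem:generalised-bijective-substitution-uniform} performs the bijective substitutions for the masked probes. Finally, \textsc{Dist-Own}, \textsc{FV-Own}, the $\Persistent$ rules (for the deterministic terms in $a,b$) and \textsc{Transfer-Own} yield $\Own(\mathbf{A}\setminus\mathcal{I},\mathbf{B}\setminus\mathcal{J})*\Own\mathcal{O}$ under the modality.

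\textbf{Main obstacle.} The hard part is the combinatorial core of Step 2. We must show that, with $I=I_3$ and $J=J_3$, every probed expression's dependence on uncovered input shares is absorbed by a mask variable. The masks must be disjoint from one another and from the free variables of the other probes, so that the separating conjunctions are justified. We would first try a careful case split on the type of each probe, ordered as in $\mathrm{Add}$. The proof needs to separate the masking argument (uniform via bijection) from the covering argument (inputs conditioned). Here we will rely on the semi-graphoid rules when a product $A^iB^j$ mixes a covered and an uncovered share; such a product is made deterministic-times-owned only after conditioning on the covered share.
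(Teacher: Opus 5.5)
Your Steps 1 and 3 follow the paper's argument: the structural postcondition from the loop invariants, conditioning introduced by \textsc{Gen-Weak-Union} and \textsc{C-Subst}, bijective substitutions by the generalized \textsc{Unif-Bijection} lemma, and finally \textsc{Transfer-Own}. The part that actually proves this theorem is that the particular sets $I=I_3$, $J=J_3$ suffice, and you leave that open. Your sketch of it is also inaccurate exactly where the definition of $\mathrm{Add}$ matters. The paper splits the indices of $\mathbf{Q}$ into four groups:
\begin{itemize}
\item $K_{\mathbf{Q}}$: $Q_{i,j}$ or $R_{i,j}$ is probed, or $i<j$ and some $C^i_{j'}$ with $j'<t$ is probed. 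Roughly, these are the randoms a probe uses directly.
\item $K_{\mathbf{S}}\setminus K_{\mathbf{Q}}$: randoms used only through some $S_{k,i}$. These are traded for uniform $S_{k,i}$ by a bijective change of variables.
\item $K_{\mathbf{C}}$: the output masks.
\item The rest, which is discarded.
\end{itemize}
It then checks covering facts such as the following.
\begin{itemize}
\item $R_{i,j}\in\mathcal{O}$ implies $i\in I$ and $j\in J$.
\item $S_{i,j}\in\mathcal{O}$ with $(i,j)\in K_{\mathbf{Q}}$ implies $i,j\in I\cap J$.
\item $C^i_j\in\mathcal{O}$ with $j<t$, $k<i$ and $(k,i)\in K_{\mathbf{Q}}$ implies $k,i\in I\cap J$.
\end{itemize}
The clauses of $\mathrm{Add}$ that insert the partner index of a pair whose first (resp.\ second) index is already in $I$ (resp.\ $J$) exist precisely for the last two facts.

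Concretely, your bullet for $C^i_j$ with $j<t$ says its dependencies are indexed by $i$ and hence covered by $I_1,J_1$. But $C^i_j$ contains $S_{k,i}=Q_{k,i}+A^kB^i+A^iB^k$ for $k<i$, which brings in $A^k,B^k$ with $k$ generally outside $I_1\cup J_1$. This must be handled by a dichotomy:
\begin{itemize}
\item If $(k,i)\notin K_{\mathbf{Q}}$, make $S_{k,i}$ uniform. Such an $S_{k,i}$ may be shared by several probes, so it belongs in your set $\vec{T}$, not among the per-probe fresh masks.
\item Otherwise, rely on the third covering fact above.
\end{itemize}

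Your proposed fallback for a product mixing a covered and an uncovered share would fail. Under $\Conditioning_{(a,b)\gets(\mathcal{I},\mathcal{J})}$, an unmasked probe that depends on an uncovered $B^j$ through $a^iB^j$ is not independent of $\mathbf{B}\setminus\mathcal{J}$, and no semi-graphoid rule changes that. What the construction guarantees is that mixed products occur only inside masked quantities, namely uniformized $S_{k,i}$ or masked outputs. There the bijective-substitution lemma absorbs them, with $\Delta$ containing $\mathbf{A},\mathbf{B}$ and, for outputs, the remaining randoms.

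Your requirement that an output mask occur in no other probe is the right one. It is what makes the joint substitution for several probed $C^i_t$ bijective, and it is stronger than mere distinctness of the indices chosen for $K_{\mathbf{C}}$.
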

\begin{proofsketch}
	We rephrase the simulator-based proof in~\cite{BartheCCS2016} as a conditional-independence-based proof in Lilac.
	We first apply the rules in Fig.~\ref{fig:lilac-hoare-rules} to obtain a postcondition that expresses how each variable is computed from the inputs and the random variables.
	This step is rather straightforward, and the proof sketch is shown in Fig.~\ref{fig:sec-mult-appl}.
	Here, $g^i_j(\mathbf{Q}, \mathbf{S})$ is defined as follows.
	\begin{equation}
		g^i_j(\mathbf{Q}, \mathbf{S}) \coloneqq \begin{cases}
			\sum_{k < j} S_{k, i} & i \ge j \\
			\sum_{k < i} S_{k, i} + \sum_{i < k \le j} Q_{i, k} & i < j 
		\end{cases}
		\label{eq:sec-mult-c}
	\end{equation}
	Then, we apply the \hyperlink{rule:H-Conseq}{\textsc{H-Conseq}} rule to obtain the desired postcondition, which is the non-trivial part of the proof.

	We divide (the indices of) the uniform random variables $\mathbf{Q}$ into four parts: $K_{\mathbf{Q}}$, $K_{\mathbf{S}} \setminus K_{\mathbf{Q}}$, $K_{\mathbf{C}}$, and the rest.
	Here, we define $K_{\mathbf{Q}}, K_{\mathbf{S}} \subseteq \{ (i, j) \mid 0 \le i < j \le t \}$ as follows.
	\begin{align*}
		K_{\mathbf{Q}} &\coloneqq \{ (i, j) \mid Q_{i, j} \in \mathcal{O} \lor R_{i, j} \in \mathcal{O} \} \\
		&\qquad \cup \{ (i, k) \mid C^i_j \in \mathcal{O}, j < t, i < k \} \\
		K_{\mathbf{S}} &\coloneqq \{ (i, j) \mid S_{i, j} \in \mathcal{O} \} \\
		&\qquad \cup \{ (k, i) \mid C^i_j \in \mathcal{O}, j < t, k < i \}
	\end{align*}
	The set $K_{\mathbf{C}}$ is defined so that (a) $K_{\mathbf{C}}$ bijectively corresponds to the set $\{ C^i_t \in \mathcal{O} \mid \lnot (\exists j < t, C^i_j \in \mathcal{O}) \}$ and (b) for each $(i', j') \in K_{\mathbf{C}}$ corresponds to $C^i_t$, the uniform random variable $Q_{i', j'}$ occurs in $g^i_t(\mathbf{Q}, \mathbf{S})$.
	We can define such $K_{\mathbf{C}}$ because we have $|\mathcal{O} \cap \mathbf{C}_t| \le |\mathbf{V}_i \setminus (K_{\mathbf{Q}} \cup K_{\mathbf{S}})|$ for each $C^i_t \in \mathcal{O}$ with no $j < t$ such that $C^i_j \in \mathcal{O}$, where $\mathbf{V}_i = \{ (k, i) \mid k < i \} \cup \{ (i, k) \mid i < k \}$ is the indices of the free variables in $g^i_t(\mathbf{Q}, \mathbf{S})$.

	We can then show that (a) $\{ C^i_t \in \mathcal{O} \mid \lnot (\exists j < t, C^i_j \in \mathcal{O}) \}$ can be ``simulated'' by sampling from the uniform distribution, (b) $\{ S_{i, j} \mid (i, j) \in K_{\mathbf{S}} \setminus K_{\mathbf{Q}} \}$ can be ``simulated'' by sampling from the uniform distribution, and (c) the rest of the variables in $\mathcal{O}$ can be ``simulated'' from $\{ S_{i, j} \mid (i, j) \in K_{\mathbf{S}} \setminus K_{\mathbf{Q}} \}$, $\{ Q_{i, j} \mid (i, j) \in K_{\mathbf{Q}} \}$, and $\mathcal{I} \cup \mathcal{J}$.
	In Lilac, (a) and (b) correspond to applying \hyperlink{rule:Unif-Bijection}{\textsc{Unif-Bijection}}, and (c) corresponds to applying \hyperlink{rule:Transfer-Own}{\textsc{Transfer-Own}}.
	\iflong
	See Appendix~\ref{sec:proof-sec-mult} for details.
	\else
	See the full version~\cite[Appendix~D-C]{arxiv} for details.
	\fi
\end{proofsketch}

\begin{toappendix}
\subsection{Proof of Theorem~\ref{thm:sec-mult}}
\label{sec:proof-sec-mult}
\begin{lemma}
	Let $E$ be a deterministic expression such that $\Gamma; {\cdot} \vdash E : A$.
	Then, we have
	\[ X \AlmostSurelyEqual E \quad\LilacEntails\quad \Own X \]
\end{lemma}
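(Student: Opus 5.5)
We prove the entailment $X \AlmostSurelyEqual E \LilacEntails \Own X$ directly from the semantics. Since $E$ has no free random variables, it is a constant.

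Fix $\gamma$, $D$, and $\mathcal{P} = (\mathcal{F}, \mu) \in \KRM$ with $\gamma, D, \mathcal{P} \vDash X \AlmostSurelyEqual E$. Because $\Gamma; \emptyctx \vdash E : A$, the function $\interpret{E}(\gamma) \circ D$ is the constant map $\omega \mapsto e$ with $e \coloneqq \interpret{E}(\gamma) \in \interpret{A}$.

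By the (corrected) semantics of almost-sure equality, $X \circ D$ agrees with this constant $\mu$-almost surely. Put differently, the set $N \coloneqq \{\omega \mid X(D(\omega)) \neq e\}$ is $\mu$-null in the appropriate sense. We must show that $X \circ D$ is $\mathcal{F}$-measurable, which is what $\Own X$ requires.

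This is where the argument gets delicate. A random variable that is a.s. equal to a constant need not be measurable with respect to an arbitrary sub-$\sigma$-algebra $\mathcal{F}$. So we rely on the precise form of the corrected semantics of $\AlmostSurelyEqual$. We expect it either to require $\Own$ of both sides, or to phrase equality so that the preimages $(X\circ D)^{-1}(S)$ differ from $\emptyset$ or $\Omega$ by $\mathcal{F}$-null sets. The KRM $\KRM$ is also expected to be closed under such completions.

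\textbf{First route.} If the definition of $E_1 \AlmostSurelyEqual E_2$ includes ownership of both expressions, we are done immediately, since ownership of $X$ is then part of the hypothesis.

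\textbf{Second route.} Otherwise we derive the claim from rules already available. The constant $E$ is trivially owned: by \hyperlink{rule:FV-Own}{\textsc{FV-Own}} with empty context, followed by \hyperlink{rule:Persistent-Mono}{\textsc{$\Persistent$-Mono}} and \hyperlink{rule:Persistent-True}{\textsc{$\Persistent$-True}}, we get $\LilacEntails \Persistent \Own E$. Applying \hyperlink{rule:Persistent-Elim}{\textsc{$\Persistent$-Elim}} then gives
\[ X \AlmostSurelyEqual E \LilacEntails \Own E \land (E \AlmostSurelyEqual X). \]
Symmetry of $\AlmostSurelyEqual$ is used here to swap the sides. \hyperlink{rule:Transfer-Own}{\textsc{Transfer-Own}} now yields $\Own X$.

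This second route is the plan we commit to, because it avoids measure theory altogether. Its only nontrivial ingredient is the symmetry of $\AlmostSurelyEqual$, which holds under the corrected semantics since that semantics is symmetric in its two arguments.
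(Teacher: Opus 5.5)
Your committed second route is correct and is essentially the paper's proof. The paper obtains $\Own E$ for the deterministic $E$ via \textsc{FV-Own} and then concludes $\Own X$ by \textsc{Transfer-Own}; your detour through $\Persistent$-Mono, $\Persistent$-True and $\Persistent$-Elim is harmless but unnecessary, and your explicit appeal to symmetry of $\AlmostSurelyEqual$ (to match the orientation of \textsc{Transfer-Own}) is a step the paper leaves implicit.
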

\begin{proof}
	\begin{align*}
		&X \AlmostSurelyEqual E \\
		&\LilacEntails X \AlmostSurelyEqual E \land \Own E & \text{by \hyperlink{rule:FV-Own}{\textsc{FV-Own}}} \\
		&\LilacEntails \Own X & \text{by \hyperlink{rule:Transfer-Own}{\textsc{Transfer-Own}}} & \qedhere
	\end{align*}
\end{proof}
\begin{lemma}
	Let $\Gamma; \Delta, X : A \vdash E : B$ be an expression.
	Then, we have
	\[ (\Own \Delta \land \Own X) * \phi \quad\LilacEntails\quad \Conditioning_{x \gets X} \big((\Own \Delta \land \Own X) * \phi * E \AlmostSurelyEqual E[x/X]\big) \]
\end{lemma}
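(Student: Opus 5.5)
The plan is a purely syntactic derivation from the rules already available. The three ingredients are \hyperlink{rule:Gen-Weak-Union}{\textsc{Gen-Weak-Union}} to introduce the conditioning modality, \hyperlink{rule:C-Subst}{\textsc{C-Subst}} to obtain $X \AlmostSurelyEqual x$ under the modality, and \hyperlink{rule:Eq-Congruence}{\textsc{Eq-Congruence}} to lift that equation to $E$.

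First, I would observe that $\Own \Delta \land \Own X \LilacIff \Own (\Delta, X)$ by the rule relating ownership of tuples to conjunctions of ownership. Applying \hyperlink{rule:Gen-Weak-Union}{\textsc{Gen-Weak-Union}} with the pair $(X, \Delta)$ in place of $(X, Y)$ then gives
\[ (\Own \Delta \land \Own X) * \phi \LilacEntails \Conditioning_{x \gets X} \big( (\Own \Delta \land \Own X) * \phi \big). \]
Separately, I would use the affine weakening $\psi * \phi \LilacEntails \psi$ to get $(\Own \Delta \land \Own X) * \phi \LilacEntails \Own X$. Then \hyperlink{rule:C-Subst}{\textsc{C-Subst}} gives $\Conditioning_{x \gets X}(X \AlmostSurelyEqual x)$. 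Combining the two conditioned facts with \hyperlink{rule:C-And-Distributive}{\textsc{C-And-Distributive}}, the left-hand side entails
\[ \Conditioning_{x \gets X}\big( ((\Own \Delta \land \Own X) * \phi) \land X \AlmostSurelyEqual x \big). \]

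Next I would work under the modality using \hyperlink{rule:C-Entail}{\textsc{C-Entail}}. From $\Own \Delta \land \Own X$, \hyperlink{rule:FV-Own}{\textsc{FV-Own}} yields $\Own (E, E[x/X])$. This works because both expressions have free random variables among $\Delta, X$, and $x$ is a deterministic variable in $\Gamma$. Next, \hyperlink{rule:Eq-Congruence}{\textsc{Eq-Congruence}} with the context $F[\cdot] = E[\cdot/X]$ gives $E \AlmostSurelyEqual E[x/X]$. This conjunct sits beside the starred formula. By \hyperlink{rule:Eq-Conj}{\textsc{Eq-Conj}}, $\psi \land (E \AlmostSurelyEqual E[x/X]) \LilacIff \psi * (E \AlmostSurelyEqual E[x/X])$, so the conjunction can be turned into the required separating conjunction. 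Associativity of $*$ then produces $(\Own \Delta \land \Own X) * \phi * E \AlmostSurelyEqual E[x/X]$ under $\Conditioning_{x \gets X}$.

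The main obstacle is the Eq-Congruence step. The ownership needed for congruence is hidden inside the left conjunct of a separating conjunction, while the equation $X \AlmostSurelyEqual x$ sits outside that conjunct. I would first weaken the conjunction to $(\Own \Delta \land \Own X) \land X \AlmostSurelyEqual x$ to derive the equation. I would then reattach it to the untouched original conjunct using $\land$ and apply \hyperlink{rule:Eq-Conj}{\textsc{Eq-Conj}}. If the notion of substitution required by \hyperlink{rule:Eq-Congruence}{\textsc{Eq-Congruence}} is awkward, I would instead induct on the structure of $E$, using congruence for each expression constructor.
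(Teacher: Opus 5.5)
Your proposal is correct and follows the paper's proof essentially step for step. You use \textsc{Gen-Weak-Union} together with \textsc{C-Subst} (via affine weakening to $\Own X$) and \textsc{C-And-Distributive} to reach $\Conditioning_{x \gets X}\big(((\Own \Delta \land \Own X) * \phi) \land X \AlmostSurelyEqual x\big)$, then apply \textsc{FV-Own}, \textsc{Eq-Congruence} and \textsc{Eq-Conj} under the modality, exactly as the paper does; the fallback induction on $E$ is unnecessary, since \textsc{Eq-Congruence} applies directly with the context $E[\cdot/X]$.
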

\begin{proof}
	\begin{align*}
		&(\Own \Delta \land \Own X) * \phi \\
		&\LilacEntails \Conditioning_{x \gets X} \big((\Own \Delta \land \Own X) * \phi \big) \land \Conditioning_{x \gets X} X \AlmostSurelyEqual x & \text{by \hyperlink{rule:Gen-Weak-Union}{\textsc{Gen-Weak-Union}} and \hyperlink{rule:C-Subst}{\textsc{C-Subst}}} \\
		&\LilacEntails \Conditioning_{x \gets X} \big((\Own \Delta \land \Own X) * \phi \land X \AlmostSurelyEqual x\big) & \text{by \hyperlink{rule:C-And-Distributive}{\textsc{C-And-Distributive}}} \\
		&\LilacEntails \Conditioning_{x \gets X} \big((\Own \Delta \land \Own X) * \phi \land \Own (E, E[x/X]) \land X \AlmostSurelyEqual x\big) & \text{by \hyperlink{rule:FV-Own}{\textsc{FV-Own}}} \\
		&\LilacEntails \Conditioning_{x \gets X} \big((\Own \Delta \land \Own X) * \phi \land E \AlmostSurelyEqual E[x/X]\big) & \text{by \hyperlink{rule:Eq-Congruence}{\textsc{Eq-Congruence}}} \\
		&\LilacEntails \Conditioning_{x \gets X} \big((\Own \Delta \land \Own X) * \phi * E \AlmostSurelyEqual E[x/X]\big) & \text{by \hyperlink{rule:Eq-Conj}{\textsc{Eq-Conj}}} & \qedhere
	\end{align*}
\end{proof}
\end{toappendix}

\begin{appendixproof}[Proof of Theorem~\ref{thm:sec-mult}]
	$I$, $J$, $K_{\mathbf{Q}}$, and $K_{\mathbf{S}}$ satisfy the following conditions:
	\begin{enumerate}
		\item $|I|, |J| \le |\mathcal{O} \setminus \mathbf{C}_t|$
		\item If $C^i_j \in \mathcal{O}$ and $j < t$, then $i \in I \cap J$. \label{cond:sec-mult-c-i}
		\item If $i < j$, $C^i_j \in \mathcal{O}$, and $i < k \le j < t$, then $(i, k) \in K_{\mathbf{Q}}$.
		\item If $C^i_j \in \mathcal{O}$, $k < i$, $j < t$, and $(k, i) \in K_{\mathbf{Q}}$, then $k, i \in I$ and $k, i \in J$. \label{cond:sec-mult-c-q}
		\item If $Q_{i, j} \in \mathcal{O}$, then $(i, j) \in K_{\mathbf{Q}}$.
		\item If $R_{i, j} \in \mathcal{O}$, then $(i, j) \in K_{\mathbf{Q}}$, $i \in I$, and $j \in J$.
		\item If $(i, j) \in K_{\mathbf{Q}}$ and $S_{i, j} \in \mathcal{O}$, then $i, j \in I$ and $i, j \in J$. \label{cond:sec-mult-s-from-q}
		\item If $C^i_t \in \mathcal{O}$ and there exists $j < t$ with $C^i_j \in \mathcal{O}$, then $\mathbf{V}_i \subseteq K_{\mathbf{Q}} \cup K_{\mathbf{S}}$. \label{cond:sec-mult-c-t-simulated}
		\item If $C^i_t \in \mathcal{O}$ and there is no $j < t$ with $C^i_j \in \mathcal{O}$, then $\mathbf{V}_i \setminus (K_{\mathbf{Q}} \cup K_{\mathbf{S}}) \neq \emptyset$.
		\label{cond:sec-mult-c-uniform}
	\end{enumerate}
	Here, $\mathbf{V}_i = \{ (k, i) \mid k < i \} \cup \{ (i, k) \mid i < k \}$ is the indices of the free variables in $g^i_t(\mathbf{Q}, \mathbf{S})$.
	Condition~\ref{cond:sec-mult-c-uniform} follows because $|\mathbf{V}_i \cap (K_{\mathbf{Q}} \cup K_{\mathbf{S}})| \le |\mathcal{O} \setminus \mathbf{C}_t| < |\mathbf{V}_i| = t$.
	\begin{align*}
		&\mathbf{V}_i \cap (K_{\mathbf{Q}} \cup K_{\mathbf{S}})\\
		&= \mathbf{V}_i \cap \{ (i, j) \mid Q_{i, j} \in \mathbf{Q} \lor R_{i, j} \in \mathcal{O} \lor S_{i, j} \in \mathcal{O} \} \\
		&\qquad \cup \{ (k, i) \mid k < i, C^k_j \in \mathcal{O}, j < t \} \\
		&\qquad \cup \{ (i, k) \mid i < k, C^k_j \in \mathcal{O}, j < t \}
	\end{align*}
	In other words, we have $|\mathcal{O} \cap \mathbf{C}_t| \le |\mathbf{V}_i \setminus (K_{\mathbf{Q}} \cup K_{\mathbf{S}})|$.
	Thus, for each $C^i_t \in \mathcal{O}$ with no $j < t$ such that $C^i_j \in \mathcal{O}$, we take an element of $\mathbf{V}_i \setminus (K_{\mathbf{Q}} \cup K_{\mathbf{S}})$ in such a way that the chosen elements are distinct for distinct $C^i_t$'s, and define the set $K_{\mathbf{C}}$ as the set of the chosen elements.

	By applying the rules in Fig.~\ref{fig:lilac-hoare-rules}, we obtain the following postcondition (Fig.~\ref{fig:sec-mult-appl}).
	\[ \Own (\mathbf{A}, \mathbf{B}) * \bigast_{i, j} P_{i, j} \AlmostSurelyEqual A^i \times B^j * \bigast_{i < j} Q_{i, j} \sim \Uniform{\mathbb{K}} * \bigast_{i < j} R_{i, j} \AlmostSurelyEqual Q_{i, j} + P_{i, j} * \bigast_{i < j} S_{i, j} \AlmostSurelyEqual R_{i, j} + P_{j, i} * \bigast_{i, j} C^i_j \AlmostSurelyEqual P_{i, i} + g^i_j(\mathbf{Q}, \mathbf{S}) \]

	By applying \hyperlink{rule:Gen-Weak-Union}{\textsc{Gen-Weak-Union}} and \hyperlink{rule:C-Subst}{\textsc{C-Subst}}, we have the following postcondition.
	\begin{align*}
		&\Conditioning_{(a, b) \gets (\mathcal{I}, \mathcal{J})} \big( \Own (\mathbf{A}, \mathbf{B}) * \bigast_{i, j} P_{i, j} \AlmostSurelyEqual A^i \times B^j * \bigast_{i < j} Q_{i, j} \sim \Uniform{\mathbb{K}} * \bigast_{i < j} R_{i, j} \AlmostSurelyEqual Q_{i, j} + P_{i, j} * \bigast_{i < j} S_{i, j} \AlmostSurelyEqual R_{i, j} + P_{j, i} \\
		&\qquad\qquad * \bigast_{i, j} C^i_j \AlmostSurelyEqual P_{i, i} + g^i_j(\mathbf{Q}, \mathbf{S}) * \bigast_{i \in I} A^i = a^i * \bigast_{j \in J} B^j = b^j \big)
	\end{align*}
	It follows that the above postcondition implies:
	\begin{align*}
		&\Conditioning_{(a, b) \gets (\mathcal{I}, \mathcal{J})} \Big( \big(\Own (\mathbf{A}, \mathbf{B}) * \bigast_{i < j} Q_{i, j} \sim \Uniform{\mathbb{K}}\big) \land \big( \underline{\Own (\mathbf{A}, \mathbf{B}, \mathbf{Q}) \land \bigwedge_{i, j} P_{i, j} \AlmostSurelyEqual A^i \times B^j \land \bigwedge_{i < j} R_{i, j} \AlmostSurelyEqual Q_{i, j} + P_{i, j}} \\
		&\qquad\qquad \underline{\land \bigwedge_{i < j} S_{i, j} \AlmostSurelyEqual R_{i, j} + P_{j, i} \land \bigwedge_{i, j} C^i_j \AlmostSurelyEqual P_{i, i} + g^i_j(\mathbf{Q}, \mathbf{S}) \land \bigwedge_{i \in I} A^i \AlmostSurelyEqual a^i \land \bigwedge_{j \in J} B^j \AlmostSurelyEqual b^j} \big) \Big)
	\end{align*}
	Then, we focus on the underlined part; rewrite the equations for $P_{i, j}$, $R_{i, j}$, $S_{i, j}$, and $C^i_j$; and obtain equations for $\mathcal{O}$.
	\begin{itemize}
		\item $P_{i, j}$:
		\begin{align*}
			&\bigwedge_{i, j} P_{i, j} \AlmostSurelyEqual A^i \times B^j \land \bigwedge_{i \in I} A^i \AlmostSurelyEqual a^i \land \bigwedge_{j \in J} B^j \AlmostSurelyEqual b^j \\
			&\LilacEntails \bigwedge_{(i, j) \in I \times J} P_{i, j} \AlmostSurelyEqual a^i \times b^j \\
			&\LilacEntails \bigwedge_{P_{i, j} \in \mathcal{O}} P_{i, j} \AlmostSurelyEqual a^i \times b^j
		\end{align*}
		\item $R_{i, j}$:
		\begin{align*}
			&\Own \mathbf{Q} \land \bigwedge_{i < j} R_{i, j} \AlmostSurelyEqual Q_{i, j} + P_{i, j} \land \bigwedge_{(i, j) \in I \times J} P_{i, j} \AlmostSurelyEqual a^i \times b^j \\
			&\LilacEntails \bigwedge_{R_{i, j} \in \mathcal{O}} R_{i, j} \AlmostSurelyEqual Q_{i, j} + a^i \times b^j \\
		\end{align*}
		\item $S_{i, j}$:
		\begin{align*}
			&\Own (\mathbf{A}, \mathbf{B}, \mathbf{Q}) \land \bigwedge_{i, j} P_{i, j} \AlmostSurelyEqual A^i \times B^j \land \bigwedge_{i < j} R_{i, j} \AlmostSurelyEqual Q_{i, j} + P_{i, j} \land \bigwedge_{i < j} S_{i, j} \AlmostSurelyEqual R_{i, j} + P_{j, i} \\
			&\LilacEntails \bigwedge_{i < j} S_{i, j} \AlmostSurelyEqual Q_{i, j} + A^i \times B^j + A^j \times B^i
		\end{align*}
		If $S_{i, j} \in \mathcal{O}$ and $(i, j) \in K_{\mathbf{Q}}$, then $i, j \in I$ and $i, j \in J$ by Condition~\ref{cond:sec-mult-s-from-q}, and thus:
		\begin{align*}
			&\bigwedge_{i < j} S_{i, j} \AlmostSurelyEqual Q_{i, j} + A^i \times B^j + A^j \times B^i \land \bigwedge_{i \in I} A^i \AlmostSurelyEqual a^i \land \bigwedge_{j \in J} B^j \AlmostSurelyEqual b^j \\
			&\LilacEntails \bigwedge_{(i, j) \in K_{\mathbf{Q}}, S_{i, j} \in \mathcal{O}} S_{i, j} \AlmostSurelyEqual Q_{i, j} + a^i \times b^j + a^j \times b^i
		\end{align*}
		\item $C^i_j$:
		First note that we have $C^i_j \AlmostSurelyEqual A^i \times B^i + g^i_j(\mathbf{Q}, \mathbf{S})$ from $C^i_j \AlmostSurelyEqual P_{i, i} + g^i_j(\mathbf{Q}, \mathbf{S})$ and $P_{i, i} \AlmostSurelyEqual A^i \times B^i$.
		\begin{itemize}
			\item If $C^i_j \in \mathcal{O}$ and $j < t$, then $i \in I$ and $i \in J$ by Condition~\ref{cond:sec-mult-c-i}.
			\begin{align*}
				&\Own (\mathbf{A}, \mathbf{B}, \mathbf{Q}) \land \bigwedge_{i, j} C^i_j \AlmostSurelyEqual A^i \times B^i + g^i_j(\mathbf{Q}, \mathbf{S}) \land \bigwedge_{i < j} S_{i, j} \AlmostSurelyEqual Q_{i, j} + A^i \times B^j + A^j \times B^i \land \bigwedge_{i \in I} A^i \AlmostSurelyEqual a^i \land \bigwedge_{j \in J} B^j \AlmostSurelyEqual b^j \\
				&\LilacEntails \bigwedge_{C^i_j \in \mathcal{O}, j < t} C^i_j \AlmostSurelyEqual a^i \times b^i + g^i_j(\mathbf{Q}, \mathbf{S})
			\end{align*}
			If $S_{k, i}$ is a free variable in $g^i_j(\mathbf{Q}, \mathbf{S})$ and $(k, i) \in K_{\mathbf{Q}}$, then by Condition~\ref{cond:sec-mult-c-q}, $k, i \in I$ and $k, i \in J$, and thus $S_{k, i} \AlmostSurelyEqual Q_{k, i} + a^k \times b^i + a^i \times b^k$.
			Hence, we obtain
			\[ \bigwedge_{C^i_j \in \mathcal{O}, j < t} C^i_j \AlmostSurelyEqual a^i \times b^i + h^i_j(\mathbf{Q}, \mathbf{S}) \]
			where $h^i_j(\mathbf{Q}, \mathbf{S})$ is defined as follows.
			\[ h^i_j(\mathbf{Q}, \mathbf{S}) \coloneqq \begin{cases}
				\displaystyle \sum_{k < j, (k, i) \in K_{\mathbf{Q}}} (Q_{k, i} + a^k \times b^i + a^i \times b^k) + \sum_{k < j, (k, i) \notin K_{\mathbf{Q}}} S_{k, i} & i \ge j \\
				\displaystyle \sum_{k < i, (k, i) \in K_{\mathbf{Q}}} (Q_{k, i} + a^k \times b^i + a^i \times b^k) + \sum_{k < i, (k, i) \notin K_{\mathbf{Q}}} S_{k, i} + \sum_{i < k \le j} Q_{i, k} & i < j
			\end{cases} \]
			Note that the free variables in $h^i_j(\mathbf{Q}, \mathbf{S})$ are in $\{ S_{i, j} \mid (i, j) \in K_{\mathbf{S}} \setminus K_{\mathbf{Q}} \} \cup \{ Q_{i, j} \mid (i, j) \in K_{\mathbf{Q}} \}$ if $C^i_j \in \mathcal{O}$ and $j < t$.
			\item If $C^i_t \in \mathcal{O}$ and there exists $j < t$ with $C^i_j \in \mathcal{O}$, then similarly to the previous case, we have the following equations.
			\begin{align*}
				\bigwedge_{C^i_t \in \mathcal{O}, (\exists j < t, C^i_j \in \mathcal{O})} C^i_t \AlmostSurelyEqual a^i \times b^i + h^i_t(\mathbf{Q}, \mathbf{S})
			\end{align*}
			The free variables in $h^i_t(\mathbf{Q}, \mathbf{S})$ are in $\{ S_{i, j} \mid (i, j) \in K_{\mathbf{S}} \setminus K_{\mathbf{Q}} \} \cup \{ Q_{i, j} \mid (i, j) \in K_{\mathbf{Q}} \}$.
			\item If $C^i_t \in \mathcal{O}$ and there exists no $j < t$ with $C^i_j \in \mathcal{O}$:
			\begin{align*}
				&\Own (\mathbf{A}, \mathbf{B}, \mathbf{Q}) \land \bigwedge_{i, j} C^i_j \AlmostSurelyEqual A^i \times B^i + g^i_j(\mathbf{Q}, \mathbf{S}) \land \bigwedge_{i < j} S_{i, j} \AlmostSurelyEqual Q_{i, j} + A^i \times B^j + A^j \times B^i \\
				&\LilacEntails \bigwedge_{C^i_t \in \mathcal{O}, \lnot (\exists j < t, C^i_j \in \mathcal{O})} C^i_t \AlmostSurelyEqual A^i \times B^i + g^i_t(\mathbf{Q}, \{ Q_{i, j} + A^i \times B^j + A^j \times B^i \}_{i < j})
			\end{align*}
			Note that for each such $C^i_t$, there exists $Q_{i', j'} \in K_{\mathbf{C}}$ that occurs in $g^i_t(\mathbf{Q}, \{ Q_{i, j} + A^i \times B^j + A^j \times B^i \}_{i < j})$, and moreover, the right-hand side of the above equation is a bijective expression in $Q_{i', j'}$.
		\end{itemize}
	\end{itemize}
	By \hyperlink{rule:Eq-Conj}{\textsc{Eq-Conj}}, we now have the following postcondition.
	\begin{align*}
		&\Conditioning_{(a, b) \gets (\mathcal{I}, \mathcal{J})} \Big( \Own (\mathbf{A}, \mathbf{B}) * \bigast_{i < j} Q_{i, j} \sim \Uniform{\mathbb{K}} * \bigast_{(i, j) \in K_{\mathbf{S}} \setminus K_{\mathbf{Q}}} S_{i, j} \AlmostSurelyEqual Q_{i, j} + A^i \times B^j + A^j \times B^i \\
		&\qquad\qquad * \big( \bigwedge_{P_{i, j} \in \mathcal{O}} P_{i, j} \AlmostSurelyEqual a^i \times b^j \land \bigwedge_{R_{i, j} \in \mathcal{O}} R_{i, j} \AlmostSurelyEqual Q_{i, j} + a^i \times b^j \land \bigwedge_{(i, j) \in K_{\mathbf{Q}}, S_{i, j} \in \mathcal{O}} S_{i, j} \AlmostSurelyEqual Q_{i, j} + a^i \times b^j + a^j \times b^i \\
		&\qquad\qquad \land \bigwedge_{C^i_j \in \mathcal{O}, j < t} C^i_j \AlmostSurelyEqual a^i \times b^i + h^i_j(\mathbf{Q}, \mathbf{S}) \land \bigwedge_{C^i_t \in \mathcal{O}, (\exists j < t, C^i_j \in \mathcal{O})} C^i_t \AlmostSurelyEqual a^i \times b^i + h^i_t(\mathbf{Q}, \mathbf{S}) \big)\\
		&\qquad\qquad * \bigast_{C^i_t \in \mathcal{O}, \lnot (\exists j < t, C^i_j \in \mathcal{O})} C^i_t \AlmostSurelyEqual A^i \times B^i + g^i_t(\mathbf{Q}, \{ Q_{i, j} + A^i \times B^j + A^j \times B^i \}_{i < j}) \Big)
	\end{align*}
	Now, we divide $\bigast_{i < j} Q_{i, j} \sim \Uniform{\mathbb{K}}$ into three parts (technically, four parts, but one part is discarded).
	\begin{align*}
		&\bigast_{i < j} Q_{i, j} \sim \Uniform{\mathbb{K}} \\
		&\LilacEntails \bigast_{(i, j) \in K_{\mathbf{Q}}} Q_{i, j} \sim \Uniform{\mathbb{K}} * \bigast_{(i, j) \in K_{\mathbf{S}} \setminus K_{\mathbf{Q}}} Q_{i, j} \sim \Uniform{\mathbb{K}} * \bigast_{(i, j) \in K_{\mathbf{C}}} Q_{i, j} \sim \Uniform{\mathbb{K}}
	\end{align*}
	Then, we apply Lemma~\ref{lem:generalised-bijective-substitution-uniform} and \hyperlink{rule:Transfer-Uniform}{\textsc{Transfer-Uniform}} to the second part and the third part, respectively.
	\begin{align*}
		&\Own (\mathbf{A}, \mathbf{B}) * \bigast_{(i, j) \in K_{\mathbf{S}} \setminus K_{\mathbf{Q}}} Q_{i, j} \sim \Uniform{\mathbb{K}} * \bigast_{(i, j) \in K_{\mathbf{S}} \setminus K_{\mathbf{Q}}} S_{i, j} \AlmostSurelyEqual Q_{i, j} + A^i \times B^j + A^j \times B^i \\
		&\LilacEntails \Own (\mathbf{A}, \mathbf{B}) * \bigast_{(i, j) \in K_{\mathbf{S}} \setminus K_{\mathbf{Q}}} S_{i, j} \sim \Uniform{\mathbb{K}} \\
		&\Own (\mathbf{A}, \mathbf{B}) * \bigast_{(i, j) \in K_{\mathbf{C}}} Q_{i, j} \sim \Uniform{\mathbb{K}} * \bigast_{C^i_t \in \mathcal{O}, \lnot (\exists j < t, C^i_j \in \mathcal{O})} C^i_t \AlmostSurelyEqual A^i \times B^i + g^i_t(\mathbf{Q}, \{ Q_{i, j} + A^i \times B^j + A^j \times B^i \}_{i < j}) \\
		&\LilacEntails \Own (\mathbf{A}, \mathbf{B}) * \bigast_{C^i_t \in \mathcal{O}, \lnot (\exists j < t, C^i_j \in \mathcal{O})} C^i_t \sim \Uniform{\mathbb{K}}
	\end{align*}
	The current postcondition is as follows.
	\begin{align*}
		&\Conditioning_{(a, b) \gets (\mathcal{I}, \mathcal{J})} \Big( \Own (\mathbf{A}, \mathbf{B}) * \bigast_{(i, j) \in K_{\mathbf{Q}}} Q_{i, j} \sim \Uniform{\mathbb{K}} * \bigast_{(i, j) \in K_{\mathbf{S}} \setminus K_{\mathbf{Q}}} S_{i, j} \sim \Uniform{\mathbb{K}} \\
		&\qquad\qquad * \big( \bigwedge_{P_{i, j} \in \mathcal{O}} P_{i, j} \AlmostSurelyEqual a^i \times b^j \land \bigwedge_{R_{i, j} \in \mathcal{O}} R_{i, j} \AlmostSurelyEqual Q_{i, j} + a^i \times b^j \land \bigwedge_{(i, j) \in K_{\mathbf{Q}}, S_{i, j} \in \mathcal{O}} S_{i, j} \AlmostSurelyEqual Q_{i, j} + a^i \times b^j + a^j \times b^i \\
		&\qquad\qquad \land \bigwedge_{C^i_j \in \mathcal{O}, j < t} C^i_j \AlmostSurelyEqual a^i \times b^i + h^i_j(\mathbf{Q}, \mathbf{S}) \land \bigwedge_{C^i_t \in \mathcal{O}, (\exists j < t, C^i_j \in \mathcal{O})} C^i_t \AlmostSurelyEqual a^i \times b^i + h^i_t(\mathbf{Q}, \mathbf{S}) \big)\\
		&\qquad\qquad  * \bigast_{C^i_t \in \mathcal{O}, \lnot (\exists j < t, C^i_j \in \mathcal{O})} C^i_t \sim \Uniform{\mathbb{K}} \Big)
	\end{align*}
	Here, the free variables of the right-hand sides of the remaining equations are in $\{ Q_{i, j} \mid (i, j) \in K_{\mathbf{Q}} \} \cup \{ S_{i, j} \mid (i, j) \in K_{\mathbf{S}} \setminus K_{\mathbf{Q}} \}$.
	By applying \hyperlink{rule:Transfer-Own}{\textsc{Transfer-Own}}, we obtain the following postcondition.
	\begin{align*}
		&\Conditioning_{(a, b) \gets (\mathcal{I}, \mathcal{J})} \Big( \Own (\mathbf{A}, \mathbf{B}) * \big( \bigwedge_{(i, j) \in K_{\mathbf{Q}}} \Own Q_{i, j} \land \bigwedge_{(i, j) \in K_{\mathbf{S}} \setminus K_{\mathbf{Q}}} \Own S_{i, j} \land \bigwedge_{P_{i, j} \in \mathcal{O}} \Own P_{i, j} \land \bigwedge_{R_{i, j} \in \mathcal{O}} \Own R_{i, j} \\
		&\qquad\qquad \land \bigwedge_{(i, j) \in K_{\mathbf{Q}}, S_{i, j} \in \mathcal{O}} \Own S_{i, j} \land \bigwedge_{C^i_j \in \mathcal{O}, j < t} \Own C^i_j \land \bigwedge_{C^i_t \in \mathcal{O}, (\exists j < t, C^i_j \in \mathcal{O})} \Own C^i_t \big) * \bigast_{C^i_t \in \mathcal{O}, \lnot (\exists j < t, C^i_j \in \mathcal{O})} C^i_t \sim \Uniform{\mathbb{K}} \Big)
	\end{align*}
	By definition of $K_{\mathbf{Q}}$ and $K_{\mathbf{S}}$, we finally have the following postcondition.
	\[ \Conditioning_{(a, b) \gets (\mathcal{I}, \mathcal{J})} (\Own (\mathbf{A} \setminus \mathcal{I}, \mathbf{B} \setminus \mathcal{J}) * \Own \mathcal{O}) \qedhere \]
\end{appendixproof}

\begin{corollary}
	\textsc{SecMult} is $t$-SNI.
	\qed
\end{corollary}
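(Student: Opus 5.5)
The plan is to read off $t$-SNI (Definition~\ref{def:t-strong-noninterference}) directly from Theorem~\ref{thm:sec-mult}. Fix a set of probes $\mathcal{O}$ with $|\mathcal{O}| \le t$, and take the sets $I = I_3$, $J = J_3$, $\mathcal{I}$ and $\mathcal{J}$ constructed in that theorem. Theorem~\ref{thm:sec-mult} already gives that \textsc{SecMult} is $(\mathcal{I} \cup \mathcal{J}, \mathcal{O})$-NI. The proof of that theorem used $|\mathcal{O}| \le t$ (for instance in Condition~\ref{cond:sec-mult-c-uniform}), and this hypothesis is available here. So the only thing left is the cardinality requirement of $t$-SNI. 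The exposed internal variables are all probe-able variables except the outputs $\mathbf{C}_t$, namely the $P_{i,j}$, $Q_{i,j}$, $R_{i,j}$, $S_{i,j}$ and the $C^i_j$ with $j < t$. The requirement is therefore $|\mathcal{I}|, |\mathcal{J}| \le |\mathcal{O} \setminus \mathbf{C}_t|$. This is Condition~1 in the proof of Theorem~\ref{thm:sec-mult}, and I would verify it by a counting argument.

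The counting follows the three stages of the construction. First, every probe $P_{i,j} \in \mathcal{O}$ contributes at most one index to $I_1$ and at most one to $J_1$, and the same holds for every probe $C^i_j \in \mathcal{O}$ with $j < t$. Hence $|I_1|$ and $|J_1|$ are each at most the number of such probes. Second, by the definition of $\mathrm{Add}(I, J; K)$, each pair $(i,j) \in K$ adds at most one new element to $I'$: it adds $i$ if $i \notin I$, and otherwise possibly $j$. Symmetrically, each pair adds at most one new element to $J'$. So $|I'| \le |I| + |K|$ and $|J'| \le |J| + |K|$. Third, in the step producing $(I_2, J_2)$, every pair in $K$ is witnessed by a probe $Q_{i,j}$ or $R_{i,j}$, so $|K|$ is at most the number of those probes. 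In the step producing $(I_3, J_3)$, every pair is witnessed by a probe $S_{i,j}$.

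The three probe classes counted in these stages are pairwise disjoint, and all of them consist of internal variables. Summing the bounds therefore gives $|I_3|, |J_3| \le |\mathcal{O} \setminus \mathbf{C}_t|$. Since $\mathcal{I}$ is the set of $A^i$ with $i \in I_3$, and $\mathcal{J}$ the set of $B^j$ with $j \in J_3$, we get $|\mathcal{I}| = |I_3|$ and $|\mathcal{J}| = |J_3|$. This is exactly the bound Definition~\ref{def:t-strong-noninterference} requires for the two shared inputs $\mathbf{A}$ and $\mathbf{B}$.

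The only delicate point is the second stage: checking carefully that $\mathrm{Add}$ cannot add two new indices to the same side for a single pair. This holds because the two set-builder clauses of $\mathrm{Add}$ are guarded by the complementary conditions $i \notin I$ and $i \in I$ (respectively $j \notin J$ and $j \in J$). Apart from that, the argument is routine bookkeeping, and no further Lilac reasoning is needed beyond Theorem~\ref{thm:sec-mult}.
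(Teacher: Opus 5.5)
Your proposal is correct and follows the paper's intended argument: the corollary is read off from Theorem~\ref{thm:sec-mult} together with Condition~1 in its proof, $|I|, |J| \le |\mathcal{O} \setminus \mathbf{C}_t|$, exactly as in the corollary for \textsc{Refresh}. The paper only asserts Condition~1 without justification. Your count is a valid proof of it: $|I_1|$ and $|J_1|$ are bounded by the number of $P$-probes plus $C^i_j$-probes with $j<t$, and each $\mathrm{Add}$ step adds at most one index per pair to each side, with the pairs witnessed by disjoint classes of internal probes.
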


\section{Related Work}

\paragraph{Probabilistic separation logic}
Probabilistic separation logic~\cite{BarthePOPL2020} is extended in various ways to support more expressive reasoning about probabilistic programs.
Some works support reasoning about conditional independence, such as Lilac~\cite{LiPLDI2023}, DIBI~\cite{BaoLICS2021}, BaSL~\cite{HoPOPL2026}, and BlueBell~\cite{BaoPOPL2025}.
Other works interpret separating conjunction in different ways: cryptographic probabilistic separation logic~\cite{DalLagoCSF2024} interprets it as computational independence, while LINA~\cite{BaoPOPL2022} provides a separating conjunction for negative association.
\emph{Probabilistic concurrent outcome logic} (pcOL)~\cite{ZilbersteinPOPL2026} combines probabilistic separation logic with concurrent separation logic~\cite{BrookesTCS2007} to reason about probabilistic concurrent programs.
Some of the above logics have been applied to verify cryptographic algorithms~\cite{BarthePOPL2020,LiPLDI2023,DalLagoCSF2024,BaoPOPL2025}.
However, none of these logics has been used to verify $t$-noninterference of masked algorithms.

\paragraph{Formal verification of masked programs}
Formal verification of probing security has been actively studied in recent years~\cite{BartheCCS2016,BartheESORICS2019,BartheEUROCRYPT2015,MeunierIIEEETransSoftwareEng2023,OuahmaPROOFS2017,BenElOuahmaJCryptogrEng2019,ZhangCAV2018,GaoTACAS2019,GaoTOSEM2021,GaoTOSEM2024,EldibTOSEM2014,CoronACNS2018,BloemEUROCRYPT2018,KnichelASIACRYPT2020}.
\textsc{maskVerif} \cite{BartheESORICS2019,BartheEUROCRYPT2015} repeatedly substitute expressions in masked programs to prove noninterference, based on the fact that applying bijective functions $f$ to uniform random variables $X$ gives uniform random variables $f(X)$ as well.
Lilac has an inference rule (\hyperlink{rule:Unif-Bijection}{\textsc{Unif-Bijection}}) that captures this fact, which allows us to use the simulated their noninterference proof as a proof in Lilac.
Substitution-based approaches have also been used in~\cite{MeunierIIEEETransSoftwareEng2023}.
Another line of work uses inference-based approaches~\cite{OuahmaPROOFS2017,BenElOuahmaJCryptogrEng2019,ZhangCAV2018,GaoTACAS2019,GaoTOSEM2021}, which infer distribution types for each intermediate values in masked programs.

When proving noninterference properties of masked algorithms, one must in principle consider all possible sets of probed variables $\mathcal{O}$. This combinatorial explosion can be problematic, especially when pursuing algorithmic proof search~\cite{BartheEUROCRYPT2015}. However, in the setting of formal verification with an interactive theorem prover, this issue can be avoided by representing the set of probed variables symbolically. This is the approach adopted in this work.

\section{Conclusion and Future Work}
We have presented a method for formally verifying probing security using Lilac.
We have shown that the simulator-based definition of $(\mathcal{I}, \mathcal{O})$-NI is equivalent to a Hoare triple in Lilac that asserts a certain conditional independence.
We have also provided several proof rules that facilitate the verification of probing security and demonstrated their application to \textsc{Refresh} and \textsc{AddRepNoiseER}.

Our current results focus on the information-theoretic setting of probing security.
As future work, we want to extend our approach to the computational setting by adopting the ideas from~\cite{DalLagoCSF2024}.
Implementing our method in an interactive theorem prover is another important direction for future work.
We also want to explore applications of our method to the verification of other cryptographic algorithms, such as secure multi-party computation protocols, since Theorem~\ref{thm:noninterference-conditional-independence} would be applicable to other security properties defined in terms of simulators as well.

\bibliographystyle{IEEEtran}
\bibliography{probing-security,arxiv}

\checkpagelimit{16}

\end{document}